\theoremstyle{plain}
\newcommand{\thistheoremname}{}
\newtheorem*{genericthm*}{\thistheoremname}
\newenvironment{namedthm*}[1]
  {\renewcommand{\thistheoremname}{#1}%
   \begin{genericthm*}}
  {\end{genericthm*}}
\numberwithin{equation}{section}
\newcommand{\nocontentsline}[3]{}
\newcommand{\tocless}[2]{\bgroup\let\addcontentsline=\nocontentsline#1{#2}\egroup}
\renewcommand\vec[1]{\boldsymbol{#1}}
\DeclareMathOperator*{\argmax}{arg\,max}
\newtheorem{theorem}{Theorem}
\newtheorem{corollary}[theorem]{Corollary}
\newtheorem{lemma}{Lemma}
\newtheorem{proposition}{Proposition}
\theoremstyle{definition}
\newtheorem{definition}{Definition}
\theoremstyle{remark}
\newtheorem{remark}{Remark}[section]
\def\p{{\partial}}
\newcommand{\mbb}[1]{\mathbb{#1}}
\newcommand{\scp}[2]{\left<#1\,,\,#2\right>}
\newcommand{\ad}{\operatorname{ad}}
\newcommand{\dd}{\mathrm{~d}}
\newcommand{\intprod}{\mathbin{\raisebox{\depth}{\scalebox{1}[-2]{$\lnot$}}}}
\def\diff{\mathrm{d}}
\def\dd{\mathbf{d}}
\def\Ad{\mathrm{Ad}}
\def\p{{\partial}}
\begin{document}
	\title{Semimartingale driven mechanics and reduction by symmetry for stochastic and dissipative dynamical systems}
	\author{Oliver D. Street\thanks{Department of Mathematics, Imperial College London, email: o.street18\@imperial.ac.uk } \and So Takao\thanks{Computing and Mathematical Sciences, California Institute of Technology.}}
	\date{\today}
	\maketitle
	\begin{abstract}
        The recent interest in structure preserving stochastic Lagrangian and Hamiltonian systems raises questions regarding how such models are to be understood and the principles through which they are to be derived.
        By considering a mathematically sound extension of the Hamilton-Pontryagin principle, we derive a stochastic analogue of the Euler-Lagrange equations, driven by independent semimartingales. Using this as a starting point, we can apply symmetry reduction carefully to derive non-canonical stochastic Lagrangian / Hamiltonian systems, including the stochastic Euler-Poincar\'e / Lie-Poisson equations, studied extensively in the literature. Furthermore, we develop a framework to include dissipation that balances the structure-preserving noise in such a way that the overall stochastic dynamics preserves the Gibbs measure on the symplectic manifold, where the dynamics effectively takes place. In particular, this leads to a new derivation of double-bracket dissipation by considering Lie group invariant stochastic dissipative dynamics, taking place on the cotangent bundle of the group.
	\end{abstract}

\tableofcontents

\section{Introduction}
The 20\textsuperscript{th} century saw significant developments in
the theoretical foundation of
Lagrangian and Hamiltonian mechanics from the perspective of differential geometry.
As has become increasingly apparent,
the rich structures present in classical mechanics can be elegantly described using ideas from differential geometry. In paricular, the theory of Lie groups has played a central role in the modern developments of classical mechanics, since Poincaré's seminal work \cite{poincare1901}. This work is important for introducing the concept of {\em symmetry reduction} (more specifically, {\em Euler-Poincaré reduction}, as we call it today), where one can recast Hamilton's equations of motion onto the Lie algebra of a group acting transitively on the original configuration space. In particular, it is shown that Euler's equations for rigid body motion, and much later, also Euler's equations for ideal fluid motion \cite{arnold1966sur, arnold1969hamiltonian} fall under this class of equations.

This interpretation of non-canonical equations as arising from a canonical system by applying symmetry reduction is not only elegant from a theoretical perspective, but has also been shown to be useful from a practical point of view. Especially in geophysical fluid dynamics (GFD), where it is common to make asymptotic approximations to Euler's fluid equations, it is crucial for said approximations to retain certain properties of the original system, such as the conservation laws. Indeed, many successful GFD models such as the quasigeostrophic equation and the Boussinesq equation are of Euler-Poincaré form \cite{salmon1988hamiltonian}, resulting from different asymptotic approximations of the ideal fluid Lagrangian / Hamiltonian \cite{HMR1998}. Conversely, this suggests that in order to develop structure-preserving GFD models, rather than taking approximations at the level of the equation, one can instead make approximations at the level of the principles used to derive the equation (e.g. a variational principle), to directly yield structure-preserving models.
Numerically, these models are often augmented by so-called {\em parameterisation schemes} \cite{kalnay2003atmospheric} to compensate for discretisation errors, which can be stochastic \cite{berner2017stochastic}, to capture these errors statistically. Motivated by this, a new type of variational principle was introduced by Holm in \cite{holm2015variational} to yield a stochastic extension of the Euler-Poincaré equations, with the goal of developing structure-preserving stochastic parameterisation schemes.

Holm's stochastic variational principle \cite{holm2015variational} for ideal fluid dynamics proceeds by modifying Hamilton's least action principle by adding certain stochastic constraint terms to the action functional. Adopting a later version of this principle presented in \cite{gay2018stochastic}, a modified action reads
\begin{align}\label{eq:stochastic-action}
    \mathcal{S} = \int^{t_1}_{t_0} \underbrace{L(g_t, V_t)}_{\text{Lagrangian}} \,\diff t + \underbrace{\left<p_t, \circ \diff g_t - V_t \,\diff t - \sum_{i=1}^N \Xi_i(g_t) \circ \diff W_t^i\right>}_{\text{Stochastic constraint}},
\end{align}
where $(g_t, V_t)$, $(g_t, p_t)$ are curves on $TG$ and $T^*G$ respectively, i.e., the tangent/cotangent bundle of a Lie group $G$, $\{W_t^i\}_{i = 1}^N$ is a collection of i.i.d. Brownian motion, $\circ$ denotes Stratonovich integration, and $\Xi_i$ is a right-invariant vector field on $G$, which we formally write as $\Xi_i(g_t) = \xi_i \cdot g_t$ for some $\xi_i$ in the Lie algebra $\mathfrak{g}$ of $G$. Assuming that $L$ is a right $G$-invariant Lagrangian, which is a reasonable assumption in ideal fluid dynamics due to the {\em particle relabelling symmetry} (in this case, $G$ is the group of diffeomorphisms over a manifold), the action can be formally re-expressed as
\begin{align}\label{eq:reduced-stochastic-action}
    \mathcal{S} = \int^{t_1}_{t_0} \ell(v_t)\,\diff t + \left<\alpha_t, \circ \diff g_t \cdot g_t^{-1} - v_t \,\diff t - \sum_{i=1}^N \xi_i \circ \diff W_t^i\right>_{\mathfrak{g}^* \times \mathfrak{g}},
\end{align}
where $v_t := V_t \cdot g_t^{-1} \in \mathfrak{g}$, $\alpha_t := p_t \cdot g_t^{-1} \in \mathfrak{g}^*$, and $\ell(v_t) := L(g_t\cdot g_t^{-1}, V_t\cdot g_t^{-1}) = L(g_t, V_t)$ is the reduced Lagrangian. Optimising the action \eqref{eq:reduced-stochastic-action} over all continuous curves $g_t \in G$ with fixed endpoints and curves $(v_t, \alpha_t) \in \mathfrak{g} \times \mathfrak{g}^*$ satisfying certain constraints, one can formally deduce the {\em stochastic Euler-Poincaré equations} for right-invariant Lagrangians, given by
\begin{equation}\label{eq:stochastic-EP-intro}
    \diff \frac{\delta\ell}{\delta v_t} + \ad^*_{v_t}\frac{\delta\ell}{\delta v_t}\, \diff t + \sum_{i = 1}^N \ad^*_{\xi_i}\frac{\delta\ell}{\delta v_t}\circ \diff W_t^i = 0 \,.
 \end{equation}
where $\ad^* : \mathfrak{g} \rightarrow \mathrm{End}(\mathfrak{g}^*)$ is the coadjoint representation.
Since its introduction in \cite{holm2015variational}, this system has found applications in areas such as shape registration \cite{arnaudon2019geometric}, MCMC sampling \cite{arnaudon2019irreversible, barp2021unifying} and data assimilation \cite{cotter2020particle, cotter2020data}. Equations of the form \eqref{eq:stochastic-EP-intro} also lend geometric structures to related models derived from them, for example a family of stochastic geophysical fluid equations can be derived from the Euler equation with noise of this form \cite{holm-luesink2021}, and a stochastic Hamiltonian structure exists for the classical water wave equations when derived from this stochastic Euler equation \cite{street2023waterwaves}. In the PDE analysis literature, a type of noise similar to that in \eqref{eq:stochastic-EP-intro}, referred to as {\em transport noise}, had already been studied actively, due to its regularising effect on the original PDE \cite{flandoli2010well, fedrizzi2013noise, bethencourt2022transport}.

Despite the growing interest in this stochastic system, the variational principle itself raises several questions from a mathematical perspective.
For example, does it make rigorous sense for a path $g_t$ to have fixed endpoints, without contradicting the adaptedness assumption necessary to define the stochastic integral? Another glaring issue is whether the term $\circ \diff g_t \cdot g_t^{-1}$ in \eqref{eq:reduced-stochastic-action} makes any sense, with ``$\circ \diff g_t$" being a mere notation rather than a proper mathematical object.
For a theoretical soundness of the approach, it is important to provide a concrete understanding of this principle,
especially in applications that make direct use of it, such as variational integrators \cite{bou2009stochastic, holm2018stochastic, kraus2021variational}.

Our first goal of the work is thus to address these issues to make rigorous sense of this variational principle. We will work in the more general setting of general driving semimartingales $S_t^i$ to replace the Brownian motion $W_t^i$ in \eqref{eq:stochastic-EP-intro} and establish the necessary conditions on the driving semimartingale to ensure that the variational principle is well defined. Moreover, while previous works have mostly focused on the case where the configuration space is a Lie group $G$, we start by considering arbitrary configuration spaces, and deduce a stochastic extension to the classic Euler-Lagrange equations. In this setting, we show that the notion of `compatibility' of the variational processes with respect to the driving semimartingale \cite{street2021semi}, is key to establishing the variational principle.
While this appears to be a basic starting point, to our knowledge, the stochastic Euler-Lagrange equations have not yet been considered explicitly in the literature. Implicitly though, this system can be expressed as Bismut's symplectic diffusion process \cite{bismut1981mechanique}, extending Hamilton's equations of motion to the stochastic setting.
We then consider symmetry-reduction of this process in the presence of a symmetry Lie group $G$. From the Hamiltonian side, symmetry reduction in the stochastic setting is straightforward -- this simply requires one to substitute the original Poisson structure with its symmetry-reduced counterpart. On the Lagrangian side, the reduction procedure is more subtle, which involves the careful construction of a variational family to rigorously establish a reduced variational principle. Our contribution here is thus to make sense of the symmetry reduction process to derive the stochastic Euler-Poincaré equations \eqref{eq:stochastic-EP-intro}.

Finally, we consider an extension of the symplectic diffusion process to include a dissipative term, following \cite{arnaudon2018noise}. This prior work considers the stochastic Euler-Poincaré equations \eqref{eq:stochastic-EP-intro} augmented by the so-called selective-decay dissipation\footnote{In the work, they refer to this as the double-bracket dissipation, but strictly speaking this is a different dissipative structure.} \cite{gay2013selective}, which is shown to preserve the Boltzmann distribution $p_\infty \propto e^{-\beta h}$, where $h : \mathfrak{g}^* \rightarrow \mathbb{R}$ is the energy of the system, when the underlying Lie algebra $\mathfrak{g}$ is compact and semi-simple. The preservation of the Boltzmann distribution is a desirable property from the point of view of statistical mechanics, being the probability distribution associated with the canonical ensemble.
However, the choice of dissipation that they add to yield the preservation of $p_\infty$ is seemingly ad hoc and does not provide any insights into how to generalise this result, for example, beyond the compact semi-simple setting. We address this by adopting a new perspective on the problem, where we demonstrate that this system can be obtained by applying symmetry reduction to a dissipative extension of the symplectic diffusion process considered in \cite{arnaudon2019irreversible}. This perspective has several implications. First, it reveals the base measure that the Boltzmann distribution $p_\infty$ is defined with respect to, which we identify as the {\em Liouville measure} $\lambda$ (i.e., the measure induced by the symplectic volume form). Thus, we obtain a more complete picture of the result in \cite{arnaudon2019irreversible}, where in fact their stochastic-dissipative system preserves the {\em Gibbs measure} $\mathbb{P}_\infty := p_\infty \lambda$. Second, our result extends beyond the Euler-Poincaré system on compact semi-simple Lie algebras. In particular, on a general Lie algebra, we identify that the selective decay dissipation is {\em not} actually the correct term to add in order to yield preservation of the Gibbs measure, but instead, the correct choice is the closely-related {\em double bracket dissipation} \cite{bloch1994dissipation, bloch1996euler}. 
As an unintended consequence, our result also reveals a new derivation of the double bracket dissipation from the viewpoint of stochastic geometric mechanics, which we believe is a novel contribution in itself.

It is worth noting that our work only considers the finite dimensional setting, as we place special emphasis on making mathematical sense of the procedures used to obtain structure-preserving stochastic-dissipative equations. Clearly, this is much more feasible in the finite-dimensional setting than in the infinite dimensional setting. However, of course this does not prevent us from formally seeing how we can get analogous equations in the infinite dimensional setting. In fact, we will look at an infinite-dimensional example towards the end, where we consider ideal fluids with noise and dissipation of the form considered in this work, and see connections with existing models. In particular, considering a finite-dimensional approximation of this model using point vortices, we arrive at a stochastic-dissipative point vortex system not previously seen in the literature.

\paragraph{Related works.} We note that the idea of a stochastic variational principle itself is not entirely new in the literature. For example, the stochastic mechanics of Nelson \cite{nelson1966derivation, nelson2020quantum, zambrini1982semi} uses a stochastic variational principle to derive equations in quantum physics, and the related stochastic Euler-Poincaré reduction in \cite{ACC2014} is used to derive dissipative systems in classical mechanics such as the Navier-Stokes equation. While these works consider stochasticity in the framework, the resulting action itself is deterministic as it is the expectation of a stochastic integral. Thus, the resulting equations are deterministic, which clearly differs from the type of systems we consider here. However, we take inspirations from their methodology, and in particular, the techniques considered in \cite{ACC2014} can be used in a similar fashion to how we derive the stochastic Euler-Poincaré equations. Regarding the unreduced case, the work \cite{lazaro2007stochastic} derives Bismut's symplectic diffusion process \cite{bismut1981mechanique} using a stochastic action principle that is not dissimilar to ours. While their work focuses entirely on the Hamiltonian side, we consider variational principles from the Lagrangian perspective to consider a stochastic extension of the Euler-Poincaré reduction later on. We also note the similarity with the work \cite{crisan2022variational} that considers Euler-Poincaré reduction for processes driven by geometric rough paths. In particular, the authors apply a similar technique as we do for the reduction procedure, using the form of variations found in \cite{ACC2014}.

\paragraph{Plan of the paper.}
\begin{itemize}
    \item In Section \ref{sec:classical_mechanics}, we illustrate the inclusion of structure-preserving stochastic terms into classical mechanics through its geometric structure. This is performed for unreduced systems whose configuration space is a manifold, and we consider both the Lagrangian and Hamiltonian viewpoints in their abstract form. On the Lagrangian side, the stochasticity is included through the Hamilton-Pontryagin principle. The corresponding Hamiltonian system can be demonstrated to be equivalent to Bismut's symplectic diffusions \cite{bismut1981mechanique}, for a particular choice of the stochastic Hamiltonians.  
    \item In Section \ref{sec:symmetry_reduction}, we consider symmetry reduction for the stochastic equations derived in the previous section. This is first considered for Hamiltonian systems defined on arbitrary symplectic manifolds, before we specialise to Lie-Poisson systems by assuming that the configuration of the system can be described by a Lie group. The corresponding Euler-Poincar\'e equations are derived, and in Theorem \ref{thm:EP} we demonstrate how these can be obtained by applying symmetry reduction to the Lagrangian.
    \item Dissipative systems are considered in Section \ref{sec:dissipation}. In particular, a dissipative term is added to the stochastic (unreduced) Hamiltonian system on a symplectic manifold that preserve the Gibbs measure canonically defined on the space. In the Lie-Poisson case, it is shown that this dissipative term is equivalent to the double bracket dissipation under symmetry reduction. Furthermore, we show ergodicity of the process in the Lie-Poisson case, so that asymptotically, any phase-space measure evolving with respect to the stochastic dynamics converge to the Gibbs measure defined on the symplectic leaves.
    \item Our symmetry-reduced stochastic dissipative system is illustrated through a series of examples in Section \ref{sec:examples}. The examples are based on well known problems in physics, and have been chosen to illustrate a range of the structures considered. 
\end{itemize}




\section{Preliminaries and notations}
We introduce some definitions and notations that will be employed throughout the paper. The purpose here is not to provide a thorough  background on these topics but rather to establish the notations and conventions that we will be using in the remainder of the text. For the differential geometric background used in Lagrangian / Hamiltonian mechanics, we refer the readers to \cite{marsden2013introduction}. We also refer the readers to the classic text \cite{oksendal2013stochastic} for necessary background in stochastic differential equations (SDEs), and \cite{kunita1984stochastic, norris1992complete} for details on SDEs defined over manifolds. 

\subsection{Exterior calculus}
Here, we introduce notations for elementary operations defined over smooth manifolds.
Given a smooth manifold $Q$, we denote by $\mathfrak{X}(Q)$ the space of vector fields and $\Omega^1(Q)$ the space of one-forms over $Q$. These are the space of smooth sections of $TQ$ and $T^*Q$ respectively. More generally, we denote by $\Omega^k(Q)$ the space of differential $k$-forms, which are smooth sections of the $k$-th exterior bundle $\bigwedge^k T^*Q$. By convention, we also denote by $\Omega^0(Q)$ the space of smooth functions $Q \rightarrow \mathbb{R}$.
The exterior derivative on $\Omega^k(\Omega)$ is denoted $\dd : \Omega^k(Q) \rightarrow \Omega^{k+1}(Q)$ and the interior product by $\intprod : \mathfrak{X}(Q) \times \Omega^k(Q) \rightarrow \Omega^{k-1}(Q)$. For $X \in \mathfrak{X}(Q)$, the {\em Lie derivative} of $k$-forms with respect to $X$, denoted $\mathcal L_X : \Omega^k(Q) \rightarrow \Omega^{k}(Q)$ is characterised algebraically by Cartan's formula
\begin{align}\label{eq:lie-deriv}
    \mathcal{L}_{X} \alpha = X \intprod \dd \alpha + \dd (X \intprod \alpha).
\end{align}
This describes the infinitesimal evolution of a $k$-form $\alpha$ along the vector field $X$, that is, $\mathcal{L}_{X} \alpha := \left.\frac{\diff}{\diff t}\right|_{t=0}\phi_t^* \alpha$, where $\phi_t : Q \rightarrow Q$ is the integral curve of $X$ and $\phi_t^* : \Omega^k(Q) \rightarrow \Omega^{k}(Q)$ denotes the pullback of $k$-forms with respect to $\phi_t$.

\subsection{Lie group action and representation}
A Lie group $G$ is a smooth manifold equipped with a group structure, such that the group multiplication and inversions are smooth maps. As a group, one can define its left or right action on a smooth manifold $Q$, denoted $L$ or $R : G \times Q \rightarrow Q$ respectively, which are smooth in both arguments. A group can also act on itself through the group operation, $G \times G \rightarrow G$, in which case the aforementioned actions are referred to as left and right translations. For a fixed $g \in G$, we employ the notation $L_g = L(g, \cdot)$ and $R_g = R(g, \cdot)$. We define the {\em tangent-lifted action} $TL : TQ \rightarrow TQ$ and $TR : TQ \rightarrow TQ$ of $G$ on $TQ$ as differentials of maps $L$ and $R$ respectively in the second argument. That is,
\begin{equation}\label{eqn:tangent_lifted_action}
    \begin{aligned}
    T_q L_g := \left. DL_g \right|_{q} : T_q Q &\rightarrow T_{L_g(q)} Q \,,
    \\
    V &\rightarrow \frac{d}{ds}\bigg |_{s=0} L_g(\gamma_1(s))
    \end{aligned}
    \begin{aligned}
    \ \hbox{and}\quad T_q R_g := \left. DR_g \right|_{q} : T_q Q &\rightarrow T_{R_g(q)} Q \,,
    \\
    V &\rightarrow \frac{d}{ds}\bigg |_{s=0} R_g(\gamma_1(s))
    \end{aligned}
\end{equation}
for any $g \in G$, $q \in Q$, and $V\in T_qQ$, where $\gamma_1:[0,1]\rightarrow Q$ is any smooth curve in $Q$ such that $\gamma_1(0)=q$ and $\gamma_1'(0) = V$. Likewise, we can define the {\em cotangent-lifted action} to be the corresponding fibre-wise adjoint operations $T^*L, T^*R : T^*Q \rightarrow T^*Q$. These are defined by considering the adjoint actions $T_{L_g(q)}^*L_{g^{-1}} : T^*_qQ \rightarrow T^*_{L_g(q)}Q$ and $T_{R_g(q)}^*R_{g^{-1}} : T^*_qQ \rightarrow T^*_{R_g(q)}Q$, which satisfy
\begin{equation}\label{eqn:cotangent_lifted_action}
   \scp{T_{L_g(q)}^*L_{g^{-1}}p }{ v_1 }  = \scp{p}{T_{L_g(q)}L_{g^{-1}}v_1 } \,,\quad\hbox{and}\quad  \scp{T_{R_g(q)}^*R_{g^{-1}}p }{ v_2 }  = \scp{p}{T_{R_g(q)}R_{g^{-1}}v_2 } \,,
\end{equation}
where $v_1 \in T_{L_g(q)}Q$, $v_2 \in T_{R_g(q)}Q$ and $p\in T^*_gG$ are arbitrary, and the above pairings are between $T_q^*Q$ and $T_qQ$. Note that the lift of the action by the inverse is necessary to ensure that the lifted action $T_qL_g$ and the cotangent lifted action $T^*_{L_g(q)}L_{g^{-1}}$ map between the same fibres of the bundle $TQ$.

Let $e \in G$ be the identity element of $G$ and take $\xi \in T_eG$. We now consider any smooth curve $\gamma_2 : [0, 1] \rightarrow G$ such that $\gamma_2(0) = e$ and $\gamma_2'(0) = \xi$.
We define the {\em left / right infinitesimal generator} (or the {\em infinitesimal action} of $G$ on $Q$) as vector fields $\xi_Q^L, \xi_Q^R \in \mathfrak{X}(Q)$, given by
\begin{equation*}
    \xi_Q^L(q) := \left.\frac{\diff}{\diff t} \right|_{t=0} L_{\gamma_2(t)}(q), \qquad \xi_Q^R(q) := \left.\frac{\diff}{\diff t} \right|_{t=0} R_{\gamma_2(t)}(q)
\end{equation*}
for any $q \in Q$. If it is clear from context whether the action is from the left or right, we omit the superscript $L$ / $R$ and simply denote the infinitesimal generator by $\xi_Q$.

As an important special case, we consider a group action (and its tangent-lifted counterpart) onto itself, giving rise to the concept of the adjoint representation of the group. First, define the {\em Lie algebra} $\mathfrak{g}$ of a Lie group as the tangent space at the identity, that is, $\mathfrak{g} = T_eG$.
This is canonically equipped with the {\em Lie bracket} $[\cdot, \cdot] : \mathfrak{g} \times \mathfrak{g} \rightarrow \mathfrak{g}$, defined by $[\xi, \eta] = \left.\left(\xi_G\eta_G - \eta_G \xi_G\right)\right|_e$ for any $\xi, \eta \in \mathfrak{g}$, where the latter is the vector field commutator of the left or right infinitesimal generators, evaluated at $e \in G$. We define the {\em adjoint representation of $g \in G$}, denoted $\Ad_g : \mathfrak{g} \rightarrow \mathfrak{g}$, as
\begin{align*}
    \Ad_g \xi := \left.\frac{\diff}{\diff t} \right|_{t=0} L_{g}(R_{g^{-1}}(\gamma_2(t))) = T_{g^{-1}}L_g \cdot (T_e R_{g^{-1}} \cdot \xi)
\end{align*}
for any $\xi \in \mathfrak{g}$, where $\gamma_2 : [0, 1] \rightarrow G$ is again a smooth curve such that $\gamma_2(0) = e$ and $\gamma_2'(0) = \xi$.
This is a linear invertible map with the inverse given by $(\Ad_g)^{-1} = \Ad_{g^{-1}}$. We may also define the adjoint representation of $\eta \in \mathfrak{g}$ as a linear map $\ad_\eta : \mathfrak{g} \rightarrow \mathfrak{g}$ defined by
\begin{align*}
    \ad_\eta \xi := \left.\frac{\diff}{\diff t} \right|_{t=0} \Ad_{\gamma_3(t)}
 \xi,
\end{align*}
for any $\xi \in \mathfrak{g}$, where $\gamma_3 : [0, 1] \rightarrow G$ is a smooth curve such that $\gamma_3(0) = e$ and $\gamma_3'(0) = \eta$. In fact, the adjoint representation of the Lie algebra is related to the Lie bracket by $\ad_\eta \xi = \pm[\eta, \xi]$, where the sign depends on whether the action is from the left ($+$) or the right ($-$). We also define the {\em coadjoint representations} $\Ad_g^* : \mathfrak{g}^* \rightarrow \mathfrak{g}^*$ and $\ad_\eta^* : \mathfrak{g}^* \rightarrow \mathfrak{g}^*$ by $\left<\Ad_g^* \mu, \xi\right>_{\mathfrak{g}^* \times \mathfrak{g}} = \left<\mu, \Ad_g\xi\right>_{\mathfrak{g}^* \times \mathfrak{g}}$ and $\left<\ad_\eta^* \mu, \xi\right>_{\mathfrak{g}^* \times \mathfrak{g}} = \left<\mu, \ad_\eta\xi\right>_{\mathfrak{g}^* \times \mathfrak{g}}$, respectively.

\subsection{Lagrangian mechanics}\label{sec:Lagrangian_prelim}
In the Lagrangian formulation of classical mechanics, the equations of motion are derived entirely from the {\em Lagrangian function}, $L:TQ \rightarrow \mathbb{R}$, defined on the tangent bundle of the configuration space $Q$, representing the space of positions and velocities. This is achieved through {\em Hamilton's Principle}, which seeks to find the extrema of the action functional $S[q] := \int^{t_1}_{t_0} L(q(t), \dot{q}(t)) \diff t$ among all curves $q : [t_0, t_1] \rightarrow Q$ such that $q(t_0) = a$ and $q(t_1) = b$ for some fixed points $a, b \in Q$, and $\dot{q}(t) \in T_{q(t)}Q$ denotes the derivative of the curve $q(t)$. More precisely, consider the variational problem
\begin{equation}\label{eq:hamilton's-principle}
	0 = \delta S := \left. \frac{\diff}{\diff t}\right|_{\epsilon = 0} \int_{t_0}^{t_1} L(q_\epsilon(t), \dot q_\epsilon(t)) \,\diff t,
\end{equation}
where $\{q_\epsilon\}_{\epsilon \in [0,1]}$ is a smoothly parameterised family of $C^2$ curves in $Q$ with $q_\epsilon(t_0) = a$ and $q_\epsilon(t_1) = b$ for all $\epsilon \in [0, 1]$. The solution to \eqref{eq:hamilton's-principle} is equivalent to solving the {\em Euler-Lagrange equations}, which, on a local chart, has the expression
\begin{align}\label{eq:euler-lagrange}
    \frac{\diff}{\diff t} \frac{\partial L}{\partial \dot{q}^i} = \frac{\partial L}{\partial q^i}, \quad i = 1, \ldots, \mathrm{dim}(Q).
\end{align}
In particular, on $Q = \mathbb{R}^n$ and taking $L(q, \dot{q}) = \frac{m}{2} \|\dot{q}\|^2 - V(q)$, equation \eqref{eq:euler-lagrange} recovers Newton's second law $F = m\ddot{q}$ under conservative forcing $F = \nabla V(q)$. A benefit of the Lagrangian perspective is that unlike Newton's second law, the expression \eqref{eq:euler-lagrange} is invariant under change of coordinates, providing a coordinate-free description of classical conservative mechanics.

\subsection{Hamiltonian mechanics}
The Hamiltonian formulation of mechanics provides yet another viewpoint of classical mechanics, this time derived from the Hamiltonian function $H : T^*Q \rightarrow \mathbb{R}$ defined on the cotangent bundle of the configuration space. The cotangent bundle, which represents the space of position and momenta, is equipped with the so-called {\em canonical symplectic form} $\omega \in \Omega^2(T^*Q)$. In local coordinates, this is given by $\omega = \sum_i \diff q^i \wedge \diff p_i$, where $q^i, p_i$ are the coordinates for position and momenta respectively. Given a Hamiltonian $H : T^*Q \rightarrow \mathbb{R}$, we define the corresponding {\em Hamiltonian vector field} to be a vector field $X_H \in \mathfrak{X}(T^*Q)$ satisfying
\begin{align} \label{eq:ham-vec-field}
X_H \intprod \omega = \dd H.
\end{align}
The equations of motion are then given by the ODE $\frac{\diff}{\diff t}(q, p) = X_H(q, p)$, which, in local coordinates can be written
\begin{align}\label{eq:ham-eq}
    \frac{\diff q^i}{\diff t} = \frac{\partial H}{\partial p_i}, \qquad \frac{\diff p_i}{\diff t} = -\frac{\partial H}{\partial q^i}, \quad i = 1, \ldots, \mathrm{dim}(Q).
\end{align}
Note that unlike the Euler-Lagrange equation, which is a second order ODE, Hamilton's equation of motion \eqref{eq:ham-eq} is a system of first order ODEs. However, one can show that under certain regularity conditions, \eqref{eq:ham-eq} is equivalent to \eqref{eq:euler-lagrange} by means of the {\em Legendre transform}. This is achieved by considering {\em fibre derivative} $\mathbb{F}L : TQ \rightarrow T^*Q$ of the Lagrangian, defined in local coordinates as $\mathbb{F}L(q, \dot{q}) = (q, {\partial L}/{\partial \dot{q}})$ and taking $H \circ \mathbb{F} L(q, \dot{q}) = \left<\mathbb{F} L(q, \dot{q}), \dot{q}\right> - L(q, \dot{q})$.

The Hamiltonian formulation of mechanics leads us to consider far-reaching generalisations of classical mechanical systems beyond those defined over tangent/cotangent bundles.
By the definition of Hamiltonian vector fields given above, one can easily extend this notion to arbitrary {\em symplectic manifolds} $(P, \omega)$, an even dimensional manifold $P$ equipped with a closed, non-degenerate two-form $\omega \in \Omega^2(P)$. By the closedness of $\omega$, one can verify the following essential property 
\begin{align}\label{eq:omega-invariance}
    \mathcal{L}_{X_H} \omega \stackrel{\eqref{eq:lie-deriv}}{=} X_H \intprod \cancel{\dd \omega} + \dd (X_H \intprod \omega) = \dd (X_H \intprod \omega) \stackrel{\eqref{eq:ham-vec-field}}{=} \dd \dd H = 0,
\end{align}
which states that $\omega$ is invariant under the Hamiltonian dynamics.
We can also go beyond the symplectic setting by considering dynamics on a {\em Poisson manifold} $(P, \{\cdot, \cdot\})$, where $\{\cdot, \cdot\} : \Omega^0(P) \times \Omega^0(P) \rightarrow \Omega^0(P)$ is the {\em Poisson bracket}, an antisymmetric derivation in both arguments satisfying the Jacobi identity $\{f, \{g, h\}\} + \{g, \{h, f\}\} + \{h, \{f, g\}\} = 0$. Given a Hamiltonian $H : P \rightarrow \mathbb{R}$, we can define the Hamiltonian vector field $X_H \in \mathfrak{X}(P)$ as a derivation $X_H = \{\cdot, H\} \in \mathrm{Der}(\Omega^0(P)) \cong \mathfrak{X}(P)$. The space of all such Hamiltonian vector fields, denoted $\mathfrak{X}_{ham}(P)$, is closed under the vector field commutator, due to the well-known relation $[X_f, X_g] = -X_{\{f, g\}}$.
In the case when $(P, \omega)$ is a symplectic manifold, we can define a Poisson bracket by $\{f, g\} := X_g \intprod (X_f \intprod \omega)$. Hence, the Poisson formulation naturally generalises the symplectic formulation of Hamiltonian mechanics, allowing us to, for example, define Hamiltonian systems over odd-dimensional manifolds $P$.

Another benefit of the Hamiltonian framework is that it provides an elegant interpretation of conservation laws. Given a Poisson manifold $(P, \{\cdot, \cdot\})$, let $G$ be a Lie group acting on $P$ with the corresponding Lie algebra denoted $\mathfrak{g}$. We say that the action is {\em Hamiltonian} if for any $\xi \in \mathfrak{g}$, the infinitesimal generator $\xi_P$ is a Hamiltonian vector field for some Hamiltonian $J^\xi : P \rightarrow \mathbb{R}$, i.e., $\xi_P = X_{J^\xi}$. The {\em momentum map} $J : P \rightarrow \mathfrak{g}^*$ is then defined by $\left<J(z), \xi\right>_{\mathfrak{g}^* \times \mathfrak{g}} = J^\xi(z)$ for any $z \in P$ and $\xi \in \mathfrak{g}$. When the Hamiltonian $H : P \rightarrow \mathbb{R}$ is invariant under $G$-action, then {\em Noether's theorem} states that the momentum map is invariant under the corresponding dynamics, i.e., $J \circ \phi_t = J$, where $\phi_t$ is the integral curve of the Hamiltonian vector field $X_H$. Thus, the momentum map generalises the concept of conserved quantities, such as linear and angular momentum in classical mechanics.

\subsection{Stochastic calculus on manifolds}
Let $(\Omega, \mathcal{F}, \mathbb{P})$ be a probability triple, where $\Omega$ is a sample space, $\mathcal{F}$ is a $\sigma$-algebra and $\mathbb{P}$ is a probability measure. For $T>0$, a {\em stochastic process} is a collection of random variables $S_t : \Omega \rightarrow \mathbb{R}$ indexed by $t \in [0,T]$ such that the mapping is $\mathbb{P}$-measurable. For each $t \in [0, T]$, we let $\mathcal{F}_t$ denote the $\sigma$-algebra generated by the random variable $S_t$, forming a filtration $\mathcal{F}_t \subset \mathcal{F}_s \subset \mathcal{F}$ for all $t < s$. A process $\{M_t\}_{t \in [0, T]}$ is said to be {\em $\mathcal{F}_t$-adapted} if $M_t$ is $(\mathcal{F}_t, \mathbb{P})$-measurable and is furthermore a {\em martingale} if $M_t = \mathbb{E}[M_s | \mathcal{F}_t]$ holds for any $s > t$.
We denote the covariation of two stochastic processes $S_t^1$ and $S_t^2$ by $[S_\cdot^1, S_\cdot^2]_t$, and in the special case $S_t^1 = S_t^2 = S_t$, we set $[S_\cdot]_t := [S_\cdot, S_\cdot]_t$, which is the {\em quadratic variation} of $S_t$. We also consider the {\em total variation} of the process as the limit $V(S) := \lim_{\Delta t \rightarrow 0} \sum_{n=1}^{N} |S_{t_{n}} - S_{t_{n-1}}|$ and say that the process has {\em bounded variations} if $V(S) < \infty$. The space of bounded variation processes is the largest subset of continuous functions where one can define the Riemann-Stieltjes integral. In particular, $[S_\cdot^1, S_\cdot^2]_t$ and $[S_\cdot]_t$ are bounded variation processes.

A {\em semimartingale} is a stochastic process $\{S_t\}_{t \in [0, T]}$ that can be decomposed into a local martingale $M_t$ and a bounded variation process $A_t$, i.e., $S_t = M_t + A_t$. This forms the largest class of processes where one can define {\em stochastic integrals}. In particular, we will mainly use Stratonovich's definition of stochastic integrals, which we denote by $X_t \mapsto \int_0^T X_t \circ \diff S_t$ for some $\mathcal{F}_t$-adapted semimartingale $X_t$, with a ``$\circ$" between the integrand and the integrator. For It\^o's definition of stochastic integrals, we simply omit the ``$\circ$" symbol. A major advantage of Stratonovich's definition of stochastic integrals is that it satisfies the usual chain rule
$f(S_T) - f(S_0) = \int^T_0 \frac{\partial f}{\partial S_t} \circ \diff S_t$,
for any smooth function $f : \mathbb{R} \rightarrow \mathbb{R}$. This allows us to extend the definition of semimartingales and Stratonovich integrals from $\mathbb{R}$ to manifolds naturally by working on local charts \cite{norris1992complete}. That is, for a smooth manifold $Q$, we say that a collection of measurable functions $Z_t : \Omega \rightarrow Q$, $t \in [0, T]$ is a {\em $Q$-valued semimartingale} if on a local chart, each coordinate $Z_t^i$ is a semimartingale, and for any $T^*Q$-valued semimartingale $\{\alpha_t\}_{t \in [0, T]}$, we define the Stratonovich integral $\int^T_0 \left<\alpha_t, \circ \,\diff Z_{t}\right>$ by $\sum_{i=1}\int^T_0 \alpha_i(t) \circ \diff Z_{t}^i$ on local charts. The latter expression is invariant under change of coordinates due to the chain rule for Stratonovich integrals.

Now, given a manifold $Q$, a set of vector fields $X_1, \ldots, X_N \in \mathfrak{X}(Q)$ and a set of $\mathbb{R}$-valued semimartingales $S_t^1, \ldots, S_t^N$, we consider a $Q$-valued semimartingale $\{Z_t\}_{t \in [0, T]}$ satisfying the system
\begin{align}\label{eq:sde-integral-form}
    \int^t_0 \left<\alpha, \circ \,\diff Z_{t'}\right> = \sum_{i=1}^N \int^t_0 \left<\alpha, X_i(Z_{t'})\right> \circ \diff S_{t'}^i, \quad t \in [0, T],
\end{align}
for any $\alpha \in \Omega^1(Q)$.
We say that $\{Z_t\}_{t \in [0, T]}$ is a solution to a {\em Stratonovich stochastic differential equation (SDE)}, which, despite its name, is in fact an integral equation. However, we often adopt a shortened ``differential" notation, expressing the SDE \eqref{eq:sde-integral-form} in the form
\begin{align}\label{eq:sde-differential-form}
\diff Z_t = \sum_{i=1}^N X_i(Z_t) \circ \diff S_t^i,
\end{align}
for convenience. For any $f \in C^\infty(Q; Q)$, and $\{Z_t\}_{t \in [0, T]}$ solving \eqref{eq:sde-differential-form}, one can check that the following identity holds
\begin{align}\label{eq:strat-chain-rule}
    \diff f(Z_t) = \sum_{i=1}^N f_* X_i(Z_t) \circ \diff S_t^i,
\end{align}
which we refer to as the {\em Stratonovich chain rule}. Here, $f_* : \mathfrak{X}(Q) \rightarrow \mathfrak{X}(Q)$ denotes the pushforward of vector fields along $f$.

\section{Stochastic Hamilton-Pontryagin principle}\label{sec:classical_mechanics}

In this section, we develop principles within which one can derive stochastic equations of motion that will form the basis of the remaining parts of the work. By incorporating stochastic effects within the high-level principles of classical mechanics, one is able to naturally preserve certain geometric structures of mechanical systems, such as symplecticity and conservation laws. In Section \ref{sec:stoch-lagrangian}, we consider a stochastic extension of Hamilton-Pontryagin principle, a control-theoretic reformulation of Hamilton's principle, to constrain our processes to satisfy an SDE via Lagrange multipliers. This has already been considered in \cite{gay2018stochastic} at a formal level on Lie group-valued processes, however our emphasis here is in making rigorous sense of it to justify the procedure. We then show equivalence of the resulting system with Bismut's stochastic Hamiltonian diffusion \cite{bismut1981mechanique} in Section \ref{sec:stoch-ham}, which immediately implies the preservation of symplectic volume under the stochastic dynamics and the momentum map, when the system is invariant under a Lie group.


\subsection{Stochastic Lagrangian mechanics}\label{sec:stoch-lagrangian}



In Hamilton's principle, as stated in Section \ref{sec:Lagrangian_prelim}, one seeks to find an extrema of the action $\mathcal{S}$ (a time integral of the Lagrangian $L : TQ \rightarrow \mathbb{R}$), over $C^2$-paths $q : [t_1, t_2] \rightarrow Q$ with fixed endpoints. By the differentiability assumption, the path $q_t$ directly lifts to a path $(q_t, \dot{q}_t)$ on $TQ$. Thus, taking variations of $\mathcal{S}$ over a family of paths on $Q$ induce variations over a family of paths on $TQ$, which is crucial for determining the Euler-Lagrange equations, as the Lagrangian is a function of $TQ$. It is also important that we are not considering variations of arbitrary paths $(q_t, V_t)$ on $TQ$, but rather, only those induced by differentiable curves $q_t$ on $Q$, as the derivation of the Euler-Lagrange equations from Hamilton's principle heavily depends on the relation $\delta V_t = \frac{\diff}{\diff t} \delta q_t$. In particular, this only holds when $V_t = \dot{q}_t$. We see that this poses immediate problem when we try to extend Hamilton's principle to accommodate $C^0$-stochastic processes $q_t$, as its time-derivative is not necessarily defined. Moreover, even if we consider smooth approximations to the process and try to invoke a Wong-Zakai type limiting argument, it is still not clear how this yields an SDE in the limit (e.g. how must the LHS in \eqref{eq:euler-lagrange} converge in the limit if it ever does?).

This leads us to consider a variational principle of a slightly different nature, wherein the equation satisfied by the path $q_t$ is {\em directly imposed} via Lagrange multipliers, which is the idea behind {\em Hamilton-Pontryagin principle} \cite{YOSHIMURA2007381}. In this framework, variations of the action are not taken over curves in $Q$, but rather on an extended space referred to as the {\em Pontryagin bundle}, defined as follows.

\begin{definition}
    For the configuration manifold, $Q$, the \emph{Pontryagin bundle}, denoted by $TQ\oplus T^*Q$, is defined as the Whitney sum of the vector bundles $TQ$ and $T^*Q$. That is, it is a vector bundle with base manifold $Q$, whose fibre over $q\in Q$ is $T_qQ \oplus T^*_qQ$.
\end{definition}
Locally, the coordinates $(q, V, p)$ of $TQ\oplus T^*Q$ describe respectively, the position/configuration, velocity and momenta of the system. The momenta in particular acts as a Lagrange multiplier enforcing the relation between the position and velocity variables, similar to the role of the adjoint state variable in optimal control. Thus, under the Hamilton-Pontryagin framework, the variational principle \eqref{eq:hamilton's-principle} can be re-expressed as
\begin{equation}\label{eqn:HP_action}
	0 = \delta \int_{t_0}^{t_1} L(q_t, V_t) + \langle p_t, \dot q_t - V_t \rangle \,\diff t \,,
\end{equation}
where the pairing $\langle \,\cdot \, , \, \cdot \,\rangle : T^*_qQ\times T_q Q \rightarrow \mathbb{R}$ is the natural nondegenerate pairing on this space. Here, the variation is taken over arbitrary $C^1$-curves $[t_1, t_2] \rightarrow TQ \oplus T^*Q$, such that the base curve $q_t \in Q$ is $C^2$-smooth with fixed endpoints.
An advantage of the Hamilton-Pontryagin framework is that one now has the flexibility to impose arbitrary relations satisfied by $q_t$, beyond that considered in \eqref{eqn:HP_action}.
In particular, can can ask if it would be possible to impose a stochastic relation such as
\begin{equation}\label{eqn:stochastic_position_velocity}
	\diff q_t = V_t \, \diff t + \sum_{i \geq 1} \Xi_i(q_t) \circ \diff S_t^i \,,
\end{equation}
where $\{\Xi_i\}_{i \geq 1}$ is a collection of smooth vector fields over $Q$ and $S_t = (t,S_t^1,S_t^2,\dots)$ is the \emph{driving semimartingale}, as in \cite{street2021semi}. As a result, one hopes to obtain stochastic equations of motion, such that certain properties of the original deterministic system are retained. 
In the remainder of this section, we discuss how to make sense of this variational principle under stochastic constraints and subsequently derive an extension of the Euler-Lagrange equations that is stochastic. 


We begin by imposing some assumptions on the sequence of semimartingales $\mathcal{S}_t$ that is used to drive the process $q_t$. This is given in the following definition. 

\begin{definition}[Driving semimartingale]\label{def:driving_semimartingale}
 	Suppose we have a sequence of $\mathbb{R}$-valued continuous semimartingales $S_t := (S_t^0, S_t^1,\dots)$, and consider the Doob-Meyer decomposition of each component,
    \begin{equation}
	   S_t^i = A_t^i + M_t^i \,,
    \end{equation}
    where each $M_t^i$ is a local martingale and $A_t^i$ is a c\`adl\`ag adapted finite variation process. Then we say that $S_t$ is a \emph{driving semimartingale} if:
	\begin{enumerate}
        \item $S_t^0 = t$.
		\item For all $i > 0$, we have that if $M_t^i \equiv 0$, then $A_t^i \equiv 0$ (i.e., if $S_t^i$ is a non-zero process, then it must contain a non-zero martingale part). Furthermore, we assume that each $M_t^i$ is a square-integrable martingale, i.e., $\sup_{t\geq 0} \mathbb{E}[(M_t^i)^2] < \infty$ and $A_t^i$ is an increasing process unless $A_t^i \equiv 0$. In particular, the former condition implies that $[M^i]_t$ for $M_t^i \not\equiv 0$ is strictly increasing in $t$.
        \item The covariation between distinct components of the semimartingale vanishes, i.e., for all $i \neq j$, we have $[S^i,S^j]_t = 0 \,$ for all $t > 0$.
	\end{enumerate}
 For convenience, when there are processes $S_t^i$ that are identically zero, we exclude them from $S_t$. Thus, if $S_t \in c_{00}$ (an eventually-zero sequence), then we represent it as a finite vector $S_t \in \mathbb{R}^N$. We also denote by $\mathcal{F}_t$ the sigma-algebra generated by $S_t$, forming a filtration.
\end{definition}

\begin{remark}
    These conditions are imposed such that the stochastic variational principle is well-defined. Hamiltonian systems, which can exist outside of the variational principle, can be defined with no such limitations on the form of each $S_t^i$. Furthermore, we make use of the the first condition, $S_t^0 = t$, only when performing stochastic Lagrangian reduction. It is possible to study the stochastic Euler-Lagrange  equations under the weaker assumption that $S_t^0$ is a strictly increasing c\`adl\`ag adapted finite-variation process.
\end{remark}

We note that under certain conditions on the vector fields $\{\Xi_i\}_{i \geq 1}$ and the driving semimartingale $S_t$, we can make sense of the stochastic integral in \eqref{eqn:stochastic_position_velocity} for {\em countably infinite} diffusion terms. For example, let $V \subset Q$ be a compact set and $U \subset V$ be a local chart on $Q$. Further, let $q(0) \in U$, $\tau_U$ be the stopping time $\tau_U := \inf \{t \in [0, T] : q(t) \notin U\}$ and assume that $\{\Xi_i\}_{i\geq 1}$ satisfies $\sum_{i = 1}^\infty \|\Xi_i\|_{C^0(U)}^2 \big(1+\|\Xi_i\|_{C^1(U)}^2\big) 
\mathbb{E}\big[([M_\cdot^i]_t - [M_\cdot^i]_0) + (A_t^i - A_0^i)] < \infty,$
for $t \in \tau_U$. Then for $N \geq 1$, we have
\begin{align*}
    \mathbb{E}\left[\left|\int^t_0 \Xi_N(q_r) \circ \diff S_r^N\right|^2\right] &= \mathbb{E}\left[\left|\frac12 \int^t_0 \sum_{j=1}^{d} \Xi_N^j(q_r) \frac{\partial \Xi_N(q_r)}{\partial q_r^j}\diff [M^N_\cdot]_r + \int^t_0 \Xi_N(q_r) \,\diff S_r^N\right|^2\right] \\
    &\leq \mathbb{E}\left[\frac12 \int^t_0 \left(\sum_{j=1}^{d} \Xi_N^j(q_r) \frac{\partial \Xi_N(q_r)}{\partial q_r^j} \right)^2 + |\Xi_N(q_r)|^2 \,\diff [M_\cdot^N]_r + \int^t_0 |\Xi_N(q_r)|^2 \,\diff A_r^N\right] \\
    &\lesssim \|\Xi_N\|_{C^{0}_U}^2 \big(1+\|\Xi_N\|_{C^1_U}^2\big) 
\mathbb{E}\big[([M_\cdot^N]_t - [M_\cdot^N]_0) + (A_t^N - A_0^N)\big] \\
& \rightarrow 0,
\end{align*}
as $N \rightarrow \infty$, where we used the Stratonovich-to-It\^o conversion in the first line, and Cauchy's inequality, together with the  It\^o isometry in the second line.
This implies that $Z^N_s := \sum_{i=1}^N \int^s_0 \Xi_i(q_r) \circ \diff S_r^i$ forms a Cauchy sequence in $L^2(\Omega \times [0, t]; U)$, which allows us to rigorously talk about the limiting process $Z^\infty_s := \lim_{N \rightarrow \infty} Z^N_s \in L^2(\Omega \times [0, t]; U)$, thus making sense of \eqref{eqn:stochastic_position_velocity} for countably infinite diffusion terms for all $0 < t < \tau_U$. We can furthermore extend the time beyond $\tau_U$ by switching to an overlapping chart and assuming a similar condition for the $\Xi_i$'s on this chart.

Hereafter, we will take the driving semimartingale $S_t$ to be of the form described in Definition \ref{def:driving_semimartingale} and the $\Xi_i$'s to be regular enough for the SDEs to make sense. This is since the choice commonly made in the literature, $S_t^0 = t$, $S_t^i = W_t^i$ for i.i.d. Brownian motions $W_t^i$, is admissible within this context.

Following \cite{street2021semi}, we also introduce the useful notion of \emph{compatibility} with respect to a driving semimartingale. This definition formalises the understanding that the evolution of variables within our system is to be governed by stochastic-in-time equations with respect to the driving semimartingale.

\begin{definition}[Compatibility with the driving semimartingale]\label{def:compatibility}
 We say that a stochastic process $f_t : \Omega \rightarrow Q$ is \emph{compatible} with the driving semimartingale, $\{ S_t^i\}_{i \geq 0}$, if there exists a collection of $TQ$-valued semimartingales $\{F_t^i\}_{i \geq 0}$, such that
	\begin{equation}\label{eqn:compatibility_definition}
		\int^t_0 \left<\alpha_{t'}, \,\circ\, \diff f_{t'}\right> = \sum_{i\geq 0} \int^t_0 \left<\alpha_{t'}, F_{t'}^i\right> \circ \diff S_{t'}^i \,,
	\end{equation}
 holds for any $T^*Q$-valued semimartingale $\alpha_t$.
\end{definition}
\begin{remark}\label{rmk:compatibility}
    In the case $S_t = (t, W_t^1, \ldots, W_t^N)$ on $Q = \mathbb{R}^n$, the martingale representation theorem \cite[Theorem 4.3.4]{oksendal2013stochastic} states that a sufficient condition for $f_t$ to admit a unique decomposition of the form \eqref{eqn:compatibility_definition} is for $f_t$ to be $\mathcal{F}_t$-adapted, where $\{\mathcal{F}_t\}_{t \geq 0}$ is the filtration generated by $W_t = (W_t^1, \ldots, W_t^N)$. 
    Moreover, by the Clark-Ocone theorem, $F_t^i$ in \eqref{eqn:compatibility_definition} has the explicit expression $\mathbb{E}[D_t^i f | \mathcal{F}_t]$, where $D_t f$ is the {\em Malliavin derivative}, formalising the notion of derivatives with respect to the Brownian motion.
    We conjecture that an analogous result holds for manifolds and general driving semimartingales (and perhaps already known in the literature),
    however until this is known, we impose this as an assumption.
\end{remark}

It remains to demonstrate that taking variations of a stochastic action integral, where the time integration is taken against a driving semimartingale in the Stratonovich sense, is formally reasonable and results in usable equations and relationships. This issue is handled in the following Lemma.

\begin{lemma}[Fundamental lemma of the stochastic calculus of variations]\label{lemma:fundamental}
Let $E$ be a vector bundle and suppose we have a collection of $E$-valued continuous semimartingales $\{F_i\}_{i\geq 0}$, and a driving semimartingale, $S_t$, as defined in Definition \ref{def:driving_semimartingale}. Further, for $0 < t_0 < t_1 < \infty$, let $X$ be any subset of continuous curves $[t_0, t_1] \rightarrow E^*$ with compact support that is moreover dense in $L^2([t_0, t_1]; E^*)$. If, for any family of $E^*$-valued continuous semimartingales $\{\alpha_i(t)\}_{i \geq 0}$ with $\alpha_i \in X$, we have
\begin{equation}\label{eqn:lemma_assumption}
    \sum_{i\geq 0}\int_{t_0}^{t_1}\scp{\alpha_i(t)}{F_i(t)}_{E^* \times E}\circ \diff S_t^i = 0 \,,
\end{equation}
then
\begin{equation}\label{eqn:lemma_conclusion}
    F_i(t) = 0 \,,
\end{equation}
on the fibres for each $i \geq 0$ and all $t \in [t_0,t_1]$.
\end{lemma}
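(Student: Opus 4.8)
The plan is to establish the conclusion one component at a time, fixing an index $j \geq 0$ and showing $F_j(t) = 0$ for all $t$, by making an adversarial choice of the test semimartingales $\{\alpha_i(t)\}_{i \geq 0}$ that isolates the $j$-th summand. The natural first move is to localize: set $\alpha_i \equiv 0$ for all $i \neq j$, so that the hypothesis \eqref{eqn:lemma_assumption} collapses to the single-index statement
\begin{equation*}
    \int_{t_0}^{t_1}\scp{\alpha_j(t)}{F_j(t)}_{E^* \times E}\circ \diff S_t^j = 0
\end{equation*}
for every admissible $\alpha_j \in X$. Here I would exploit the structure imposed by Definition \ref{def:driving_semimartingale}: since the covariations $[S^i, S^j]_t$ vanish for $i \neq j$, zeroing out the other components is consistent and does not reintroduce cross terms, and since each nonzero $S_t^j$ carries a genuine square-integrable martingale part $M_t^j$ with $[M^j]_t$ strictly increasing, the integrator is ``rich enough'' to probe $F_j$ pointwise in time.

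Next I would convert the Stratonovich integral to Itô to remove the correction term and expose a clean martingale/bounded-variation splitting. Writing the pairing as a scalar semimartingale $G_j(t) := \scp{\alpha_j(t)}{F_j(t)}$, the identity reads $\int_{t_0}^{t_1} G_j(t) \circ \diff S_t^j = 0$. Converting to Itô introduces a term proportional to $d[G_j, S^j]_t$; the essential point is that the resulting Itô integral against the martingale part $M_t^j$ is itself a martingale with zero expectation, while the bounded-variation part $A_t^j$ (which is increasing when nonzero) and the quadratic variation $d[M^j]_t$ (strictly increasing) furnish strictly positive measures on $[t_0,t_1]$ against which I can integrate. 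The strategy is then a standard density/localization argument: because $X$ is dense in $L^2([t_0,t_1]; E^*)$ and consists of compactly supported curves, for any fibre direction and any subinterval I can choose $\alpha_j$ to approximate an indicator concentrated near a given time $t^*$, so that the vanishing of the integral forces $\scp{\alpha_j(t^*)}{F_j(t^*)} = 0$ $\mathbb{P}$-almost surely; ranging $\alpha_j$ over a spanning set of fibre directions then yields $F_j(t^*) = 0$. Continuity of $F_j$ upgrades the almost-everywhere-$t$ conclusion to all $t \in [t_0, t_1]$.

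The main obstacle, and the step requiring the most care, is the rigorous passage from ``the integral vanishes for all test $\alpha_j$'' to ``$F_j$ vanishes pointwise,'' because the integrator $\diff S_t^j$ is stochastic rather than a deterministic measure. The subtlety is that $F_j$ and $\alpha_j$ are themselves semimartingales, so one cannot simply treat this as a deterministic Riemann--Stieltjes problem and invoke the classical fundamental lemma of the calculus of variations. I would handle this by taking expectations to kill the martingale contribution and reduce to a genuine (random-measure) integral against $\diff[M^j]_t + \diff A_t^j$, whose strict positivity is guaranteed precisely by conditions (2) of Definition \ref{def:driving_semimartingale}; this is where the hypotheses on the driving semimartingale are indispensable, since if $[M^j]_t$ were allowed to be constant on an interval the argument would fail to detect $F_j$ there. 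A secondary technical point is ensuring the density of $X$ in $L^2$ transfers correctly to the \emph{pathwise} or \emph{$L^2(\Omega \times [t_0,t_1])$} setting so that the approximating test functions remain admissible semimartingales; I expect this to follow from a routine mollification argument, but it is where the compact-support and density assumptions on $X$ earn their keep.
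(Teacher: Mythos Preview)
Your localization step (setting $\alpha_i \equiv 0$ for $i \neq j$) is reasonable, and indeed the paper's argument also exploits the vanishing cross-covariations $[S^i,S^j]_t = 0$ to decouple the components. But the core of your plan---convert to It\^o, take expectations to kill the martingale part, and reduce to an integral against $\diff[M^j]_t + \diff A_t^j$---does not go through as stated. After the Stratonovich-to-It\^o conversion you obtain
\[
\int_{t_0}^{t_1} G_j(t)\,\diff M_t^j + \int_{t_0}^{t_1} G_j(t)\,\diff A_t^j + \tfrac{1}{2}\bigl([G_j,M^j]_{t_1}-[G_j,M^j]_{t_0}\bigr)=0,
\]
where $G_j(t)=\scp{\alpha_j(t)}{F_j(t)}$. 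The correction term $[G_j,M^j]$ is \emph{not} of the form $\int G_j\,\diff[M^j]$ or $\int G_j^2\,\diff[M^j]$; it depends on the covariation between the (unknown) martingale part of $F_j$ and $M^j$. After taking expectations and varying $\alpha_j$ over $X$, you obtain at best a single linear relation between $\mathbb{E}[F_j\,\diff A^j]$ and $\mathbb{E}[\diff[F_j,M^j]]$ as measures in $t$, which does not force $F_j=0$. In particular, when $S_t^j$ is a pure martingale ($A^j\equiv 0$), the expectation step discards all the information.

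The paper's proof bypasses this by taking the \emph{quadratic variation} of the entire sum \eqref{eqn:lemma_assumption} rather than the expectation. Since Stratonovich and It\^o integrals differ by a bounded-variation process, the quadratic variation is
\[
0=\sum_{i,j\geq 0}\int_{t_0}^{t_1}\scp{\alpha_i}{F_i}\scp{\alpha_j}{F_j}\,\diff[S^i,S^j]_t=\sum_{i\geq 1}\int_{t_0}^{t_1}\scp{\alpha_i}{F_i}^2\,\diff[M^i]_t,
\]
using $[S^i,S^j]=0$ for $i\neq j$ and $[M^0]=0$. Each summand is nonnegative, and since $[M^i]$ is strictly increasing for $i\geq 1$, you get $\scp{\alpha_i}{F_i}=0$ pathwise; varying $\alpha_i$ and invoking continuity gives $F_i\equiv 0$ for $i\geq 1$. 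Substituting back leaves only $\int_{t_0}^{t_1}\scp{\alpha_0}{F_0}\,\diff t=0$, which is the classical deterministic fundamental lemma. The missing idea in your plan is precisely this: quadratic variation produces a sum of squares against the strictly increasing clock $[M^i]$, whereas expectation produces only a linear constraint that cannot separate $F_j$ from the correction term.
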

\begin{remark}
    A version of this Lemma has also appeared previously in the literature \cite{street2021semi}, with the difference being that this version of the Lemma requires each $F_i(t)$ to vanish independently as opposed to the sum of their stochastic integrals being almost surely equal to zero. Furthermore, whilst this result has been stated here for vector bundle-valued processes for maximal generality, we can clearly also state this for processes that take values in a given vector space. In the deterministic case (see e.g. \cite[Lemma 1]{gelfand2000calculus}), the set $X$ is typically taken to be the set of compactly supported smooth curves, but of course this also holds for any family $X$ that is dense in $L^2$, e.g. compactly supported $C^1$-curves.
\end{remark}
\begin{proof}
Taking the quadratic variation of \eqref{eqn:lemma_assumption}, we have
\begin{align*}
    0 &= \sum_{i,j\geq 0}\int_{t_0}^{t_1}\scp{\alpha_i(t)}{F_i(t)}\scp{\alpha_j(t)}{F_j(t)} \diff [S^i,S^j]_t
    \\
    &= \sum_{i\geq 0}\int_{t_0}^{t_1}\scp{\alpha_i(t)}{F_i(t)}^2 \diff[M^i]_t \,,
\end{align*}
where we have used the fact that $[S^i,S^j]_t$ is assumed, in Definition \ref{def:driving_semimartingale}, to be zero when $i\neq j$. Furthermore, we have that $[M^i]_t$ is strictly increasing and is assumed to be zero when $i=0$. Thus, we have that $F_i(t) \equiv 0$ on $(t_0,t_1)$ for all $i\geq 1$ by the non-degeneracy of the pairing, which can be further extended to $[t_0, t_1]$ by the continuity of $F_i$. It remains to show this for $i=0$. Substituting this back into equation \eqref{eqn:lemma_assumption}, we have
\begin{equation*}
   \int_{t_0}^{t_1}\scp{\alpha_0(t)}{F_0(t)}_{E^*\times E} \diff t = 0 \,.
\end{equation*}
This directly implies that $F_0(t) \equiv 0$ in the same manner as proving the deterministic fundamental lemma of the calculus of variations \cite[Lemma 1]{gelfand2000calculus}.
\end{proof}


As an immediate consequence, we can show that if a stochastic process is compatible with the driving semimartingale in the sense of Definition \ref{def:compatibility}, then the decomposition \eqref{eqn:compatibility_definition} is unique.

\begin{corollary} \label{cor:uniqueness-of-decomposition}
    Let $f_t$ be a $Q$-valued continuous semimartingale that is compatible with a driving semimartingale $S_t$. Then the decomposition \eqref{eqn:compatibility_definition} is unique.
\end{corollary}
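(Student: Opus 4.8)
The plan is to reduce uniqueness to the fundamental lemma just established. Suppose $f_t$ admits two representations of the form \eqref{eqn:compatibility_definition}, governed by collections of $TQ$-valued continuous semimartingales $\{F_t^i\}_{i\geq 0}$ and $\{G_t^i\}_{i\geq 0}$, each valid for every $T^*Q$-valued semimartingale $\alpha_t$. Setting $H_t^i := F_t^i - G_t^i$ and subtracting the two identities, the common left-hand side $\int_0^t \langle \alpha_{t'}, \circ\, \diff f_{t'}\rangle$ cancels, so that
\begin{equation*}
    \sum_{i\geq 0}\int_0^t \langle \alpha_{t'}, H_{t'}^i\rangle \circ \diff S_{t'}^i = 0
\end{equation*}
holds for every $\alpha_t$ and every $t \in [0,T]$. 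It then remains to show $H_t^i \equiv 0$ for each $i$, which is precisely the conclusion of Lemma \ref{lemma:fundamental}.

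To invoke the lemma I would fix a subinterval $[t_0,t_1]$, pass to a local chart on $Q$ so that the fibre pairing becomes the Euclidean one, and let $\alpha$ range over a family of test curves dense in $L^2$ (say deterministic, compactly supported $C^1$ curves), which is legitimate since the compatibility identity is assumed for \emph{all} $\alpha$. The one point deserving care is that in the displayed identity a single $\alpha$ multiplies every $H_t^i$, whereas Lemma \ref{lemma:fundamental} is phrased with an independently chosen test process $\alpha_i$ in each summand. This gap is harmless: the lemma's proof proceeds by taking the quadratic variation, which kills all cross terms $[S^i,S^j]_t$ with $i\neq j$ by Definition \ref{def:driving_semimartingale} and hence never couples distinct indices. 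Running the same computation with the shared $\alpha$ yields
\begin{equation*}
    0 = \sum_{i\geq 1}\int_{t_0}^{t_1} \langle \alpha_{t'}, H_{t'}^i\rangle^2 \, \diff [M^i]_{t'},
\end{equation*}
a sum of non-negative terms, so each vanishes, and strict monotonicity of $[M^i]_t$ together with continuity of $H^i$ forces $H_{t'}^i = 0$ on $[t_0,t_1]$ for every $i \geq 1$. Substituting this back reduces the identity to $\int_{t_0}^{t_1}\langle \alpha_{t'}, H_{t'}^0\rangle \, \diff t' = 0$, since $S_t^0 = t$, and the deterministic fundamental lemma finishes the $i=0$ case. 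Letting $[t_0,t_1]$ and the chart vary recovers $F_t^i = G_t^i$ globally.

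I expect the only genuine obstacle to be the diagonal-versus-independent test-function subtlety flagged above: it is tempting to quote Lemma \ref{lemma:fundamental} as a black box, but its hypothesis is formally stronger than what the compatibility identity supplies, so one must observe that the quadratic-variation mechanism in its proof already decouples the indices and therefore survives the restriction to a single shared $\alpha$. The remaining ingredients --- localisation to charts, density and non-degeneracy to separate fibre directions, and continuity to pass from the open to the closed interval --- are routine and mirror the proof of the lemma itself.
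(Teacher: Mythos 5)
Your proof is correct and follows essentially the same route as the paper: subtract the two decompositions and apply Lemma \ref{lemma:fundamental} to the differences $F_t^i - \tilde{F}_t^i$. The diagonal-versus-independent test-process subtlety you flag is genuine --- the paper's own proof simply cites the lemma without comment --- and your resolution (re-running the quadratic-variation argument with a single shared $\alpha$, which decouples the indices because $[S^i,S^j]_t = 0$ for $i \neq j$) is exactly the right way to close that formal gap.
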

\begin{proof}
    Suppose that there exists two collections of $TQ$-valued semimartingales $\{F_t^i\}_{i \geq 0}$ and $\{\tilde{F}_t^i\}_{i \geq 0}$ such that
    \begin{equation}
		\int^{t_1}_{t_0} \left<\alpha_t, \,\circ\, \diff f_t\right> = \sum_{i\geq 0} \int^{t_1}_{t_0} \left<\alpha_t, F_t^i\right> \circ \diff S_t^i = \sum_{i\geq 0} \int^{t_1}_{t_0} \big<\alpha_t, \tilde{F}_t^i\big> \circ \diff S_t^i \,,
	\end{equation}
 for an arbitrary $T^*Q$-valued semimartingale $\alpha_t$. Then, we have
 \begin{align}
     \sum_{i\geq 0}\int_{t_0}^{t_1} \scp{\alpha_t}{F_t^i - \tilde F_t^i}_{T^*Q \times TQ} \circ \diff S_t^i = 0,
 \end{align}
 which implies that $F_t^i = \tilde{F}_t^i$ for all $i \geq 0$ and for all $t \in [t_0,t_1]$, by Lemma \ref{lemma:fundamental}.
\end{proof}

This uniqueness of decomposition is key to deducing stochastic differential equations from a stochastic action principle. To this end, we make use of the following result.

\begin{corollary}\label{cor:fundamental_lemma}
    Let $Q$ be a finite-dimensional manifold, $0 < t_0 < t_1 < \infty$, and $X$ a subset of continuous curves $[t_0, t_1] \rightarrow T^*Q$ with compact support that is moreover dense in $L^2([t_0, t_1]; T^*Q)$. Suppose that for a $TQ$-valued semimartingale $f_t$ that is compatible with a driving semimartingale $S_t$ in the sense of Definition \ref{def:compatibility}, we have that
    \begin{equation}\label{eqn:corollary_assumption}
        \int_{t_0}^{t_1} \scp{\alpha_t}{\circ \,\diff f_t - \sum_{i\geq 0}\tilde F_t^i \circ \diff S_t^i}_{T^*Q \times TQ} = 0 \,,
    \end{equation}
    for a collection of $TQ$-valued semimartingales $\{\tilde{F}_t^i\}_{i\geq 0}$ and an arbitrary $T^*Q$-valued semimartingale $\alpha_t \in X$. Then $f_t$ satisfies
	\begin{equation}\label{eqn:corollary_conclusion}
		\diff f_t = \sum_{i \geq 0} \tilde F_t^i \circ \diff S_t^i \,.
	\end{equation}
\end{corollary}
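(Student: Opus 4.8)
The plan is to reduce the hypothesis \eqref{eqn:corollary_assumption} to the form required by Lemma \ref{lemma:fundamental} by feeding in the compatibility of $f_t$, and then to read off the coefficients. First I would invoke Definition \ref{def:compatibility}: since $f_t$ is compatible with the driving semimartingale $S_t$, there is a collection of $TQ$-valued semimartingales $\{F_t^i\}_{i \geq 0}$ with
\begin{equation*}
    \int_{t_0}^{t_1} \scp{\alpha_t}{\circ\, \diff f_t} = \sum_{i \geq 0} \int_{t_0}^{t_1} \scp{\alpha_t}{F_t^i} \circ \diff S_t^i
\end{equation*}
for every $T^*Q$-valued semimartingale $\alpha_t$. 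Substituting this for the first term of \eqref{eqn:corollary_assumption} and grouping the stochastic integrals against each $S_t^i$, the hypothesis becomes
\begin{equation*}
    \sum_{i \geq 0} \int_{t_0}^{t_1} \scp{\alpha_t}{F_t^i - \tilde F_t^i}_{T^*Q \times TQ} \circ \diff S_t^i = 0,
\end{equation*}
valid for every $\alpha_t \in X$.

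This is precisely the hypothesis \eqref{eqn:lemma_assumption} of Lemma \ref{lemma:fundamental}, taken with vector bundle $E = TQ$ and with the role of $F_i$ played by the differences $F_t^i - \tilde F_t^i$. Applying the lemma yields $F_t^i = \tilde F_t^i$ for every $i \geq 0$ and all $t \in [t_0, t_1]$. Substituting this equality back into the compatibility identity gives $\int_{t_0}^{t_1} \scp{\alpha_t}{\circ\, \diff f_t} = \sum_{i \geq 0} \int_{t_0}^{t_1} \scp{\alpha_t}{\tilde F_t^i} \circ \diff S_t^i$ for all $\alpha_t$, which is exactly the integral form of the claimed SDE \eqref{eqn:corollary_conclusion}. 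Thus, modulo one point of care, the corollary is a direct composition of compatibility with the fundamental lemma.

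The one thing I expect to need verifying, rather than a genuine obstacle, is that the corollary supplies the vanishing functional only for a single test curve $\alpha_t$ applied uniformly across all components, whereas Lemma \ref{lemma:fundamental} is phrased for families $\{\alpha_i(t)\}$ whose test curves may differ between components. This gap is closed by the structure of the lemma's own proof: taking the quadratic variation and using the vanishing of $[S^i,S^j]_t$ for $i \neq j$ from condition (3) of Definition \ref{def:driving_semimartingale} decouples the sum into $\sum_{i \geq 0} \int_{t_0}^{t_1} \scp{\alpha_t}{F_t^i - \tilde F_t^i}^2 \diff [M^i]_t = 0$, so each summand vanishes separately. Strict monotonicity of $[M^i]_t$ for $i \geq 1$ (and $\diff S_t^0 = \diff t$ for $i = 0$) then forces $\scp{\alpha_t}{F_t^i - \tilde F_t^i} = 0$ for each fixed $i$ and all $t$ by continuity, and letting $\alpha_t$ range over the dense set $X$ together with non-degeneracy of the pairing recovers $F_t^i = \tilde F_t^i$. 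Hence the diagonal family furnished by the corollary is already enough to drive the lemma's conclusion, and no stronger hypothesis on the test curves is required.
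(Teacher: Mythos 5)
Your proof is correct and takes essentially the same route as the paper: the paper's proof is a one-line citation of the uniqueness of the decomposition (Corollary \ref{cor:uniqueness-of-decomposition}), and your argument simply unfolds that corollary's content in place — substitute the compatibility decomposition $\{F_t^i\}_{i \geq 0}$, subtract to obtain $\sum_{i\geq 0}\int_{t_0}^{t_1} \scp{\alpha_t}{F_t^i - \tilde F_t^i} \circ \diff S_t^i = 0$, and apply Lemma \ref{lemma:fundamental} to conclude $F_t^i = \tilde F_t^i$. Your further observation that the corollary only supplies the lemma's hypothesis for diagonal test families, and that this suffices because the quadratic-variation step in the lemma's proof decouples the components for each fixed family, is a legitimate point of care that the paper (in both Corollary \ref{cor:uniqueness-of-decomposition} and this corollary) leaves implicit.
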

\begin{proof}
    This follows directly from the uniqueness of decomposition \eqref{eqn:compatibility_definition} by Corollary \ref{cor:uniqueness-of-decomposition}.
\end{proof}

Returning to equation \eqref{eqn:stochastic_position_velocity}, for a driving semimartingale corresponding to Definition \ref{def:driving_semimartingale}, we will consider a stochastic variational principle in which the tangent vector $V$ performs the same role as the time derivative of the configuration, $\dot q_t$, does in Hamilton's principle.
The vector fields $\{\Xi_i\}_{i \geq 1}$ in \eqref{eqn:stochastic_position_velocity} are taken to be exogenous, determined from data for example, to model stochastic deviations from the ``primary'' process $\dot{q}_t = V_t$.
Within the Hamilton-Pontryagin principle, we may additionally include stochastic integral terms $\sum_{i} \int \Gamma_i(q)\circ\diff S_t^i$, which augment the term $\int L(q,V)\,\diff t$ in \eqref{eqn:HP_action}. These extra terms $\Gamma_i$, referred to as \emph{stochastic potentials} model stochastic forces that act vertically on the fibres (a vector field $F \in \mathfrak{X}(TQ)$ is {\em vertical} iff $\pi_* F \equiv 0$, where $\pi : TQ \rightarrow Q$ is the natural projection).
Putting this together, we arrive at the following variational principle, generalising Hamilton-Pontryagin principle to yield stochastic equations of motion.

\begin{theorem}[Stochastic Hamilton-Pontryagin Principle]\label{thm:stochastic_HP}
    For fixed $0 < t_0 < \infty$ and stopping time $t_1 > t_0$, consider the stochastic action functional
    \begin{align}\label{eq:HP-action}
        \mathcal{S} = \int_{t_0}^{t_1} L(q_t, V_t) \,\diff t + \sum_{i \geq 1} \Gamma_i(q_t)\circ \diff S_t^i + \scp{p_t}{\circ\,\diff q_t - V_t \,\diff t - \sum_{i \geq 1} \Xi_i(q_t) \circ \diff S_t^i},
    \end{align}
    where $(q_t, V_t, p_t)$ is a $\mathcal{F}_t$-adapted curve in $TQ \oplus T^*Q$ (here, $\mathcal{F}_t$ denotes the filtration generated by the driving semimartingale $S_t$). Then, taking the extrema of $\mathcal{S}$ among a family of $\mathcal{F}_t$-adapted $TQ \oplus T^*Q$-valued continuous semimartingales $Z_t = (q_t, V_t, p_t)$ that is compatible with $\{S_t^i\}_{i \geq 0}$ and such that the endpoints of the base process $q_t$ satisfy $q(t_0) = a$ and $q(t_1) = b$ for some $a \in Q$ and random variable $b : \Omega \rightarrow Q$, we obtain the implicit stochastic Euler-Lagrange equations, expressed on local charts as
	\begin{align}
		\diff p_t &= \frac{\p L}{\p q_t}\, \diff t + \sum_{i\geq 1} \frac{\p \Gamma_i}{\p q_t} \circ \diff S_t^i -  \sum_{i\geq 1} \frac{\p}{\p q_t} \scp{p_t}{\Xi_i}\circ \diff S_t^i
    \label{eq:stoch-euler-lagrange-mom}
		\,,\\
		p_t &= \frac{\p L}{\p V_t}
		\,,\\
		\diff q_t &= V_t \,\diff t + \sum_{i \geq 1} \Xi_i(q_t) \circ \diff S_t^i 
    \label{eq:stoch-euler-lagrange-pos}
		\,,
	\end{align}
    where we assume sufficient regularity on $L,$ $\{\Gamma_i\}_{i \geq 1}$ and $\{\Xi_i\}_{i \geq 1}$ for the right hand sides to make sense.
\end{theorem}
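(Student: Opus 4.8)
The plan is to mimic the classical derivation of the Euler--Lagrange equations from the Hamilton--Pontryagin principle, taking care at each point where the stochastic nature intervenes. First I would fix a local chart and introduce a smoothly $\epsilon$-parametrised family $Z_t^\epsilon = (q_t^\epsilon, V_t^\epsilon, p_t^\epsilon)$ of $\mathcal{F}_t$-adapted, compatible $TQ\oplus T^*Q$-valued continuous semimartingales with $Z_t^0 = Z_t$ and base endpoints held fixed, $q_{t_0}^\epsilon = a$ and $q_{t_1}^\epsilon = b$ for all $\epsilon$. Writing $\delta q_t := \p_\epsilon|_{\epsilon=0} q_t^\epsilon$ and similarly for $\delta V_t, \delta p_t$, the endpoint constraint forces $\delta q_{t_0} = \delta q_{t_1} = 0$. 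Crucially, because $V$ and $p$ are independent fibre variables on the Pontryagin bundle, the three variations are mutually independent; in particular I never need the classical relation $\delta V = \tfrac{\diff}{\diff t}\delta q$, which is exactly the relation that fails for non-differentiable paths, and this is what makes the Hamilton--Pontryagin formulation the natural one to stochasticise.

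Next I would compute $\delta\mathcal{S} = \p_\epsilon|_{\epsilon=0}\mathcal{S}[Z^\epsilon]$ by differentiating each term through its (stochastic) integral. The terms in $L$, $\Gamma_i$ and $\scp{p_t}{V_t}$ follow from the ordinary chain rule, producing $\p_q L\,\delta q_t + \p_V L\,\delta V_t$, $\sum_i \p_q\Gamma_i\,\delta q_t\circ\diff S_t^i$, and so on. The essential term is $\scp{p_t}{\circ\,\diff q_t}$: its variation is $\scp{\delta p_t}{\circ\,\diff q_t} + \scp{p_t}{\circ\,\diff(\delta q_t)}$, where I have used the interchange $\delta(\circ\,\diff q_t) = \circ\,\diff(\delta q_t)$. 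I would then apply Stratonovich integration by parts to the second piece, $\int_{t_0}^{t_1}\scp{p_t}{\circ\,\diff(\delta q_t)} = \scp{p_t}{\delta q_t}\big|_{t_0}^{t_1} - \int_{t_0}^{t_1}\scp{\circ\,\diff p_t}{\delta q_t}$; because Stratonovich calculus obeys the ordinary Leibniz rule there is no It\^o correction, and the boundary term vanishes by $\delta q_{t_0} = \delta q_{t_1} = 0$.

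Collecting terms by the three independent variations then gives a sum of three pairings. The $\delta V_t$-coefficient, integrated against $\diff t$ only, yields $p_t = \p L/\p V_t$ by the $i=0$ case of Lemma \ref{lemma:fundamental}. The $\delta p_t$-coefficient assembles into $\int_{t_0}^{t_1}\scp{\delta p_t}{\circ\,\diff q_t - V_t\,\diff t - \sum_{i\geq1}\Xi_i(q_t)\circ\diff S_t^i} = 0$, and since $q_t$ is compatible with $S_t$, Corollary \ref{cor:fundamental_lemma} delivers the kinematic SDE \eqref{eq:stoch-euler-lagrange-pos}. Finally the $\delta q_t$-coefficient gathers into $\int_{t_0}^{t_1}\scp{\p_q L\,\diff t + \sum_i \p_q\Gamma_i\circ\diff S_t^i - \circ\,\diff p_t - \sum_i \p_q\scp{p_t}{\Xi_i}\circ\diff S_t^i}{\delta q_t}$; here I would first use compatibility of $p_t$ to expand $\circ\,\diff p_t$ in the driving differentials $\{\circ\,\diff S_t^i\}_{i\geq0}$, so that the integrand takes the form $\sum_{i\geq0}\scp{G_t^i}{\delta q_t}\circ\diff S_t^i$, and then apply Lemma \ref{lemma:fundamental} to force each $G_t^i = 0$. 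Reassembling these coefficient-wise identities is precisely the momentum equation \eqref{eq:stoch-euler-lagrange-mom}.

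The main obstacle I anticipate is the rigorous justification of the two interchanges that the formal computation takes for granted: differentiating the Stratonovich integrals with respect to the variational parameter $\epsilon$, and the commutation $\delta(\circ\,\diff q_t) = \circ\,\diff(\delta q_t)$. This is the stochastic counterpart of the classical step $\delta\dot q = \tfrac{\diff}{\diff t}\delta q$, and it requires controlling the $\epsilon$-dependence of the semimartingale decomposition of $q_t^\epsilon$ uniformly enough to pass the $\epsilon$-derivative inside the integral, treating the martingale and finite-variation parts separately. A secondary but genuine subtlety is checking that the admissible variations are simultaneously constrained enough to be legitimate (adapted, compatible with $S_t$, and vanishing at the endpoints) and yet rich enough that their pairings range over a set dense in $L^2$, so that Lemma \ref{lemma:fundamental} and Corollary \ref{cor:fundamental_lemma} genuinely apply.
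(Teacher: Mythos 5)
Your proposal is correct and follows essentially the same route as the paper's proof: local-chart variations on the Pontryagin bundle with independent fibre variations $\delta V_t,\delta p_t$, the interchange $\delta(\circ\,\diff q_t) = \circ\,\diff(\delta q_t)$ justified through the compatibility decomposition of $q_t$, Stratonovich integration by parts with the boundary term killed by the endpoint conditions, and coefficient-wise application of Lemma \ref{lemma:fundamental} / Corollary \ref{cor:fundamental_lemma}. The only differences are cosmetic: the paper takes $t_1$ to be the first exit time of the chart (so the local computation is legitimate) and defers the global extension to the intrinsic formulation in Appendix \ref{app:intrinsic-stoch-EL}, while the two subtleties you flag at the end are exactly the ones the paper handles via the compatibility assumption and the explicit construction of an adapted variational family given after the proof.
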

\begin{remark}
    A similar variational principle was considered in \cite{bou2009stochastic}, where the Hamilton-Pontryagin approach was used to include the stochastic potential terms. However, the above theorem is more general in that it includes the noise vector fields $\Xi_i$. In the following section, the additional terms corresponding to $\{\Xi_i \}_{i\geq 1}$ will be shown to reduce in the Euler-Poincar\'e setting such that they encompass the `transport noise' considered for fluid mechanics in \cite{holm2015variational}.
\end{remark}
\begin{proof}
    For simplicity of presentation, we shall only prove the result in local coordinates. That is, we fix a local chart $U \subset Q$ with $q_{t_0} \in U$ and choose $t_1$ to be the  first exit time of $U$ such that $q_t \in U$ for all $t \in [t_0, t_1)$. A global extension is made possible by considering the intrinsic form of the stochastic Euler-Lagrange equations, using a slight extension of the argument given in \cite[Proposition 3.2]{yoshimura2006diracII}. We refer the readers to Appendix \ref{app:intrinsic-stoch-EL} for more details on the global extension.
    
    We start by taking the infinitesimal variation of the action functional $\mathcal{S}$ among all admissible paths $(q_t,V_t,p_t)$.
    To do so, we take a family of processes $\{q_{t,\epsilon}\}_{\epsilon \in (-1, 1)}$ depending smoothly on $\epsilon$ and is chosen such that, almost surely, we have $q_{t,\epsilon}|_{\epsilon=0} = q_t$ for all $t$ and $q_{t,\epsilon}|_{t=t_0,t_1} = q_t|_{t=t_0,t_1}$ for all $\epsilon$. We also define arbitrary processes $\delta V_t$ and $\delta p_t$ in the fibre of the Pontryagin bundle at $q_t$. The infinitesimal variation of $(q_t,V_t,p_t)\in TU \oplus T^*U \cong U \times \mathbb{R}^n \times \mathbb{R}^n$ (here, $n := \mathrm{dim}(Q)$) is defined as the process
    \begin{equation}
        (\delta q_t,\delta V_t,\delta p_t) := \frac{\p}{\p\epsilon}\bigg|_{\epsilon=0}(q_{t,\epsilon},V_t+\epsilon\delta V_t,p_t + \epsilon\delta p_t) \in T_{(q_t,V_t,p_t)}(TU\oplus T^*U) \cong \mathbb{R}^n \times \mathbb{R}^n \times \mathbb{R}^n\,.
    \end{equation}
    Now, taking $0 = \delta \mathcal{S}[q_t,V_t,p_t] := \left.\frac{\p}{\p\epsilon}\right|_{\epsilon=0} \mathcal{S}[q_{t,\epsilon},V_t+\epsilon\delta V_t,p_t + \epsilon\delta p_t]$, we get
    \begin{equation}
    \begin{aligned}\label{eq:local-chart-variations-HP}
        0 &= \int_{t_0}^{t_1} \scp{\frac{\p L}{\p q_t}\,\diff t + \sum_{i\geq 1} \frac{\p \Gamma_i}{\p q_t} \circ \diff S_t^i - \circ \diff p_t -\sum_{i\geq 1}\frac{\p}{\p q_t} \scp{p_t}{\Xi_i}\circ\diff S_t^i }{\delta q_t}  + \scp{p_t}{\delta q_t} \Big|_{t_0}^{t_1}
        \\
        &\qquad + \scp{\frac{\p L}{\p V_t} - p_t}{\delta V_t}\, \diff t + \scp{\delta p_t}{\circ\,\diff q_t - V_t \circ \diff S_t^0 - \sum_{i \geq 1} \Xi_i(q_t) \circ \diff S_t^i} \,,
    \end{aligned}
    \end{equation}
    where $\left<\cdot, \cdot\right>$ is the Euclidean inner product in $\mathbb{R}^n$. To obtain this, we made use of the identity
    \begin{align*}
        &\delta \int_{t_0}^{t_1} \left<p_t, \circ \diff q_t\right> = \sum_{i \geq 0} \delta \int_{t_0}^{t_1} \left<p_t, F_i^q(t)\right> \circ \diff S_t^i = \sum_{i \geq 0} \int_{t_0}^{t_1} \Big( \left<\delta p_t, F_i^q(t)\right> + \left<p_t, \delta F_i^q(t)\right> \Big) \circ \diff S_t^i \\
        &= \sum_{i \geq 0} \int_{t_0}^{t_1} \Big(\left<\delta p_t, \circ \diff q_t\right> + \left<p_t, \circ \diff \delta q_t\right> \Big) = \sum_{i \geq 0} \int_{t_0}^{t_1} \Big(\left<\delta p_t, \circ \diff q_t\right> - \left<\circ \diff p_t, \delta q_t \right> \Big)+ \scp{p_t}{\delta q_t} \Big|_{t_0}^{t_1},
    \end{align*}
    where $\{F_i^q(t)\}_{i \geq 0}$ is a family of $TQ$-valued semimartingales satisfying $\diff q_t = \sum_{i \geq 0} F_i^q(t) \circ \diff S_t^i$ (this follows from the compatibility of the process $q_t$ with the driving semimartingale $S_t = (S^0_t, S^1_t, \ldots)$), and the last equality follows from the Stratonovich product rule $\diff \left<p, \delta q\right> = \left<p, \circ \diff \delta q\right> + \left<\circ \diff p, \delta q\right>$.
    Finally, invoking Corollary \ref{cor:fundamental_lemma}, and noting that $\left.\scp{p_t}{\delta q_t} \right|_{t_0}^{t_1} = 0$ due to the endpoint conditions imposed on $q_t$, we obtain the following relationships
	\begin{align}
		\diff p_t &= \frac{\p L}{\p q_t}\, \diff t + \sum_{i\geq 1} \frac{\p \Gamma_i}{\p q_t} \circ \diff S_t^i -  \sum_{i\geq 1} \frac{\p}{\p q_t} \scp{p_t}{\Xi_i}\circ \diff S_t^i
		\,,\\
		p_t &= \frac{\p L}{\p V_t}
		\,,\\
		\diff q_t &= V_t \,\diff t + \sum_{i \geq 1} \Xi_i(q_t) \circ \diff S_t^i 
		\,.
	\end{align}
\end{proof}

In the above result, we have assumed that there exists a variational family of curves $Z_t^\epsilon := (q^\epsilon_t, V^\epsilon_t, p^\epsilon_t)$ such that the endpoint conditions $q^\epsilon(t_0) = a$ and $q^\epsilon(t_1) = b$ hold for any $\epsilon$, which, at first sight seems to contradict our other assumption that $Z_t^\epsilon$ is $\mathcal{F}_t$-adapted (this is necessary for the stochastic integrals to make sense). How is it that we can assume a condition on a future time $t_1$, when our process is adapted? To answer this, we emphasise a key difference with the deterministic variational principle, where in our setting, the final time $t_1$ and endpoint $b$ are chosen to be random. Thus, for an $\mathcal{F}_t$-adapted path $q_t$ such that $q_{t_1} = b$ for random variables $t_1$ and $b$ that are a priori unknown, we need only to construct $\mathcal{F}_t$-adapted perturbed paths $q_t^\epsilon$ from $q_t$ with $q_{t_0}^\epsilon = a$, such that they agree at some stopping time $t_1$, i.e., $q_{t_1}^\epsilon = q_{t_1}$ for all $\epsilon$. Now, there are several possible approaches for constructing such a perturbation. The one that we discuss below is based on \cite[Section 4.1]{lazaro2007stochastic}.

Let $K \subset Q$ be a compact set with $a \in K$, and $t_1$ be the first exit time of the process $q_t$ leaving the set $K$. Then for an arbitrary smooth vector field $X_K \in \mathfrak{X}(Q)$ that vanishes on the set $\{a\} \cup \partial K$, consider an $\epsilon$-perturbation of the process $q_t$ by
\begin{align}
    \frac{\partial q_t^\epsilon}{\partial \epsilon} = X_K(q_t^\epsilon), \quad q^0_t = q_t,
\end{align}
for all $t \in [t_0, t_1]$. By construction, we see that indeed $q_{t_0}^\epsilon = a$ and $q_{t_1}^\epsilon = q_{t_1}$ holds true for all $\epsilon$ and furthermore, since the construction of $q_t^\epsilon$ does not require any information of $S_t$ after time $t$, it is $\mathcal{F}_t$-adapted.

Later on, we will see another construction of such family in the special case $Q = G$, where $G$ is a Lie group. In this case, one can explictly construct perturbations of paths in the group using {\em deterministic} curves on the corresponding Lie algebra, with vanishing endpoints. We will see that this also leads to a variational family of processes that are $\mathcal{F}_t$-adapted and satisfy the endpoint conditions.

\begin{remark}
    In the deterministic setting, the time-differentiability of the variational curves is essential in ensuring the uniqueness of the action functional extrema, since if we only impose time-continuity, then one can construct infinitely many solutions called the {\em broken extremal} solutions, that satisfy Hamilton's principle \cite[Section 15]{gelfand2000calculus}. In our case, we do not encounter such issue despite our processes being only continuous, due to our semimartingale compatibility assumption (Definition \ref{def:compatibility}). Thus, the compatibility assumption is crucial in establishing Theorem \ref{thm:stochastic_HP}, and is moreover quite a natural assumption in the stochastic setting, by Remark \ref{rmk:compatibility}.
\end{remark}


\subsection{Stochastic Hamiltonian mechanics}\label{sec:stoch-ham}
The Hamiltonian counterpart of the stochastic system \eqref{eq:stoch-euler-lagrange-mom}--\eqref{eq:stoch-euler-lagrange-pos} can be traced back to Bismut's foundational work \cite{bismut1981mechanique}, which he refers to as {\em diffusions symplectiques} (or {\em symplectic diffusions}). On $T^*\mathbb{R}^n$, Bismut's symplectic diffusion is described as the following set of SDEs
\begin{align}
    \diff q_t &= \sum_{i \geq 0} \frac{\partial H_i}{\partial p_t} \circ \diff S_t^i \,,
    \label{eqn:ham_q}\\
    \diff p_t &= - \sum_{i \geq 0} \frac{\partial H_i}{\partial q_t} \circ \diff S_t^i,
    \label{eqn:ham_p}
\end{align}
echoing the deterministic system \eqref{eq:ham-eq}.
As in the deterministic case, it is straightforward to generalise this system to arbitrary symplectic manifolds $(P, \omega)$. This is given by the following SDE on $Z_t \in P$
\begin{align} \label{eq:general-symplectic-diffusion}
    \diff Z_t = \sum_{i \geq 0} X_{H_i}(Z_t) \circ \diff S_t^i \,,
\end{align}
where $X_{H_i} \in \mathfrak{X}_{ham}(P)$ denotes the Hamiltonian vector field corresponding to a Hamiltonian $H_i$.
Below, we show that the system \eqref{eq:general-symplectic-diffusion} on $P = T^*Q$ is indeed equivalent to the stochastic Euler-Lagrange equations \eqref{eq:stoch-euler-lagrange-mom}--\eqref{eq:stoch-euler-lagrange-pos} under appropriate conditions for which we can apply the Legendre transform.

\begin{proposition}
    When the symplectic manifold is taken to be a cotangent bundle of a configuration space,  $P = T^*Q$, and the Lagrangian $L : TQ \rightarrow \mathbb{R}$ is hyperregular, then the stochastic Euler-Lagrange equations \eqref{eq:stoch-euler-lagrange-mom}--\eqref{eq:stoch-euler-lagrange-pos} and Bismut's symplectic diffusion \eqref{eq:general-symplectic-diffusion} are equivalent.
\end{proposition}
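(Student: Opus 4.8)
The plan is to use the hyperregularity of $L$ to eliminate the velocity variable $V_t$ and to recognise the resulting system on $T^*Q$ as an instance of Bismut's symplectic diffusion \eqref{eq:general-symplectic-diffusion} for an explicit choice of Hamiltonians. Hyperregularity means the fibre derivative $\mathbb{F}L : TQ \to T^*Q$ is a diffeomorphism, so the algebraic constraint $p_t = \p L/\p V_t$ (the middle Euler-Lagrange relation) can be inverted to write $V_t = (\mathbb{F}L)^{-1}(q_t, p_t)$ as a smooth function of $(q_t, p_t)$; since $(\mathbb{F}L)^{-1}$ is smooth, $V_t$ is again a semimartingale and the implicit system on the Pontryagin bundle collapses to a pair of SDEs for $(q_t, p_t)$ on $T^*Q$.

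First I would define the drift Hamiltonian by the usual Legendre transform, $H_0(q, p) := \scp{p}{V} - L(q, V)$ with $V = (\mathbb{F}L)^{-1}(q, p)$, and the stochastic Hamiltonians $H_i(q, p) := \scp{p}{\Xi_i(q)} - \Gamma_i(q)$ for each $i \geq 1$. A direct computation in canonical coordinates, using the defining relation $p = \p L/\p V$ to cancel the terms involving $\p V/\p p$ and $\p V/\p q$, gives $\p H_0/\p p = V$ and $\p H_0/\p q = -\p L/\p q$, together with $\p H_i/\p p = \Xi_i(q)$ and $\p H_i/\p q = \frac{\p}{\p q}\scp{p}{\Xi_i} - \p \Gamma_i/\p q$. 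Substituting these into Bismut's system \eqref{eqn:ham_q}--\eqref{eqn:ham_p} reproduces, term by term and with matching signs, the position equation \eqref{eq:stoch-euler-lagrange-pos} and the momentum equation \eqref{eq:stoch-euler-lagrange-mom}. Conversely, starting from \eqref{eq:general-symplectic-diffusion} with these Hamiltonians and setting $V_t := (\p H_0/\p p)(q_t, p_t)$ recovers the constraint $p_t = \p L/\p V_t$ and the remaining relations, establishing the equivalence in both directions.

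The point that makes these manipulations legitimate in the stochastic setting is that all stochastic integrals are taken in the Stratonovich sense: by the Stratonovich chain rule \eqref{eq:strat-chain-rule}, the ordinary differential calculus underlying the Legendre transform (in particular the cancellations above, which are simply chain-rule identities) carries over verbatim, with no It\^o correction terms. Had the equations been written in It\^o form, the transform would not intertwine the two systems without additional second-order correction terms, so the Stratonovich convention is precisely what renders the passage between \eqref{eq:general-symplectic-diffusion} and the stochastic Euler-Lagrange equations an exact, drift-free correspondence.

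The main obstacle I anticipate is not the algebra, which mirrors the deterministic Legendre correspondence recalled in Section~\ref{sec:Lagrangian_prelim}, but rather confirming that the correspondence is genuinely intrinsic and that regularity is preserved. Specifically, one must check that $V_t = (\mathbb{F}L)^{-1}(q_t, p_t)$ remains a semimartingale of the required class, and that the identification of $H_0$ and $\{H_i\}_{i\geq 1}$ patches correctly across overlapping charts so that the $X_{H_i}$ are globally well-defined Hamiltonian vector fields; this is where hyperregularity as a \emph{global} diffeomorphism (rather than a merely local one) is essential, and I would appeal to the intrinsic formulation discussed in Appendix~\ref{app:intrinsic-stoch-EL} to conclude chart-independence.
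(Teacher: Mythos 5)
Your proposal is correct, but it takes a different route from the paper. You prove the equivalence directly at the level of the equations of motion: defining $H_0$ by the Legendre transform and $H_i = \scp{p}{\Xi_i} - \Gamma_i$ (exactly the paper's choices, \eqref{eq:legendre-transform-L}--\eqref{eq:legendre-transform-Gamma}), you use the standard cancellation identities $\p H_0/\p p = V$, $\p H_0/\p q = -\p L/\p q$ to match the coefficients of Bismut's system \eqref{eqn:ham_q}--\eqref{eqn:ham_p} term by term against \eqref{eq:stoch-euler-lagrange-mom}--\eqref{eq:stoch-euler-lagrange-pos}, and you run the substitution in both directions. The paper instead stays inside the variational framework: with the same Hamiltonians it rewrites the Hamilton--Pontryagin action \eqref{eq:HP-action} as the phase-space action \eqref{eqn:unreduced_stochastic_hamiltonian_action}, and obtains Bismut's equations by taking variations of that action and invoking Corollary \ref{cor:fundamental_lemma}; the equivalence then follows because both systems are stationarity conditions of one and the same functional. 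Your route is more elementary and makes the two-way implication explicit without a second appeal to the stochastic fundamental lemma; note that for the core equivalence the Stratonovich chain rule is not really needed, since the constraint $p_t = \p L/\p V_t$ is algebraic and the substitution $V_t = (\mathbb{F}L_{q_t})^{-1}(p_t)$ is a pointwise rewriting of the integrands (the chain rule only enters if one wants an SDE for $V_t$ itself, or to justify that the Itô forms would carry correction terms, as you observe). What the paper's route buys is the phase-space variational principle itself, which it uses immediately afterwards to connect with the work of L\'azaro-Cam\'i and Ortega and which foreshadows the reduced variational principles of Section \ref{sec:symmetry_reduction}; your concern about chart-independence is legitimate but mild, since $H_0$ and the $H_i$ are intrinsically defined functions on $T^*Q$ (the paper likewise computes on a chart and defers globalisation of the Euler--Lagrange side to Appendix \ref{app:intrinsic-stoch-EL}).
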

\begin{proof}
Since the Lagrangian is hyperregular, by definition, the fibre derivative $\mathbb{F}L:TQ\rightarrow T^*Q$ is a diffeomorphism. As in the deterministic case, we will show that a stochastic Hamiltonian flow is induced on $T^*Q$. Indeed, for finite dimensional manifolds where $(q, V)$ denote the local coordinates on $TQ$, the fibre derivative defines the momenta, $p$, by
\begin{equation}\label{eq:conjugate-momenta}
    p := \mathbb{F}L_q(V) = \frac{\p L}{\p V}(q, V) \,.
\end{equation}
on local charts. This defines the Hamiltonian $H_0 : T^*Q \rightarrow \mathbb{R}$ from the Lagrangian $L$ via the Legendre transform:
\begin{equation}
	H_0(q,p) := \scp{p}{V} - L(q, V), \quad \text{where} \quad V = (\mathbb{F}L_{q})^{-1}(p)
	\,. \label{eq:legendre-transform-L}
\end{equation}
To demonstrate that our flow is indeed a stochastic Hamiltonian system in the sense of Bismut, we must also define the stochastic Hamiltonians, $H_i:T^*Q\rightarrow \mathbb{R}$ for $i \geq 1$. To do this, we take $p$ as defined by \eqref{eq:conjugate-momenta}, and set
\begin{equation}
	H_i(q, p) := \scp{p}{\Xi_i(q)} - \Gamma_i(q)
	\,. \label{eq:legendre-transform-Gamma}
\end{equation}
Notice that $H_0$ is defined through the Legendre transform, however each $H_i$ is defined merely through an algebraic relation resembling the Legendre transform. With these definitions for the Hamiltonians, the Hamilton-Pontryagin action functional \eqref{eq:HP-action} then reads
\begin{equation}\label{eqn:unreduced_stochastic_hamiltonian_action}
	\int_{t_0}^{t_1} -H_0(p_t, q_t) \, \diff t - \sum_{i\geq 1} H_i(p_t, q_t)\circ \diff S_t^i + \scp{p_t}{\circ\,\diff q_t},
\end{equation}
which, upon taking variations in $p_t$ and $q_t$ on a local chart and invoking Corollary \ref{cor:fundamental_lemma}, yields the system \eqref{eqn:ham_q}--\eqref{eqn:ham_p} as expected.
\end{proof}


A phase space variational principle starting from an action integral of the form \eqref{eqn:unreduced_stochastic_hamiltonian_action} has been previously considered in \cite{lazaro2007stochastic} to derive the stochastic Hamiltonian system \eqref{eqn:ham_q}--\eqref{eqn:ham_p}. Interestingly, the authors do not consider the Lagrangian viewpoint in their work; as mentioned earlier, there are certain obstacles to extend Hamilton's principle directly to the stochastic case, which we overcome by the Hamilton-Pontryagin formulation. Our stochastic variational principle (Theorem \ref{thm:stochastic_HP}) starting from the Lagrangian, thus serves as a missing link between the stochastic Hamilton-Pontryagin principle of \cite{bou2009stochastic} and the phase space variational principle of \cite{lazaro2007stochastic}. We note that the former is only capable of generating stochastic dynamics that are $C^1$-smooth (since stochasticity only appears in the velocity component), which form only a subset of the stochastic Euler-Lagrange equations \eqref{eq:stoch-euler-lagrange-mom}--\eqref{eq:stoch-euler-lagrange-pos} that we consider in this work.

The stochastic system \eqref{eq:general-symplectic-diffusion} retains many of the fundamental properties of Hamiltonian dynamics. In particular, we can show the invariance of the symplectic form $\omega$ under the stochastic dynamics.
\begin{lemma}
The symplectic form $\omega$ is invariant with respect to the flow of \eqref{eq:general-symplectic-diffusion}.
\end{lemma}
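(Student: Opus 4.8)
The plan is to mirror the deterministic computation \eqref{eq:omega-invariance}, exploiting the fact that Stratonovich calculus obeys the ordinary rules of differential calculus so that the classical Lie-derivative argument carries over verbatim along each driving channel. Let $\phi_t$ denote the stochastic flow of diffeomorphisms generated by \eqref{eq:general-symplectic-diffusion}; under sufficient regularity of the Hamiltonian vector fields $X_{H_i}$, this flow exists and is smooth in the spatial variable in the sense of Kunita. The goal is to show that $\phi_t^* \omega = \omega$ for all $t$, almost surely.

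The key step is to establish a stochastic transport (Kunita-type) formula for the pullback of differential forms: for any $k$-form $\alpha$ on $P$,
\begin{equation}\label{eq:stoch-pullback}
    \diff(\phi_t^* \alpha) = \sum_{i \geq 0} \phi_t^*\big(\mathcal{L}_{X_{H_i}} \alpha\big) \circ \diff S_t^i.
\end{equation}
I would prove this first on functions $f \in \Omega^0(P)$, where it reduces directly to the Stratonovich chain rule \eqref{eq:strat-chain-rule} applied to $f(Z_t)$ with $Z_t = \phi_t(Z_0)$, giving $\diff(\phi_t^* f) = \sum_i \phi_t^*(\mathcal{L}_{X_{H_i}} f) \circ \diff S_t^i$. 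Because Stratonovich differentials commute with the exterior derivative $\dd$ (a consequence of the chain rule), and $\dd$ commutes with each $\mathcal{L}_{X_{H_i}}$, the formula extends to exact one-forms $\dd f$. The Stratonovich product rule then lets me propagate \eqref{eq:stoch-pullback} through wedge products, so that it holds for any form written locally as a finite sum of terms $f_0 \, \dd f_1 \wedge \cdots \wedge \dd f_k$, and hence for all $\alpha$.

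With \eqref{eq:stoch-pullback} in hand, I would set $\alpha = \omega$. Each $X_{H_i}$ is a Hamiltonian vector field, so the deterministic Cartan computation \eqref{eq:omega-invariance} gives $\mathcal{L}_{X_{H_i}} \omega = 0$ for every $i \geq 0$. Every term on the right-hand side of \eqref{eq:stoch-pullback} therefore vanishes, yielding $\diff(\phi_t^* \omega) = 0$, and since $\phi_{t_0} = \mathrm{id}$ we conclude $\phi_t^* \omega = \omega$ almost surely for all $t$.

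I expect the main obstacle to be the rigorous justification of the pullback formula \eqref{eq:stoch-pullback}, which requires the existence and spatial smoothness of the stochastic flow $\phi_t$ together with care in lifting the Stratonovich chain and product rules from scalar-valued processes to form-valued processes indexed by each semimartingale $S_t^i$. Once one accepts that Stratonovich calculus faithfully reproduces the Leibniz and chain rules along each driver, the algebra of the proof is identical to the deterministic case, and the vanishing of $\mathcal{L}_{X_{H_i}} \omega$ does all the work.
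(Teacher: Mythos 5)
Your proof is correct and follows essentially the same route as the paper: the paper likewise applies the Kunita-type pullback formula for tensor fields along the stochastic flow (citing Theorem 4.9.3 of Kunita) and concludes from $\mathcal{L}_{X_{H_i}}\omega = 0$ that $\phi_t^*\omega = \omega$. The only difference is that you sketch a derivation of the pullback formula (functions, exact one-forms, wedge products) where the paper simply cites it, which is a matter of presentation rather than substance.
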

\begin{proof}
Let $\phi_t$ be the flow of \eqref{eq:general-symplectic-diffusion}. Then using the formula for the pullback of tensor fields along a stochastic flow of diffeomorphism (see Theorem 4.9.3 in \cite{kunita1984stochastic})\footnote{This formula is a natural extension of equation \eqref{eq:strat-chain-rule}.}, we have
\begin{align*}
    \phi_t^*\omega - \omega = \sum_{i \geq 0} \int^t_0 \phi_r^* \mathcal{L}_{X_{H_i}} \omega \circ \diff S_r^i = 0 \,,
\end{align*}
where we used that $\mathcal{L}_{X_{H}} \omega = 0$ for any Hamiltonian $H$ (see \eqref{eq:omega-invariance}).
\end{proof}

As an easy corollary, we obtain a stochastic analogue of the classic Liouville's theorem.
\begin{corollary}[Stochastic Liouville's theorem]
Given a $2n$-dimensional symplectic manifold $(P, \omega)$, the symplectic volume form $\omega^n := \omega \wedge \cdots \wedge \omega$ ($n$ times) is preserved under the flow of \eqref{eq:general-symplectic-diffusion}.
\end{corollary}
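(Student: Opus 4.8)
The plan is to deduce the result directly from the preceding Lemma, which asserts that the symplectic form itself is invariant under the flow $\phi_t$ of \eqref{eq:general-symplectic-diffusion}, namely $\phi_t^* \omega = \omega$ almost surely for all $t$. Since the symplectic volume form is simply the $n$-fold wedge power $\omega^n$, and pullback by a map is an algebra homomorphism for the exterior product, the invariance of $\omega^n$ should follow at once from the invariance of $\omega$.

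Concretely, first I would fix (almost surely) a realisation of the flow, so that $\phi_t : P \to P$ is a genuine diffeomorphism for each $t$; this is guaranteed by the theory of stochastic flows of diffeomorphisms \cite{kunita1984stochastic}. Next, I would invoke the purely algebraic identity $\phi_t^*(\alpha \wedge \beta) = \phi_t^* \alpha \wedge \phi_t^* \beta$, valid for any differential forms $\alpha, \beta$ and any smooth map, to obtain $\phi_t^*(\omega^n) = (\phi_t^* \omega)^n$. Finally, substituting $\phi_t^* \omega = \omega$ from the Lemma yields $\phi_t^*(\omega^n) = \omega^n$, which is precisely the claimed invariance of the symplectic volume.

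I do not expect any genuine obstacle here, as the only stochastic input --- the invariance of $\omega$ --- has already been established in the Lemma, and the remaining step is the deterministic, pointwise-in-$\omega$ fact that pullback commutes with wedge products; this holds for each fixed realisation of the diffeomorphism $\phi_t$ with no regularity in $t$ required. An alternative, self-contained route would bypass the Lemma and instead apply the Kunita pullback formula \cite{kunita1984stochastic} directly to $\omega^n$, writing $\phi_t^* \omega^n - \omega^n = \sum_{i \geq 0} \int_0^t \phi_r^* \mathcal{L}_{X_{H_i}} \omega^n \circ \diff S_r^i$, and then using the Leibniz rule for the Lie derivative together with $\mathcal{L}_{X_{H_i}} \omega = 0$ (see \eqref{eq:omega-invariance}) to conclude $\mathcal{L}_{X_{H_i}} \omega^n = n\, (\mathcal{L}_{X_{H_i}} \omega) \wedge \omega^{n-1} = 0$. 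The first route is cleaner, so I would present it, mentioning the second only as a remark.
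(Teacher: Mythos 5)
Your proof is correct and is exactly the argument the paper intends: the corollary is stated as an immediate consequence of the preceding Lemma ($\phi_t^*\omega = \omega$), with the invariance of $\omega^n$ following from the fact that pullback commutes with wedge products. No gap; your alternative Kunita-formula route is also valid but unnecessary.
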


Now, we can further generalise the system \eqref{eq:general-symplectic-diffusion} to be defined on a Poisson manifold $(P, \{\cdot, \cdot\})$, giving a process $Z_t \in P$ that satisfies the SDE
\begin{align}\label{eq:poisson-diffusion}
    \diff F(Z_t) = \sum_{i \geq 0} \{F,H_i\}(Z_t)\circ \diff S_t^i \,,
\end{align}
for any smooth function $F \in \Omega^0(P)$. This formulation allows a simplified proof of a stochastic counterpart to Noether's theorem, which we state as follows.
\begin{theorem}[Stochastic Noether's theorem]\label{prop:stoch-noether}
Let $G$ be a Lie group acting on a Poisson manifold $(P, \{\cdot, \cdot\})$ such that its action is Hamiltonian. If all of the Hamiltonians $H_i$ for $i \geq 0,$ in \eqref{eq:poisson-diffusion} are $G$-invariant, then the momentum map $J : P \rightarrow \mathfrak g^*$ corresponding to this action is preserved under the flow of \eqref{eq:poisson-diffusion}.
\end{theorem}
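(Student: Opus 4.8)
The plan is to express the momentum map's evolution using the Poisson diffusion \eqref{eq:poisson-diffusion} itself, applied to the scalar observables $J^\xi$, and then show that each term in the resulting stochastic expansion vanishes by the $G$-invariance of the Hamiltonians. The key conceptual point is that the momentum map $J : P \to \mathfrak{g}^*$ is preserved if and only if, for every fixed $\xi \in \mathfrak{g}$, the real-valued process $J^\xi(Z_t) = \langle J(Z_t), \xi\rangle_{\mathfrak{g}^* \times \mathfrak{g}}$ is constant in $t$ (almost surely). Since $\xi$ is fixed, $J^\xi : P \to \mathbb{R}$ is a genuine smooth function on the Poisson manifold, so I can feed it directly into \eqref{eq:poisson-diffusion}.

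First I would fix $\xi \in \mathfrak{g}$ and apply \eqref{eq:poisson-diffusion} with $F = J^\xi$, obtaining
\begin{equation*}
    \diff J^\xi(Z_t) = \sum_{i \geq 0} \{J^\xi, H_i\}(Z_t) \circ \diff S_t^i.
\end{equation*}
The next step is to rewrite each Poisson bracket $\{J^\xi, H_i\}$ using the defining property of the momentum map. Since the action is Hamiltonian, $\xi_P = X_{J^\xi}$, and by the correspondence between the Poisson bracket and Hamiltonian vector fields we have $\{J^\xi, H_i\} = X_{J^\xi}[H_i] = \xi_P[H_i]$, i.e. the directional derivative of $H_i$ along the infinitesimal generator $\xi_P$. (One should take care with the sign convention $X_H = \{\cdot, H\}$ fixed in the preliminaries; here $\{J^\xi, H_i\} = \diff H_i (X_{J^\xi}) = \mathcal{L}_{\xi_P} H_i$ up to the chosen sign, which does not affect the conclusion.)

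Then I would invoke $G$-invariance of each $H_i$: since $H_i \circ \Phi_g = H_i$ for all $g \in G$, where $\Phi_g$ is the $G$-action, differentiating along the one-parameter subgroup generated by $\xi$ gives $\mathcal{L}_{\xi_P} H_i = 0$, hence $\{J^\xi, H_i\} \equiv 0$ for every $i \geq 0$. Substituting back, every coefficient in the Stratonovich expansion vanishes identically, so $\diff J^\xi(Z_t) = 0$, meaning $J^\xi(Z_t)$ is almost surely constant. Since this holds for arbitrary $\xi \in \mathfrak{g}$ and $J$ is determined by the values $\langle J, \xi\rangle$ as $\xi$ ranges over $\mathfrak{g}$, the full momentum map $J$ is preserved under the flow of \eqref{eq:poisson-diffusion}.

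The \textbf{main obstacle}, though it is a mild one, is justifying the passage from $G$-invariance of $H_i$ to the pointwise vanishing of the infinitesimal derivative $\mathcal{L}_{\xi_P} H_i$, which is the standard infinitesimal-versus-global invariance argument and requires only differentiating the invariance identity at the identity. Everything else is formal substitution once the Poisson-diffusion framework \eqref{eq:poisson-diffusion} is in place; the genuinely new content compared to the deterministic Noether theorem is simply that the diffusion \eqref{eq:poisson-diffusion} reduces the proof to checking the vanishing of each bracket $\{J^\xi, H_i\}$ separately for every driving component $S_t^i$, rather than for a single Hamiltonian vector field — and this is exactly what $G$-invariance of \emph{all} the $H_i$ supplies. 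I would take care to note that no integration or martingale argument is needed: the conclusion is purely algebraic at the level of the bracket, which is why the theorem holds for arbitrary driving semimartingales.
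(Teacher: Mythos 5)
Your proposal is correct and follows essentially the same route as the paper's proof: fix $\xi \in \mathfrak{g}$, apply \eqref{eq:poisson-diffusion} to $F = J^\xi$, identify $X_{J^\xi} = \xi_P$ via the Hamiltonian-action assumption, and kill every bracket $\{J^\xi, H_i\}$ using $G$-invariance of the $H_i$. Your explicit flag about the sign convention $X_H = \{\cdot, H\}$ is a harmless refinement of the paper's computation and does not change the argument.
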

\begin{proof}
Fix $\xi \in \mathfrak g$ and define $J^\xi(Z_t) := \left<\xi,J(Z_t)\right>$. By definition of the momentum map, we have $X_{J^\xi} = \xi_P$ (See Definition 11.2.1 in ref.\cite{marsden2013introduction}). Then from \eqref{eq:poisson-diffusion}, we have
\begin{align*}
    \diff J^\xi(Z_t) &= \sum_{i \geq 0} \{J^\xi,H_i\}(Z_t) \circ \diff S_t^i = - \sum_{i \geq 0} \{H_i, J^\xi\}(Z_t) \circ \diff S_t^i \\
    &= - \sum_{i \geq 0} X_{J^\xi} H_i(Z_t) \circ \diff S_t^i = - \sum_{i \geq 0} \xi_PH_i(Z_t) \circ \diff S_t^i \\
    &= 0 \,,
\end{align*}
where we used that $\xi_P H_i = 0$ for all $i \geq 0$, due to $G$-invariance of the Hamiltonians. Since $\xi$ was chosen arbitrarily, the momentum map $J$ is preserved.
\end{proof}

The Poisson bracket $\{\cdot,\cdot\}$ may be degenerate and we call the central elements the {\em Casimir functions}.
\begin{definition}[Casimirs]
The Casimir function $C \in \Omega^0(P)$ of the Poisson bracket $\{\cdot,\cdot\}$ is any function that commutes with any other functions over $P$, i.e., $\{F,C\} = 0$ for all $F \in \Omega^0(P)$.
\end{definition}
Clearly, if $C$ is a Casimir function for the bracket $\{\cdot,\cdot\}$, then it is conserved by \eqref{eq:poisson-diffusion} automatically  regardless of the choice of Hamiltonians. We note that for Poisson brackets generated by symplectic forms, the only Casimirs are the constant functions due to the non-degeneracy of symplectic forms.

Finally, we can also express \eqref{eq:poisson-diffusion} in It\^o form as follows.
\begin{proposition}
In It\^o form, equation \eqref{eq:poisson-diffusion} can be expressed as
\begin{align}
    \diff F(Z_t) = \sum_{i \geq 0} \{F,H_i\}(Z_t) \,\diff S_t^i + \frac12 \sum_{i \geq 0} \{\{F,H_i\},H_i\}(Z_t) \,\diff [S^i]_t \,.
\end{align}
\end{proposition}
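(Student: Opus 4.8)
The plan is to apply the standard Stratonovich-to-It\^o conversion to each term in the sum separately and then identify the resulting correction with the iterated Poisson bracket. Writing $g_i := \{F,H_i\}$ for brevity, the defining relation between the two stochastic integrals reads
$$\int_0^t g_i(Z_s)\circ \diff S_s^i = \int_0^t g_i(Z_s)\,\diff S_s^i + \frac12\,[g_i(Z_\cdot),S^i]_t \,,$$
so the whole problem reduces to evaluating the covariation $[g_i(Z_\cdot),S^i]_t$ and summing over $i$.

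First I would observe that $g_i(Z_t)$ is itself a $Q$-valued (here $\mathbb{R}$-valued) semimartingale whose dynamics are governed by the Poisson diffusion \eqref{eq:poisson-diffusion} applied to the observable $g_i$ in place of $F$, namely $\diff g_i(Z_t) = \sum_{j\geq 0}\{g_i,H_j\}(Z_t)\circ \diff S_t^j$. Using the bilinearity of the covariation together with the fact that the It\^o and Stratonovich integrals against $S^j$ differ only by a finite-variation term (and hence have the same covariation with $S^i$), I would write
$$[g_i(Z_\cdot),S^i]_t = \sum_{j\geq 0}\int_0^t \{g_i,H_j\}(Z_s)\,\diff[S^j,S^i]_s \,.$$

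Next I would invoke condition (3) of Definition \ref{def:driving_semimartingale}, which asserts that $[S^j,S^i]_t = 0$ whenever $j\neq i$. This collapses the double sum to its diagonal, leaving
$$[g_i(Z_\cdot),S^i]_t = \int_0^t \{\{F,H_i\},H_i\}(Z_s)\,\diff[S^i]_s \,.$$
Substituting this back into the conversion formula and summing over $i\geq 0$ produces exactly the claimed It\^o expression; note that the $i=0$ term contributes nothing to the correction since $S^0=t$ has vanishing quadratic variation, consistent with the summation starting at $i=0$.

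The calculation is essentially routine, so the only point requiring genuine care is the covariation identity in the second step: one must justify that the covariation of the stochastic integral term (equivalently, of its Stratonovich form) with $S^i$ equals $\int \{g_i,H_j\}(Z_s)\,\diff[S^j,S^i]_s$, so that the finite-variation drift contributions to the dynamics of $g_i(Z_t)$ drop out and only the martingale parts survive. This is precisely where the mutual orthogonality of the driving components does the essential work, reducing the correction to a single diagonal term involving the double bracket $\{\{F,H_i\},H_i\}$.
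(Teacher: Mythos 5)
Your proposal is correct and follows essentially the same route as the paper: both derive the dynamics of $\{F,H_i\}(Z_t)$ from the Poisson diffusion itself, compute the Stratonovich-to-It\^o correction as the covariation $[\{F,H_i\}(Z_\cdot),S^i]_t$, and use condition (3) of Definition \ref{def:driving_semimartingale} to reduce it to the diagonal term $\{\{F,H_i\},H_i\}\,\diff[S^i]_t$. Your version is slightly more careful in spelling out the covariation identity for stochastic integrals (and in noting that the $i=0$ correction vanishes since $[S^0]_t=0$), but the argument is the same.
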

\begin{proof}
Applying the first order linear differential operator $\{\,\cdot\,,H_i\}$ to both sides of \eqref{eq:poisson-diffusion}, we get
\begin{align*}
    \diff \{F,H_i\} = \sum_{j \geq 0}\{\{F,H_j\},H_i\} \circ \diff S_t^j \,.
\end{align*}
Hence, the Stratonovich-to-It\^o correction term reads
\begin{align*}
    \diff [\{F,H_i\}, S^i]_t = \{\{F,H_i\},H_i\} \,\diff [S^i]_t \,,
\end{align*}
where we used our assumption that $[S_\cdot^i, S_\cdot^j]_t = 0$ for all $i \neq j$.
\end{proof}

\section{Symmetry reduction for stochastic mechanics}\label{sec:symmetry_reduction}
A striking feature of Lagrangian and Hamiltonian systems is that in the presence of symmetries, we can characterise its dynamics by a reduced Lagrangian/Hamiltonian systems taking place on lower dimensional manifolds.
In this section, we derive analogous results for stochastic systems with symmetries at different levels of generality. In particular, we will conisder symplectic and Poisson reduction for stochastic Hamiltonian systems in Sections \ref{sec:symplectic-reduction} and \ref{sec:poisson-reduction}, and then investigate the special case when the phase space is given by the cotangent bundle of a Lie group in Section \ref{sec:lie-poisson}, giving us a stochastic analogue of the Lie-Poisson/Euler-Poincar\'e reduction \cite{marsden1986reduction, HMR1998}.
We observe that the derivation is fairly straightforward from the Hamiltonian perspective, merely echoing the procedures in the deterministic case. However, from the Lagrangian perspective (i.e., the Euler-Poincar\'e reduction), the reduction procedure requires a more careful treatment. In Section \ref{sec:euler-poincare-reduction}, we illuminate how this can be achieved by restricting our variations to be of a particular form, enabling us to define a stochastic analogue of the Lin constraint.

\subsection{Symplectic reduction} \label{sec:symplectic-reduction}
Consider a stochastic Hamiltonian system 
\eqref{eq:general-symplectic-diffusion} defined over a symplectic manifold $(P, \omega)$, and let $G$ be a Lie group acting on $P$. By Noether's theorem (Theorem \ref{prop:stoch-noether}), if the Hamiltonians $\{H_i\}_{i \geq 0}$ are all invariant under the group action with respect to $G$, then for a regular value $\mu \in \mathfrak{g}^*$, the pathwise dynamics of \eqref{eq:general-symplectic-diffusion} effectively take place on the submanifold $J^{-1}(\mu) \subseteq P$, where $J : P \rightarrow \mathfrak{g}^*$ is the momentum map corresponding to the $G$-action (note that the value of $\mu$ is entirely determined by the initial condition).
Furthermore, we can consider a reduced stochastic system on the so-called {\em reduced space} $P_\mu := J^{-1}(\mu) / G_\mu$, where $G_\mu \leq G$ is the isotropy subgroup $G_\mu := \{g \in G : \mathrm{Ad}^*_g\mu = \mu\}$.

Letting $\pi_\mu : J^{-1}(\mu) \rightarrow J^{-1}(\mu)/G_\mu$ be the projection and $\iota_\mu: J^{-1}(\mu) \xhookrightarrow{} P$ be the natural inclusion, we see that the space $P_\mu$ inherits a symplectic form $\omega_\mu$ from $P$ through the relation 
\begin{align} \label{eq:reduced-symplectic-form}
    \iota_\mu^* \omega = \pi_\mu^*\omega_\mu \,.
\end{align}
Defining the {\em reduced Hamiltonians} $h_i : P_\mu \rightarrow \mathbb{R}$ for all $i \geq 0$ by $\pi_\mu^* h_i = \iota_\mu^*H_i$ and the corresponding Hamiltonian vector field on the symplectic manifold $(P_\mu, \omega_\mu)$ by $X_{h_i}^\mu \in \mathfrak{X}_{ham}(P_\mu)$, we consider a stochastic Hamiltonian system on $P_\mu$ given by
\begin{align} \label{eq:reduced-symplectic-diffusion}
    \mathrm{d}z_t^\mu = \sum_{i \geq 0} X^\mu_{h_i}(z_t^\mu) \circ \mathrm{d}S_t^i \,.
\end{align}
The system \eqref{eq:reduced-symplectic-diffusion} on $(P_\mu, \omega_\mu)$ is related to the original system \eqref{eq:general-symplectic-diffusion} on $(P, \omega)$ via {\em symplectic reduction} \cite{marsden1974reduction}, which we state below.

\begin{theorem} \label{prop:stoch-symplectic-reduction}
    Consider the stochastic Hamiltonian system \eqref{eq:general-symplectic-diffusion} on the symplectic manifold $(P, \omega)$ such that all $\{H_i\}_{i\geq 0}$ are $G$-invariant.
    Further, for some $T>0$, let $\{\Phi_t\}_{t \in [0, T]}$ be the stochastic flow of \eqref{eq:general-symplectic-diffusion} on $P$ and $\{\phi_t^\mu\}_{t \in [0, T]}$ be the stochastic flow of \eqref{eq:reduced-symplectic-diffusion} on $P_\mu$. Then, there exists a flow $\psi_t$ on $J^{-1}(\mu)$ such that $\Phi_t \circ \iota_\mu = \iota_\mu \circ \psi_t$ and $\phi_t^\mu \circ \pi_\mu = \pi_\mu \circ \psi_t$ for all $t \in [0, T]$.
\end{theorem}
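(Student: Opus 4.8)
The plan is to follow the classical Marsden--Weinstein reduction argument \cite{marsden1974reduction}, using the stochastic Noether theorem (Theorem \ref{prop:stoch-noether}) to control the momentum map and the Stratonovich chain rule \eqref{eq:strat-chain-rule} to promote the relevant intertwining relations from the level of vector fields to the level of stochastic flows. The whole theorem thus splits into three ingredients: conservation of $J$ (a stochastic input), a purely deterministic differential-geometric lemma on how $X_{H_i}$ restricts and projects, and the principle that a smooth map intertwining the driving vector fields also intertwines the associated Stratonovich stochastic flows.

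First I would invoke Theorem \ref{prop:stoch-noether}: since all $\{H_i\}_{i \geq 0}$ are $G$-invariant, the momentum map is preserved pathwise, $J \circ \Phi_t = J$ almost surely, so $\Phi_t$ maps $J^{-1}(\mu)$ into itself. As $\mu$ is a regular value, $J^{-1}(\mu)$ is a submanifold and I may define $\psi_t := \Phi_t|_{J^{-1}(\mu)}$, which by construction satisfies $\Phi_t \circ \iota_\mu = \iota_\mu \circ \psi_t$, giving the first intertwining relation. Infinitesimally this is reflected in the tangency of the driving fields to the level set, $X_{H_i} J^\xi = \{J^\xi, H_i\} = -\xi_P H_i = 0$ for all $\xi \in \mathfrak g$.

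The second relation requires descending $\psi_t$ to $P_\mu = J^{-1}(\mu)/G_\mu$. The deterministic input here is the standard reduction lemma: the $G_\mu$-invariant field $X_{H_i}|_{J^{-1}(\mu)}$ projects under $\pi_\mu$ to the reduced Hamiltonian vector field, namely $\pi_{\mu *}\big(X_{H_i}|_{J^{-1}(\mu)}\big) = X^\mu_{h_i} \circ \pi_\mu$. This is a direct consequence of $\iota_\mu^*\omega = \pi_\mu^*\omega_\mu$ from \eqref{eq:reduced-symplectic-form}, the defining relation $\pi_\mu^* h_i = \iota_\mu^* H_i$, and the non-degeneracy of $\omega_\mu$; it makes no reference to the dynamics and so carries over verbatim from the deterministic theory.

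The main stochastic work, and the only step where the semimartingale structure genuinely enters, is showing that $\psi_t$ descends and that the descent coincides with $\phi_t^\mu$. Writing $\Psi_g : P \to P$ for the action of $g \in G$, the $G$-invariance of each $H_i$ together with the symplecticity of the $G$-action gives $\Psi_{g*} X_{H_i} = X_{H_i}$; applying the Stratonovich chain rule \eqref{eq:strat-chain-rule} to $f = \Psi_g$ shows that $\Psi_g \circ \Phi_t$ and $\Phi_t \circ \Psi_g$ solve the same SDE \eqref{eq:general-symplectic-diffusion}, whence they agree by uniqueness of solutions. Restricting to $g \in G_\mu$, the flow $\psi_t$ commutes with the $G_\mu$-action on $J^{-1}(\mu)$, hence maps orbits to orbits and drops to a well-defined flow $\tilde\phi_t$ on $P_\mu$ with $\tilde\phi_t \circ \pi_\mu = \pi_\mu \circ \psi_t$. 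Applying the Stratonovich chain rule once more with $f = \pi_\mu$ (which holds equally for maps between distinct manifolds) and inserting the projection identity of the previous paragraph, the process $\tilde z_t := \pi_\mu(\psi_t)$ satisfies $\diff \tilde z_t = \sum_{i \geq 0} X^\mu_{h_i}(\tilde z_t) \circ \diff S_t^i$, which is precisely \eqref{eq:reduced-symplectic-diffusion}. Uniqueness of solutions then forces $\tilde\phi_t = \phi_t^\mu$, yielding the second relation and completing the proof. I expect the descent-and-identification step to be the crux: it is where one must be sure that intertwining at the level of vector fields lifts to intertwining of the stochastic flows, and this is exactly what the ordinary chain rule enjoyed by Stratonovich integrals provides.
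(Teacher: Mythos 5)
Your proof is correct and follows essentially the same route as the paper's: the stochastic Noether theorem to restrict the flow to $J^{-1}(\mu)$, the projection identity $(\pi_\mu)_* X_{H_i}|_{J^{-1}(\mu)} = X^\mu_{h_i}$ (which the paper proves as a separate lemma from $\iota_\mu^*\omega = \pi_\mu^*\omega_\mu$ and injectivity of $\pi_\mu^*$, rather than citing it as standard), and the Stratonovich chain rule applied to $\pi_\mu$ to identify the projected process as a solution of \eqref{eq:reduced-symplectic-diffusion}. Your extra $G_\mu$-equivariance-and-descent step is sound but not needed for the stated intertwining relations --- the paper concludes directly by showing $\pi_\mu(\psi_t(Z_0^\mu))$ solves the reduced SDE and invoking uniqueness of the reduced flow.
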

\begin{proof}
The proof holds almost exactly as in the deterministic case. See Appendix \ref{app:stoch-symplectic-reduction} for details.
\end{proof}

Intuitively, Theorem \ref{prop:stoch-symplectic-reduction} states that in the presence of Lie group symmetries, the original stochastic Hamiltonian system \eqref{eq:general-symplectic-diffusion} on a symplectic manifold $P$ projects down via the mapping $\pi_\mu$ to the stochastic Hamiltonian system \eqref{eq:reduced-symplectic-diffusion} on the reduced symplectic space $P_\mu$, with lower dimension.

\subsection{Poisson reducion}\label{sec:poisson-reduction}
Under the same setting as before, we can also derive a reduced stochastic system on the space $P/G$ by projecting the system \eqref{eq:general-symplectic-diffusion} down via the quotient map $\pi : P \rightarrow P/G$. This is known as {\em Poisson reduction} \cite{marsden1986reduction}, and the resulting system will in general inherit a Poisson structure that is not necessarily symplectic. To see this, if we let $\{\cdot, \cdot\}$ be the symplectic Poisson bracket associated to the symplectic manifold $(P, \omega)$, then the projection $\pi: P \rightarrow P/G$ induces a Poisson bracket $\{\cdot, \cdot\}_{P/G} : \Omega^0(P/G) \times \Omega^0(P/G) \rightarrow \Omega^0(P/G)$, given by
\begin{align} \label{eq:poisson-map}
    \{f, g\}_{P/G}\circ \pi := \{f \circ \pi, g \circ \pi\} \,,
\end{align}
for all $f, g \in \Omega^0(P/G)$ (see \cite[Chapter 10.5]{marsden2013introduction}).
Now, for $Z_t$ solving \eqref{eq:general-symplectic-diffusion}, we can compute the dynamics of the projected system $\pi(Z_t)$ and deduce that it satisfies a stochastic Hamiltonian system in the Poisson sense. 

To see this, we first fix an arbitrary $f \in \Omega^0(P/G)$. Then using the Stratonovich chain rule \eqref{eq:strat-chain-rule}, extended to consider countably infinite diffusion terms, we have
\begin{align}
    \mathrm{d}f(\pi(Z_t)) &= \sum_{i \geq 0} (f \circ \pi)_* X_{H_i}(Z_t) \circ \mathrm{d}S_t^i = \sum_{i\geq 0} X_{H_i}(f \circ \pi) (Z_t) \circ \mathrm{d}S_t^i = \sum_{i\geq 0} \{f \circ \pi, H_i\} (Z_t) \circ \mathrm{d}S_t^i \,. \label{eq:poisson-reduction-computation}
\end{align}
Defining the {\em reduced Hamiltonian} $h_i \in \Omega^0(P/G)$ by $H_i = h_i \circ \pi$, we then find
\begin{align*}
    \eqref{eq:poisson-reduction-computation} &= \sum_{i\geq 0} \{f \circ \pi, h_i \circ \pi\} (Z_t) \circ \mathrm{d}S_t^i = \sum_{i\geq 0} \{f, h_i\}_{P/G}(\pi(Z_t)) \circ \mathrm{d}S_t^i \,,
\end{align*}
where we used the definition \eqref{eq:poisson-map} in the last line. The projected dynamics $[Z_t] := \pi(Z_t)$ thus evolve according to the stochastic Poisson-Hamiltonian system
\begin{align} \label{eq:stochastic-poisson-system}
    \diff f([Z_t]) = \sum_{i\geq 0} \{f, h_i\}_{P/G}([Z_t]) \circ \mathrm{d}S_t^i \,.
\end{align}

\begin{remark} \label{rmk:symplectic-leaf-preservation}
The Poisson reduced system \eqref{eq:stochastic-poisson-system} is in fact explicitly related to the symplectic reduced system \eqref{eq:reduced-symplectic-diffusion} by the symplectic stratification theorem \cite[Theorem 10.4.4]{marsden2013introduction}. This states that Poisson manifolds can be expressed as a disjoint union of {\em symplectic leaves}, which, in the current setting, can be further identified with the reduced spaces $P_\mu := J^{-1}(\mu) / G_\mu$ (see \cite{marsden1992lectures}).
Moreover, the Hamiltonian vector fields $X_h := \{\cdot, h\}_{P/G}$  for any $h \in C^\infty(P/G; \mathbb{R})$ span the tangent spaces of the symplectic leaves \cite[Theorem 10.4.4]{marsden2013introduction}, implying that the dynamics of the Poisson-Hamiltonian system \eqref{eq:stochastic-poisson-system} effectively take place on the symplectic leaves $\equiv$ reduced spaces. This restricted dynamics on the reduced space is exactly expressed by \eqref{eq:reduced-symplectic-diffusion}.
\end{remark}

An advantage of considering the Poisson reduced system \eqref{eq:stochastic-poisson-system} over the symplectic reduced system \eqref{eq:reduced-symplectic-diffusion} is that often the Poisson structure of the former is more amenable to explicit computations than the latter. Moreover, in the special case $P = T^*G$ which we will consider next in further details, the Poisson manifold $P/G$ can be identified with the dual Lie algebra $\mathfrak{g}^*$, and is therefore simpler to work with, both analytically and computationally.

\subsection{Lie-Poisson reduction} \label{sec:lie-poisson}
In this part and the next, we will consider the special case $P = T^*G$, where the special structure of the phase space enables explicit computation of the reduced system \eqref{eq:stochastic-poisson-system}. Provided the Hamiltonians $H_i$ are all left-invariant under the cotangent-lifted action of $G$, then by Lie-Poisson reduction theory \cite[Chapter 13]{marsden2013introduction}, the left momentum map $J_L(\alpha_g) = T^*_eR_g \cdot \alpha_g$ is preserved by Noether's theorem, while the right momentum map $J_R(\alpha_g) = T^*_eL_g \cdot \alpha_g$ evolves according to a Poisson-Hamiltonian system on $P / G \cong \mathfrak{g}^*$, equipped with the $(-)$ {\em Lie-Poisson bracket}
\begin{align}\label{eq:lie-poisson}
    \{f, g\}_{\mathfrak{g}^*_-}(\mu) := -\left<\mu, \left[\frac{\delta f}{\delta \mu}, \frac{\delta g}{\delta \mu}\right]\right>,
\end{align}
for any $\mu \in \mathfrak{g}^*$. Here, we denoted by ${\delta f}/{\delta \mu}$ for $\mu \in \mathfrak{g}^*$ an element in $\mathfrak{g}$ satisfying
\begin{align}\label{eq:variational-derivative}
    Df(\mu) \cdot \alpha = \left<\alpha, \,\frac{\delta f}{\delta \mu}\right>_{\mathfrak{g}^* \times \mathfrak{g}},
\end{align}
for any $\alpha \in \mathfrak{g}^*$. Likewise, for any $\xi \in \mathfrak{g}$ and $f : \mathfrak{g} \rightarrow \mathbb{R}$, we can define the object ${\delta f}/{\delta \xi} \in \mathfrak{g}^*$ by swapping the roles of $\mathfrak{g}$ and $\mathfrak{g}^*$ around in \eqref{eq:variational-derivative}.
Note that \eqref{eq:lie-poisson} is precisely the bracket obtained by Poisson reduction, i.e.,
\begin{align*}
    \{f, g\}_{\mathfrak{g}^*_-} \circ J_R = \{f \circ J_R,\, g \circ J_R\} \,,
\end{align*}
with the right momentum map $J_R$ playing the role of the group projection $J_R : P \rightarrow P / G \cong \mathfrak{g}^*$ and where $\{\cdot, \cdot\}$ is the canonical Poisson bracket on $T^*G$.
Defining the reduced Hamiltonians $h_i$ by $h_i \circ J_R = H_i$ for all $i \geq 0$, \eqref{eq:stochastic-poisson-system} implies
\begin{align*}
    \diff f(\mu_t) &= - \sum_{i \geq 0} \left<\mu_t, \left[\frac{\delta f}{\delta \mu_t}, \frac{\delta h_i}{\delta \mu_t}\right]\right> \circ \diff S_t^i = \sum_{i \geq 0} \left<\mu_t, \,\ad_{{\delta h_i}/{\delta \mu_t}}\frac{\delta f}{\delta \mu_t}\right> \circ \diff S_t^i 
    \\
    &= \sum_{i \geq 0} \left<\ad^*_{{\delta h_i}/{\delta \mu_t}} \mu_t, \,\frac{\delta f}{\delta \mu_t}\right> \circ \diff S_t^i \,.
\end{align*}
Then, by the Stratonovich chain rule, we see that this is satisfied by the {\em stochastic Lie-Poisson system}
\begin{align}\label{eq:stochastic-lie-poisson}
    \diff \mu_t = \sum_{i \geq 0} \ad^*_{{\delta h_i}/{\delta \mu_t}} \mu_t \circ \diff S_t^i.
\end{align}
\begin{remark}
    The symplectic leaves of the Poisson manifold $\mathfrak{g}^*_-$ are precisely the connected components of the coadjoint orbits $\mathcal{O}_{\mu_0} := \{\Ad^*_{g}\mu_0 : g \in G\}$ (see \cite[§14.3]{marsden2013introduction}) and these are equipped with the Kirillov-Kostant-Souriau (KKS) symplectic form $\omega_\mu(\xi_{\mathfrak{g}^*}(\mu), \eta_{\mathfrak{g}^*}(\mu)) = -\left<\mu, [\xi, \eta]\right>$.
\end{remark}

\subsection{Euler-Poincar\'e reduction}\label{sec:euler-poincare-reduction}
Next, we consider the problem of symmetry reduction from a Lagrangian perspective when the configuration space is given by a Lie group $G$, to derive a stochastic counterpart to Euler-Poincar\'e reduction theory \cite{HMR1998}.
Effectively, this yields a variational principle from which we can deduce the system \eqref{eq:stochastic-lie-poisson}. In the deterministic case, Euler-Poincar\'e reduction applies in settings where the Lagrangian $L : TG \rightarrow \mathbb{R}$ is left (or right) $G$-invariant, i.e., $L(TL_{h} (g_t, \dot{g}_t)) = L(g_t, \dot{g}_t)$ for any $(g_t, \dot{g}_t) \in TG$ and $h \in G$. In particular, taking $h = g_t^{-1}$, we have $v_t := TL_{h} (g_t, \dot{g}_t) \in T_eG \cong \mathfrak{g}$, which allows us to define the {\em reduced Lagrangian} $\ell : \mathfrak{g} \rightarrow \mathbb{R}$, given by $\ell(v_t) = L(g_t, \dot{g}_t)$. Taking arbitrary variations of a curve $g_t \in G$ lead to a constrained variation on the reduced process $v_t \in \mathfrak{g}$ satisfying
\begin{align}\label{eq:lin-constraint}
    \delta v_t = \dot{\eta}_t + \ad_{v_t} \eta_t \,,
\end{align}
which is the so-called {\em Lin constraint} \cite{cendra1987lin}, where the process $\eta_t := T_{g_t}L_{g_t^{-1}} \delta g_t \in \mathfrak{g}$ can be taken as arbitrary. Then, Hamilton's principle \eqref{eq:hamilton's-principle} yields
\begin{equation}
	0 = \delta \int_{t_0}^{t_1} L(g_t, \dot g_t) \,\diff t = \delta \int_{t_0}^{t_1} \ell(v_t) \,\diff t = \int_{t_0}^{t_1} \left<\frac{\delta \ell}{\delta v_t}, \,\delta v_t\right>_{\mathfrak{g}^* \times \mathfrak{g}} \,\diff t \,,
\end{equation}
which, upon using the Lin constraint \eqref{eq:lin-constraint} and the fundamental lemma of the calculus of variations, yields \eqref{eq:stochastic-lie-poisson} with $\mu_t = \frac{\delta \ell}{\delta v_t}$, $h_i = 0$ for all $i \geq 1$ and $S_t^0 = t$.
Now, extending this reduction procedure to the stochastic setting is not entirely trivial as we cannot necessarily make sense of the process $\dot{g}_t$ due to a lack of time-differentiability of the process.
However, we will show that reduction can be achieved soundly through the Hamilton Pontryagin approach, paralleling the procedures in the deterministic setting \cite{YOSHIMURA2007381}.

\subsubsection{Reduction of the stochastic Hamilton-Pontryagin action by $G$}

The lifted actions of a group on its tangent and cotangent bundles are central to the concept of reducing a system by its symmetries. Indeed, we may think of such actions as providing a mechanism for deducing a Lagrangian on the Lie algebra $\mathfrak{g}$ from that defined on the tangent bundle of the group, $TG$. In the deterministic case, only the lifted action of $G$ on $TQ$ is necessary\footnote{This action corresponds to the Maurer-Cartan form, which is a $\mathfrak{g}$-valued one form on $G$.}. In the stochastic case, we must reduce a stochastic action integral defined on the Pontryagin bundle. To do so, we must first define the left group action $G \times TG\oplus T^*G \rightarrow TG\oplus T^*G$. An element $h\in G$ acts from the left on a curve $(g,V,p) \in TG\oplus T^*G$, by
\begin{equation}\label{eqn:Pontryagin_left_action}
    h\cdot(g,V,p) = (h\cdot g, \, T_{g}L_{h}V, \,T^*_{h\cdot g}L_{h^{-1}}p) \,,
\end{equation}
where dot denotes the group operation, $L$ denotes left translation, and the tangent and cotangent lifted actions are as defined in equations \eqref{eqn:tangent_lifted_action} and \eqref{eqn:cotangent_lifted_action}. Notice that the action on the element of the cotangent bundle is by the inverse, this serves to ensure that fibres of the bundle are mapped to other fibres of the bundle. Indeed, notice that the action as defined by equation \eqref{eqn:Pontryagin_left_action} maps the fibre $T_gG\oplus T_g^*G$ to another fibre $T_{h\cdot g}G \oplus T_{h\cdot g}^*G$. 


Now, recall the stochastic Hamilton-Pontryagin action functional from Theorem \ref{thm:stochastic_HP}, which reads
\begin{align*}
    \mathcal{S}[g_t, V_t, p_t] = \int_{t_0}^{t_1} L(g_t, V_t)\, \diff t + \sum_{i \geq 1} \Gamma_i(g_t)\circ \diff S_t^i + \scp{p_t}{\circ\,\diff g_t - V_t \, \diff t - \sum_{i \geq 1} \Xi_i(g_t) \circ \diff S_t^i}_{T^*Q \times TQ},
\end{align*}
for $(g_t, V_t, p_t) \in TG \oplus T^*G$. In particular, taking $\Xi_i(g_t)$ to be a left-invariant vector field, i.e., $\Xi_i(g_t) =T_eL_{g_t} \xi_i$ for some $\xi_i \in \mathfrak{g}$, and $\Gamma_i \equiv 0$\footnote{We chose $\Gamma_i \equiv 0$  here since the only functions $\Gamma_i : G \rightarrow \mathbb{R}$ that are $G$-invariant are the constant functions, which have no effect on the dynamics.} for all $i \geq 1$, one may naively consider a reduced action functional on $G \times \mathfrak{g} \times \mathfrak{g}^*$ of the form
\begin{align}\label{eq:reduced-HP-action}
    \mathcal{S}_{red.}[g_t, v_t, \mu_t] = \int_{t_0}^{t_1} \ell(v_t)\, \diff t + \sum_{i \geq 1} \scp{\mu_t}{T_{g_t}L_{{g_t}^{-1}}(\circ\,\diff g_t) - v_t \, \diff t - \sum_{i \geq 1} \xi_i \circ \diff S_t^i}_{\mathfrak{g}^* \times \mathfrak{g}},
\end{align}
where $(e, v_t, \mu_t) := g^{-1}_t \cdot (g_t, V_t, p_t) \in G \times \mathfrak{g} \times \mathfrak{g}^*$, adopting our notation in \eqref{eqn:Pontryagin_left_action}. Note that at this stage, we do not understand what the term $T_{g_t}L_{{g_t}^{-1}}(\circ\,\diff g_t)$ means in the expression, as $\diff g_t$ is not a well-defined object on $TG$. However, by assuming that the process $g_t$ is compatible with the driving semimartingale $S_t = (S_t^0, S_t^1, \ldots)$ so that there exists a family $\{F_t^i\}_{i \geq 0}$ of $TG$-valued semimartingales such that $\diff g_t = \sum_{i \geq 0} F_t^i \circ \diff S_t^i$ (see Definition \ref{def:compatibility}), we can {\em define} $T_{g_t}L_{{g_t}^{-1}}(\circ\,\diff g_t)$ by
\begin{equation}\label{eq:define-g-inv-dg}
	\int^T_0 \left<\alpha_t, T_{g_t}L_{{g_t}^{-1}}(\circ \,\diff g_t)\right>_{\mathfrak{g}^* \times \mathfrak{g}} := \sum_{i\geq 0} \int^T_0 \left<\alpha_t, T_{g_t}L_{{g_t}^{-1}}F_t^i\right>_{\mathfrak{g}^* \times \mathfrak{g}} \circ \diff S_t^i \,,
\end{equation}
for any $\mathfrak{g}^*$-valued process $\alpha_t$ and $T > 0$. This is well-defined by the uniqueness of the decomposition of $g_t$ into the processes $\{F_t^i\}_{i \geq 0}$ (Corollary \ref{cor:uniqueness-of-decomposition}).
Thus, the compatibility assumption allows us to make sense of the reduced action functional \eqref{eq:reduced-HP-action} and one can easily check that $\mathcal{S}[g_t, V_t, p_t] = \mathcal{S}_{red.}[g_t, v_t, \mu_t]$. In the following, we derive a stochastic extension of the Lin constraint \eqref{eq:lin-constraint}, enabling us to understand how taking variations on the curve $g_t$ affect the term involving $T_{g_t}L_{{g_t}^{-1}}(\circ\,\diff g_t)$ in the reduced action.
    

\subsubsection{Stochastic Lin constraints}

To derive a stochastic generalisation of the Lin constraint \eqref{eq:lin-constraint}, we adopt an approach similar to that considered in \cite{ACC2014}.
To this end, let $\eta_t \in \mathfrak{g}$ be an arbitrary, deterministic, $C^1$-differentiable curve that vanishes at the endpoints, i.e., $\eta_{t_0} = \eta_{t_1} = 0$ for fixed $0 < t_0 < t_1 < \infty$. Furthermore, let us define a group-valued process, $e_{\epsilon,t} \in G$, depending on $\epsilon \in \mbb{R}$, by the ODE
\begin{equation}\label{eq:group-perturbation-ode}
    \frac{\diff e_{\epsilon,t}}{\diff t} = \epsilon\, T_eL_{e_{\epsilon,t}}\dot{\eta}_{t} \,,\quad\hbox{such that}\quad e_{\epsilon,0} = e \,.
\end{equation}
We consider perturbations of a path $g_t\in G$ by this process as
\begin{align}\label{eq:g-perturbation}
g_{\epsilon,t} := g_t\cdot e_{\epsilon,t},
\end{align}
giving us the infinitesimal variation
\begin{equation}\label{eq:delta-g}
    \delta g_t := \frac{\p}{\p\epsilon}\bigg|_{\epsilon=0}g_{\epsilon,t} = T_eL_{g_t}\eta_t \,.
\end{equation}
This follows from the fact that at $\epsilon=0$, the perturbation behaves like the exponential map, indeed
\begin{equation}\label{eq:variation-of-perturbation}
    \frac{\p}{\p\epsilon}\bigg|_{\epsilon=0}e_{\epsilon,t} = \eta_t\,,\quad\hbox{and}\quad \frac{\p}{\p\epsilon}\bigg|_{\epsilon=0}e_{\epsilon,t}^{-1} = -\eta_{t} \,.
\end{equation}
For a proof of this, see \cite[Lemma 3.1]{ACC2014}.
It is important to note that the procedure \eqref{eq:g-perturbation} to construct a perturbation of $g_t$ is fully deterministic, thus we have successfully created a variational family of paths $\{g_{\epsilon,t}\}_{\epsilon \in \mathbb{R}}$ such that each member is $\mathcal{F}_t$-adapted {\em and} satisfies the endpoint conditions $g_{\epsilon, t_0} = g_{t_0}, g_{\epsilon, t_1} = g_{t_1}$ for all $\epsilon \in \mathbb{R}$.
Further, the following Lemma reveals the dynamics of this perturbed process in the group.
\begin{lemma}\label{lemma:variation_g_evolution}
    Let $g_t\in G$ be a group valued process which is compatible with a driving semimartingale, $S_t$, so that there exists a family of $\mathfrak{g}$-valued semimartingales $\{w_t^i\}_{i \geq 0}$ satisfying
    \begin{equation}\label{eqn:compatibility_g_t}
        T_{g_t}L_{{g_t}^{-1}}(\circ \,\diff g_t) = \sum_{i\geq 0}w_t^i\circ \diff S_t^i \,
    \end{equation}
    (to see this, take $w_t^i := T_{g_t}L_{{g_t}^{-1}}(F_t^i)$ in \eqref{eq:define-g-inv-dg}).
    Then, for the group-valued perturbation $e_{\epsilon,t}$ introduced above, the process $g_{\epsilon,t} = g_t\cdot e_{\epsilon,t}$ is compatible with the same driving semimartingale and evolves according to
    \begin{equation}\label{eqn:evolution_g_epsilon_t}
        \diff g_{\epsilon,t} = T_eL_{g_{\epsilon,t}}\left(\Ad_{e_{\epsilon,t}^{-1}}w_t^0 + \epsilon \dot{\eta}_t \right) \diff t + \sum_{i\geq 1}T_eL_{g_{\epsilon,t}}\left(\Ad_{e_{\epsilon,t}^{-1}}w_t^i \right)\circ \diff S_t^i \,, 
    \end{equation}
    where $\Ad$ denotes the adjoint representation of $G$.
\end{lemma}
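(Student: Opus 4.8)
The plan is to compute the Stratonovich differential of the group product $g_{\epsilon,t}=g_t\cdot e_{\epsilon,t}$ directly, by viewing it as the image of the pair $(g_t,e_{\epsilon,t})$ under the multiplication map $m:G\times G\to G$, $m(a,b)=ab$. Applying the group-valued Stratonovich product rule (the manifold analogue of the chain rule \eqref{eq:strat-chain-rule}, resting on the fact that Stratonovich integration obeys the ordinary Leibniz rule, exactly as in the product rule used in the proof of Theorem \ref{thm:stochastic_HP}) gives
\begin{equation}
\circ\,\diff g_{\epsilon,t}=T_{g_t}R_{e_{\epsilon,t}}(\circ\,\diff g_t)+T_{e_{\epsilon,t}}L_{g_t}(\circ\,\diff e_{\epsilon,t}),
\end{equation}
where the two tangent maps are the partial derivatives of $m$ in its first and second arguments. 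Since $e_{\epsilon,t}$ is deterministic and $C^1$ in $t$, it has finite variation, so no It\^o-type covariation between $g_t$ and $e_{\epsilon,t}$ arises and the two contributions may be treated separately.

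For the second term I would substitute the defining ODE \eqref{eq:group-perturbation-ode}, namely $\circ\,\diff e_{\epsilon,t}=\epsilon\,T_eL_{e_{\epsilon,t}}\dot\eta_t\,\diff t$, and use the chain rule for tangent maps $T_{e_{\epsilon,t}}L_{g_t}\circ T_eL_{e_{\epsilon,t}}=T_e(L_{g_t}\circ L_{e_{\epsilon,t}})=T_eL_{g_{\epsilon,t}}$, which collapses it to $\epsilon\,T_eL_{g_{\epsilon,t}}\dot\eta_t\,\diff t$. For the first term I would insert the compatibility decomposition \eqref{eqn:compatibility_g_t}: since $w_t^i=T_{g_t}L_{g_t^{-1}}F_t^i$ we have $\circ\,\diff g_t=\sum_{i\ge0}T_eL_{g_t}w_t^i\circ\diff S_t^i$, so that
\begin{equation}
T_{g_t}R_{e_{\epsilon,t}}(\circ\,\diff g_t)=\sum_{i\ge0}\big(T_{g_t}R_{e_{\epsilon,t}}\circ T_eL_{g_t}\big)w_t^i\circ\diff S_t^i.
\end{equation}
The key identity is $T_{g_t}R_{e_{\epsilon,t}}\circ T_eL_{g_t}=T_eL_{g_{\epsilon,t}}\circ\Ad_{e_{\epsilon,t}^{-1}}$. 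I would prove it by composing on the left with $T_{g_{\epsilon,t}}L_{g_{\epsilon,t}^{-1}}=(T_eL_{g_{\epsilon,t}})^{-1}$ and unwinding $L_{(g_te_{\epsilon,t})^{-1}}=L_{e_{\epsilon,t}^{-1}}\circ L_{g_t^{-1}}$: evaluating on a test curve $\gamma(s)$ with $\gamma(0)=e$, $\gamma'(0)=\xi$ reduces the composite to $\xi\mapsto\frac{d}{ds}\big|_{0}\,e_{\epsilon,t}^{-1}\gamma(s)e_{\epsilon,t}=\Ad_{e_{\epsilon,t}^{-1}}\xi$, matching the definition of $\Ad_g$ given in the preliminaries.

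Combining the two contributions and separating the $i=0$ summand using $S_t^0=t$ (so that both the $w_t^0$ term and the $\epsilon\dot\eta_t$ term multiply $\diff t$) yields precisely \eqref{eqn:evolution_g_epsilon_t}; moreover, since every coefficient on the right is a $TG$-valued semimartingale, this representation shows immediately that $g_{\epsilon,t}$ is compatible with the same driving semimartingale $S_t$, as asserted. I expect the algebra to be routine once the $\Ad$ identity is in hand, and the only genuine subtlety to be the rigorous justification of the group product rule in the first display: because the $w_t^i$ are themselves semimartingales, the first term is not literally a pushforward of a fixed vector field along $m$, so I would phrase the product rule intrinsically (in a chart, applying the Stratonovich Leibniz rule coordinatewise to $m$) and invoke the finite variation of the deterministic factor $e_{\epsilon,t}$ to rule out a correction term. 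Uniqueness of the resulting decomposition, needed to legitimately read off \eqref{eqn:evolution_g_epsilon_t} as the evolution of $g_{\epsilon,t}$, is then supplied by Corollary \ref{cor:uniqueness-of-decomposition}.
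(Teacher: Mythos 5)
Your proposal is correct and follows essentially the same route as the paper's proof in Appendix \ref{app:variation_g_evolution}: apply the Stratonovich product rule to $g_{\epsilon,t}=g_t\cdot e_{\epsilon,t}$, substitute the ODE \eqref{eq:group-perturbation-ode} and the compatibility decomposition \eqref{eqn:compatibility_g_t}, and rewrite $T_{g_t}R_{e_{\epsilon,t}}\circ T_eL_{g_t}$ as $T_eL_{g_{\epsilon,t}}\circ\Ad_{e_{\epsilon,t}^{-1}}$. The extra details you supply (the test-curve verification of the $\Ad$ identity, the finite-variation remark, and the appeal to Corollary \ref{cor:uniqueness-of-decomposition} for reading off the decomposition) are all sound and merely flesh out steps the paper leaves implicit.
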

\begin{proof}
    The proof for this result can be found in Appendix \ref{app:variation_g_evolution}, and is related to a similar result discussed in \cite{ACC2014}.
\end{proof}

\begin{remark}
    We note that the above lemma is the only place where we must require the assumption $S_t^0 = t$ in Definition \ref{def:driving_semimartingale}. Otherwise, it suffices to have that $S_t^0$ is a strictly increasing c\`adl\`ag adapted finite-variation process in the unreduced Lagrangian setting, or any semimartingale in the Hamiltonian formulation. In general, most of our assumptions arise to make rigorous sense of the variational principles.
\end{remark}

This allows us to deduce a stochastic generalisation of the Lin constraint \eqref{eq:lin-constraint}, as we show in the following result.
\begin{corollary}[Stochastic Lin constraints]\label{cor:stoch-lin-constraints}
    Let $g_t \in G$ be a curve that is compatible with a driving semimartingale $S_t$ and consider a family of $\epsilon$-perturbed curves $g_{\epsilon,t} = g_t\cdot e_{\epsilon,t}$, where $e_{\epsilon,t}$ is a group perturbation as defined in \eqref{eq:group-perturbation-ode}. Then, taking variations of $g_t$ among the family $\{g_{\epsilon,t}\}_{\epsilon \in \mathbb{R}}$ induces constrained variations on the processes $\{w_t^i\}_{i \geq 0}$ in \eqref{eqn:compatibility_g_t}, satisfying the relations:
    \begin{align}
        \delta w_t^0 &= \ad_{w_t^0}\eta_t + \dot{\eta}_t \label{eq:stoch-lin-0}
        \,,\\
        \delta w_t^i &= \ad_{w_t^i}\eta_t
        \,, \label{eq:stoch-lin-i}
    \end{align}
    where $\ad$ denotes the adjoint representation of $\mathfrak{g}$. We call \eqref{eq:stoch-lin-0}--\eqref{eq:stoch-lin-i} the stochastic Lin constraints.
\end{corollary}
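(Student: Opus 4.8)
The plan is to read off the compatibility coefficients of the perturbed process $g_{\epsilon,t}$ directly from Lemma \ref{lemma:variation_g_evolution} and then differentiate them in $\epsilon$ at $\epsilon = 0$. Concretely, applying $T_{g_{\epsilon,t}}L_{g_{\epsilon,t}^{-1}} = (T_eL_{g_{\epsilon,t}})^{-1}$ to both sides of \eqref{eqn:evolution_g_epsilon_t} and using $S_t^0 = t$ to absorb the drift $\epsilon\dot\eta_t\,\diff t$ into the $i=0$ slot, one obtains the expansion $T_{g_{\epsilon,t}}L_{g_{\epsilon,t}^{-1}}(\circ\,\diff g_{\epsilon,t}) = \sum_{i\geq 0} w_{\epsilon,t}^i \circ \diff S_t^i$ with coefficients
\begin{equation*}
    w_{\epsilon,t}^0 = \Ad_{e_{\epsilon,t}^{-1}} w_t^0 + \epsilon\dot\eta_t, \qquad w_{\epsilon,t}^i = \Ad_{e_{\epsilon,t}^{-1}} w_t^i \quad (i \geq 1).
\end{equation*}
By Corollary \ref{cor:uniqueness-of-decomposition} these are the unique compatibility coefficients of $g_{\epsilon,t}$, so the induced variations $\delta w_t^i := \partial_\epsilon|_{\epsilon=0} w_{\epsilon,t}^i$ are well defined objects whose computation is the content of the corollary.

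Next I would differentiate in $\epsilon$. Since $e_{\epsilon,t}$ depends smoothly on $\epsilon$ by standard ODE theory and each $w_t^i$ is independent of $\epsilon$, the only $\epsilon$-dependence sits in the deterministic, fibrewise-linear maps $\Ad_{e_{\epsilon,t}^{-1}}$; the differentiation is therefore pathwise and classical, with no interchange of a limit and a stochastic integral required — all stochastic content is inherited unchanged from Lemma \ref{lemma:variation_g_evolution}. The key computation is $\partial_\epsilon|_{\epsilon=0}\Ad_{e_{\epsilon,t}^{-1}} w$ for fixed $w \in \mathfrak{g}$. From \eqref{eq:group-perturbation-ode} one has $e_{0,t} = e$, and by \eqref{eq:variation-of-perturbation}, $\partial_\epsilon|_{\epsilon=0} e_{\epsilon,t}^{-1} = -\eta_t$. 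Thus $\epsilon \mapsto e_{\epsilon,t}^{-1}$ is a curve through the identity with velocity $-\eta_t$ at $\epsilon = 0$, so the very definition of $\ad$ recorded in the preliminaries gives
\begin{equation*}
    \frac{\partial}{\partial\epsilon}\bigg|_{\epsilon=0}\Ad_{e_{\epsilon,t}^{-1}} w = \ad_{-\eta_t} w = -\ad_{\eta_t} w = \ad_w \eta_t,
\end{equation*}
where the last equality uses the antisymmetry $\ad_\eta\xi = -\ad_\xi\eta$ of the adjoint representation (this holds in either sign convention, so the argument is insensitive to whether the action is taken from the left or right).

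Finally I would assemble the two constraints. For $i \geq 1$ this immediately yields $\delta w_t^i = \ad_{w_t^i}\eta_t$, which is \eqref{eq:stoch-lin-i}. For $i = 0$, differentiating the extra drift term $\epsilon\dot\eta_t$ contributes $\dot\eta_t$, giving $\delta w_t^0 = \ad_{w_t^0}\eta_t + \dot\eta_t$, which is \eqref{eq:stoch-lin-0} and recovers the deterministic Lin constraint \eqref{eq:lin-constraint} in the drift slot. The only real subtlety is the bookkeeping: one must confirm via Corollary \ref{cor:uniqueness-of-decomposition} that the coefficients extracted from \eqref{eqn:evolution_g_epsilon_t} are genuinely the objects whose $\epsilon$-variation defines $\delta w_t^i$, and take care that the derivative acts on $e_{\epsilon,t}^{-1}$ rather than $e_{\epsilon,t}$ — precisely the feature that turns $\ad_{\eta_t}$ into $\ad_{w_t^i}$ by antisymmetry of the bracket. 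Everything else is a direct differentiation of the expressions supplied by Lemma \ref{lemma:variation_g_evolution}.
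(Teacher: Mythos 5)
Your proposal is correct and follows essentially the same route as the paper: both read off the compatibility coefficients $w_{\epsilon,t}^0 = \Ad_{e_{\epsilon,t}^{-1}} w_t^0 + \epsilon\dot\eta_t$ and $w_{\epsilon,t}^i = \Ad_{e_{\epsilon,t}^{-1}} w_t^i$ from Lemma \ref{lemma:variation_g_evolution} via the uniqueness of the decomposition, and then differentiate at $\epsilon = 0$ using $\partial_\epsilon|_{\epsilon=0} e_{\epsilon,t}^{-1} = -\eta_t$ from \eqref{eq:variation-of-perturbation}. Your only additions — spelling out the identity $\ad_{-\eta_t} w = \ad_w \eta_t$ and its insensitivity to the left/right sign convention, and explicitly invoking Corollary \ref{cor:uniqueness-of-decomposition} — are details the paper leaves implicit, and they are accurate.
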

\begin{proof}
    The result follows almost immediately from Lemma \ref{lemma:variation_g_evolution}. Indeed, since $g_{\epsilon,t}$ is compatible with the driving semimartingale, it can be expressed uniquely as $\diff g_{\epsilon,t} = \sum_{i\geq 0} T_eL_{g_{\epsilon,t}} w_{\epsilon,t}^i\circ \diff S_t^i$. Then from Lemma \ref{lemma:variation_g_evolution}, we have
    \begin{equation*}
        \sum_{i\geq 0} \int_{t_0}^{t_1}T_eL_{g_{\epsilon,t}} w_{\epsilon,t}^i\circ \diff S_t^i = \int_{t_0}^{t_1}T_eL_{g_{\epsilon,t}}\left(\Ad_{e_{\epsilon,t}^{-1}}w_t^0 + \epsilon \dot{\eta}_t \right) \diff t + \sum_{i\geq 1} \int_{t_0}^{t_1} T_eL_{g_{\epsilon,t}}\left(\Ad_{e_{\epsilon,t}^{-1}}w_t^i \right)\circ \diff S_t^i
        \,,
    \end{equation*}
    Hence, matching terms and taking variations, we have
    \begin{align*}
        \delta w_t^0 &:= \frac{\p}{\p\epsilon}\bigg|_{\epsilon=0}w_{\epsilon,t}^0 = \frac{\p}{\p\epsilon}\bigg|_{\epsilon=0}\left(\Ad_{e_{\epsilon,t}^{-1}} w_t^0 + \epsilon \dot{\eta}_t \right) = \ad_{w_t^0}\eta_t + \dot{\eta}_t
        \,,\\
        \delta w_t^i &:= \frac{\p}{\p\epsilon}\bigg|_{\epsilon=0}w_{\epsilon,t}^i = \frac{\p}{\p\epsilon}\bigg|_{\epsilon=0}\Ad_{e_{\epsilon,t}^{-1}}w_t^i = \ad_{w_t^i}\eta_t
        \,,
    \end{align*}
    where we used the second relation in \eqref{eq:variation-of-perturbation} to deduce the last equality.
\end{proof}


\subsubsection{The Euler-Poincar\'e reduction.}
We are now ready to state and prove the following stochastic extension of Euler-Poincar\'e reduction.
\begin{theorem}[The stochastic Euler-Poincar\'e reduction theorem]\label{thm:EP}
	Let $S_t$ be a driving semimartingale and assume we have a left-invariant Lagrangian, $L:TG\rightarrow \mathbb{R}$ and a collection of vectors $\xi_i \in \mathfrak{g}$ for $i \geq 1$.
    Furthermore, let $\{\Xi_i\}_{i \geq 1}$ be a family of left-invariant vector fields on $G$, given by $\Xi_i(g) = T_eL_g\xi_i$. Then the following are equivalent:
	\begin{enumerate}
	 \item The curve $(g_t, V_t, p_t) \in TG \oplus T^*G$ is an extrema of the unreduced stochastic Hamilton-Pontryagin action functional \eqref{eq:HP-action} with $\Gamma_i \equiv 0$ and $\Xi_i(g) = T_eL_g\xi_i$.
	 \item The stochastic Euler-Lagrange equations \eqref{eq:stoch-euler-lagrange-mom}--\eqref{eq:stoch-euler-lagrange-pos} hold with $\Gamma_i \equiv 0$ and $\Xi_i(g) = T_eL_g\xi_i$.
	 \item The curve $(g_t, v_t, \mu_t) \in G\times \mathfrak{g} \times \mathfrak{g}^*$ is a solution to the reduced Hamilton-Pontryagin principle
    \begin{equation}\label{eqn:reduced_stochastic_HP_action}
	 	0 = \delta\int_{t_0}^{t_1}\ell(v_t)\, \diff t + \scp{\mu_t}{T_{g_t}L_{{g_t}^{-1}}(\circ \diff g_t) - v_t \circ \diff S_t^0 - \sum_{i\geq 1}\xi_i\circ \diff S_t^i} \,,
	 \end{equation}
     where the variations of $v_t$ and $\mu_t$ are taken arbitrarily and the variations of $g_t$ are taken according to Corollary \ref{cor:stoch-lin-constraints}, i.e. among the family of $\epsilon $-perturbations $g_{\epsilon, t} = g_t \cdot e_{\epsilon, t}$, with $e_{\epsilon, t}$ solving \eqref{eq:group-perturbation-ode}.
	 \item The following stochastic Euler-Poincar\'e equation holds
	 \begin{equation}\label{eqn:stochastic_EP}
	 	\diff \frac{\delta\ell}{\delta v_t} = \ad^*_{v_t}\frac{\delta\ell}{\delta v_t}\, \diff t + \sum_{i \geq 1} \ad^*_{\xi_i}\frac{\delta\ell}{\delta v_t}\circ \diff S_t^i \,,
	 \end{equation}
     where $\ad^*$ is the dual of $\ad$ with respect to the natural pairing between $\mathfrak{g}$ and its dual space $\mathfrak{g}^*$.
	\end{enumerate}
\end{theorem}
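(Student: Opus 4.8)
The plan is to establish the four statements as a single chain of equivalences $(2)\Leftrightarrow(1)\Leftrightarrow(3)\Leftrightarrow(4)$. The equivalence $(1)\Leftrightarrow(2)$ requires no new work: it is precisely Theorem \ref{thm:stochastic_HP} specialised to $Q=G$ with $\Gamma_i\equiv 0$ and $\Xi_i(g)=T_eL_g\xi_i$. For $(1)\Leftrightarrow(3)$ I would invoke the identity $\mathcal{S}[g_t,V_t,p_t]=\mathcal{S}_{red.}[g_t,v_t,\mu_t]$ already established through the definition \eqref{eq:define-g-inv-dg}, under the fibrewise bijection $(e,v_t,\mu_t)=g_t^{-1}\cdot(g_t,V_t,p_t)$. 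Since this bijection carries arbitrary variations of $(V_t,p_t)$ to arbitrary variations of $(v_t,\mu_t)$, and carries the endpoint-fixed perturbations of $g_t$ to exactly the perturbations $g_{\epsilon,t}=g_t\cdot e_{\epsilon,t}$ whose induced constrained variations are computed in Corollary \ref{cor:stoch-lin-constraints}, an extremum of one action is an extremum of the other over the matching variational families.

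The substance of the proof is the equivalence $(3)\Leftrightarrow(4)$, which I would obtain by computing $\delta\mathcal{S}_{red.}$ slot by slot. Varying $\mu_t$ arbitrarily and applying Corollary \ref{cor:fundamental_lemma} forces the reconstruction relations $w_t^0=v_t$ and $w_t^i=\xi_i$ for $i\geq 1$, where $\{w_t^i\}_{i\geq 0}$ is the compatibility decomposition \eqref{eqn:compatibility_g_t} of $g_t$. Varying $v_t$ arbitrarily pairs $\tfrac{\delta\ell}{\delta v_t}\,\diff t$ against $-\mu_t\circ\diff S_t^0=-\mu_t\,\diff t$ (here the assumption $S_t^0=t$ is used), yielding the Legendre relation $\mu_t=\tfrac{\delta\ell}{\delta v_t}$. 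Finally, only the term $\scp{\mu_t}{T_{g_t}L_{g_t^{-1}}(\circ\,\diff g_t)}=\sum_{i\geq 0}\scp{\mu_t}{w_t^i}\circ\diff S_t^i$ depends on $g_t$, and varying $g_t$ through the family $g_{\epsilon,t}=g_t\cdot e_{\epsilon,t}$ feeds in the stochastic Lin constraints $\delta w_t^0=\ad_{w_t^0}\eta_t+\dot\eta_t$ and $\delta w_t^i=\ad_{w_t^i}\eta_t$.

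Substituting the reconstruction relations, dualising via $\scp{\mu_t}{\ad_{\zeta}\eta_t}=\scp{\ad^*_{\zeta}\mu_t}{\eta_t}$ for $\zeta\in\{v_t,\xi_i\}$, and integrating the term $\scp{\mu_t}{\dot\eta_t}\,\diff t$ by parts, I would arrive at
\begin{equation*}
    0 = \int_{t_0}^{t_1}\scp{-\circ\,\diff\mu_t+\ad^*_{v_t}\mu_t\,\diff t+\sum_{i\geq 1}\ad^*_{\xi_i}\mu_t\circ\diff S_t^i}{\eta_t} \,.
\end{equation*}
The integration by parts is legitimate because $\eta_t$ is deterministic and $C^1$ with $\eta_{t_0}=\eta_{t_1}=0$, so that $\scp{\mu_t}{\circ\,\diff\eta_t}=\scp{\mu_t}{\dot\eta_t}\,\diff t$ with no covariation correction and a vanishing boundary term. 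Writing $\diff\mu_t=\sum_{j\geq 0}N_t^j\circ\diff S_t^j$ by compatibility of $\mu_t$, the displayed identity becomes $\sum_{i\geq 0}\int_{t_0}^{t_1}\scp{C_t^i}{\eta_t}\circ\diff S_t^i=0$ with $C_t^0=\ad^*_{v_t}\mu_t-N_t^0$ and $C_t^i=\ad^*_{\xi_i}\mu_t-N_t^i$.

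The main obstacle is that, unlike the hypothesis of Lemma \ref{lemma:fundamental}, the Lin-constrained variation supplies a single deterministic test curve $\eta_t$ shared across every driving component, rather than an independent semimartingale for each $i$; one cannot therefore invoke the lemma as a black box. I expect to resolve this exactly as in the proof of that lemma: taking the quadratic variation of the identity, the cross terms drop out because $[S^i,S^j]_t=0$ for $i\neq j$, leaving $\sum_{i\geq 1}\int_{t_0}^{t_1}\scp{C_t^i}{\eta_t}^2\,\diff[M^i]_t=0$; strict monotonicity of each $[M^i]_t$ then forces $\scp{C_t^i}{\eta_t}=0$ for all $t$, and since $\eta_t$ ranges over a dense family this gives $N_t^i=\ad^*_{\xi_i}\mu_t$ for $i\geq 1$. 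The remaining $\diff t$-integral $\int_{t_0}^{t_1}\scp{C_t^0}{\eta_t}\,\diff t=0$ then gives $N_t^0=\ad^*_{v_t}\mu_t$. Recombining and inserting $\mu_t=\tfrac{\delta\ell}{\delta v_t}$ produces \eqref{eqn:stochastic_EP}. Since every step is an equivalence, the reverse implication $(4)\Rightarrow(3)$ follows by reading the computation backwards, closing the chain.
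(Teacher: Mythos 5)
Your proposal is correct and follows essentially the same route as the paper's proof: $(1)\Leftrightarrow(2)$ via Theorem \ref{thm:stochastic_HP}, $(1)\Leftrightarrow(3)$ via the identity $\mathcal{S}[g_t,V_t,p_t]=\mathcal{S}_{red.}[g_t,v_t,\mu_t]$, and $(3)\Leftrightarrow(4)$ by slot-wise variation, the stochastic Lin constraints of Corollary \ref{cor:stoch-lin-constraints}, integration by parts against the deterministic $\eta_t$, and the fundamental lemma. Your one departure is a point of extra rigour rather than a different method: where the paper invokes Lemma \ref{lemma:fundamental} together with Corollary \ref{cor:fundamental_lemma} as a black box, you correctly observe that the Lin-constrained variation places the \emph{same} deterministic curve $\eta_t$ in every slot (rather than an independent test process per component, as the lemma's hypothesis is phrased), and you re-run the lemma's quadratic-variation argument — cross terms killed by $[S^i,S^j]_t=0$, strict monotonicity of $[M^i]_t$, then density of the $\eta$'s — to cover exactly this case.
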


\begin{proof}
	The equivalence between \emph{1} and \emph{2} follows from Theorem \ref{thm:stochastic_HP}. We have also shown earlier that the actions in \emph{3} and \emph{1} are equivalent. Indeed, the Lagrangian is $G$-invariant and the constraint terms in \eqref{eq:HP-action} and \eqref{eqn:reduced_stochastic_HP_action} are equivalent, with $\mu = T^*_eL_{g}p$, $v = T_gL_{g^{-1}}V$ and $\xi_i = T_gL_{g^{-1}}\Xi_i$. It remains to demonstrate that the Euler-Poincar\'e equation \eqref{eqn:stochastic_EP} results from the reduced Hamilton-Pontryagin principle \eqref{eqn:reduced_stochastic_HP_action}. Taking variations of the action \eqref{eqn:reduced_stochastic_HP_action} and applying the result of Corollary \ref{cor:stoch-lin-constraints}, we have
\begin{align*}
    0 &= \delta\int_{t_0}^{t_1}\ell(v_t)\, \diff t + \int_{t_0}^{t_1} \scp{\mu_t}{T_{g_t}L_{{g_t}^{-1}}(\circ\diff g_t) - v_t\circ \diff S_t^0 - \sum_{i\geq 1}\xi_i\circ \diff S_t^i}
    \\
    &= \int_{t_0}^{t_1} \scp{\frac{\delta\ell}{\delta v_t}}{\delta v_t}\diff t + \int_{t_0}^{t_1} \scp{\mu}{(\delta w^0_t - \delta v_t)\, \diff t + \sum_{i\geq 1}\delta w^i_t\circ \diff S_t^i}
    \\
    &\qquad + \int_{t_0}^{t_1}\scp{\delta\mu_t}{(w^0_t - v_t) \, \diff t + \sum_{i\geq 1}(w^i_t -\xi_i)\circ \diff S_t^i}
    \\
    &= \int_{t_0}^{t_1} \scp{\frac{\delta\ell}{\delta v_t} - \mu_t}{\delta v_t} \diff t + \int_{t_0}^{t_1} \scp{\mu_t}{\left(\ad_{w_t^0}\eta_t + \dot{\eta}_t \right) \, \diff t + \sum_{i\geq 1} \ad_{w_t^i}\eta_t \circ \diff S_t^i}
    \\
    &\qquad + \int_{t_0}^{t_1}\scp{\delta\mu_t}{(w^0_t - v_t) \,\diff t + \sum_{i\geq 1}(w^i_t -\xi_i)\circ \diff S_t^i}
    \\
    &=\int_{t_0}^{t_1} \scp{\frac{\delta\ell}{\delta v_t} - \mu_t}{\delta v_t} \diff t - \int_{t_0}^{t_1} \scp{\circ\diff \mu_t - \ad^*_{w_t^0}\mu_t \, \diff t - \sum_{i\geq 1}\ad^*_{w_t^i}\mu_t \circ \diff S_t^i}{\eta_t}
    \\
    &\qquad + \int_{t_0}^{t_1}\scp{\delta\mu_t}{(w^0_t - v_t)\, \diff t + \sum_{i\geq 1}(w^i_t -\xi_i)\circ \diff S_t^i}
    \,,
\end{align*}
where we used integration-by-parts $\int^{t_1}_{t_0} \left<\mu_t, \dot{\eta}_t\right> \, \diff t = \left<\mu_t, \eta_t\right>\big|^{t_1}_{t_0} -\int^{t_1}_{t_0} \left<\circ \diff\mu_t, \eta_t\right> \, \diff t$ and the endpoint conditions $\eta_{t_0} = \eta_{t_1} = 0$ to arrive at the last line.
Thus, invoking the fundamental lemma of the stochastic calculus of variations (Lemma \ref{lemma:fundamental}) together with Corollary \ref{cor:fundamental_lemma} yields the following system
\begin{equation*}
    \diff \mu_t = \ad^*_{w_t^0}\mu_t \, \diff t + \sum_{i\geq 1}\ad^*_{w_t^i}\mu_t \circ \diff S_t^i 
    \,,\quad\hbox{where}\quad 
    \mu_t = \frac{\delta\ell}{\delta v_t} \,,\quad w^0_t = v_t \,,\quad\hbox{and}\quad w^i_t = \xi_i \,.
\end{equation*}
These are equivalent to the Euler-Poincar\'e equation \eqref{eqn:stochastic_EP}, and hence we have proven our claim.
\end{proof}

While we have only considered the case where the configuration space is given exactly by the symmetry group $Q = G$, we can also consider an extension of the theorem to the setting where the configuration space is given by a semidirect product $Q = G \ltimes V$, where $V$ is some vector space on which an action by $G$ is defined. This models systems with {\em broken symmetries}, as exemplified by the heavy top system, where the full $SO(3)$-symmetry of the rigid body is broken by gravity.
We discuss this further in Appendix \ref{app:semidirect}.

\section{Stochastic dissipative systems with symmetry}\label{sec:dissipation}

Hereafter, we restrict to the case $S_t^i = W_t^i$ for all $i \geq 1$, where $\{W_t^i\}_{i \geq 1}$ are i.i.d. Brownian motion. The goal of this section is to develop a principled framework for adding dissipation into the stochastic Hamiltonian systems, considered in the previous sections. Dissipation in physical systems typically occurs when it is coupled to an external environment (referred to as the {\em heat bath}), which extracts mechanical energy from the system and converts it to {\em heat} -- a measure of how much noise there is in the environment. 
We use this duality between noise and dissipation as our basis for introducing dissipation into our system.
To this end, assume that the temperature of the heat bath is fixed. Then a standard result in statistical mechanics states that the system is distributed according to the {\em Gibbs measure} at thermodynamic equilibrium \cite{souriau1970structure}, defined below.
\begin{definition}\label{def:symplectic-gibbs-measure}
Given a $2n$-dimensional symplectic manifold $(P, \omega)$ and a Hamiltonian $H_0 : P \rightarrow \mathbb R$, we define the {\em Gibbs measure} $\mathbb P_\infty$ on $P$ as a probability measure
\begin{align}\label{eq:gibbs-meas}
\mathbb P_\infty = Z^{-1} e^{-\beta H_0} |\,\omega^n|, \quad Z = \int_P e^{-\beta H_0} \mathrm{d} |\,\omega^n| \,,
\end{align}
where $\omega^n$ is the symplectic volume form, $|\,\omega^n|$ the corresponding measure (i.e., the Liouville measure) and $\beta > 0$ is the inverse temperature.
\end{definition}

This is characterised equivalently as the maximal entropy probability measure among all configurations with fixed average energy, in accordance with the second law of thermodynamics.

\begin{proposition}[Maximum entropy principle]\label{prop:max-entropy-principle}
    The Gibbs measure \eqref{eq:gibbs-meas} is a solution to the following constrained variational principle
    \begin{align}
        \mathbb{P}_\infty =
        \argmax_{\mu >\!\!> \lambda} \left\{-H(\mu | \lambda)\right\}, \, \text{ such that } 
        \, \int_P H_0 \,\diff \mu = c \, \text{ and }
        \, \int_P \diff \mu = 1 \,.
    \end{align}
    Here, $c < \infty$ is some real constant and $H(\mu | \lambda) := \int_P \frac{\diff \mu}{\diff \lambda} \log \frac{\diff \mu}{\diff \lambda} \diff \lambda$ is the relative entropy of the measure $\mu$ with respect to a reference measure $\lambda$. In particular, we take $\lambda = |\omega^n|$ as the reference measure on $P$. The first constraint fixes the average energy level of the system and the second constraint ensures that $\mu$ is a probability measure.
\end{proposition}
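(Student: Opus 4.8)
The plan is to establish this via the non-negativity of the relative entropy (Gibbs' inequality) rather than a formal Lagrange-multiplier calculation, as the former delivers both existence and uniqueness of the maximiser in one stroke, including the equality case. Throughout I regard $\beta$ as fixed and the constant $c$ as being the mean energy $\int_P H_0 \, \diff \mathbb{P}_\infty$ of the Gibbs measure; this compatibility is exactly what makes the candidate maximiser $\mathbb{P}_\infty$ feasible for the constrained problem, and it is the analytic content of the Lagrange multiplier $\beta$ that appears in the informal derivation.

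First I would fix an arbitrary probability measure $\mu$ that is absolutely continuous with respect to $\lambda = |\omega^n|$ and satisfies the two constraints, and write $\rho := \diff \mu / \diff \lambda$. Since $\mathbb{P}_\infty = Z^{-1} e^{-\beta H_0} \lambda$ is mutually absolutely continuous with $\lambda$, the measure $\mu$ is also absolutely continuous with respect to $\mathbb{P}_\infty$, with $\diff \mu / \diff \mathbb{P}_\infty = Z \rho \, e^{\beta H_0}$. Substituting this into the definition of relative entropy and using the energy constraint $\int_P H_0 \, \diff \mu = c$ yields the chain-rule identity
\begin{equation*}
    H(\mu \,|\, \mathbb{P}_\infty) = \int_P \left( \log \rho + \log Z + \beta H_0 \right) \diff \mu = H(\mu \,|\, \lambda) + \log Z + \beta c \,.
\end{equation*}
I would then invoke Gibbs' inequality $H(\mu \,|\, \mathbb{P}_\infty) \geq 0$, with equality if and only if $\mu = \mathbb{P}_\infty$, which itself follows from Jensen's inequality applied to the strictly convex map $x \mapsto x \log x$. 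Rearranging gives $H(\mu \,|\, \lambda) \geq -\log Z - \beta c$ for every feasible $\mu$, with equality precisely at $\mu = \mathbb{P}_\infty$. As the right-hand side is a constant independent of $\mu$ and is attained by the Gibbs measure (for which $H(\mathbb{P}_\infty \,|\, \mathbb{P}_\infty) = 0$), this shows that $\mathbb{P}_\infty$ is the unique minimiser of $H( \,\cdot\, | \lambda)$, equivalently the unique maximiser of $-H( \,\cdot\, | \lambda)$, among all measures satisfying the constraints.

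The main obstacle is not the inequality itself but the interplay between the prescribed inverse temperature $\beta$ and the constraint level $c$: the variational problem as posed fixes $c$ and determines $\beta$ only implicitly, so one must be careful to interpret $\beta$ as the multiplier dual to the energy constraint. The advantage of the Gibbs-inequality route is that it never requires solving this relation explicitly --- the computation uses only the constraint satisfied by $\mu$, so the dependence of $\beta$ on $c$ enters solely through the feasibility of $\mathbb{P}_\infty$. The remaining points are routine: $Z < \infty$ is already built into Definition \ref{def:symplectic-gibbs-measure}, and the argument degenerates gracefully when $H(\mu \,|\, \lambda) = +\infty$, in which case the desired bound holds trivially.
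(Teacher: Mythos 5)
Your proposal is correct, and its decisive step coincides with the decisive step of the paper's own proof: the paper also concludes by showing, for any feasible $\mu$, that $H(\mu\,|\,\lambda) - H(\mu_*\,|\,\lambda) = H(\mu\,|\,\mu_*) \geq 0$ where $\mu_* = \mathbb{P}_\infty$, using the energy constraint in exactly the way you do (your identity $H(\mu\,|\,\mathbb{P}_\infty) = H(\mu\,|\,\lambda) + \log Z + \beta c$ is the same computation, rearranged). The difference is structural: the paper first runs a formal Lagrange-multiplier calculation over densities $\phi = \diff\mu/\diff\lambda$ to \emph{derive} the exponential form of the maximiser, and only then verifies optimality by the entropy comparison; you skip the derivation and go straight to verification, which is legitimate since the proposition only asserts that the Gibbs measure \emph{is} a solution. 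Your version is also tidier on two points the paper glosses over: (i) the equality case of Gibbs' inequality, giving uniqueness of the maximiser (the paper asserts $H(\mu\,|\,\mu_*) > 0$ outright, which is only correct for $\mu \neq \mu_*$); and (ii) the compatibility condition that $c$ must equal the Gibbs mean energy $\int_P H_0 \,\diff\mathbb{P}_\infty$ for $\mathbb{P}_\infty$ to be feasible — in the paper this is hidden in the multiplier equation $\int_P H_0 \phi \,\diff\lambda = c$, which implicitly ties $\beta$ to $c$ without comment. What the paper's longer route buys is motivation (it shows \emph{why} the maximiser has Boltzmann form rather than pulling it from a hat); what your route buys is a complete, rigorous proof of the stated claim with uniqueness, in fewer steps.
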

\begin{proof}
    See Appendix \ref{app:max-entropy-principle}.
\end{proof}

In order to seek for an appropriate dissipative mechanism that is compatible with the noise introduced in the previous section, we wish to identify a dissipative vector field $X \in \mathfrak{X}(P)$ (i.e., those that satisfy $\mathcal{L}_X H_0 < 0$) such that when added to the symplectic diffusion \eqref{eq:general-symplectic-diffusion}, the resulting process preserves the measure \eqref{eq:gibbs-meas}.
Such a vector field can be identified by the following result.

\begin{theorem}[\cite{arnaudon2019irreversible}, Theorem 1] \label{thm:gibbs-measure-preservation}
A dissipative extension to the stochastic Hamiltonian system \eqref{eq:general-symplectic-diffusion} of the form
\begin{align} \label{eq:symplectic-langevin-eq}
\diff Z_t =  X_{H_0}(Z_t) \,\diff t \underbrace{-\frac{\beta}{2}\sum_{i \geq 1}\{H_0,H_i\}X_{H_i}(Z_t) \,\diff t}_{\mathrm{Dissipation}}  \underbrace{+ \sum_{i\geq 1} X_{H_i}(Z_t)\circ \diff W
_t^i}_{\mathrm{Noise}}
\end{align}
preserves the Gibbs measure \eqref{eq:gibbs-meas} on $(P, \omega)$.
\end{theorem}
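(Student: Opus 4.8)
The plan is to verify invariance in the weak (Fokker--Planck) sense: a probability measure $\mathbb{P}_\infty$ is invariant under \eqref{eq:symplectic-langevin-eq} precisely when $\int_P \mathcal{A}f \,\diff\mathbb{P}_\infty = 0$ for all test functions $f \in C_c^\infty(P)$, where $\mathcal{A}$ is the infinitesimal generator of the diffusion. Since the drivers are Brownian (so $\diff[W^i]_t = \diff t$), and using the paper's convention $X_H f = \{f,H\}$ together with the Stratonovich-to-It\^o correction already recorded in the It\^o form of \eqref{eq:poisson-diffusion}, the Stratonovich generator is
\[
    \mathcal{A}f = \{f, H_0\} - \frac{\beta}{2}\sum_{i \geq 1}\{H_0,H_i\}\{f,H_i\} + \frac{1}{2}\sum_{i \geq 1}\{\{f,H_i\},H_i\} \,.
\]
Writing $\rho_\infty = Z^{-1}e^{-\beta H_0}$ for the density of $\mathbb{P}_\infty$ with respect to the Liouville measure $\lambda = |\,\omega^n|$, the task reduces to showing $\int_P (\mathcal{A}f)\,\rho_\infty\,\diff\lambda = 0$, i.e.\ $\mathcal{A}^*\rho_\infty = 0$.

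The workhorse is Liouville's theorem (the deterministic counterpart of the corollary following \eqref{eq:omega-invariance}): every Hamiltonian vector field is divergence-free with respect to $\lambda$, so that $\int_P (X_H g)\,h\,\diff\lambda = -\int_P g\,(X_H h)\,\diff\lambda$ with no boundary contribution. First I would dispatch the conservative term: integrating $\int_P \{f,H_0\}\rho_\infty\,\diff\lambda$ by parts and using $\{\rho_\infty, H_0\} = \rho_\infty'(H_0)\{H_0,H_0\} = 0$, which holds because $\rho_\infty$ depends on the state only through $H_0$, shows this term vanishes. This reflects the fact that the Hamiltonian flow alone already preserves the Gibbs measure.

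The crux is the exact cancellation between the dissipative drift and the second-order noise term. Integrating the double-bracket contribution $\frac{1}{2}\int_P \{\{f,H_i\},H_i\}\,\rho_\infty\,\diff\lambda$ by parts once moves one copy of $X_{H_i}$ onto the density, giving $-\frac{1}{2}\int_P \{f,H_i\}\,\{\rho_\infty,H_i\}\,\diff\lambda$; the key identity $\{\rho_\infty,H_i\} = -\beta\,\rho_\infty\,\{H_0,H_i\}$ then converts this into $+\frac{\beta}{2}\int_P \{f,H_i\}\{H_0,H_i\}\rho_\infty\,\diff\lambda$, which is exactly the negative of the dissipative term in $\mathcal{A}f$. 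Summing over $i\geq 1$ annihilates the two contributions, so $\int_P (\mathcal{A}f)\rho_\infty\,\diff\lambda = 0$ as required. This is precisely the fluctuation--dissipation balance that fixes the coefficient $-\tfrac{\beta}{2}\{H_0,H_i\}$.

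I expect the only genuine difficulty to be analytic rather than algebraic. The integration by parts demands that the boundary fluxes vanish, which is immediate when $P$ is compact but otherwise requires integrability or decay hypotheses on $H_0$, the $H_i$, and $\rho_\infty$; and passing from this infinitesimal stationarity $\mathcal{A}^*\rho_\infty = 0$ to genuine invariance of the law requires well-posedness of the martingale problem for \eqref{eq:symplectic-langevin-eq}. I would record these as standing regularity assumptions and treat the identity $\mathcal{A}^*\rho_\infty = 0$ as the substantive content of the theorem.
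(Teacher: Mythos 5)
Your proposal is correct, but note that the paper itself offers no proof to compare against: Theorem \ref{thm:gibbs-measure-preservation} is imported verbatim from \cite{arnaudon2019irreversible} (their Theorem 1), and the text moves straight on to naming \eqref{eq:symplectic-langevin-eq} the symplectic Langevin equation. Judged on its own merits, your argument is sound and is essentially the standard Fokker--Planck argument used in the cited source. The generator you write down is consistent with the paper's conventions ($X_H f = \{f,H\}$, and the Stratonovich-to-It\^o correction $\tfrac12\sum_i\{\{f,H_i\},H_i\}$ recorded in the It\^o-form proposition for \eqref{eq:poisson-diffusion}); the integration by parts is licensed by Liouville's theorem, $\mathcal{L}_{X_{H_i}}\omega^n = 0$; the identity $\{\rho_\infty,H_i\} = -\beta\rho_\infty\{H_0,H_i\}$ produces the exact term-by-term cancellation against the dissipative drift; and the conservative term dies because $\rho_\infty$ is a function of $H_0$ alone. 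One small simplification you could make: since your test functions $f$ are compactly supported, every integrand in $\int_P (\mathcal{A}f)\rho_\infty\,\diff\lambda$ is supported in $\operatorname{supp}(f)$, so the boundary fluxes vanish automatically and no decay hypotheses on $H_0$, $H_i$, or $\rho_\infty$ are needed at that stage (beyond $Z<\infty$ so that $\mathbb{P}_\infty$ is a probability measure, and summability when the collection $\{H_i\}_{i\geq 1}$ is infinite). The genuinely substantive caveat is the one you flag last: upgrading infinitesimal stationarity $\mathcal{A}^*\rho_\infty = 0$ to invariance of the law under the flow requires well-posedness (non-explosion) of \eqref{eq:symplectic-langevin-eq}, which the paper, following \cite{arnaudon2019irreversible}, implicitly assumes.
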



We call \eqref{eq:symplectic-langevin-eq} the {\em symplectic Langevin equation} as it can be viewed as a generalisation of the underdamped Langevin equation on Euclidean space to symplectic manifolds (to see this, on $P = T^*\mathbb{R}^n$, take $H_i(q, p) = q^i$ for $i = 1, \ldots, n$ to obtain the underdamped Langevin equation on $\mathbb{R}^n$).
One further hopes to prove that such a process is {\em ergodic}, that is, the invariant measure is unique and its statistics coincide with the statistics of the system in the infinite time limit $T \rightarrow \infty$. 
In general, proving the ergodicity of a stochastic process is challenging and case-dependent. However, there are known sufficient conditions on the drift and diffusion vector fields, such as the H\"ormander condition, that one can use to demonstrate ergodicity of the process (see \cite[Section 10]{barp2021unifying} for more details).

\begin{remark}
While it appears to be important, the fact that the noise term in \eqref{eq:symplectic-langevin-eq} is Hamiltonian is in fact not necessary for the process to admit an invariant measure of the form \eqref{eq:gibbs-meas}. However, the dissipative term that is added to preserve the Gibbs measure becomes more complicated in the general case \cite{barp2021unifying}. We also note that when the noise terms are chosen to be Hamiltonian, then the dissipative term in \eqref{eq:symplectic-langevin-eq} is unique up to a topological term related to the $(n-1)$-th de Rham cohomology group of $P$ \cite{barp2021unifying}. This is trivial if $P$ is simply-connected.
\end{remark}

\begin{remark}
A similar result to Theorem \ref{thm:gibbs-measure-preservation} in the case of more general driving semimartingales $\{S_t^i\}_{i \geq 1}$ is considered in \cite{diamantakis2023variational} when $P=T^*G$ for some Lie group $G$. In this case, the invariant measure has an extra correction term arising from the non-trivial L\'evy area of the processes $\{S_t^i\}_{i \geq 1}$. We defer the investigation of the case of general driving semimartingales to future work.
\end{remark}

\subsection{Symmetry reduction}
We now consider stochastic dissipative systems with symmetries and derive the corresponding symmetry reduced equations alongside their invariant measures. This yields systems whose energy is exchanged with the surrounding heat bath, but the other conserved quantities are fixed.
Following the previous section, we consider symplectic and Poisson reduction theory for the symplectic Langevin equation \eqref{eq:symplectic-langevin-eq}. In the special case $P=T^*G$, we will further show that the corresponding symmetry reduced system yields a dissipative term that is identical to the double-bracket dissipation \cite{bloch1994dissipation, bloch1996euler}. To the best of our knowledge, this provides the first such derivation of the double-bracket dissipation from purely mechanistic principles.

For convenience, let us re-express the symplectic Langevin equation \eqref{eq:symplectic-langevin-eq} in Poisson form, i.e.,
\begin{align} \label{eq:poisson-langevin-eq}
    \diff F(Z_t) = \{F, H_0\}(Z_t) \,\diff t -\frac{\beta}{2}\sum_{i \geq 1} \{H_0,H_i\}\{F, H_i\}(Z_t) \,\diff t + \sum_{i \geq 1} \{F, H_i\}(Z_t)\circ \diff W_t^i,
\end{align}
for any $F \in \Omega^0(P)$, where $\{\cdot, \cdot\}$ is the Poisson bracket induced by the symplectic form $\omega$.
In the presence of symmetries, the reduction theory for the Langevin system \eqref{eq:poisson-langevin-eq} follows in a similar manner to the purely stochastic case studied in the previous section.

\subsubsection{Symplectic reduction}
We first consider a symplectic reduction theory for stochastic-dissipative systems, which states that in the presence of symmetries, the symplectic Langevin diffusion \eqref{eq:poisson-langevin-eq} drops down to a Langevin system on the reduced symplectic manifold $P_\mu$.
To this end, we provide the following extension of Noether's theorem to the stochastic-dissipative case, which is essential to symplectic reduction.

\begin{lemma}[Noether's theorem in the stochastic-dissipative case] \label{lemma:noether-stoch-diss}
    Under the same setting as Proposition \ref{prop:stoch-noether}, the stochastic-dissipative system \eqref{eq:poisson-langevin-eq} preserves the momentum map $J : P \rightarrow \mathfrak g^*$.
\end{lemma}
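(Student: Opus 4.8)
The plan is to mirror the proof of the purely stochastic Noether's theorem (Theorem \ref{prop:stoch-noether}), but now accounting for the extra drift term coming from the dissipation. Fix $\xi \in \mathfrak{g}$ and set $J^\xi(Z_t) := \left<\xi, J(Z_t)\right>$, so that by the definition of the momentum map we have $X_{J^\xi} = \xi_P$. The strategy is to compute $\diff J^\xi(Z_t)$ directly from the Poisson-Langevin system \eqref{eq:poisson-langevin-eq}, replacing the arbitrary function $F$ with $J^\xi$, and show that every term vanishes by $G$-invariance of the Hamiltonians.

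First I would apply \eqref{eq:poisson-langevin-eq} with $F = J^\xi$ to obtain three groups of terms: the Hamiltonian drift $\{J^\xi, H_0\}(Z_t)\,\diff t$, the dissipative drift $-\frac{\beta}{2}\sum_{i \geq 1}\{H_0, H_i\}\{J^\xi, H_i\}(Z_t)\,\diff t$, and the noise terms $\sum_{i \geq 1}\{J^\xi, H_i\}(Z_t)\circ \diff W_t^i$. For each Poisson bracket of the form $\{J^\xi, H_i\}$ (including $i = 0$), I would use antisymmetry together with $X_{J^\xi} = \xi_P$ to rewrite it as $\{J^\xi, H_i\} = -\{H_i, J^\xi\} = -X_{J^\xi}H_i = -\xi_P H_i$, exactly as in the proof of Theorem \ref{prop:stoch-noether}. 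By the $G$-invariance assumption on the Hamiltonians, $\xi_P H_i = 0$ for all $i \geq 0$, so every factor $\{J^\xi, H_i\}$ vanishes identically.

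The key observation is that the dissipative drift term is \emph{quadratic} in the brackets $\{\cdot, H_i\}$: it contains the factor $\{J^\xi, H_i\}$, which already vanishes by $G$-invariance. Thus the dissipation contributes nothing to $\diff J^\xi$, the noise terms vanish for the same reason, and the Hamiltonian drift vanishes as in the non-dissipative case. Hence $\diff J^\xi(Z_t) = 0$, and since $\xi \in \mathfrak{g}$ was arbitrary, the full momentum map $J$ is preserved. Concretely, the computation reads
\begin{align*}
    \diff J^\xi(Z_t) &= \{J^\xi, H_0\}(Z_t)\,\diff t -\frac{\beta}{2}\sum_{i \geq 1}\{H_0, H_i\}\{J^\xi, H_i\}(Z_t)\,\diff t + \sum_{i \geq 1}\{J^\xi, H_i\}(Z_t)\circ \diff W_t^i \\
    &= -\xi_P H_0(Z_t)\,\diff t + \frac{\beta}{2}\sum_{i \geq 1}\{H_0, H_i\}\,\xi_P H_i(Z_t)\,\diff t - \sum_{i \geq 1}\xi_P H_i(Z_t)\circ \diff W_t^i = 0 \,,
\end{align*}
using $\xi_P H_i = 0$ for all $i \geq 0$.

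I do not anticipate a genuine obstacle here, since the structure of the argument is identical to Theorem \ref{prop:stoch-noether}; the only new feature is the dissipative drift, and the main point is simply to notice that it too is built from the brackets $\{J^\xi, H_i\}$ that vanish under $G$-invariance. The only minor care needed is to confirm that $\{H_0, H_i\}$ appearing as a coefficient in the dissipation is itself a well-defined (finite) process multiplying a vanishing factor, which is immediate once $\{J^\xi, H_i\} \equiv 0$.
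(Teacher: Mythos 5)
Your proposal is correct and follows essentially the same argument as the paper: compute $\diff J^\xi(Z_t)$ from \eqref{eq:poisson-langevin-eq} with $F = J^\xi$, observe that every term (including the dissipative drift, which contains the factor $\{J^\xi, H_i\}$) vanishes because $\{J^\xi, H_i\} = -\xi_P H_i = 0$ by $G$-invariance, and conclude since $\xi$ was arbitrary. The paper's proof is identical, merely stating the vanishing of the brackets with reference back to Theorem \ref{prop:stoch-noether} rather than spelling it out.
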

\begin{proof}
    Echoing the proof of Theorem \ref{prop:stoch-noether}, we define the quantity $J^\xi(Z_t) := \left<\xi,J(Z_t)\right>$ for a fixed $\xi \in \mathfrak g$. Under the dynamics \eqref{eq:poisson-langevin-eq}, this quantity evolves as
    \begin{align}
        \diff J^\xi(Z_t) = \{J^\xi, H_0\}(Z_t) \,\diff t -\frac{\beta}{2}\sum_{i \geq 1} \{H_0,H_i\}\{J^\xi, H_i\}(Z_t) \,\diff t + \sum_{i \geq 1} \{J^\xi, H_i\}(Z_t)\circ \diff W_t^i = 0 \,,
    \end{align}
    which follows from the fact that $\{J^\xi, H_i\} = 0$ for all $i \geq 0$. Since $\xi \in \mathfrak g$ was chosen arbitrary, the momentum map $J$ is preserved.
\end{proof}

Following a similar argument to that presented in Section \ref{sec:symplectic-reduction}, we obtain the following result.

\begin{proposition} \label{prop:symplectic-reduction-stoch-dissipative}
Consider the symplectic Langevin diffusion \eqref{eq:symplectic-langevin-eq} on the symplectic manifold $(P, \omega)$ such that all $\{H_i\}_{i \geq 0}$ are $G$-invariant. Then for $Z_0 \in J^{-1}(\mu) \subseteq P$, if $Z_t$ solves the symplectic Langevin system \eqref{eq:symplectic-langevin-eq},  then $z_t^\mu := \pi_\mu(Z_t) \in P_\mu$ solves
\begin{align} \label{eq:reduced-symplectic-langevin}
\diff z_t^\mu = X_{h_0}^\mu(z_t^\mu) \,\diff t -\frac{\beta}{2}\sum_{i \geq 1} \{h_0, h_i\}_{\mu}(z_t^\mu) \, X_{h_i}^\mu(z_t^\mu) \,\diff t  + \sum_{i \geq 1} X_{h_i}^\mu(z_t^\mu)\circ \diff W_t^i \,.
\end{align}
\end{proposition}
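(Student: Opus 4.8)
The plan is to mirror the reduction argument for the purely stochastic case (Theorem \ref{prop:stoch-symplectic-reduction}), treating the extra dissipative drift as an additional $\diff t$-term that descends to the quotient. First I would use Lemma \ref{lemma:noether-stoch-diss} to guarantee that the momentum map is conserved along the flow, so that $Z_t \in J^{-1}(\mu)$ for all $t$ whenever $Z_0 \in J^{-1}(\mu)$. This ensures that $z_t^\mu := \pi_\mu(Z_t)$ is well-defined for all $t$, and that each Hamiltonian vector field $X_{H_i}$, being tangent to the level sets of $J$ by $G$-invariance (since $\{J^\xi, H_i\}=0$), restricts to a genuine vector field on $J^{-1}(\mu)$.

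Next, I would recall from deterministic symplectic reduction (as already used in Theorem \ref{prop:stoch-symplectic-reduction}) two facts. First, each $X_{H_i}$ restricted to $J^{-1}(\mu)$ is $\pi_\mu$-related to the reduced Hamiltonian vector field $X_{h_i}^\mu$, i.e. $T\pi_\mu \circ X_{H_i}|_{J^{-1}(\mu)} = X_{h_i}^\mu \circ \pi_\mu$, which follows from $\iota_\mu^*\omega = \pi_\mu^*\omega_\mu$ together with $\pi_\mu^* h_i = \iota_\mu^* H_i$. Second, the Poisson bracket of two $G$-invariant functions is again $G$-invariant and descends under reduction, so that $\iota_\mu^*\{H_0,H_i\} = \{h_0,h_i\}_\mu \circ \pi_\mu$. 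This second fact is the key input, letting the scalar coefficient in the dissipative drift be re-expressed along the process as $\{H_0,H_i\}(Z_t) = \{h_0,h_i\}_\mu(z_t^\mu)$.

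With these in hand, I would write \eqref{eq:symplectic-langevin-eq} in the compact form $\diff Z_t = b_0(Z_t)\,\diff t + \sum_{i\geq 1} X_{H_i}(Z_t)\circ \diff W_t^i$ with drift $b_0 = X_{H_0} - \tfrac{\beta}{2}\sum_{i\geq1}\{H_0,H_i\}X_{H_i}$, and apply the Stratonovich chain rule \eqref{eq:strat-chain-rule} to the map $\pi_\mu$. Since the Stratonovich calculus obeys the ordinary chain rule, pushing forward term by term gives $\diff z_t^\mu = (T\pi_\mu\, b_0)(z_t^\mu)\,\diff t + \sum_{i\geq1}(T\pi_\mu\, X_{H_i})(z_t^\mu)\circ \diff W_t^i$. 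Substituting the first fact into the noise terms and both facts into the drift yields $T\pi_\mu\, b_0 = \big(X_{h_0}^\mu - \tfrac{\beta}{2}\sum_{i\geq1}\{h_0,h_i\}_\mu X_{h_i}^\mu\big)\circ\pi_\mu$, which is precisely the right-hand side of \eqref{eq:reduced-symplectic-langevin}.

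The main obstacle is not the stochastic analysis — the Stratonovich chain rule renders the pushforward purely formal once the coefficients are identified — but rather the two deterministic reduction facts, especially the compatibility of the Poisson bracket of invariants with the reduced bracket $\{\cdot,\cdot\}_\mu$ on $P_\mu$. I would either cite the standard Marsden–Weinstein result or verify it directly from $\iota_\mu^*\omega = \pi_\mu^*\omega_\mu$ and the $\pi_\mu$-relatedness of the Hamiltonian vector fields of invariant functions. One should also confirm that $b_0$ is itself tangent to $J^{-1}(\mu)$, so that its pushforward under $T\pi_\mu$ is well-defined; this again follows from Noether's theorem applied to each constituent term.
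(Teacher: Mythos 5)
Your proposal is correct and follows essentially the same route as the paper's proof: Noether's theorem (Lemma \ref{lemma:noether-stoch-diss}) to keep $Z_t$ in $J^{-1}(\mu)$, the $\pi_\mu$-relatedness $(\pi_\mu)_* X_{H_i}|_{J^{-1}(\mu)} = X_{h_i}^\mu$ (Lemma \ref{lemma:reduced-vector-field}), the identity $\iota_\mu^*\{H_0,H_i\} = \{h_0,h_i\}_\mu \circ \pi_\mu$ verified via $\iota_\mu^*\omega = \pi_\mu^*\omega_\mu$, and the Stratonovich chain rule to push the dynamics down the quotient. Your additional remark about checking tangency of the dissipative drift to $J^{-1}(\mu)$ is handled implicitly in the paper by the restriction notation, so there is no substantive difference.
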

\begin{proof}
    See Appendix \ref{app:proof-symplectic-reduction-stoch-dissipative}.
\end{proof}

We note that since the form of the reduced equation \eqref{eq:reduced-symplectic-langevin} is essentially the same as its unreduced counterpart \eqref{eq:symplectic-langevin-eq}, Theorem \ref{thm:gibbs-measure-preservation} implies that a Gibbs measure in the reduced space is preserved under the reduced dynamics, which we state more precisely in the following.

\begin{corollary}\label{cor:reduced-gibbs-preservation}
    The Gibbs measure 
    \begin{align}\label{eq:gibbs-meas-reduced}
    \mathbb P_\infty^\mu = Z_\mu^{-1} e^{-\beta h_0} |\,\omega^{n'}_\mu|, \quad Z_\mu = \int_{P_\mu} e^{-\beta h_0} \mathrm{d} |\,\omega^{n'}_\mu|,
    \end{align}
    defined on the $2n'$-dimensional space $P_\mu$ is preserved under the reduced Langevin system \eqref{eq:reduced-symplectic-langevin}.
\end{corollary}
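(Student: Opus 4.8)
The plan is to observe that the reduced Langevin system \eqref{eq:reduced-symplectic-langevin} is nothing other than an instance of the unreduced symplectic Langevin equation \eqref{eq:symplectic-langevin-eq}, now posed on the reduced symplectic manifold $(P_\mu, \omega_\mu)$, and then to invoke Theorem \ref{thm:gibbs-measure-preservation} directly on this space. Concretely, by the symplectic reduction construction of Section \ref{sec:symplectic-reduction}, the reduced space $P_\mu = J^{-1}(\mu)/G_\mu$ carries the symplectic form $\omega_\mu$ determined uniquely by $\iota_\mu^*\omega = \pi_\mu^*\omega_\mu$, so $(P_\mu, \omega_\mu)$ is itself a $2n'$-dimensional symplectic manifold. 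Since each $H_i$ is $G$-invariant, it descends to a smooth reduced Hamiltonian $h_i$ on $P_\mu$ via $\pi_\mu^* h_i = \iota_\mu^* H_i$, with Hamiltonian vector field $X_{h_i}^\mu$ taken with respect to $\omega_\mu$, and the bracket $\{\cdot,\cdot\}_\mu$ appearing in the dissipation term of \eqref{eq:reduced-symplectic-langevin} is precisely the Poisson bracket induced by $\omega_\mu$.

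With these identifications in hand, I would compare \eqref{eq:reduced-symplectic-langevin} term-by-term with \eqref{eq:symplectic-langevin-eq}: the drift $X_{h_0}^\mu$ plays the role of $X_{H_0}$; the dissipative term $-\tfrac{\beta}{2}\sum_{i\geq 1}\{h_0,h_i\}_\mu X_{h_i}^\mu$ matches $-\tfrac{\beta}{2}\sum_{i\geq 1}\{H_0,H_i\} X_{H_i}$; and the Stratonovich noise $\sum_{i\geq 1} X_{h_i}^\mu \circ \diff W_t^i$ matches $\sum_{i\geq 1} X_{H_i}\circ \diff W_t^i$. Thus \eqref{eq:reduced-symplectic-langevin} is exactly the symplectic Langevin equation for the data $(P_\mu, \omega_\mu, \{h_i\}_{i\geq 0})$. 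Applying Theorem \ref{thm:gibbs-measure-preservation} on $(P_\mu, \omega_\mu)$ with Hamiltonian $h_0$ and inverse temperature $\beta$ then yields preservation of the Gibbs measure $Z_\mu^{-1} e^{-\beta h_0}\,|\omega_\mu^{n'}|$, which is precisely \eqref{eq:gibbs-meas-reduced}.

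The only point requiring care — and the main (if modest) obstacle — is to verify that the reduced structures are mutually consistent, namely that the bracket $\{\cdot,\cdot\}_\mu$ entering the dissipation coefficient is genuinely the symplectic Poisson bracket of $\omega_\mu$ rather than some other reduced bracket (e.g.\ the Poisson-reduced bracket of Section \ref{sec:poisson-reduction}). Since Proposition \ref{prop:symplectic-reduction-stoch-dissipative} already produces \eqref{eq:reduced-symplectic-langevin} with $X_{h_i}^\mu$ defined through $\omega_\mu$, this consistency is built into the statement, and the hypotheses of Theorem \ref{thm:gibbs-measure-preservation} (a symplectic manifold together with Hamiltonians and their Hamiltonian vector fields) are met on the nose. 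The remainder is then immediate, and no new computation is needed beyond the structural matching.
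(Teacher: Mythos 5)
Your argument is exactly the paper's: the corollary is justified there by the observation, stated immediately before it, that the reduced system \eqref{eq:reduced-symplectic-langevin} has the same form as the unreduced symplectic Langevin equation \eqref{eq:symplectic-langevin-eq} on the symplectic manifold $(P_\mu, \omega_\mu)$ with Hamiltonians $\{h_i\}_{i\geq 0}$, so Theorem \ref{thm:gibbs-measure-preservation} applies directly. Your additional check that $\{\cdot,\cdot\}_\mu$ is the Poisson bracket of $\omega_\mu$ (guaranteed by Proposition \ref{prop:symplectic-reduction-stoch-dissipative}) is a correct and worthwhile point of care, but it does not change the route.
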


\subsubsection{Poisson reduction} \label{sec:poisson-rediction-dissipation}
We also present a Poisson reduction result for the symplectic Langevin system \eqref{eq:symplectic-langevin-eq} by projecting its dynamics down to the space $P/G$. Defining the projection map $\pi : P \rightarrow P/G$ and assuming again that $\{H_i\}_{i \geq 0}$ are $G$-invariant so that the reduced Hamiltonians $h_i \circ \pi = H_i$ are well-defined, we can verify that (using the same argument as in Section \ref{sec:poisson-reduction}) for $Z_t$ solving \eqref{eq:symplectic-langevin-eq}, the projected flow $[Z_t] := \pi(Z_t)$ solves the system
\begin{align} \label{eq:reduced-poisson-langevin}
\begin{split}
    \diff f([Z_t]) = \{f, h_0\}_{P/G}([Z_t]) \diff t - \frac{\beta}{2} &\sum_{i \geq 1} \{h_0, h_i\}_{P/G} \,\{f, h_i\}_{P/G}([Z_t]) \diff t \\
    + &\sum_{i \geq 1} \{f, h_i\}_{P/G}([Z_t]) \circ \diff W_t^i.
\end{split}
\end{align}
From Remark \ref{rmk:symplectic-leaf-preservation}, the symplectic leaves of $P/G$ (i.e., the reduced spaces $P_\mu$) are preserved under \eqref{eq:reduced-poisson-langevin} and moreover its restricted dynamics is given by \eqref{eq:reduced-symplectic-langevin}. Therefore from Corollary \ref{cor:reduced-gibbs-preservation}, the reduced system \eqref{eq:reduced-poisson-langevin} preserves the Gibbs measure on each symplectic leaf.

\subsection{Double-bracket dissipation} \label{sec:double-bracket-dissipation}
Now let $P = T^*G$, where $G$ is an $n$-dimensional Lie group and the Hamiltonians $H_i : T^*G \rightarrow \mathbb{R}$ for all $i \geq 0$ are left-invariant under the cotangent-lifted action of $G$. The reduced space $P/G$ can thus be identified with $\mathfrak{g}^*$, which carries the $(-)$ Lie-Poisson bracket \eqref{eq:lie-poisson}. 
As before, letting $J_R$ be the right momentum map, we define the reduced Hamiltonians $h_i$ by $h_i \circ J_R = H_i$ for all $i$. In particular, we choose the noise Hamiltonians $\{H_i\}_{i \geq 1}$ to be of the particular form $H_i(\alpha_g) = \sigma \left< J_R(\alpha_g), \xi_i\right>$ for $i=1, \ldots, n$ and $H_i(\alpha_g) \equiv 0$ for $i > n$. Therefore, we have $h_i(\mu) = \sigma \left<\mu, \xi_i\right>$, where $\{\xi_i\}_{i=1}^n$ is an orthonormal basis of $\mathfrak{g}$ with respect to an inner product $\gamma : \mathfrak{g} \times \mathfrak{g} \rightarrow \mathbb{R}$ and $\sigma > 0$ is a constant.

\begin{remark}
    We note that the particular noise Hamiltonians chosen above is a natural choice. To see this, recall that $J_R(\alpha_g) = T_e^*L_g \cdot \alpha_g$, which gives us
    \begin{equation*}
        H_i(\alpha_g) = \sigma \left< J_R(\alpha_g), \xi_i\right> = \sigma \left< T_e^*L_g \cdot \alpha_g, \xi_i\right> = \sigma \left<\alpha_g, T_eL_g \cdot \xi_i\right> \,.
    \end{equation*}
    From the relation \eqref{eq:legendre-transform-Gamma}, we see that this choice of noise Hamiltonian corresponds to the ``trivial" case $\Gamma_i(g) \equiv 0$ and $\Xi_i(g) = \sigma T_eL_g \cdot \xi_i$, where the latter is precisely the left-invariant vector fields on $G$, scaled by a constant $\sigma > 0$. The corresponding noise process $\diff g_t = \sum_{i=1}^n \Xi_i(g_t) \circ \diff W_t^i$ is equivalent to a right-invariant Brownian motion on $G$ \cite[Theorem 1]{ito1950brownian}.
\end{remark}

For any $f \in C^\infty(\mathfrak{g}^*; \mathbb{R})$, the system \eqref{eq:reduced-poisson-langevin} reads
\begin{align}\label{eq:stoch-diss-lie-poisson}
    \diff f(\mu_t) = \{f, h_0\}_{\mathfrak{g}^*_-}(\mu_t) \diff t - \frac{\beta}{2}\sum_{i=1}^n \{h_0, h_i\}_{\mathfrak{g}^*_-} \,\{f, h_i\}_{\mathfrak{g}^*_-}(\mu_t) \diff t + \sum_{i=1}^n \{f, h_i\}_{\mathfrak{g}^*_-}(\mu_t) \circ \diff W_t^i,
\end{align}
for $\mu_t \in \mathfrak{g}$, where we recall the definition of the ($-$) Lie-Poisson bracket $\{\cdot, \cdot\}_{\mathfrak{g}^*_-}$ in \eqref{eq:lie-poisson}. Using the identity
\begin{align}\label{eq:bracket-f-hi}
    \{f, h_i\}_{\mathfrak{g}^*_-}(\mu_t) = -\left<\mu, \left[\frac{\delta f}{\delta \mu}, \frac{\delta h_i}{\delta \mu}\right]\right> = \left<\ad_{\delta h_i / \delta \mu}^*\mu, \frac{\delta f}{\delta \mu}\right>,
\end{align}
the dissipative term in \eqref{eq:stoch-diss-lie-poisson} can be expressed as
\begin{align*}
    - \frac{\beta}{2}\sum_{i=1}^n \{h_0, h_i\}_{\mathfrak{g}^*_-} \,\{f, h_i\}_{\mathfrak{g}^*_-}(\mu_t) &\stackrel{\eqref{eq:bracket-f-hi}}{=} \frac{\beta}{2}\sum_{i=1}^n \left<\mu_t, \left[\frac{\delta h_0}{\delta \mu}, \frac{\delta h_i}{\delta \mu}\right]\right> \left<\ad_{\delta h_i / \delta \mu}^*\mu_t, \frac{\delta f}{\delta \mu}\right> \\
    &= \frac{\beta}{2}\sum_{i=1}^n \left<\ad_{\delta h_0 / \delta \mu}^*\mu_t, \frac{\delta h_i}{\delta \mu}\right> \left<\ad_{\delta h_i / \delta \mu}^*\mu_t, \frac{\delta f}{\delta \mu}\right> \\
    &= \frac{\beta \sigma^2}{2}\sum_{i=1}^n \left<\ad_{\delta h_0 / \delta \mu}^*\mu_t, \,\xi_i\right> \left<\ad_{\xi_i}^*\mu_t, \frac{\delta f}{\delta \mu}\right> \\
    &= \frac{\beta \sigma^2}{2}\sum_{i=1}^n  \gamma\left((\ad_{\delta h_0 / \delta \mu_t}^*\mu)^\sharp, \,\xi_i\right) \left<\ad_{\xi_i}^*\mu_t, \frac{\delta f}{\delta \mu}\right> \\
    &= \frac{\beta \sigma^2}{2}\left<\sum_{i=1}^n \gamma\left((\ad_{\delta h_0 / \delta \mu_t}^*\mu_t)^\sharp, \,\xi_i\right) \ad_{\xi_i}^*\mu_t, \, \frac{\delta f}{\delta \mu}\right> \\
    &= \frac{\beta \sigma^2}{2}\left<\ad^*_{(\ad^*_{\delta h_0/\delta \mu} \mu_t)^\sharp} \mu_t, \, \frac{\delta f}{\delta \mu}\right>,
\end{align*}
where we have denoted by $\sharp : \mathfrak{g}^* \rightarrow \mathfrak{g}$ the musical isomorphism associated to the inner product $\gamma$ (that is, $\left<\alpha, \xi\right> = \gamma(\alpha^\sharp, \xi)$ for all $\alpha \in \mathfrak{g}^*$ and $\xi \in \mathfrak{g}$), and in the last line we used the linearity of $\ad^*$ in the first argument, in addition to the identity
\begin{align*}
    \sum_{i=1}^n \gamma\left((\ad_{\delta h_0 / \delta \mu_t}^*\mu_t)^\sharp, \,\xi_i\right) \xi_i = (\ad^*_{\delta h_0/\delta \mu} \mu_t)^\sharp \,.
\end{align*}
The latter follows from the fact that $\{\xi_i\}_{i=1}^n$ is an orthonormal basis of $\mathfrak{g}$ with respect to $\gamma$.
We can then show that \eqref{eq:stoch-diss-lie-poisson} corresponds to the following equation for $\mu_t \in \mathfrak{g}^*$
\begin{align}\label{eq:lie-poisson-langevin}
    \diff \mu_t = \ad_{\delta h_0 / \delta \mu}^*\mu_t \,\diff t + \theta \ad^*_{(\ad^*_{\delta h_0/\delta \mu} \mu_t)^\sharp} \mu_t \, \diff t + \sigma \sum_{i=1}^n \ad^*_{\xi_i} \mu_t \circ \diff W_t^i \,,
\end{align}
where $\theta := \beta \sigma^2 / 2$, which we refer to as the {\em Lie-Poisson-Langevin system}.
Importantly, the dissipative term $\ad^*_{(\ad^*_{\delta h_0/\delta \mu} \mu_t)^\sharp} \mu_t$ in \eqref{eq:lie-poisson-langevin} is precisely the {\em double-bracket dissipation}, first introduced in \cite{bloch1994dissipation, bloch1996euler} to model momentum-preserving dissipative systems, also closely related to the well-known Brockett isospectral flow \cite{brockett1991dynamical} used in linear programming.
Associated to this term is a symmetric {\em double-bracket}
\begin{align} \label{eq:double-symmetric-bracket}
    \{\{f, g\}\}_\gamma := \gamma^{-1}\left(\ad^*_{\delta f/\delta \mu} \mu_t, \,\ad^*_{\delta g/\delta \mu} \mu_t\right) \,,
\end{align}
where $\gamma^{-1}: \mathfrak{g}^* \times \mathfrak{g}^* \rightarrow \mathbb{R}$ is the co-metric associated to $\gamma$. One can then re-express the system \eqref{eq:stoch-diss-lie-poisson} in bracket form as
\begin{align} \label{eq:lie-poisson-langevin-bracket-form}
    \diff f(\mu_t) = \{f, h_0\}_{\mathfrak{g}^*_-}(\mu_t)\,\diff t - \theta \{\{f, h_0\}\}_\gamma (\mu_t)\,\diff t + \sigma \sum_{i=1}^n \{f, \left<\xi_i, \cdot \right>\}_{\mathfrak{g}^*_-}(\mu_t) \circ \diff W_t^i \,,
\end{align}
for any $f \in C^\infty(\mathfrak{g}^*; \mathbb{R})$. We note that in general, there is no canonical choice for the metric $\gamma$ and the basis $\{\xi_i\}_{i=1}^n$ -- these should be viewed as modelling choices. However, on compact semi-simple Lie algebras, the (minus) Killing form $(\xi, \eta) \mapsto -\mathrm{tr}\left(\ad_\xi \circ \ad_\eta\right)$ is non-degenerate and positive definite, making it a suitable candidate for the inner product. We also consider a semidirect product extension of the system \eqref{eq:lie-poisson-langevin} in Appendix \ref{app:semidirect-noise-dissipation}, which follows similarly, albeit with a different Poisson bracket.


Recall that the symplectic leaves of $\mathfrak{g}^*$ are characterised by the connected components $\mathcal{O}_{\mu}'$ of the coadjoint orbits $\mathcal{O}_{\mu} := \{\Ad^*_{g}\mu : g \in G\} \subset \mathfrak{g}^*$ (see \cite[Section 14.3]{marsden2013introduction}), which are equipped with the Kirillov-Kostant-Souriau (KKS) symplectic form $\omega_{\mathrm{KKS}}(\ad^*_\xi \mu, \,\ad^*_\eta \mu) = -\left<\mu, [\xi, \eta]\right>$ for all $\mu \in \mathcal{O}_{\mu}'$ and $\xi, \eta \in \mathfrak{g}$. The Lie-Poisson-Langevin system \eqref{eq:lie-poisson-langevin} thus preserves the coadjoint orbits, and moreover its dynamics restricted to the coadjoint orbits can be described by the Langevin system \eqref{eq:reduced-symplectic-langevin} on the reduced symplectic manifold $(\mathcal{O}_{\mu}', \omega_{\mathrm{KKS}})$. Thus, we obtain the following result, which is a slight generalisation of the result found in \cite[Theorem 3.3]{arnaudon2018noise}.

\begin{corollary} \label{cor:lie-poisson-gibbs-measure}
    For any $\mu \in \mathfrak{g}^*$, the Lie-Poisson-Langevin system \eqref{eq:lie-poisson-langevin} preserves the Gibbs measure
    \begin{align}\label{eq:lie-poisson-gibbs-measure}
    \mathbb P_\infty^\mu = Z_\mu^{-1} e^{-\beta h_0} |\omega^{n'}_{\mathrm{KKS}}|, \quad Z_\mu = \int_{\mathcal{O}_\mu'} e^{-\beta h_0} \mathrm{d}|\omega^{n'}_{\mathrm{KKS}}| \,,
    \end{align}
    defined over a connected component of the coadjoint orbit $\mathcal{O}_\mu' \subseteq \mathcal{O}_\mu$ with dimension $2n'$.
\end{corollary}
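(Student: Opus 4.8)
The plan is to obtain this as a direct consequence of the Poisson reduction machinery already developed in Section \ref{sec:poisson-rediction-dissipation}, combined with the Gibbs-measure preservation on reduced symplectic manifolds from Corollary \ref{cor:reduced-gibbs-preservation}. The key observation is that the Lie-Poisson-Langevin system \eqref{eq:lie-poisson-langevin}, written equivalently in bracket form as \eqref{eq:stoch-diss-lie-poisson}, is \emph{precisely} the Poisson-reduced Langevin system \eqref{eq:reduced-poisson-langevin} specialised to $P = T^*G$, where $P/G \cong \mathfrak{g}^*$ carries the $(-)$ Lie-Poisson bracket $\{\cdot,\cdot\}_{\mathfrak{g}^*_-}$. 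Thus the abstract reduction result applies verbatim, and the task reduces to identifying the relevant geometric objects on the coadjoint orbits and confirming that the hypotheses of the earlier results are met.

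First I would record that trajectories remain on a single coadjoint orbit almost surely. By Remark \ref{rmk:symplectic-leaf-preservation}, the symplectic leaves of a Poisson manifold are preserved by any Hamiltonian flow, and all the vector fields appearing in \eqref{eq:stoch-diss-lie-poisson} are tangent to the leaves: the Hamiltonian parts $X_{h_0} = \{\cdot, h_0\}_{\mathfrak{g}^*_-}$ and $X_{h_i} = \{\cdot, h_i\}_{\mathfrak{g}^*_-}$ span the tangent spaces of the leaves by Remark \ref{rmk:symplectic-leaf-preservation}, while the dissipative drift $-\tfrac{\beta}{2}\sum_i \{h_0,h_i\}_{\mathfrak{g}^*_-}\,X_{h_i}$ is a function-weighted combination of the $X_{h_i}$ and is therefore tangent as well. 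Since the symplectic leaves of $\mathfrak{g}^*_-$ are exactly the connected components $\mathcal{O}_\mu'$ of the coadjoint orbits, a path started at $\mu_0$ stays on $\mathcal{O}_{\mu_0}'$. Second, I would invoke the discussion following Remark \ref{rmk:symplectic-leaf-preservation} in Section \ref{sec:poisson-rediction-dissipation}, which asserts that the restricted dynamics on each leaf coincides with the reduced symplectic Langevin equation \eqref{eq:reduced-symplectic-langevin}, now living on $(\mathcal{O}_\mu', \omega_{\mathrm{KKS}})$ with $\omega_{\mathrm{KKS}}$ playing the role of the reduced symplectic form $\omega_\mu$. Finally, applying Corollary \ref{cor:reduced-gibbs-preservation} (itself a consequence of Theorem \ref{thm:gibbs-measure-preservation}) to this restricted system yields preservation of the Gibbs measure $Z_\mu^{-1} e^{-\beta h_0}|\omega_{\mathrm{KKS}}^{n'}|$, which is exactly the measure \eqref{eq:lie-poisson-gibbs-measure}.

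The only genuine obstacle is the precise identification of the reduced symplectic data on the coadjoint orbit, and I would verify three classical Lie-Poisson facts. (i) The reduced symplectic form induced by $\{\cdot,\cdot\}_{\mathfrak{g}^*_-}$ on $\mathcal{O}_\mu'$ is the KKS form $\omega_{\mathrm{KKS}}(\ad^*_\xi\mu, \ad^*_\eta\mu) = -\langle\mu,[\xi,\eta]\rangle$, which is standard (see \cite[\S14.3]{marsden2013introduction}). (ii) The Liouville measure used as the reference measure $\lambda$ in Proposition \ref{prop:max-entropy-principle} and Corollary \ref{cor:reduced-gibbs-preservation} is precisely $|\omega_{\mathrm{KKS}}^{n'}|$ on the $2n'$-dimensional orbit. (iii) The restriction of the Lie-Poisson bracket to functions on the leaf agrees with the intrinsic symplectic Poisson bracket of $(\mathcal{O}_\mu', \omega_{\mathrm{KKS}})$, so that both the Hamiltonian drift $\{f,h_0\}$ and the dissipative coefficient $\{h_0,h_i\}$ transport correctly under restriction. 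These require only confirming that the noise Hamiltonians $h_i(\mu)=\sigma\langle\mu,\xi_i\rangle$ and the drift Hamiltonian $h_0$ restrict to smooth functions on $\mathcal{O}_\mu'$; when $\mathfrak{g}$ is compact the orbit $\mathcal{O}_\mu'$ is compact, guaranteeing that $Z_\mu < \infty$ so that \eqref{eq:lie-poisson-gibbs-measure} is a bona fide probability measure. Crucially, because the argument routes entirely through the abstract symplectic reduction of \eqref{eq:symplectic-langevin-eq}, it makes no use of compactness or semi-simplicity of $\mathfrak{g}$ in deriving the invariance itself, which is what yields the promised generalisation of \cite[Theorem 3.3]{arnaudon2018noise}.
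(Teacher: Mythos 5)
Your proposal is correct and takes essentially the same route as the paper: the paper derives the corollary by identifying the symplectic leaves of $\mathfrak{g}^*_-$ with the connected components $\mathcal{O}_\mu'$ of the coadjoint orbits equipped with the KKS form, noting (via the Poisson reduction discussion of Section \ref{sec:poisson-rediction-dissipation} and Remark \ref{rmk:symplectic-leaf-preservation}) that the dynamics of \eqref{eq:lie-poisson-langevin} restricted to each leaf is exactly the reduced symplectic Langevin system \eqref{eq:reduced-symplectic-langevin} on $(\mathcal{O}_\mu', \omega_{\mathrm{KKS}})$, and then invoking Corollary \ref{cor:reduced-gibbs-preservation}. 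Your supplementary verifications (tangency of the dissipative drift to the leaves, identification of the Liouville measure with $|\omega_{\mathrm{KKS}}^{n'}|$, and finiteness of $Z_\mu$ in the compact case) are consistent refinements of steps the paper leaves implicit.
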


\begin{remark}
    In \cite{arnaudon2018noise}, the authors use the {\em selective-decay bracket} instead of the double-bracket to add dissipation to the stochastic Lie-Poisson system and yet recover the same result as Corollary \ref{cor:lie-poisson-gibbs-measure} concerning the preservation of the Gibbs measure. Selective-decay brackets are used to dissipate energy on a level set of a selected Casimir function, while double brackets preserve the symplectic leaves (and therefore {\em all} the Casimirs). Importantly, they {\em do not} necessarily coincide for arbitrary $\mathfrak{g}^*$. However, the fact that we obtain the same results as \cite{arnaudon2018noise} is purely coincidental, since in the special case of compact semi-simple groups, their choice of the Casimir $C(\mu) = \kappa(\mu, \mu)$, where $\kappa$ is the Killing form, leads to a selective-decay bracket that is equivalent to the double-bracket. We strongly argue for the use of double-bracket instead of the selective-decay bracket, as our result holds beyond the compact semi-simple case and moreover can be derived more naturally via symmetry reduction.
\end{remark}

Finally, we can also verify the ergodicity of the process \eqref{eq:lie-poisson-langevin} when restricted to the coadjoint orbits, therefore ensuring that the invariant measure \eqref{eq:lie-poisson-gibbs-measure} is unique and coincides with the measure of the long-time dynamics. We state this precisely as follows.

\begin{proposition}
    The measure $\mathbb{P}_\infty$ given by \eqref{eq:lie-poisson-gibbs-measure} is a unique invariant measure of the process $\{\mu_t\}_{t\in [0, T]}$ solving \eqref{eq:lie-poisson-langevin}, when restricted to a connected component of the coadjoint orbit $\mathcal{O}' \subseteq \mathcal{O}$. Moreover, let $\lambda := |\omega^{n'}_{\mathrm{KKS}}|$ be the Liouville measure on $\mathcal{O}'$, $\mathbb{P}_t(x, A) := \mathbb{P}(\mu_t \in A)$ the transition probability measure of the process $\{\mu_t\}_{t\in [0, T]}$ with $\mu_0 = x$, and $p_\infty$ the density of $\mathbb{P}_\infty$ with respect to $\lambda$. Then $\mathbb{P}_t(x, \cdot)$ admits a density $p_t(x, \cdot)$ with respect to $\lambda$, satisfying the following ergodic property
    \begin{align}\label{eq:ergodic-property-lie-poisson}
        \lim_{t \rightarrow \infty} \int_{\mathcal{O}'} |p_t(x, y) - p_\infty(y)| \,\mathrm{d}\lambda(y), \quad \forall x \in \mathcal{O}' \,,
    \end{align}
    i.e., we have $L^1_\lambda$ convergence $p_t(x, \cdot) \rightarrow p_\infty(\cdot)$ as $t \rightarrow \infty$ for any $x \in \mathcal{O}'$.
\end{proposition}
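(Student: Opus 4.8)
The plan is to use that, by Corollary~\ref{cor:lie-poisson-gibbs-measure}, $\mathbb{P}_\infty$ is \emph{already} an invariant measure for \eqref{eq:lie-poisson-langevin} on $\mathcal{O}'$, so the remaining task is to establish three standard ingredients: (i) the transition kernel has a smooth density with respect to $\lambda$ (the strong Feller property); (ii) the process is irreducible on $\mathcal{O}'$; and (iii) a recurrence/convergence statement. Combining (i) and (ii) yields uniqueness of the invariant measure through the Doob--Khasminskii theorem, and together with (iii) they deliver the claimed $L^1_\lambda$ convergence \eqref{eq:ergodic-property-lie-poisson}.

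First I would write \eqref{eq:lie-poisson-langevin} in the Stratonovich form $\diff \mu_t = Y_0(\mu_t)\,\diff t + \sigma\sum_{i=1}^n Y_i(\mu_t)\circ\diff W_t^i$, with noise vector fields $Y_i(\mu) := \ad^*_{\xi_i}\mu$ and drift $Y_0$ collecting the Hamiltonian and double-bracket terms. Since $Y_0$ and all $Y_i$ have the coadjoint form $\ad^*_{(\cdot)}\mu$, they are tangent to the coadjoint orbit, confirming that the dynamics remains on $\mathcal{O}'$; the generator there is $\mathcal{L} = Y_0 + \tfrac{\sigma^2}{2}\sum_{i=1}^n Y_i^2$. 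The key structural observation is that, because $\{\xi_i\}_{i=1}^n$ is a \emph{basis} of $\mathfrak{g}$, the vectors $\{Y_i(\mu)\}_{i=1}^n = \{\ad^*_{\xi_i}\mu\}$ span the entire tangent space $T_\mu\mathcal{O}_\mu' = \{\ad^*_\xi\mu : \xi\in\mathfrak{g}\}$ at every point. Hence the diffusion is in fact \emph{elliptic} on $\mathcal{O}'$ (not merely hypoelliptic), so H\"ormander's condition holds trivially and the transition probabilities admit smooth densities $p_t(x,\cdot)$ with respect to $\lambda = |\omega^{n'}_{\mathrm{KKS}}|$; in particular the semigroup is strong Feller.

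Ellipticity also gives local controllability of the associated control system, and since $\mathcal{O}'$ is a connected component of the orbit, the Stroock--Varadhan support theorem yields irreducibility: $\mathbb{P}_t(x, A) > 0$ for every $x \in \mathcal{O}'$ and every nonempty open $A \subseteq \mathcal{O}'$. Strong Feller together with irreducibility then forces, via the Doob--Khasminskii argument, that there can be at most one invariant probability measure; combined with the existence from Corollary~\ref{cor:lie-poisson-gibbs-measure}, this identifies $\mathbb{P}_\infty$ as the \emph{unique} invariant measure.

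For the convergence, when $\mathcal{O}'$ is compact (for instance when $G$ is compact, as in \cite{arnaudon2018noise}) the convergence in total variation, equivalently in $L^1_\lambda$, of $p_t(x,\cdot)$ to $p_\infty$ is immediate from the standard convergence theorem for strong Feller, irreducible Markov semigroups admitting a unique invariant measure (see \cite{barp2021unifying}). The main obstacle is the possibly non-compact orbit, where one must rule out escape to infinity; here I would use the energy $h_0$ as a Lyapunov function. A direct computation shows $\mathcal{L}h_0 = -\theta\,\gamma^{-1}\!\left(\ad^*_{\delta h_0/\delta\mu}\mu,\,\ad^*_{\delta h_0/\delta\mu}\mu\right) + \tfrac{\sigma^2}{2}\sum_{i=1}^n\{\{h_0,h_i\},h_i\}_{\mathfrak{g}^*_-}$, so the double-bracket dissipation renders $\mathcal{L}h_0$ negative to leading order, modulo the bounded second-order It\^o correction. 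Whenever $h_0$ is coercive on $\mathcal{O}'$ (compact sublevel sets) this furnishes a Foster--Lyapunov drift condition, and Harris' theorem then yields geometric ergodicity and the $L^1_\lambda$ convergence \eqref{eq:ergodic-property-lie-poisson}. Thus the ellipticity observation makes steps (i)--(ii) comparatively routine, and the genuine work lies in this Lyapunov estimate controlling recurrence in the non-compact setting.
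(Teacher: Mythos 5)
Your proposal is built on exactly the same key observation as the paper's proof: since $\{\xi_i\}_{i=1}^n$ is a basis of $\mathfrak{g}$ and $T_\mu \mathcal{O}' = \{\ad^*_\xi \mu : \xi \in \mathfrak{g}\}$, the noise vector fields $\ad^*_{\xi_i}\mu$ span the tangent space of the orbit at every point, so the process is elliptic on $\mathcal{O}'$. After this, the routes differ. The paper disposes of both claims by citation: ellipticity gives at most one invariant measure by \cite[Proposition 6.1]{ichihara1974classification}, and the $L^1_\lambda$ convergence \eqref{eq:ergodic-property-lie-poisson} follows from \cite[Proposition 5.1]{ichihara1974classification}, using ellipticity together with the fact that $p_\infty$ has full support on $\mathcal{O}'$. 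Your derivation of uniqueness --- smooth transition densities (strong Feller) from ellipticity, irreducibility from the support theorem and connectedness of $\mathcal{O}'$, then Doob--Khasminskii --- is a correct and essentially self-contained alternative to the first citation; it buys transparency at the cost of invoking more machinery.

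The weak point is your treatment of convergence. As written, your proof of \eqref{eq:ergodic-property-lie-poisson} is unconditional only when $\mathcal{O}'$ is compact; in the non-compact case you make it contingent on coercivity of $h_0$, an assumption that the proposition does not make. This detour is unnecessary: existence of an invariant \emph{probability} measure is already supplied by Corollary \ref{cor:lie-poisson-gibbs-measure}, and for a strong Feller, irreducible Markov semigroup possessing an invariant probability measure, Doob's theorem (see, e.g., Da Prato and Zabczyk, \emph{Ergodicity for Infinite Dimensional Systems}) yields $\|\mathbb{P}_t(x,\cdot) - \mathbb{P}_\infty\|_{TV} \rightarrow 0$ for every $x \in \mathcal{O}'$ with no Foster--Lyapunov condition; Lyapunov functions are needed to establish existence of an invariant measure or quantitative rates, neither of which is at issue here. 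Moreover, the Lyapunov estimate you sketch is not sound as stated: for quadratic $h_0$ and linear $h_i$, the It\^o correction $\frac{\sigma^2}{2}\sum_{i=1}^n\{\{h_0,h_i\},h_i\}_{\mathfrak{g}^*_-}$ is quadratic in $\mu$, hence unbounded on a non-compact orbit, so it cannot be dismissed as a ``bounded'' perturbation of the dissipation term, and the dissipation itself degenerates at relative equilibria, where $\ad^*_{\delta h_0/\delta \mu}\mu = 0$. Replacing that final step by Doob's theorem, or by the paper's citation of \cite[Proposition 5.1]{ichihara1974classification}, closes the gap and recovers the full statement.
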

\begin{proof}
    We first note that for any $\mu \in \mathcal{O}'$, the tangent space of $\mathcal{O}'$ is characterised by
    $T_\mu \mathcal{O}' = \{\ad^*_\xi \mu | \xi \in \mathfrak{g}\}$ \cite[Section 14.2]{marsden2013introduction}. Hence, we have $T_\mu \mathcal{O}' = \mathrm{Span}\{\ad_{\xi_1} \mu, \ldots, \ad_{\xi_n} \mu\}$ since we assumed that $\{\xi_i\}_{i=1}^n$ span $\mathfrak{g}$ (i.e., the noise vector fields in \eqref{eq:lie-poisson-langevin} span $T_\mu \mathcal{O}'$). Then, by \cite[Proposition 6.1]{ichihara1974classification}, the process \eqref{eq:lie-poisson-langevin} is elliptic (i.e. the corresponding generator $\mathcal{A}$ is an elliptic differential operator) and admits at most one invariant measure, which proves the uniqueness. The ergodic property \eqref{eq:ergodic-property-lie-poisson} then follows directly from the result \cite[Proposition 5.1]{ichihara1974classification}, owing to the ellipticity of the process and the fact that $p_\infty$ has full support on $\mathcal{O}'$.
\end{proof}

To summarise, we have presented an explicit derivation of the double-bracket dissipation from the point of view of statistical mechanics. In particular, we have shown that it arises naturally as the symmetry-reduced counterpart of the dissipative term in \eqref{eq:symplectic-langevin-eq}, introduced to preserve the Gibbs measure \eqref{eq:gibbs-meas} on the phase space $P=T^*G$. Moreover, the resulting symmetry-reduced process \eqref{eq:lie-poisson-langevin} preserves the Gibbs measure on the connected components of the coadjoint orbits and is furthermore ergodic on this space.

\section{Examples}\label{sec:examples}
In this section, we provide concrete examples of stochastic-dissipative systems deduced from symmetry reduction that are of physical interest.


\subsection{Stochastic dissipative rigid body dynamics}\label{sec:stoch-diss-rigid-body}

The reduced Hamiltonian of the free rigid body system on $SO(3)$ is \cite[Chapter 15]{marsden2013introduction}

\begin{equation}
h_0(\boldsymbol{\Pi}) = \frac12 \boldsymbol{\Pi} \cdot \mathbb{I}^{-1} \boldsymbol{\Pi} \,,
\end{equation}
for angular momentum $\boldsymbol{\Pi} \in \mathfrak{so}^*(3) \cong \mathbb{R}^3$ and moment of inertia $\mathbb{I}: \mathfrak{so}(3) \rightarrow \mathfrak{so}^*(3)$, which is a symmetric tensor. The inverse operator $\mathbb{I}^{-1} \in \mathfrak{so}(3) \otimes \mathfrak{so}(3) \cong \mathbb{R}^{3 \times 3}$ naturally defines a metric on $\mathfrak{so}^*(3)$. We can take $\{\boldsymbol{\xi}_1, \boldsymbol{\xi}_2, \boldsymbol{\xi}_3\}$ to be the eigenvectors of $\mathbb{I} \in \mathbb{R}^{3 \times 3}$. Taking the noise Hamiltonians $h_i(\vec{\Pi}) = \sigma \vec{\xi}_i \cdot \vec{\Pi}$, the correspoding Lie-Poisson-Langevin system \eqref{eq:lie-poisson-langevin} becomes
\begin{align}\label{eq:stoch-diss-rb}
    \diff \boldsymbol{\Pi}_t + \boldsymbol{\Pi}_t \times \mathbb{I}^{-1} \boldsymbol{\Pi}_t \,\diff t + 
    \theta
    \boldsymbol{\Pi}_t \times \mathbb{I}^{-1}(\boldsymbol{\Pi}_t \times \mathbb{I}^{-1} \boldsymbol{\Pi}_t) \,\diff t + \sigma \sum_{i=1}^3  \boldsymbol{\Pi}_t\times \boldsymbol{\xi}_i \circ \diff W_t^i = 0 \,,
\end{align}
where we have used that the coadjoint representation of the Lie alebra $\mathfrak{so}(3)$ is given by $\ad^*_{\vec{\xi}} \vec{\Pi} = \vec{\xi} \times \vec{\Pi}$.
This gives the momentum-preserving stochastic dissipative rigid body system, which is a stochastic extension of the dissipative rigid body system in \cite{bloch1996euler} and has been studied in detail in \cite{arnaudon2018noise}. The system \eqref{eq:stoch-diss-rb} is also equivalent to the stochastic Landau-Lifshitz-Ginsburg equation \cite{banas2013stochastic} modelling the stochastic non-equilibrium dynamics of ferromagnetism. Their ensemble behaviour has been investigated for example in \cite{banas2013stochastic, arnaudon2019networks}, in particular they are known to exhibit phase transition with varying temperature.

The coadjoint orbits of $\mathfrak{so}^*(3)$ are the $2$-spheres $\|\boldsymbol{\Pi}\|^2 = const$.
The KKS symplectic form on $\mathcal{O}_{\boldsymbol{\mu}} \subset \mathfrak{so}^*(3)$ for some $\boldsymbol{\mu} \in \mathfrak{so}^*(3) \cong \mathbb{R}^3$ is
\begin{align}
    \omega_{\mathcal{O}_{\boldsymbol{\mu}}}(\boldsymbol{\mu} \times \boldsymbol{\xi}, \boldsymbol{\mu} \times \boldsymbol{\eta})  = - \boldsymbol{\mu} \cdot \boldsymbol{\xi} \times \boldsymbol{\eta}, \qquad \vec{\xi}, \vec{\eta} \in \mathbb{R}^3,
\end{align}
which is equivalent to the natural volume on the 2-sphere $S^2_\mu = \{\boldsymbol{\Pi} \in \mathbb{R}^3 : \|\boldsymbol{\Pi}\|^2 = \|\boldsymbol{\mu}\|^2\}$ (that is, if $\iota_\mu : S^2_\mu \xhookrightarrow{} \mathbb{R}^3$ is the natural inclusion, then $\omega_{\mathcal{O}_{\boldsymbol{\mu}}} = \iota_\mu^* \diff^3 x$, where $\diff^3 x$ is the Euclidean volume form on $\mathbb{R}^3$). Thus, from Corollary \ref{cor:lie-poisson-gibbs-measure}, the invariant measure on $S_\mu^2$ is given by
\begin{align}
    \mathbb{P}^\mu_\infty \,\,\propto\,\, \iota_{\mu}^* \left(e^{-\beta h}\diff^3 x\right),
\end{align}
which is the restriction (marginalisation) of the Euclidean Gaussian measure $\mathcal{N}(0, \beta^{-1}\mathbb{I})$ on $\mathbb{R}^3$ onto $S_\mu^2$. Its density on $S_\mu^2$ has two modes corresponding to the two minimal energy configurations of rotations around the shortest principle axis (clockwise and anticlockwise).

\subsection{Charged Brownian particle in a magnetic field}\label{sec:brownian-particle}
We now give an example of a stochastic dissipative system obtained by symplectic reduction, describing the stochastic motion of charged Brownian particles in a magnetic field. We take the configuration space to be the Kaluza-Klein configuration space $Q = \mathbb{R}^3 \times U(1)$, equipped with a connection form $\mathcal{A} : TQ \rightarrow \mathfrak{u}(1)$.

One starts with the Kaluza-Klein Lagrangian with potential energy (adapted from \cite{gay2013geometric})
\begin{align}
    L(\vec{q}, \dot{\vec{q}}, \theta, \dot{\theta}) = \frac{m}{2}\|\dot{q}\|^2 + \frac12 |\mathcal{A}(\vec{q}, \dot{\vec{q}}, \theta, \dot{\theta})|^2 - V(\vec{q}),
\end{align}
where $(\vec{q}, \dot{\vec{q}}, \theta, \dot{\theta}) \in TQ$ and $V : \mathbb{R}^3 \rightarrow \mathbb{R}$ is an arbitrary potential energy.
Without loss of generality, we can write (since the bundle $Q$ is trivial and the group is $U(1)$ Abelian)
\begin{align}
    \mathcal{A}(\vec{q}, \dot{\vec{q}}, \theta, \dot{\theta}) = A_{\vec{q}} \dot{\vec{q}} + \dot{\theta},
\end{align}
for some $\mathfrak{u}(1)$-valued one-form $A : T\mathbb{R}^3 \rightarrow \mathfrak{u}(1)$, which is simply a one-form on $\mathbb{R}^3$ upon identifying $\mathfrak{u}(1) \cong \mathbb{R}$.
Taking the Legendre transform, this yields a Hamiltonian on $T^*Q$ of the form
\begin{align}\label{eq:particle-in-magnet-hamiltonian}
    H_0(\vec{q}, \vec{p}, \theta, \mu) = \frac{1}{2m}\|\vec{p} - \mu A_{\vec{q}}\|^2 + \frac12 \mu^2 + V(\vec{q}),
\end{align}
where $\vec{p} = m \dot{\vec{q}} + \mu A_{\vec{q}}$ and $\mu = A_{\vec{q}} \dot{\vec{q}} + \dot{\theta}$ are the conjugate momenta on the $\mathbb{R}^3$ and $U(1)$ component of $Q$ respectively. We equip $T^*Q$ with the canonical symplectic form $\omega = \sum_{i=1}^3 \dd q^i \wedge \dd p_i + \dd \theta \wedge \dd \mu$. Since the Hamiltonian \eqref{eq:particle-in-magnet-hamiltonian} does not depend on the $\theta$-variable (i.e., it is a cyclic variable), it is trivially invariant under the lifted action of $U(1)$ on $T^*Q$. Thus, the momentum map
\begin{align}
    J(\vec{q}, \vec{p}, \theta, \mu) = \mu,
\end{align}
is preserved under the flow of $X_{H_0}$ and $J^{-1}(\mu) = \{(\vec{q}', \vec{p}', \theta', \mu') \in T^*Q | \mu' = \mu\}$ is an invariant manifold. We consider the shifting map 
$\pi_\mu: J^{-1}(\mu) \rightarrow J^{-1}(0)$, defined by 
\begin{align} \label{eq:shifting-map}
(\pi_\mu)_{(\vec{q}, \theta)}(\vec{p}, \mu) := (\vec{p}, \mu) - \mu \mathcal{A}_{(\vec{q}, \theta)} = (\vec{p} - \mu A_{\vec{q}}, 0),
\end{align}
and noting that $J^{-1}(0) \cong T^*(Q / U(1)) \cong T^*\mathbb{R}^3$ holds\footnote{This identification follows from the fact that any $p_q \in J^{-1}(0)$ satisfies $\left<p_q, \xi_Q(q)\right> = 0$ for all $\xi \in \mathfrak{u}(1)$ by definition, so $p$ is a horizontal one-form, which can be identified with a one-form on $Q/U(1)$.}, we have the relation
\begin{align}
    T^*\mathbb{R}^3 \stackrel{\pi_\mu}{\leftarrow} J^{-1}(\mu) \xhookrightarrow{\iota_\mu} T^*Q.
\end{align}
This induces a symplectic form $\omega_\mu$ on $T^*\mathbb{R}^3$ by $\pi_\mu^* \omega_\mu = \iota^* \omega$, which is given explicitly by
\begin{align}
    \omega_\mu = \sum_{i=1}^3 \dd q^i \wedge \dd p_i - \mu \,\dd A.
\end{align}
The space $(T^*\mathbb{R}^3, \omega_\mu)$ is symplectomorphic to $J^{-1}(\mu) / U(1)$ and therefore may be understood as the reduced space $P_\mu$. The corresponding reduced Hamiltonian on $T^*\mathbb{R}^3$ then reads
\begin{align}
    h_0(\vec{q}, \tilde{\vec{p}}) = \frac{1}{2m} \|\tilde{\vec{p}}\|^2 + V(\vec{q}),
\end{align}
for $(\vec{q}, \tilde{\vec{p}}) \in T^*\mathbb{R}^3$ and taking the noise Hamiltonians $h_i(\vec{q}, \tilde{\vec{p}}) = -\sigma q^i$ for $i=1,2,3$, \eqref{eq:reduced-symplectic-langevin} becomes
\begin{align}
    \diff \vec{q}_t = (\tilde{\vec{p}}_t / m) \,\diff t, \qquad \diff \tilde{\vec{p}}_t = \left(-\nabla_{\vec{q}}V(\vec{q}_t) + \frac{\mu}{m}\tilde{\vec{p}}_t \times \vec{B}(\vec{q}_t) - \frac{\theta}{m} \tilde{\vec{p}}_t \right)\diff t + \sigma \,\diff \vec{W}_t,
\end{align}
where $\theta = {\beta \sigma^2}/{2}$ and $(\vec{B})_i = (\epsilon_{ijk} / 2) \left(\partial A_j / \partial q^k - \partial A_k / \partial q^j \right)$ is the so-called magnetic term. This is precisely the equation for a charged Brownian particle in a magnetic field \cite{czopnik2001brownian}, which has seen recent applications in accelerating convergence in Hamiltonian Monte-Carlo (HMC) methods for sampling distributions \cite{tripuraneni2017magnetic, mongwe2021magnetic, mongwe2021quantum} (since HMC can be understood as a discretisation of the underdamped Langevin system \cite{rousset2010free}).
Noting that the symplectic volume form on $(T^*\mathbb{R}^3, \omega_\mu)$ is just $\omega_\mu \wedge \omega_\mu \wedge \omega_\mu = \diff^3 q \, \diff^3 \tilde{p}$, the invariant measure \eqref{eq:gibbs-meas-reduced} on the reduced space is
\begin{align}
    \mathbb{P}_\infty^{\mu} \,\propto\, e^{-\beta h_0(\vec{q}, \tilde{\vec{p}})} \diff^3 q \, \diff^3 \tilde{p} = e^{{- \frac{\beta}{2} \|\tilde{\vec{p}}}\|^2 - \beta V(\vec{q})} \diff^3 q \, \diff^3 \tilde{p} \,,
\end{align}
which is equivalent to the invariant measure of the standard underdamped Langevin system. We note that this reduction method using the shifting map \eqref{eq:shifting-map} can be further generalised to arbitrary principal bundles arising in Yang-Mills theory \cite{sternberg1977minimal, montgomery1984canonical, marsden1992lectures, ortega2005cotangent}, allowing us to obtain stochastic-dissipative analogues of such systems within our framework.

\subsection{Dissipative SALT-Euler model}\label{sec:salt-with-dissipation}
Here, we formally consider an infinite dimensional example arising in ideal fluid dynamics, where the configuration manifold is the group of volume-preserving diffeomorphisms $\mathfrak{Diff}_{\mathrm{vol}}(M)$ over a compact, simply-connected, orientable volume manifold $(M, \mathrm{vol})$. The Lie algebra corresponding to this space is $\mathfrak{X}_{\mathrm{vol}}(M)$, the space of divergence-free vector fields, equipped with the vector field commutator $[u, v] = uv - vu$, for all $u,v \in \mathfrak{X}_{\mathrm{vol}}(M)$. Its dual is the space $\mathfrak{X}^*_{\mathrm{vol}}(M) = \Omega^1(M) / \dd \mathcal{F}(M)$ of all one-forms modulo exact one-forms \cite{marsden1983coadjoint}. On simply-connected base manifold $M$, this can further be identified with the space of closed two-forms $\mathfrak{X}^*_{\mathrm{vol}}(M) \cong \dd \Omega^1(M) \subset \Omega^2(M)$, with the duality pairing given by
\begin{align}\label{eq:X-vol-duality-pairing}
    \left<\omega, u\right>_{\mathfrak{X}^*_{\mathrm{vol}} \times \mathfrak{X}_{\mathrm{vol}}} := \int_M \left<\alpha, u\right> \mathrm{vol},
\end{align}
for any $\omega \in \dd \Omega^1(M)$, $\alpha \in \Omega^1(M)$ is any one-form such that $\dd \alpha = \omega$, and any $u \in \mathfrak{X}_{\mathrm{vol}}(M)$.
The adjoint and coadjoint representations of $\mathfrak{X}_{\mathrm{vol}}(M)$ are given by $\ad_uv = -[u, v]$ and $\ad^*_u \omega = \mathcal{L}_u \omega$ respectively, for all $u, v \in \mathfrak{X}_{\mathrm{vol}}(M)$ and $\omega \in \mathfrak{X}^*_{\mathrm{vol}}(M)$.

Given a Riemannian metric tensor $\gamma : T M \times T M \rightarrow \mathbb{R}$, we consider its lift to a
metric pairing $\tilde{\gamma}(\cdot, \cdot) : \bigwedge^2T^* M \times \bigwedge^2T^* M \rightarrow \mathbb{R}$ (see Appendix \ref{app:proof-of-orthonormal-condition} for details). Now define the reduced Hamiltonians
\begin{align} \label{eq:euler-hamiltonians}
    h_0(\omega) = \frac12 \int_M \tilde{\gamma}(\omega, \psi) \mathrm{vol}, \qquad h_i(\omega) = \sigma \int_M \tilde{\gamma}(\omega, \psi_i) \mathrm{vol}, \quad i = 1, 2, \ldots,
\end{align}
on $\mathfrak{X}^*_{\mathrm{vol}}(M)$, where $\psi := \Delta^{-1} \omega \in \Omega^2(M)$ is the streamfunction corresponding to $\omega$ (interpreted as a two-form) and $\Delta : \Omega^2(M) \rightarrow \Omega^2(M)$ denotes the Laplace-de Rham operator on two-forms.
Given an exact two-form $\psi$, we can associate with it a divergence-free vector field $u$ via a linear operator $\Upsilon \psi = u$\footnote{More precisely, this is defined by $\Upsilon \psi := (\vec{\delta} \psi)^\sharp$, where $\vec{\delta}$ is the codifferential operator \cite{marsden1983coadjoint} and $\sharp : \Omega^1(M) \rightarrow \mathfrak{X}(M)$ is the musical isomorphism defined respect to the Riemannian metric $\gamma$. In two dimensions, the operator $\psi \mapsto \Upsilon\psi$ is the skew-gradient when two-forms are identified as zero-forms.}. One can show that this is precisely the functional derivative $\frac{\delta h_0}{\delta \omega} = \Upsilon \psi \in \mathfrak{X}_{\text{vol}}(M)$ defined with respect to the pairing \eqref{eq:X-vol-duality-pairing}.

For the streamfunctions $\{\psi_i\}_{i \in \mathbb{Z}_+}$ characterising the noise Hamiltonians, we consider the following particular choice. Fix $n \in \mathbb{Z}_+$ and consider the inner product $\tilde{\gamma}^n(\alpha, \beta) := \int_{M} \tilde{\gamma}(\alpha, \Delta^{n} \beta) \text{vol}$ for any $\alpha, \beta \in \Omega^2(M)$. We choose $\{\psi_i\}_{i \in \mathbb{Z}_+}$ to be a set of exact two-forms that are orthonormal with respect to this inner-product. Then, we have the following result.
\begin{proposition}\label{prop:vector-field-orthonormality}
    Let $\{\psi_i\}_{i \in \mathbb{Z}_+}$ be a set of exact two-forms that is orthonormal with respect to the inner-product $\tilde{\gamma}^n(\alpha, \beta) := \int_{M} \tilde{\gamma}(\alpha, \Delta^{n} \beta) \text{vol}$. The vector fields $\{\Upsilon \psi_i\}_{i \in \mathbb{Z}_+}$ in the Lie algebra $\mathfrak{X}_{\mathrm{vol}}(M)$ are then orthonormal with respect to the inner product $\gamma^{n-1} : \mathfrak{X}_{\mathrm{vol}}(M) \times \mathfrak{X}_{\mathrm{vol}}(M) \rightarrow \mathbb{R}$, defined by
    \begin{align}\label{eq:X-vol-metric}
        \gamma^{n-1}(u, v) := \int_M \gamma(u, (\Delta^\sharp)^{n-1} v) \mathrm{vol},
    \end{align}
    where $\Delta^\sharp : \mathfrak{X}_{\mathrm{vol}}(M) \rightarrow \mathfrak{X}_{\mathrm{vol}}(M)$ is defined by $\left<\Delta \omega, u\right>_{T^*M \times TM} = \left<\omega, \Delta^\sharp u\right>_{T^*M \times TM}$.
\end{proposition}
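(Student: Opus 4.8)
The plan is to reduce the whole statement to a single transfer identity for the operator $\Upsilon$, namely to prove that for any two exact two-forms $\psi, \phi$ one has
\begin{equation*}
\gamma^{n-1}(\Upsilon\psi, \Upsilon\phi) = \tilde{\gamma}^n(\psi, \phi),
\end{equation*}
after which the assumed $\tilde{\gamma}^n$-orthonormality of $\{\psi_i\}_{i \in \mathbb{Z}_+}$ transfers verbatim to $\gamma^{n-1}$-orthonormality of $\{\Upsilon\psi_i\}_{i \in \mathbb{Z}_+}$. Throughout I would take $M$ to be closed (the stated compactness with the usual no-boundary convention, or else impose that boundary terms vanish), so that the codifferential $\vec{\delta}$ is the $L^2$-adjoint of $\dd$, the Laplace-de Rham operator $\Delta = \dd\vec{\delta} + \vec{\delta}\dd$ is $L^2$-self-adjoint and commutes with both $\dd$ and $\vec{\delta}$, and $\Delta$ preserves exactness since it commutes with $\dd$. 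I would also record the pointwise identity $\gamma(\alpha^\sharp, \beta^\sharp) = \langle \alpha, \beta^\sharp\rangle$ for one-forms $\alpha, \beta$, and note that $\Upsilon\psi = (\vec{\delta}\psi)^\sharp$ is genuinely divergence-free, since $\mathrm{div}\,(\vec{\delta}\psi)^\sharp = -\vec{\delta}\vec{\delta}\psi = 0$.

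First I would establish the base case $n = 1$. Writing $\Upsilon\psi = (\vec{\delta}\psi)^\sharp$ and using the pointwise identity above,
\begin{equation*}
\int_M \gamma(\Upsilon\psi, \Upsilon\phi)\,\mathrm{vol} = \int_M \langle \vec{\delta}\psi, (\vec{\delta}\phi)^\sharp\rangle\,\mathrm{vol} = \int_M \tilde{\gamma}(\psi, \dd\vec{\delta}\phi)\,\mathrm{vol},
\end{equation*}
where the last equality uses that $\vec{\delta}$ is the $L^2$-adjoint of $\dd$. Since $\phi$ is exact, $\dd\phi = 0$ forces $\Delta\phi = \dd\vec{\delta}\phi$, so the right-hand side equals $\int_M \tilde{\gamma}(\psi, \Delta\phi)\,\mathrm{vol} = \tilde{\gamma}^1(\psi, \phi)$. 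Next I would prove the intertwining $\Delta^\sharp \Upsilon = \Upsilon \Delta$. The defining pairing for $\Delta^\sharp$ (stated just after \eqref{eq:X-vol-metric}) identifies it, in the $L^2$ sense, with $\sharp \circ \Delta \circ \flat$, the Hodge Laplacian on one-forms conjugated by the musical isomorphisms; hence $\Delta^\sharp \Upsilon\psi = (\Delta\,\vec{\delta}\psi)^\sharp$ whereas $\Upsilon\Delta\psi = (\vec{\delta}\,\Delta\psi)^\sharp$, and these agree by the commutation $\vec{\delta}\Delta = \Delta\vec{\delta}$. Iterating gives $(\Delta^\sharp)^{n-1}\Upsilon = \Upsilon\,\Delta^{n-1}$, and $\Delta^{n-1}\phi$ remains exact.

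Assembling the pieces, I would compute
\begin{equation*}
\gamma^{n-1}(\Upsilon\psi, \Upsilon\phi) = \int_M \gamma\big(\Upsilon\psi, (\Delta^\sharp)^{n-1}\Upsilon\phi\big)\,\mathrm{vol} = \int_M \gamma\big(\Upsilon\psi, \Upsilon\,\Delta^{n-1}\phi\big)\,\mathrm{vol},
\end{equation*}
and then apply the base case with $\phi$ replaced by the exact form $\Delta^{n-1}\phi$ to get $\tilde{\gamma}^1(\psi, \Delta^{n-1}\phi) = \int_M \tilde{\gamma}(\psi, \Delta^n\phi)\,\mathrm{vol} = \tilde{\gamma}^n(\psi, \phi)$. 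Setting $\psi = \psi_i$, $\phi = \psi_j$ and invoking the assumed $\tilde{\gamma}^n$-orthonormality then finishes the proof. I expect the main obstacle to be purely the Hodge-theoretic bookkeeping: correctly reading $\Delta^\sharp$ as $\sharp\Delta\flat$ from its defining pairing and checking that it preserves $\mathfrak{X}_{\mathrm{vol}}(M)$ so that the iterate $(\Delta^\sharp)^{n-1}$ is legitimate, together with confirming that all integrations by parts produce no boundary contributions. Once the adjointness of $\vec{\delta}$ to $\dd$ and the commutation $\vec{\delta}\Delta = \Delta\vec{\delta}$ are in place, the remaining steps are routine.
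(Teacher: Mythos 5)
Your proposal is correct and takes essentially the same route as the paper's proof: both reduce the statement to the isometry $\gamma^{n-1}(\Upsilon\psi_i,\Upsilon\psi_j)=\tilde{\gamma}^n(\psi_i,\psi_j)$, proved from the $L^2$-adjointness of $\dd$ and $\vec{\delta}$, exactness of the $\psi_i$ (so $\Delta = \dd\vec{\delta}$ on them), and the commutation $\vec{\delta}\Delta=\Delta\vec{\delta}$. The only difference is organizational: the paper runs a single chain of equalities using $\Delta^n\psi=(\dd\vec{\delta})^n\psi$ and moves $(\Delta^\sharp)^{n-1}$ across the pairing by its defining duality, whereas you package the identical facts as a base case $n=1$ plus the intertwining relation $\Delta^\sharp\Upsilon=\Upsilon\Delta$ and then iterate.
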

\begin{proof}
    See Appendix \ref{app:proof-of-orthonormal-condition}.
\end{proof}

This result allows us to apply the argument in Section \ref{sec:double-bracket-dissipation} to formally deduce the Lie-Poisson-Langevin system \eqref{eq:lie-poisson-langevin} on the group $\mathfrak{Diff}_{\mathrm{vol}}(M)$, which gives us
\begin{align} \label{eq:salt-euler-with-dissipation}
    \diff \omega + \mathcal{L}_u \omega \,\diff t + \theta \mathcal{L}_{(\mathcal{L}_u \omega)^\sharp} \omega \,\diff t + \sigma \sum_{i=1}^\infty \mathcal{L}_{\xi_i} \omega \circ \diff W_t^i = 0,
\end{align}
where $\omega \in \dd \Omega^1(M)$ is the vorticity two-form, $\xi_i = \frac{\delta h_i}{\delta \omega} = \Upsilon \psi_i$ is the velocity vector field corresponding to the streamfunction $\psi_i$, and $u := \frac{\delta h_0}{\delta \omega} = \Upsilon \psi \in \mathfrak{X}_{\mathrm{vol}}(M)$ is the velocity vector field corresponding to the streamfunction $\psi$. Here, the isomorphism $\sharp : \mathfrak{X}^*_{\mathrm{vol}}(M) \rightarrow \mathfrak{X}_{\mathrm{vol}}(M)$ is defined with respect to the metric \eqref{eq:X-vol-metric} for some $k \in \mathbb{Z}_+$, that is, $\gamma^{n-1}(\omega^\sharp, v) = \left<\omega, v\right>_{\mathfrak{X}^*_{\mathrm{vol}} \times \mathfrak{X}_{\mathrm{vol}}}$.
We note that the system \eqref{eq:salt-euler-with-dissipation} can be seen as a dissipative extension of the SALT-Euler system introduced in \cite{holm2015variational}.
In the special case $M = \mathbb{R}^2$ (or $\mathbb{R}^3$)\footnote{On $\mathbb{R}^n$, one needs to impose additional decay conditions on the diffeomorphisms to define a Lie group \cite{michor2013zoo}.} and taking $\sigma=0$, our model \eqref{eq:salt-euler-with-dissipation} also recovers the Vallis-Carnevale-Young system introduced in \cite[Section 4.2]{vallis1989extremal}, originally considered as an extension to ideal fluid models for computing the stable solutions to 2D fluids, as well as parameterising the {\em energy cascade} effect in 3D fluids (see also the work \cite{shepherd1990general}, which considers an extension to the Vallis-Carnevale-Young model on general Hamiltonian systems, not dissimilar to our system \eqref{eq:lie-poisson-langevin} with $\sigma = 0$).

Now considering the SDE
\begin{align}
    \diff Z_t = \left(u(t, Z_t) + \theta \left(\mathcal{L}_u\omega\right)^\sharp(t, Z_t)\right) \diff t + \sigma \sum_{i=1}^\infty \xi_i(Z_t) \circ \diff W_t^i,
\end{align}
and denoting by $\{\phi_t\}_{t \in [0, T]}$ its stochastic flow of diffeomorphism (assuming it exists), we can deduce from It\^o's pushforward formula \cite[Theorem 4.9.2]{kunita1984stochastic} that a solution to \eqref{eq:salt-euler-with-dissipation} is given by $\omega_t = (\phi_t)_* \omega_0$.
Indeed, this lies on the coadjoint orbits of $\mathfrak{X}^*_{\mathrm{vol}}(M)$, given by 
\begin{align}\label{eq:diffeo-coadjoint-orbits}
\mathcal{O}_{\omega_0} = \{\eta_* \omega_0: \eta \in \mathfrak{Diff}_{\mathrm{vol}}(M)\},
\end{align}
where $\eta_*\omega_0$ denotes the pushforward of the two-form $\omega_0$ with respect to the diffeomorphism $\eta$ \cite{marsden1983coadjoint}. This is equipped with the KKS symplectic form
\begin{align}\label{eq:diffeo-kks-form}
\Omega_\omega(\mathcal{L}_{u_1} \omega, \mathcal{L}_{u_2} \omega) = \int_M \omega(u_1, u_2) \mathrm{vol} \,,
\end{align}
for any $\omega \in \mathcal{O}_{\omega_0}$. However unlike the previous examples, it is not clear what the invariant measures corresponding to \eqref{eq:salt-euler-with-dissipation} are or whether one exists in this setting, as our result (Corollary \ref{cor:lie-poisson-gibbs-measure}) is restricted to finite dimensions. In the following example, we consider a finite-dimensional approximation to the system \eqref{eq:salt-euler-with-dissipation} using point vortices, in order to gain further insights.

\subsection{Stochastic dissipative point vortices on the $2$-sphere}\label{sec:stoch-diss-point-vortices}
Let us now restrict to the setting $M = S^2$. On a two-dimensional simply-connected manifold, every two-form is closed and exact. Hence, we can set $\mathfrak{X}^*_{\mathrm{vol}}(S^2) \cong  \Omega^2(S^2)$. Further, we have the isomorphism $q: \Omega^2(S^2) \stackrel{\sim}{\rightarrow} \Omega^0(S^2)$ given by $q : \omega \mapsto \omega / \mathrm{vol}$, so we can identify $\mathfrak{X}^*_{\mathrm{vol}}(S^2)$ with the space $\Omega^0(S^2)$ of smooth functions over $S^2$. Now for any $f \in \Omega^0(S^2)$, we can show that
\begin{align}
    \Delta (f \,\mathrm{vol}) = \mathrm{div}(\mathrm{grad} f) \, \mathrm{vol} \,,
\end{align}
where $\mathrm{grad} f := (\dd f)^\sharp$, for $\sharp : T^* S^2 \rightarrow T S^2$ defined with respect to the Riemannian metric $\gamma$ on $S^2$ induced from the embedding $S^2 \xhookrightarrow{} \mathbb{R}^3$. The operator $\mathrm{div}(\mathrm{grad}(\cdot))$ is the {\em Laplace-Beltrami operator}, which we will also denote by $\Delta$. We can also show that $\Upsilon (f \,\mathrm{vol}) = X_{f}$ for any $f \in \Omega^0(S^2)$, where $X_f$ is the Hamiltonian vector field on $(S^2, \mathrm{vol})$, viewed as a symplectic manifold \cite{marsden1983coadjoint} (for convenience, we will sometimes use the notation $\nabla^\perp f := X_f$ for any $f \in \Omega^0(S^2)$). Hence, under the identification $\Omega^2(S^2) \cong \Omega^0(S^2)$, the Laplace-de Rham operator on two-forms becomes the Laplace-Beltrami operator and the velocity vector field $\Upsilon \psi$ becomes a Hamiltonian vector field $X_\psi$. Now, choosing $\mathrm{vol} = \mathrm{vol}_{S^2}$, the area form on $S^2$ induced by the embedding $S^2 \xhookrightarrow{} \mathbb{R}^3$, we have $\tilde{\gamma}(\mathrm{vol}_{S^2}, \mathrm{vol}_{S^2}) = 1$\footnote{This follows from the fact that $\mathrm{vol}_{S^2} = \sqrt{|\gamma|} \diff x \wedge \diff y$ and $\tilde{\gamma} = |\gamma^{-1}| \left(\frac{\partial}{\partial x} \wedge \frac{\partial}{\partial y}\right) \otimes \left(\frac{\partial}{\partial x} \wedge \frac{\partial}{\partial y}\right)$ in local coordinates.}. Hence, under the identification $\mathfrak{X}^*_{\mathrm{vol}}(S^2) = \Omega^0(S^2)$, the reduced Hamiltonians \eqref{eq:euler-hamiltonians} become 
\begin{align} \label{eq:euler-hamiltonians-2d}
    h_0(\omega) = \frac12 \int_{S^2} \omega \psi \,\mathrm{vol}_{S^2} \,, \qquad h_i(\omega) = \sigma \int_{S^2} \omega \psi_i \,\mathrm{vol}_{S^2} \,, \quad i = 1, 2, \ldots \,,
\end{align}
for any $\omega, \psi, \psi_i \in \Omega^0(S^2)$ and $\sigma > 0$. We can now take limits and consider distribution-valued vorticities $\omega$, owing to the density of $\Omega^0(S^2)$ in the space of distributions. Due to the ellipticity of the Laplacian, $\psi = \Delta^{-1}\omega$ is smooth, so the Hamiltonians \eqref{eq:euler-hamiltonians-2d} will be well-defined in this limit.
In particular,  consider a singular vorticity ansatz
\begin{align} \label{eq:point-vortex-ansatz}
    \omega(t, x) = \sum_{i=1}^N \Gamma_i \delta(x; x_i(t)) \,,
\end{align}
where $\delta(x; x')$ denotes the Dirac delta function on the $2$-sphere, i.e., $\int_{S^2} \delta(x;x') f(x') \mathrm{vol}_{S^2} = f(x)$ for any function $f$ over $S^2$. Solutions of this form are referred to as the {\em point vortex solutions} of \eqref{eq:salt-euler-with-dissipation}, which may be viewed as a discretisation to the continuous solutions via the vortex method \cite{mimeau2021review}.

As discussed in \cite{marsden1983coadjoint}, the geometry of point vortices can be deduced from the geometry of the coadjoint orbits on $\mathfrak{Diff}_{\mathrm{vol}}(S^2)$. In particular, substituting the point vortex ansatz \eqref{eq:point-vortex-ansatz} into \eqref{eq:diffeo-coadjoint-orbits}, the coadjoint orbits of the singular system can be identified with the space
\begin{align}\label{eq:singular-coadjoint-orbit}
\mathcal{O}_{\omega_0} = \{\{x_i(t)\}_{i=1}^N : \exists \eta \in \mathfrak{Diff}_{\mathrm{vol}}(S^2) \text{ s.t. } x_i(t) = \eta(x_i(0)) \text{ for all } i=1, \ldots, N\} \,.
\end{align}
Further, since $\mathfrak{Diff}_{\mathrm{vol}}(S^2)$ is $N$-transitive on $S^2$ for any $N \in \mathbb{Z}_+$ \cite[Theorem 8]{michor1994n}, we have $\mathcal{O}_\omega \equiv (S^2)^{\times N} \backslash \mathcal{C}$, where $\mathcal{C} = \bigcup_{i \neq j} \{(x_1, \ldots, x_N) \in (S^2)^{\times N}: x_i = x_j\}$ is the collision set. The corresponding KKS symplectic form \eqref{eq:diffeo-kks-form} on this singular coadjoint orbit takes the form 
\begin{align}
\Omega_\omega(\mathcal{L}_{u_1} \omega, \mathcal{L}_{u_2} \omega) = \sum_{i=1}^N \Gamma_i \,{\mathrm{vol}}_{S^2}^i(u_1(x_i), u_2(x_i)) \,,
\end{align}
for any $u_1, u_2 \in \mathfrak{X}_{\mathrm{vol}}(S^2)$.
Hence, by symplectic reduction theory, one can deduce that the Hamiltonian dynamics of point vortices occur on the finite-dimensional symplectic manifold $((S^2)^{\times N} \backslash \mathcal{C}, \,\sum_{i=1}^N \Gamma_i \,{\mathrm{vol}}_{S^2}^i)$, which is a singular coadjoint orbit of $\mathfrak{Diff}_{\mathrm{vol}}(S^2)$.
Further, substituting the ansatz \eqref{eq:point-vortex-ansatz} into \eqref{eq:euler-hamiltonians-2d}, the reduced Hamiltonians on this space can be written as
\begin{align} \label{eq:point-vortex-hamiltonian}
h_0(\{x_i(t)\}_{i=1}^N) = -\sum_{i,j=1}^N \frac{\Gamma_i \Gamma_j}{4\pi} \log \|x_i - x_j\|\,, \qquad h_i(\{x_i(t)\}_{i=1}^N) = \sigma \sum_{i=1}^N \Gamma_i \psi_i(x_i) \,,
\end{align}
where $\| \cdot \|$
denotes the chordal distance on $S^2$ (see derivation in Appendix \ref{app:point-vortex-derivation}).

Now let $\{\{(\lambda_{\ell, m}, Y^m_\ell)\}_{|m|=0}^\ell\}_{\ell = 0}^\infty$ be the eigenvalue-function pair of the Laplace-Beltrami operator $\Delta$ on $S^2$, the eigenfunctions being precisely the spherical harmonics. We consider the functions $\psi_{\ell, m}(x) := \lambda_{\ell, m}^{-1/2} Y^m_\ell(x) \in \Omega^0(S^2)$ as the noise potentials, which are orthonormal with respect to the inner product $\tilde{\gamma}_1(f, g) := \int_{S^2} f \,\Delta g \,\text{vol}_{S^2}$ and therefore by Proposition \ref{prop:vector-field-orthonormality}, the vector fields $\nabla^\perp \psi_{\ell, m}$ are orthonormal with respect to the inner product $\gamma_0(u, v) := \int_{S^2} \gamma(u, v)\text{vol}_{S^2}$.
With this choice of noise vector field, we can show that the reduced dynamics \eqref{eq:reduced-symplectic-langevin} on the coadjoint orbit satisfies the following stochastic-dissipative point vortex system
\begin{align}\label{eq:stoch-diss-point-vortex}
\begin{split}
\diff \boldsymbol{x}_i &= \frac{1}{4\pi R} \sum_{\substack{j = 1 \\ j \neq i}}^N \frac{\Gamma_j \,\boldsymbol{x}_j  \times \boldsymbol{x}_i}{R^2 - \boldsymbol{x}_i  \cdot \boldsymbol{x}_j}\diff t + \sigma \sum_{\ell = 1}^\infty \sum_{|m|=0}^\ell \lambda_{\ell, m}^{-1/2} \,\nabla^\perp Y_\ell^m(\boldsymbol{x}_i) \circ \diff W_t^{\ell, m} \\
&\quad - \theta \sum_{\substack{k=1 \\ k \neq i}}^N \Bigg[\Bigg(\sum_{\substack{j = 1 \\ j \neq k}}^N \frac{\Gamma_j \,\boldsymbol{x}_j  \times \boldsymbol{x}_k}{4\pi R(R^2 - \boldsymbol{x}_j  \cdot \boldsymbol{x}_k)}\Bigg) \times \Bigg(\frac{\Gamma_k \vec{x}_i}{4\pi R(R^2 - \boldsymbol{x}_i  \cdot \boldsymbol{x}_k)} \Bigg) \\
&\qquad \qquad + \Bigg(\sum_{\substack{j = 1 \\ j \neq k}}^N \frac{\Gamma_j \,\vec{x}_i \cdot \boldsymbol{x}_j  \times \boldsymbol{x}_k}{4\pi R(R^2 - \boldsymbol{x}_j  \cdot \boldsymbol{x}_k)}\Bigg) \Bigg(\frac{\Gamma_k \vec{x}_k \times \vec{x}_i}{4\pi R(R^2 - \boldsymbol{x}_i  \cdot \boldsymbol{x}_k)^2}\Bigg)\Bigg] \,\diff t \,,
\end{split}
\end{align}
where $\boldsymbol{x}_i \in \mathbb{R}^3$, with $\|\boldsymbol{x}_i\| = R$ (the radius of the sphere) for all $i=1, \ldots, N$, is the extrinsic representation of $x_i \in S^2$ under the embedding $S^2 \xhookrightarrow{} \mathbb{R}^3$ (see the full derivation in Appendix \ref{app:point-vortex-derivation}). One can check that this system can also be deduced by directly substituting the ansatz \eqref{eq:point-vortex-ansatz} into \eqref{eq:salt-euler-with-dissipation}. To the best of our knowledge, the dissipation in system \eqref{eq:stoch-diss-point-vortex} has not been considered in the literature before. However, we believe that it is quite natural, arising as the point vortex approximation to the dissipative SALT system \eqref{eq:salt-euler-with-dissipation}, and the point vortex approximation to the Vallis-Carnevale-Young-Shepherd system \cite{vallis1989extremal, shepherd1990general} in the case $\sigma=0$ (no noise). We display in Figure \ref{fig:point_vortex_with_dissipation} a simulation of \eqref{eq:stoch-diss-point-vortex} in the zero-noise limit, with $N=6$ and taking unit vortex strengths $\Gamma_i = 1$ for all $i = 1, \ldots, 6$. In particular, Figure \ref{fig:hamiltonian-evolution} shows the evolution of the point vortex Hamiltonian $h_0$ (in \eqref{eq:point-vortex-hamiltonian}) under this evolution, verifying that the system is indeed dissipative, which is not immediately obvious by inspection of the new term in \eqref{eq:stoch-diss-point-vortex}.

\begin{figure}[t]
    \centering
    \begin{subfigure}[t]{.4\textwidth}
        \centering  \includegraphics[width=.9\linewidth]{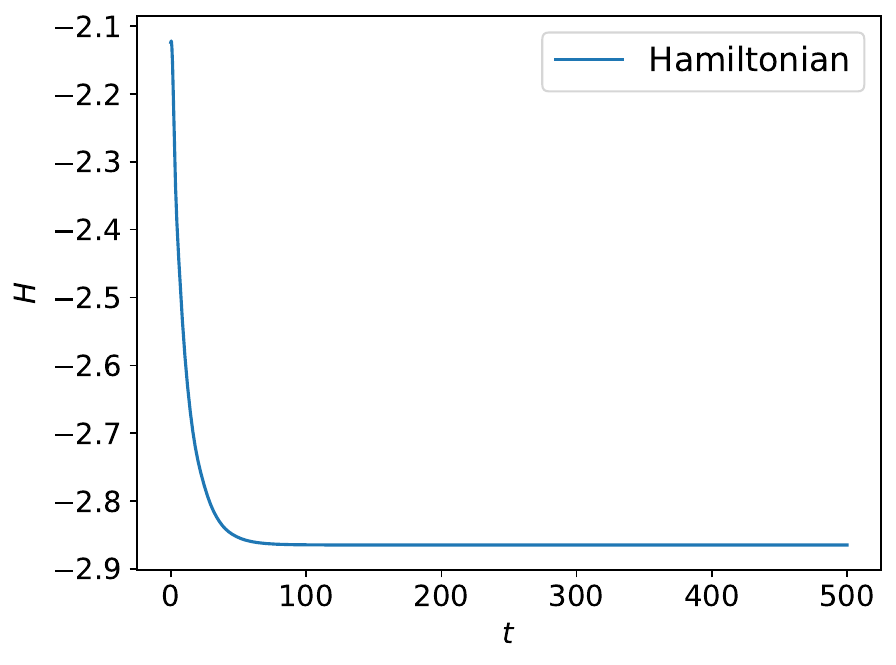}
        \caption[Network2]%
        {{Evolution of Hamiltonian $h_0$ in \eqref{eq:point-vortex-hamiltonian}}}    
    \label{fig:hamiltonian-evolution}
    \end{subfigure}%
    \hspace{40pt}
    \begin{subfigure}[t]{.4\textwidth} 
        \centering 
\includegraphics[width=.9\linewidth]{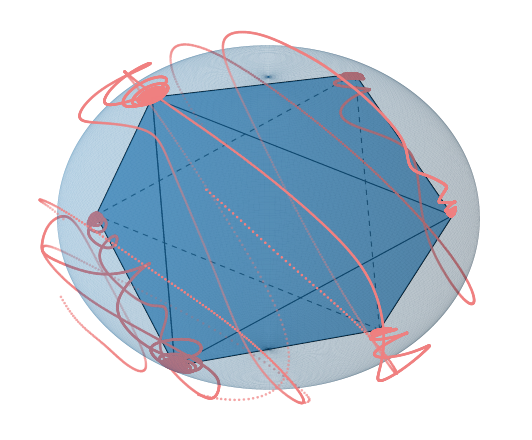}
        \caption[]%
        {{Dissipative point vortex trajectory}}    
        \label{fig:sample-trajectory}
    \end{subfigure}
    \caption{Numerical simulation of the dissipative point vortex system \eqref{eq:stoch-diss-point-vortex} with the noise coeffient $\sigma$ taken to zero. Here, we consider six vortices with unit strengths $\Gamma_n = 1$ for $n = 1, \ldots, 6$. The initial vortex locations are sampled randomly from i.i.d. uniform distribution on the sphere. We see in (a) that in the presence of the new dissipative term, the point vortex Hamiltonian \eqref{eq:point-vortex-hamiltonian} decreases under the dynamics \eqref{eq:stoch-diss-point-vortex}, as expected. The corresponding trajectory is plotted in (b), which shows the asymptotic convergence of the point vortex locations to equally spaced points on the sphere, i.e., the six vertices of the octahedron.}
    \label{fig:point_vortex_with_dissipation}
\end{figure}

Finally, we wish to understand the invariant measure of system \eqref{eq:stoch-diss-point-vortex}, which we believe will help us to understand the invariant measure of the infinite dimensional system \eqref{eq:euler-hamiltonians-2d}. From Corollary \ref{cor:lie-poisson-gibbs-measure}, the Gibbs measure on the singular coadjoint orbit \eqref{eq:singular-coadjoint-orbit} reads
\begin{align}
    \mathbb{P}_\infty^{\omega_0} \,&\propto\, e^{-\frac{\beta}{2}h_0(\{x_n\}_{n=1}^N)} \prod_{n=1}^N |\Gamma_n| \,\mathrm{vol}_{S^2}^n \nonumber \\
    &\stackrel{\eqref{eq:point-vortex-hamiltonian}}{\propto} \prod_{m,n=1}^N \|x_m - x_n\|^{\frac{\beta \Gamma_m \Gamma_n}{8\pi}}\prod_{n=1}^N |\Gamma_n| \,\mathrm{vol}_{S^2}^n \,. \label{eq:gibbs-point-vortex}
\end{align}
We see that when $\Gamma_n > 0$ or $\Gamma_n < 0$ for all $n=1, \ldots, N$, it assigns high probability to vortex configurations that are spaced out across the sphere, over those that are clustered. Indeed, Figure \ref{fig:sample-trajectory} shows that in the purely dissipative setting with unit vortex strengths, the vortices converge to a low-energy configuration that is equally spaced out on the sphere. In the continuous limit, we can imagine the invariant measure in the infinite-dimensional space assigning high probabilities to purely positive or negative vorticity fields that have minimal fluctuations.

When the $\Gamma_n$'s are allowed to be either positive or negative, we see that it prefers a highly mixed configuration where opposite-signed vortices are attracted to one another and like-signed vortices are spread apart. When we allow $\beta < 0$ (this is the {\em negative temperature} regime, introduced in \cite{onsager1949statistical}), then an opposite phenomenon occurs where it becomes highly likely for like-signed vortices to attract one another, forming large clusters of vorticity.
However, note that \eqref{eq:gibbs-point-vortex} may not actually define a probability measure, as it becomes non-integrable near the collision set $\mathcal{C}$ when $|\beta \Gamma_m \Gamma_n|$ is not large enough. Thus, in the continuous limit when the vorticity field $\omega_0$ is smooth and can take either positive or negative values, an invariant probability measure to system \eqref{eq:salt-euler-with-dissipation} may not exist.

\section{Summary and discussion}

In this article, we establish a mathematically sound variational principle which permits the derivation of structure-preserving stochastic dynamical equations. Specifically, we introduce the stochastic Hamilton-Pontryagin principle to derive a stochastic extension to the Euler-Lagrange equations, which, upon taking the Legendre transform, can be shown to be equivalent to Bismut's symplectic diffusions \cite{bismut1981mechanique}. A key assumption that enables us to make sense of this variational principle is the semimartingale compatibility assumption (Definition \ref{def:compatibility}), which, in simple terms, states that the paths in the variational family admit a ``derivative" with respect to the driving semimartingale. On $\mathbb{R}^n$, this is a fairly natural assumption by the martingale representation theorem, stating that compatibility in our sense holds provided the processes are adapted to the driving semimartingale. Further, adaptedness is necessary regardless in order to define the stochastic integrals in the action functional. Should an analogous result hold true for manifold-valued processes, then it will be sufficient to just assume adaptedness, making the compatibility assumption extraneous. While we are not aware of such a result, it is not hard to imagine that this is true, given many classical results in stochastic analysis can be extended to the manifold setting \cite{norris1992complete}.

The compatibility condition again plays an important role when considering a stochastic extension of the Euler-Poincaré reduction theorem. Here, it is essential as the reduced action functional cannot be defined otherwise. When considering the stochastic Euler-Poincaré reduction, we also use a particular variational family introduced in \cite{ACC2014}, which makes the variational processes satisfy the endpoint conditions, while also retaining adaptedness. These ingredients enable one to make proper sense of Euler-Poincaré reduction in the stochastic setting.

We then consider a framework for introducing dissipation to our stochastic system that balances the noise term, so that the Gibbs measure is an invariant measure of the system. The preservation of Gibbs measure is fundamental in statistical mechanics as it models a system (at thermodynamic equilibrium) coupled to an external heat bath, exchanging energy while keeping the temperature fixed. On a symplectic manifold, a dissipative extension of the symplectic diffusion considered in \cite{arnaudon2019irreversible} is such a process, preserving the Gibbs measure defined on the symplectic manifold \cite{souriau1970structure}. By applying symmetry reduction to this system, we can derive non-canonical versions of the stochastic-dissipative system, which preserve the Gibbs measure on the symplectic leaves. In particular, we show that applying Lie-Poisson reduction to the dissipative term results in the double-bracket dissipation \cite{bloch1994dissipation}. Thus, this gives us a new derivation of the double-bracket dissipation from the point of view of symmetry reduction.

There are some outstanding issues we have not addressed in this paper that may be of interest to consider in the future. As stated earlier, it would be interesting to establish an extension of the martingale representation theorem to the manifold setting, and with respect to the general driving semimartingale given in Definition \ref{def:driving_semimartingale} instead of the usual Brownian setting (if this hasn't yet been established). Specifically, the question is whether any $\mathcal{F}_t$-adapted $Q$-valued semimartingale $q_t$ is compatible (in the sense of Definition \ref{def:compatibility}) with the driving semimartingale $S_t$ generating the filtration. Or if not, it would be interesting to know the conditions on $Q$ such that this result is true. Having such a result will strengthen our statement of the variational principle, as the compatibility assumption simply becomes a consequence of adaptedness, making the statement more natural. 

Another interesting question to address is whether the stochastic dissipative system can be derived from a variational principle. While we have only considered dissipation from the Hamiltonian side, having a derivation from the Lagrangian side may be of interest as it will allow us to consider structure-preserving discretisations by means of variational integrators \cite{bou2009stochastic, holm2018stochastic, kraus2021variational}. Dissipation is traditionally incoorporated into variational principles via the somewhat unsatisfying Lagrange-d'Alembert approach. However, recent advances in numerical optimisation has lead to a variational perspective on gradient dynamics by considering Lagrangians scaled by a time-varying factor \cite{wibisono2016variational, tao2020variational}. It would be interesting to see if the dissipative mechanism we consider here admits a variational structure of such a form, and if so, whether it can be combined with the stochastic action principle to yield stochastic-dissipative equations.

Finally, while our work only considers the finite dimensional setting, the extent to which this holds in infinite dimensions is worth further discussion. In particular, we would be interested in seeing what the invariant measures in the infinite dimensional setting correspond to, for example, for system \eqref{eq:salt-euler-with-dissipation}. For the Euler and Navier-Stokes equations with additive noise, its invariant measures have been studied, for example in \cite{albeverio1990global, flandoli1994dissipativity}, and their ergodic property has been studied, for example in \cite{hairer2006ergodicity}. In the structure-preserving system \eqref{eq:salt-euler-with-dissipation}, the problem may be more challenging as we expect the Gibbsian measure to be supported on an infinite-dimensional manifold (namely the coadjoint orbits \eqref{eq:diffeo-coadjoint-orbits}), rather than a topological vector space, which at first glance seems hopelessly difficult to tackle. However, there may be an easier way to approach this, for example by considering a finite-dimensional approximation, as we do in the point-vortex example, and considering the limiting measure, or characterise the measure in an ambient topological vector space and restrict its support on the submanifold of interest. The latter idea is analogous to how we can characterise the invariant measure on the coadjoint orbits of the stochastic-dissipative rigid body as the restriction of a Gaussian measure on $\mathbb{R}^3$ onto the $2$-sphere (see Example \ref{sec:stoch-diss-rigid-body}). By a parallel argument, it may be that the invariant measure of its infinite dimensional extension \eqref{eq:diffeo-coadjoint-orbits} can be given by a restriction of some Gaussian measure on a Banach space of the vorticity fields, on the coadjoint orbits \eqref{eq:diffeo-coadjoint-orbits}. We see that this motivates several interesting mathematical questions that will be important in understanding the statistical behaviour of our stochastic-dissipative system in the infinite-dimensional setting.





\paragraph{Acknowledgements.} We thank Darryl D. Holm, James-Michael Leahy, and Ruiao Hu for the many useful discussions that have helped form the techniques we employed in the stochastic variational principle. Oliver D. Street was supported during the present work by European Research Council (ERC) Synergy grant `Stochastic Transport in Upper Ocean Dynamics' (STUOD) – DLV-856408.

\bibliographystyle{alpha}
\bibliography{biblio}

\begin{appendices}
\setcounter{lemma}{0}
    \renewcommand{\thelemma}{\Alph{section}.\arabic{lemma}}
\setcounter{proposition}{0}
    \renewcommand{\theproposition}{\Alph{section}.\arabic{proposition}}

\section{Intrinsic Stochastic Euler-Lagrange Equations}\label{app:intrinsic-stoch-EL}

Here, we consider the intrinsic formulation of the stochastic Euler-Lagrange equations \eqref{eq:stoch-euler-lagrange-mom} -- \eqref{eq:stoch-euler-lagrange-pos}, without reliance on local coordinates. We start by introducing the geometric structure of the Pontryagin bundle $TQ \oplus T^*Q$ and higher-order bundles of $Q$, as discussed in \cite{yoshimura2006diracI, yoshimura2006diracII}.
The Pontryagin bundle is canonically equipped with three projections
\begin{align}
    \mathrm{pr}_{TQ} &: TQ \oplus T^*Q \rightarrow TQ, \label{eq:proj-V} \\
    \mathrm{pr}_{T^*Q} &: TQ \oplus T^*Q \rightarrow T^*Q, \label{eq:proj-p} \\
    \mathrm{pr}_{Q} &: TQ \oplus T^*Q \rightarrow Q, \label{eq:proj-q}
\end{align}
where the first two are the projections arising from the direct sum structure in the tangent spaces and the last follows from the projection arising from $TQ \oplus T^*Q$, viewed as a vector bundle over $Q$. We also have a canonical map
\begin{align*}
    G : TQ \oplus T^*Q \rightarrow \mathbb{R},
\end{align*}
defined by $G : (q, v_q, \alpha_q) \mapsto \left<\alpha_q, v_q\right>_{T_q^*Q \times T_qQ}$, referred to as the momentum function.
For an arbitrary vector bundle $E$ over $Q$, we denote the canonical projections on its tangent and cotangent bundle by
\begin{align*}
    \tau_{E} : TE \rightarrow E \quad \text{and} \quad \pi_{E} : T^*E \rightarrow E,
\end{align*}
respectively. Using this, we can show that the second order bundles $TT^*Q$ and $T^*TQ$ are embedded in $TQ \oplus T^*Q$ via the maps
\begin{align*}
    \rho_{T^*TQ} &:= \pi_{TQ} \oplus \tau_{T^*Q} \circ \kappa_Q^{-1} : T^*TQ \rightarrow TQ \oplus T^*Q, \\
    \rho_{TT^*Q} &:= \pi_{TQ} \circ \kappa_Q \oplus \tau_{T^*Q} : TT^*Q \rightarrow TQ \oplus T^*Q,
\end{align*}
where $\kappa_Q : TT^*Q \stackrel{\sim}{\rightarrow} T^*TQ$ is the isomorphism between the two bundles (see \cite[Proposition 4.1]{yoshimura2006diracI} for the existence of such a map).

Now let $\Omega$ be the canonical symplectic form on $T^*Q$. This induces a natural isomorphism
\begin{align*}
    \Omega^\flat : TT^*Q \rightarrow T^*T^*Q,
\end{align*}
defined by
\begin{align*}
    \left<V, \Omega^\flat(W)\right>_{TT^*Q \times T^*T^*Q} = \Omega(V, W),
\end{align*}
for any $V, W \in TT^*Q$. We also denote by $\Theta_{T^*T^*Q}$ the tautological one-form on $T^*T^*Q$. That is,
\begin{align*}
    \Theta_{T^*T^*Q}|_\alpha(V_\alpha) = \left<\alpha, T\pi_{T^*Q} \cdot V_\alpha \right>_{T^*T^*Q \times TT^*Q},
\end{align*}
for any $\alpha \in T^*T^*Q$ and $V_\alpha \in T_\alpha(T^*T^*Q)$. Combining this, we get a one-form $\chi$ on $TT^*Q$ by
\begin{align*}
    \chi := (\Omega^\flat)^* \Theta_{T^*T^*Q}.
\end{align*}

We now have all the objects set up to express the Hamilton-Pontryagin action functional in intrinsic form. Let us define the {\em generalised energy} $E_i : TQ \oplus T^*Q \rightarrow \mathbb{R}$ for $i \geq 0$ by
\begin{align}
    E_0(Z) &= G(Z) - L(\mathrm{pr}_{TQ}(Z)) \,, \label{eq:E0} \\
    E_i(Z) &= \left<\mathrm{pr}_{T^*Q}(Z), \,\Xi_i(\mathrm{pr}_{Q}(Z))\right>_{T^*Q \times TQ} - \Gamma_i(\mathrm{pr}_{Q}(Z)) \,, \quad i \geq 1 \,,\label{eq:Ei}
\end{align}
for any $Z \in TQ \oplus T^*Q$, where we recall that $\Xi_i \in \mathfrak{X}(TQ)$ for $i \geq 1$ are the noise vector fields and $\Gamma_i \in \Omega^0(Q)$ for $i \geq 1$ are the stochastic potentials. Furthermore, for a process $Z_t \in TQ \oplus T^*Q$ that is compatible with $\{S_t^i\}_{i \geq 0}$, i.e., there exists $T(TQ \oplus T^*Q)$-valued semimartingales $\{F_t^i\}_{i \geq 0}$ such that
\begin{align} \label{eq:Z-decomposition-pontryagin}
    \diff Z_t = \sum_{i \geq 0} F_t^i \circ \diff S_t^i \,,
\end{align}
we define a $\mathbb{R}$-valued semimartingale
\begin{align}
    \int G((\rho_{TT^*Q} \circ T\mathrm{pr}_{T^*Q}) \circ \diff Z_t) := \sum_{i \geq 0} \int G(\rho_{TT^*Q} \circ T\mathrm{pr}_{T^*Q}F_t^i) \circ \diff S_t^i \,.
\end{align}
Note that this is well-defined by the uniqueness of the decomposition \eqref{eq:Z-decomposition-pontryagin} (see Corollary \ref{cor:uniqueness-of-decomposition}).
Then, we define the Hamilton-Pontryagin action functional intrinsically by
\begin{align}\label{eq:HP-action-intrinsic}
    \mathcal{S}[Z_t] := \int^{t_1}_{t_0} G((\rho_{TT^*Q} \circ T\mathrm{pr}_{T^*Q}) \circ \diff Z_t) - \sum_{i \geq 0} E_i(Z_t) \circ \diff S_t^i \,.
\end{align}
In local coordinates, this functional has the desired form as we show in the following.
\begin{proposition}
    In local coordinates, the action functional \eqref{eq:HP-action-intrinsic} can be written as
    \begin{align}\label{eq:HP-action-coords}
        \mathcal{S} = \int_{t_0}^{t_1} L(q_t, V_t) \,\diff t + \sum_{i \geq 1} \Gamma_i(q_t)\circ \diff S_t^i + \scp{p_t}{\circ\,\diff q_t - V_t \,\diff t - \sum_{i \geq 1} \Xi_i(q_t) \circ \diff S_t^i} \,.
    \end{align}
\end{proposition}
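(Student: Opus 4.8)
The plan is to reduce everything to a direct computation in a local chart, where the point $Z = (q, V, p) \in TQ \oplus T^*Q$ satisfies $\mathrm{pr}_{Q}(Z) = q$, $\mathrm{pr}_{TQ}(Z) = (q, V)$, $\mathrm{pr}_{T^*Q}(Z) = (q, p)$, and the momentum function reads $G(Z) = \scp{p}{V}$. Substituting these into \eqref{eq:E0}--\eqref{eq:Ei} immediately gives $E_0(Z_t) = \scp{p_t}{V_t} - L(q_t, V_t)$ and $E_i(Z_t) = \scp{p_t}{\Xi_i(q_t)} - \Gamma_i(q_t)$ for $i \geq 1$. Using $S_t^0 = t$, the sum $-\sum_{i\geq 0} E_i(Z_t)\circ\diff S_t^i$ in \eqref{eq:HP-action-intrinsic} then already produces the terms $\int L(q_t, V_t)\,\diff t$, $\sum_{i \geq 1}\int\Gamma_i(q_t)\circ\diff S_t^i$, and $-\int\scp{p_t}{V_t}\,\diff t - \sum_{i\geq 1}\int\scp{p_t}{\Xi_i(q_t)}\circ\diff S_t^i$ of the target \eqref{eq:HP-action-coords}.

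The only nontrivial piece is the kinematic term $G((\rho_{TT^*Q}\circ T\mathrm{pr}_{T^*Q})\circ\diff Z_t)$, which I claim equals $\scp{p_t}{\circ\,\diff q_t}$. First I would record the coordinate form of the canonical involution $\kappa_Q : TT^*Q \to T^*TQ$: with adapted coordinates $(q, p, \dot q, \dot p)$ on $TT^*Q$ and $(q, \dot q, a, b)$ on $T^*TQ$ (where $a, b$ are the fibre coordinates dual to $q, \dot q$), the Tulczyjew isomorphism reads $\kappa_Q(q, p, \dot q, \dot p) = (q, \dot q, \dot p, p)$. Composing, $\pi_{TQ}\circ\kappa_Q(q, p, \dot q, \dot p) = (q, \dot q)$ and $\tau_{T^*Q}(q, p, \dot q, \dot p) = (q, p)$, so $\rho_{TT^*Q}(q, p, \dot q, \dot p) = (q, \dot q, p) \in TQ\oplus T^*Q$. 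Applying $G$ then yields $\scp{p}{\dot q}$, and I would emphasise that this expression is linear in the tangent components $(\dot q, \dot p)$, which is exactly what makes it compatible with the bundle-valued decomposition used to define the stochastic integral.

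It then remains to evaluate this on the compatible process. Writing $\diff Z_t = \sum_{i\geq 0}F_t^i\circ\diff S_t^i$ and denoting by $(F_t^i)_q$ the $TQ$-component of $F_t^i$, the tangent map $T\mathrm{pr}_{T^*Q}$ sends $F_t^i$ to the element of $TT^*Q$ with base $(q_t, p_t)$ and tangent part $((F_t^i)_q, (F_t^i)_p)$. By the previous step $G(\rho_{TT^*Q}\circ T\mathrm{pr}_{T^*Q} F_t^i) = \scp{p_t}{(F_t^i)_q}$, and summing against $\diff S_t^i$ per the definition stated just before the Proposition gives $\sum_{i\geq 0}\scp{p_t}{(F_t^i)_q}\circ\diff S_t^i = \scp{p_t}{\circ\,\diff q_t}$, since $\diff q_t = \sum_{i\geq 0}(F_t^i)_q\circ\diff S_t^i$ is precisely the $Q$-component of the decomposition of $Z_t$. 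Assembling this kinematic term with the energy terms computed above and regrouping reproduces \eqref{eq:HP-action-coords} verbatim.

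I expect the only genuine obstacle to be pinning down the coordinate form of $\kappa_Q$ (hence of $\rho_{TT^*Q}$) correctly, since the whole identification hinges on $\kappa_Q$ swapping the roles of the $p$- and $\dot p$-slots so that $G\circ\rho_{TT^*Q}$ extracts $\scp{p}{\dot q}$ rather than, say, $\scp{\dot p}{q}$; everything else is bookkeeping. One should also note explicitly that $G\circ\rho_{TT^*Q}\circ T\mathrm{pr}_{T^*Q}$ is fibrewise linear in its tangent argument, which legitimises passing the maps through the Stratonovich decomposition term by term and guarantees that the resulting expression is chart-independent by Corollary \ref{cor:uniqueness-of-decomposition}.
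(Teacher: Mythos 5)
Your proposal is correct and follows essentially the same route as the paper's proof: fix a local chart, compute the projections and the generalised energies $E_0, E_i$ there, show that $\rho_{TT^*Q}\circ T\mathrm{pr}_{T^*Q}$ sends $(q,V,p,q',V',p')$ to $(q,q',p)$ so that $G$ extracts $\scp{p_t}{F_i^q(t)}$, and then sum against $\circ\,\diff S_t^i$ using the compatibility decomposition. The only difference is cosmetic: you derive the coordinate form of $\rho_{TT^*Q}$ from the Tulczyjew involution $\kappa_Q(q,p,\dot q,\dot p)=(q,\dot q,\dot p,p)$, whereas the paper simply states the local expression of $\rho_{TT^*Q}$ directly; your added remarks on fibrewise linearity and on well-definedness via Corollary \ref{cor:uniqueness-of-decomposition} are consistent with how the paper defines the kinematic integral.
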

\begin{proof}
    On a local chart $U \subset Q$, set $Z = (q, V, p) \in TU \oplus T^*U \cong U \times \mathbb{R}^n \times \mathbb{R}^n$, where $n := \mathrm{dim}(Q)$. In this chart, the projections \eqref{eq:proj-V}--\eqref{eq:proj-q} read
    \begin{align*}
        \mathrm{pr}_{TQ} : (q, V, p) \mapsto (q, V)\,, \quad \mathrm{pr}_{T^*Q} : (q, V, p) \mapsto (q, p)\,, \quad \mathrm{pr}_{Q} : (q, V, p) \mapsto q \,,
    \end{align*}
    and noting that $G(q, V, p) = \left<p, V\right>$, the generalised energies \eqref{eq:E0}--\eqref{eq:Ei} can be expressed as
    \begin{align}\label{eq:generalised-energies-coords}
        E_0(q, V, p) = \left<p, V\right> - L(q, V), \qquad E_i(q, V, p) = \left<p, \Xi_i(q)\right> - \Gamma_i(q) \,.
    \end{align}
    Next, let $(q, V, p, q', V', p') \in T(TU \oplus T^*U) \cong (U \times \mathbb{R}^n \times \mathbb{R}^n) \times (\mathbb{R}^n \times \mathbb{R}^n \times \mathbb{R}^n)$ be the local coordinates on the tangent bundle of $TQ \oplus T^*Q$. Then, we have the following local expressions for the maps $T\mathrm{pr}_{T^*Q} : T(TQ \oplus T^*Q) \rightarrow TT^*Q$ and $\rho_{TT^*Q} : TT^*Q \rightarrow TQ \oplus T^*Q$:
    \begin{align*}
        T\mathrm{pr}_{T^*Q} &: (q, V, p, q', V', p') \mapsto (q, p, q', p')\,, \\
        \rho_{TT^*Q} &: (q, p, q', p') \mapsto (q, q', p)\,.
    \end{align*}
    Further, by the compatibility assumption, the process $Z_t = (q_t, V_t, p_t)$ can be expressed as
    \begin{align*}
    \diff (q_t, V_t, p_t)^\top = \sum_{i \geq 0} (F_i^q(t), F_i^V(t), F_i^p(t))^\top \circ \diff S_t^i \,,
    \end{align*}
    where $(Z_t, F_i(t)) = (q_t, V_t, p_t, F_i^q(t), F_i^V(t), F_i^p(t)) \in T(TU \oplus T^*U)$. Then, we can show that
    \begin{align*}
        G(\rho_{TT^*Q} \circ T\mathrm{pr}_{T^*Q} (Z_t, F_i(t))) = \left<p_t, F_i^q(t)\right> \,,
    \end{align*}
    which, again by compatibility, is equivalent to writing
    \begin{align}\label{eq:momentum-term-coords}
        G((\rho_{TT^*Q} \circ T\mathrm{pr}_{T^*Q}) \circ \diff Z_t) = \left<p_t, \circ \,\diff q_t\right> \,.
    \end{align}
    Finally, combining the local expressions \eqref{eq:generalised-energies-coords}--\eqref{eq:momentum-term-coords}, we see that \eqref{eq:HP-action-intrinsic} is equivalent to \eqref{eq:HP-action-coords}.
\end{proof}

We now state the intrinsic version of the stochastic Hamilton-Pontryagin principle, which is a stochastic extension of the result found in \cite[Proposition 3.3]{yoshimura2006diracII}.
\begin{theorem}[Stochastic Hamilton-Pontryagin Principle: Intrinsic version]\label{thm:stochastic_HP_intrinsic}
    Taking the extrema of $\mathcal{S}$ given in \eqref{eq:HP-action-intrinsic} among all $TQ \oplus T^*Q$-valued continuous semimartingales $Z_t$ that is compatible with $\{S_t^i\}_{i \geq 0}$ and such that the endpoints of the base process are fixed, i.e., $\mathrm{pr}_Q(Z(t_0)) = a$ and $\mathrm{pr}_Q(Z(t_1)) = b$ for some $a, b \in Q$, we obtain the stochastic Euler-Lagrange equations:
    \begin{align}
        &\diff Z_t = \sum_{i\geq 1} F_i(t) \circ \diff S_t^i \,, \quad \text{where} \label{eq:intrinisic-stoch-EL-1}\\
        &(T\mathrm{pr}_{T^*Q})^*\chi(F_i(t)) = (T\tau_{TQ \oplus T^*Q})^* \dd E_i(Z_t, F_i(t)) \,, \quad \forall i \geq 0 \,. \label{eq:intrinisic-stoch-EL-2}
    \end{align}
\end{theorem}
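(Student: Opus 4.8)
The plan is to mirror the deterministic computation of \cite[Proposition 3.3]{yoshimura2006diracII}, replacing every time integral against $\diff t$ by the corresponding sum of Stratonovich integrals against $\diff S_t^i$, and closing the argument with the stochastic fundamental lemma (Lemma \ref{lemma:fundamental}) in place of its deterministic counterpart. First I would fix a variational family $\{Z_t^\epsilon\}$ of $TQ\oplus T^*Q$-valued continuous semimartingales, each compatible with $S_t$, with $Z_t^0 = Z_t$ and with the base endpoints $\mathrm{pr}_Q(Z^\epsilon(t_0)) = a$ and $\mathrm{pr}_Q(Z^\epsilon(t_1)) = b$ held fixed for all $\epsilon$; such a family exists by the same adapted, deterministic construction (via a first-exit time of a compact set) used after the proof of Theorem \ref{thm:stochastic_HP}. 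Writing $\diff Z_t = \sum_{i\geq 0} F_t^i \circ \diff S_t^i$ by compatibility, the infinitesimal variation $\delta Z_t := \partial_\epsilon|_{\epsilon=0} Z_t^\epsilon$ is a section of $T(TQ\oplus T^*Q)$ along the curve, and by Corollary \ref{cor:uniqueness-of-decomposition} the decomposition data $\{F_t^i\}_{i\geq 0}$ are varied unambiguously.

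The heart of the argument is the intrinsic variation of the momentum term $\int G((\rho_{TT^*Q}\circ T\mathrm{pr}_{T^*Q})\circ \diff Z_t)$. Here I would exploit the canonical geometry exactly as in the deterministic case: using $\kappa_Q$, the symplectic flat map $\Omega^\flat$ and the tautological one-form $\Theta_{T^*T^*Q}$ that together define $\chi = (\Omega^\flat)^*\Theta_{T^*T^*Q}$, one shows that the variation of this term equals $\sum_{i\geq 0}\int \scp{(T\mathrm{pr}_{T^*Q})^*\chi}{\delta Z_t} \circ \diff S_t^i$ up to a boundary contribution. The only genuinely stochastic input is the Stratonovich product rule, which supplies the integration-by-parts identity $\diff \scp{p_t}{\delta q_t} = \scp{p_t}{\circ\,\diff\delta q_t} + \scp{\circ\,\diff p_t}{\delta q_t}$ used in the same fashion as in the proof of Theorem \ref{thm:stochastic_HP}; compatibility guarantees that $\delta$ and the stochastic integral commute, so this manipulation is legitimate term-by-term in $i$. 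The energy terms are simpler: by the Stratonovich chain rule, $\delta \sum_{i\geq 0}\int E_i(Z_t)\circ \diff S_t^i = \sum_{i\geq 0}\int \scp{(T\tau_{TQ\oplus T^*Q})^*\dd E_i}{\delta Z_t} \circ \diff S_t^i$.

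Combining the two pieces, $\delta\mathcal{S}=0$ reads $\sum_{i\geq 0}\int \scp{(T\mathrm{pr}_{T^*Q})^*\chi - (T\tau_{TQ\oplus T^*Q})^*\dd E_i}{\delta Z_t} \circ \diff S_t^i + [\text{boundary}] = 0$. The boundary term vanishes because the endpoint conditions fix $\mathrm{pr}_Q(Z^\epsilon)$ at $t_0,t_1$, so $\delta(\mathrm{pr}_Q Z)$ vanishes there and the pairing defining $\chi$ annihilates the surviving directions, recovering the familiar $\scp{p_t}{\delta q_t}\big|_{t_0}^{t_1}=0$ in coordinates. Applying Lemma \ref{lemma:fundamental} together with Corollary \ref{cor:fundamental_lemma}, the integrand must vanish separately for each $i\geq 0$, which is exactly \eqref{eq:intrinisic-stoch-EL-2}, while the resulting representation of $\diff Z_t$ in terms of the decomposition data yields \eqref{eq:intrinisic-stoch-EL-1}. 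As a consistency check — and an alternative route that sidesteps part of the intrinsic computation — I would verify in a local chart, using the coordinate expressions for $\mathrm{pr}_{T^*Q}$, $\rho_{TT^*Q}$, $\chi$ and $E_i$ recorded in the preceding proposition, that \eqref{eq:intrinisic-stoch-EL-2} collapses to the coordinate equations \eqref{eq:stoch-euler-lagrange-mom}--\eqref{eq:stoch-euler-lagrange-pos} already guaranteed by Theorem \ref{thm:stochastic_HP}.

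I expect the main obstacle to be the intrinsic variation of the momentum term: verifying that $\delta \int G((\rho_{TT^*Q}\circ T\mathrm{pr}_{T^*Q})\circ \diff Z_t)$ produces precisely $(T\mathrm{pr}_{T^*Q})^*\chi$, and not some other canonically-built one-form, together with the correct boundary contribution. In the deterministic setting this is a delicate but standard manipulation with $\kappa_Q$ and $\Omega^\flat$; the added subtlety here is purely stochastic, namely justifying the interchange of $\delta$ with the Stratonovich integral and the use of the product rule along the non-smooth paths, both of which are underwritten by the compatibility assumption (Definition \ref{def:compatibility}) and the uniqueness of the semimartingale decomposition (Corollary \ref{cor:uniqueness-of-decomposition}).
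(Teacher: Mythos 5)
Your proposal has the right skeleton (compatibility and uniqueness of the decomposition, variation of the momentum and energy terms, a vanishing boundary term, then Lemma \ref{lemma:fundamental}), but its primary route leaves the decisive step unproven, and in the form you state it that step cannot even be set up. You claim that the variation of $\int G((\rho_{TT^*Q}\circ T\mathrm{pr}_{T^*Q})\circ \diff Z_t)$ equals $\sum_{i\geq 0}\int\scp{(T\mathrm{pr}_{T^*Q})^*\chi}{\delta Z_t}\circ\diff S_t^i$ up to boundary terms. This does not typecheck: $(T\mathrm{pr}_{T^*Q})^*\chi$ is a one-form on $T(TQ\oplus T^*Q)$, so at the point $F_i(t)$ it pairs with elements of $T_{F_i(t)}T(TQ\oplus T^*Q)$, i.e.\ with the varied decomposition data $\delta F_i(t)$, not with $\delta Z_t\in T_{Z_t}(TQ\oplus T^*Q)$. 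This is not a cosmetic slip: the entire difficulty of the intrinsic formulation is that the variational one-form lives on the iterated bundle and must be tested against $\delta F_i(t)$, whose relation to $\delta Z_t$ is the stochastic analogue of $\delta V_t = \frac{\diff}{\diff t}\delta q_t$ and is precisely what has to be controlled. You acknowledge this computation as ``the main obstacle'' and do not carry it out, so the primary route is a statement of intent rather than an argument.

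What you relegate to a ``consistency check'' is, in essence, the paper's actual proof --- and it needs more than you give it. The paper never performs a global intrinsic variation. Instead it chooses a partition $t_0 = t_{\epsilon_0} < \cdots < t_{\epsilon_N} = t_1$ so that on each $[t_{\epsilon_k}, t_{\epsilon_{k+1}}]$ the process stays in a single chart, writes $\delta\mathcal{S}$ as a sum of chart-level contributions $I_k + J_k$, identifies each $I_k + J_k$ with the coordinate expression \eqref{eq:local-chart-variations-HP} already computed in the proof of Theorem \ref{thm:stochastic_HP} (this is also what verifies that the local expressions of $(T\mathrm{pr}_{T^*Q})^*\chi$ and $(T\tau_{TQ\oplus T^*Q})^*\dd E_i$ are the correct ones), and then sums over $k$: the interior boundary terms $J_k = \scp{p_t}{\delta q_t}\big|_{t_{\epsilon_k}}^{t_{\epsilon_{k+1}}}$ telescope into the single global boundary term in \eqref{eq:intrinsic-variations-HP}, which vanishes by the endpoint conditions, after which Lemma \ref{lemma:fundamental} yields \eqref{eq:intrinisic-stoch-EL-2}. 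Your sketch of this route omits both ingredients that make it work: the partition/stopping-time argument (you cannot assume the process stays in one chart on $[t_0,t_1]$; Theorem \ref{thm:stochastic_HP} itself was only established up to a first exit time, so it cannot simply be invoked on the whole interval), and the telescoping of the interior boundary terms, which do not vanish individually since the endpoints of the intermediate segments are not fixed. Supplying these two points converts your ``alternative route'' into the paper's proof; without them, neither of your routes is complete.
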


\begin{proof}
    First, consider a partition $t_0 = t_{\epsilon_0} < t_{\epsilon_1} < \cdots < t_{\epsilon_{N-1}} < t_{\epsilon_N} = t_1$ such that for all $k = 0, \ldots, N-1$, there exists a local chart $U_k \subset Q$ such that $Z_t \in U_k$ for all $t \in [t_{\epsilon_k}, t_{\epsilon_{k+1}}]$. By the compatibility assumption, there exists a family of $T(TQ \oplus T^*Q)$-valued semimartingales $\{F_i(t)\}_{i \geq 0}$ such that \eqref{eq:intrinisic-stoch-EL-1} holds. Furthermore, taking variations on $Z_t$ induces variations on the processes $F_i$: that is, there exists $T(T(TQ \oplus T^*Q))$-valued semimartingales $\{\delta F_i(t)\}_{i \geq 0}$ such that $\diff (\delta Z_t) = \sum_{i\geq 0} \delta F_i(t) \circ \diff S_t^i$ holds. We claim that the following holds
    \begin{align}
        0 &= \delta \mathcal{S}[Z_t] \nonumber \\
        &= \sum_{i \geq 0} \int^{t_1}_{t_0} \left<\delta F_i(t), \, (T\mathrm{pr}_{T^*Q})^*\chi(F_i(t)) - (T\tau_{TQ \oplus T^*Q})^* \dd E_i(Z_t, F_i(t))\right>_{E^* \times E} \circ \diff S_t^i \nonumber \\
        &\quad + \left<\Theta_{T^*Q} \left(\mathrm{pr}_{T^*Q}(Z_t)\right), \,T\mathrm{pr}_{T^*Q} \cdot \delta Z_t\right>_{T^*T^*Q \times TT^*Q} \Big|^{t_1}_{t_0} \,, \label{eq:intrinsic-variations-HP}
    \end{align}
    where $E := T^*T(TQ \oplus T^*Q)$.
    To show this, we first decompose the integral above as $\int^{t_1}_{t_0} (\cdots) = \sum_{k=0}^{N-1} \int^{t_{\epsilon_{k+1}}}_{t_{\epsilon_k}} (\cdots) =: \sum_{k=0}^{N-1} I_k$ so that on each sub-integral, we can operate on local charts.
    Now, on a local chart, write $F_i(t) = (q_t, V_t, p_t, F^q_i(t), F^V_i(t), F^p_i(t))$. Then, we have the local expressions
    \begin{align}
        (T\mathrm{pr}_{T^*Q})^*\chi(F_i(t)) &= \sum_{j=1}^n \left(-[F^p_i(t)]_j \,\dd q_t^j + [F^q_i(t)]_j \,\dd p_t^j \right) \\
        (T\tau_{TQ \oplus T^*Q})^* \dd E_i(Z_t, F_i(t)) &= \sum_{j=1}^n \left(\frac{\partial E_i}{\partial q_t^j} \dd q_t^j + \frac{\partial E_i}{\partial V_t^j} \dd V_t^j + \frac{\partial E_i}{\partial p_t^j} \dd p_t^j\right) \,.
    \end{align}
    Using this, we have the following local expression for a sub-integral:
    \begin{align*}
        I_k &= \sum_{i \geq 0} \int^{t_{\epsilon_{k+1}}}_{t_{\epsilon_{k+1}}} \left(- \left<\delta q_t, F_i^p(t) + \frac{\partial E_i}{\partial q_t}\right> + \left<\delta p_t, F_i^q(t) - \frac{\partial E_i}{\partial p_t}\right> - \left<\delta V_t, \frac{\partial E_i}{\partial V_t}\right> \right)\circ \diff S_t^i \\
        &\stackrel{\eqref{eq:generalised-energies-coords}}{=} \int^{t_{\epsilon_{k+1}}}_{t_{\epsilon_{k+1}}} \left(\left<\delta q_t, \frac{\partial L}{\partial q_t} - F_0^p(t) \right> + \left<\delta p_t, F_0^q(t) - V_t\right> - \left<\delta V_t, p_t - \frac{\partial L}{\partial V_t}\right> \right)\circ \diff S_t^0 \\
        &\quad + \sum_{i \geq 1} \int^{t_{\epsilon_{k+1}}}_{t_{\epsilon_{k+1}}} \left(\left<\delta q_t, \frac{\partial \Gamma_i}{\partial q_t} - \frac{\partial}{\partial q_t} \left<p_t, \Xi_i(q_t)\right> - F_i^p(t)\right> + \left<\delta p_t, F_i^q(t) - \Xi_i(q_t)\right>\right)\circ \diff S_t^i \\
        &= \int^{t_{\epsilon_{k+1}}}_{t_{\epsilon_{k+1}}} \left<\delta q_t, \frac{\partial L}{\partial q_t} \circ \diff S_t^0 + \sum_{i\geq 1} \left(\frac{\partial \Gamma_i}{\partial q_t} - \frac{\partial}{\partial q_t} \left<p_t, \Xi_i(q_t)\right>\right) \circ \diff S_t^i - \circ \diff p_t\right> \\
        &\qquad + \left<\delta p_t, \circ \diff q_t - V_t \circ \diff S_t^0 - \sum_{i \geq 1} \Xi_i(q_t) \circ \diff S_t^i\right> + \left<\delta V_t, p_t - \frac{\partial L}{\partial V_t}\right> \circ \diff S_t^0 \,.
    \end{align*}
    Furthermore, noting that
    \begin{align*}
        \Theta_{T^*Q} \left(\mathrm{pr}_{T^*Q}(Z_t)\right) = p_t \,\dd q_t\,, \qquad T\mathrm{pr}_{T^*Q} \cdot \delta Z_t = \delta q_t \frac{\partial}{\partial q_t} + \delta p_t \frac{\partial}{\partial p_t} \,,
    \end{align*}
    we have
    \begin{align*}
        J_k := \left<\Theta_{T^*Q} \left(\mathrm{pr}_{T^*Q}(Z_t)\right), \,T\mathrm{pr}_{T^*Q} \cdot \delta Z_t\right>_{T^*T^*Q \times TT^*Q} \Big|^{t_{\epsilon_{k+1}}}_{t_{\epsilon_{k}}} = p_{t_{\epsilon_{k+1}}} \,\delta q_{t_{\epsilon_{k+1}}} - p_{t_{\epsilon_{k}}} \,\delta q_{t_{\epsilon_{k}}} \,.
    \end{align*}
    Hence, we see that the local expression for $I_k + J_k$ agrees with the expression \eqref{eq:local-chart-variations-HP} for the variation of the action on a local chart \eqref{eq:HP-action-coords}, computed in the proof of Theorem \ref{thm:stochastic_HP}. Now, we can sum these up to get $\eqref{eq:intrinsic-variations-HP} = \sum_{k=0}^{N-1} (I_k + J_k)$, which verifies the expression \eqref{eq:intrinsic-variations-HP}. Finally, using the endpoint conditions $\delta q_{t_{0}} = \delta q_{t_{1}} = 0$ and the fundamental lemma of the stochastic calculus of variations (Lemma \ref{lemma:fundamental}), we can deduce the relation \eqref{eq:intrinisic-stoch-EL-2}.
\end{proof}
    
\section{Auxiliary results and proofs} \label{app:auxiliary}

In this appendix, we provide proofs for some of the auxiliary results stated in the main body. We restate the statement of each result for completeness.

\subsection{Proof of Lemma \ref{lemma:variation_g_evolution}}
\label{app:variation_g_evolution}

\begin{namedthm*}{Lemma \ref{lemma:variation_g_evolution}}
    Let $g_t\in G$ be a group valued process which is compatible with a driving semimartingale, $S_t$, so that there exists a family of $\mathfrak{g}$-valued semimartingales $\{w_t^i\}_{i \geq 0}$ satisfying
    \begin{equation}
        T_{g_t}L_{{g_t}^{-1}}(\circ \diff g_t) = \sum_{i\geq 0}w_t^i\circ \diff S_t^i \,
    \end{equation}
    (to see this, take $w_t^i := T_{g_t}L_{{g_t}^{-1}}(F_t^i)$ in \eqref{eq:define-g-inv-dg}).
    Then, for the group-valued perturbation $e_{\epsilon,t}$ introduced above, the process $g_{\epsilon,t} = g_t\cdot e_{\epsilon,t}$ is compatible with the same driving semimartingale and evolves according to
    \begin{equation}
        \diff g_{\epsilon,t} = T_eL_{g_{\epsilon,t}}\left(\Ad_{e_{\epsilon,t}^{-1}}w_t^0 + \epsilon \dot{\eta}_t \right) \, \diff t + \sum_{i\geq 1}T_eL_{g_{\epsilon,t}}\left(\Ad_{e_{\epsilon,t}^{-1}}w_t^i \right)\circ \diff S_t^i \,, 
    \end{equation}
    where $\Ad$ denotes the adjoint representation of $G$. 
\end{namedthm*}

\begin{proof}
    By applying the Stratonovich product rule and the compatibility of the process $g_t$ with the driving semimartingale $S_t$, we have
    \begin{align*}
        \diff g_{\epsilon,t} &= T_{g_t}R_{e_{\epsilon,t}} \circ \diff g_t + T_{e_{\epsilon,t}}L_{g_t} \dot{e}_{\epsilon,t} \,\diff t
        \\
        &= \left(T_{g_t}R_{e_{\epsilon,t}}T_eL_{g_t} w_t^0 + \epsilon T_{e_{\epsilon,t}}L_{g_t}T_eL_{e_{\epsilon,t}}\dot{\eta}_t \right) \, \diff t + \sum_{i\geq 1} T_{g_t}R_{e_{\epsilon,t}}T_eL_{g_t} w_t^i \circ \diff S_t^i
        \\
        &=T_eL_{g_{\epsilon,t}}\left(T_{e_{\epsilon,t}^{-1}}R_{e_{\epsilon,t}} T_eL_{e_{\epsilon,t}^{-1}}w_t^0 + \epsilon\dot{\eta}_t \right) \, \diff t + \sum_{i\geq 1} T_eL_{g_{\epsilon,t}}\left( T_{e_{\epsilon,t}^{-1}}R_{e_{\epsilon,t}} T_eL_{e_{\epsilon,t}^{-1}}w_t^i \right) \circ \diff S_t^i
        \\
        &= T_eL_{g_{\epsilon,t}}\left(\Ad_{e_{\epsilon,t}^{-1}}w_t^0 + \epsilon \dot{\eta}_t \right) \, \diff t + \sum_{i\geq 1}T_eL_{g_{\epsilon,t}}\left(\Ad_{e_{\epsilon,t}^{-1}}w_t^i \right)\circ \diff S_t^i
        \,,
    \end{align*}
    as required.
\end{proof}

\subsection{Proof of Theorem \ref{prop:stoch-symplectic-reduction}} \label{app:stoch-symplectic-reduction}

We recall that for a symplectic manifold $(P, \omega)$ and a group $G$ acting on it, $J : P \rightarrow \mathfrak{g}^*$ denotes the corresponding momentum map, $\pi_\mu : J^{-1}(\mu) \rightarrow J^{-1}(\mu)/G_\mu$ denotes the projection with respect to the isotropy subgroup $G_\mu \leq G$, and $\iota_\mu: J^{-1}(\mu) \xhookrightarrow{} P$ denotes the natural inclusion. We also define the induced symplectic form on the reduced space $P_\mu$ by $\iota_\mu^* \omega = \pi_\mu^*\omega_\mu$. With these notations, we recall the statement of Theorem \ref{prop:stoch-symplectic-reduction}.

\begin{namedthm*}{Theorem \ref{prop:stoch-symplectic-reduction}}
    Consider the stochastic Hamiltonian system \eqref{eq:general-symplectic-diffusion} on the symplectic manifold $(P, \omega)$ such that all $\{H_i\}_{i=0}^N$ are $G$-invariant.
    Further, for some $T>0$, let $\{\Phi_t\}_{t \in [0, T]}$ be the stochastic flow of \eqref{eq:general-symplectic-diffusion} on $P$ and $\{\phi_t^\mu\}_{t \in [0, T]}$ be the stochastic flow of \eqref{eq:reduced-symplectic-diffusion} on $P_\mu$. Then, there exists a flow $\psi_t$ on $J^{-1}(\mu)$ such that $\Phi_t \circ \iota_\mu = \iota_\mu \circ \psi_t$ and $\phi_t^\mu \circ \pi_\mu = \pi_\mu \circ \psi_t$ for all $t \in [0, T]$.
\end{namedthm*}

To show this, we first prove the following identity.

\begin{lemma}\label{lemma:reduced-vector-field}
    Consider the setting in Proposition \ref{prop:stoch-symplectic-reduction} and define the reduced Hamiltonians $\{h_i\}_{i=0}^N$ by $\pi_\mu^* h_i = \iota_\mu^* H_i$. Then the following identity holds:
    \begin{align}
        (\pi_\mu)_* X_{H_i}|_{J^{-1}(\mu)} = X^\mu_{h_i} \,,
    \end{align}
    where we denoted by $X_{H_i} \in \mathfrak{X}_{ham}(P)$ the Hamiltonian vector field on $P$ with respect to the Hamiltonian $H_i$, and $X^\mu_{h_i} \in \mathfrak{X}_{ham}(P_\mu)$ denotes the Hamiltonian vector field on $P_\mu$ with respect to the reduced Hamiltonian $h_i$.
    We used the notation $X_{H_i}|_{J^{-1}(\mu)}$ to denote the restriction of $X_{H_i}$ on $J^{-1}(\mu) \subseteq P$.
\end{lemma}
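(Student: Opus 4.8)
The plan is to recognise this as the classical Marsden--Weinstein reduction identity, whose proof is purely pointwise and transfers verbatim from the deterministic setting. The statement is fibrewise in nature, so there is no stochasticity to worry about here; the lemma is a geometric fact about the Hamiltonian vector fields $X_{H_i}$ alone. I would organise the argument into three steps: establishing that the restriction $X_{H_i}|_{J^{-1}(\mu)}$ is well defined, establishing that it descends to the quotient so that $(\pi_\mu)_* X_{H_i}|_{J^{-1}(\mu)}$ makes sense, and finally verifying that the descended field satisfies the defining equation of $X^\mu_{h_i}$.

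First I would show that $X_{H_i}$ is tangent to the level set $J^{-1}(\mu)$. Since $T_z J^{-1}(\mu) = \ker\big(DJ(z)\big)$, it suffices to check $DJ^\xi(z)\cdot X_{H_i}(z) = 0$ for every $\xi \in \mathfrak{g}$. Using that the action is Hamiltonian with $X_{J^\xi} = \xi_P$, one computes $DJ^\xi \cdot X_{H_i} = \{J^\xi, H_i\} = -\,\xi_P H_i = 0$, where the last equality is precisely the $G$-invariance of $H_i$ (this is exactly the computation underlying Noether's theorem, Theorem \ref{prop:stoch-noether}). Hence $X_{H_i}(z) \in T_z J^{-1}(\mu)$ and the restriction is a genuine vector field on $J^{-1}(\mu)$. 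Next, because both $\omega$ and $H_i$ are $G$-invariant, the vector field $X_{H_i}$ is itself $G$-invariant, and in particular $G_\mu$-invariant; together with tangency this guarantees that $X_{H_i}|_{J^{-1}(\mu)}$ is $\pi_\mu$-related to a well-defined vector field on the quotient $P_\mu = J^{-1}(\mu)/G_\mu$, which I denote $Y_i := (\pi_\mu)_* X_{H_i}|_{J^{-1}(\mu)}$.

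The core computation is the third step. Fix $z \in J^{-1}(\mu)$ and an arbitrary $v \in T_{\pi_\mu(z)} P_\mu$, and choose a lift $\tilde v \in T_z J^{-1}(\mu)$ with $T\pi_\mu\cdot \tilde v = v$. Using the defining relation $\iota_\mu^*\omega = \pi_\mu^*\omega_\mu$ from \eqref{eq:reduced-symplectic-form}, together with $T\pi_\mu\cdot X_{H_i}(z) = Y_i(\pi_\mu(z))$, I would compute
\begin{align*}
    (\omega_\mu)_{\pi_\mu(z)}\big(Y_i, v\big)
    = (\pi_\mu^*\omega_\mu)_z\big(X_{H_i}, \tilde v\big)
    = (\iota_\mu^*\omega)_z\big(X_{H_i}, \tilde v\big)
    = \big(X_{H_i}\intprod \omega\big)(\tilde v)
    = (\dd H_i)(\tilde v),
\end{align*}
where the last equality is the definition \eqref{eq:ham-vec-field} of $X_{H_i}$. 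Finally, from $\pi_\mu^* h_i = \iota_\mu^* H_i$ one has $(\dd H_i)(\tilde v) = \dd(\iota_\mu^* H_i)(\tilde v) = \dd(\pi_\mu^* h_i)(\tilde v) = (\dd h_i)(v)$. Thus $\omega_\mu(Y_i, v) = \dd h_i(v)$ for all $v$, and by non-degeneracy of $\omega_\mu$ this gives $Y_i = X^\mu_{h_i}$, as claimed.

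I do not expect a serious obstacle, since every step is a standard manipulation; the only points requiring care are the bookkeeping around the pushforward being well defined (which rests on the $G_\mu$-invariance and tangency established in the first two steps) and the consistent application of the sign convention $X_H\intprod\omega = \dd H$ fixed in \eqref{eq:ham-vec-field}. The mild subtlety worth flagging explicitly is that the lift $\tilde v$ is not unique, but the computation shows the result is independent of the choice because the ambiguity lies in the $G_\mu$-orbit directions, along which $\pi_\mu^*\omega_\mu$ is degenerate.
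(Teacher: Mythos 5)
Your proof is correct and takes essentially the same route as the paper's: both combine the defining relation $X_{H_i}\intprod\omega = \dd H_i$ with the identities $\iota_\mu^*\omega = \pi_\mu^*\omega_\mu$ and $\pi_\mu^* h_i = \iota_\mu^* H_i$, and conclude by non-degeneracy of $\omega_\mu$ (equivalently, injectivity of $\pi_\mu^*$ along the submersion $\pi_\mu$). The only difference is presentational: you run the identity pointwise with explicit lifts and spell out the tangency and $G_\mu$-invariance needed for $(\pi_\mu)_* X_{H_i}|_{J^{-1}(\mu)}$ to be well defined, which the paper handles by asserting that $J^{-1}(\mu)$ is invariant (via Noether's theorem) and working globally with pullbacks of forms.
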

\begin{proof}[Proof of Lemma \ref{lemma:reduced-vector-field}]
Since $J^{-1}(\mu)$ is an invariant manifold under the flow of $X_{H_i}$ for any $i = 0, \ldots, N$, the restriction  $X_{H_i}|_{J^{-1}(\mu)}$ of $X_{H_i}$ on $J^{-1}(\mu)$ is well-defined, and is given by
\begin{align} \label{eq:vector-field-restriction}
(\iota_\mu)_* X_{H_i}|_{J^{-1}(\mu)}(\iota_\mu(p)) = X_{H_i}(\iota_\mu(p)) \,,
\end{align}
for all $p \in J^{-1}(\mu)$.
By the definition of Hamiltonian vector fields, we have
\begin{align}
    &\quad X_{H_i} \intprod \omega = \dd H_i \nonumber \\
    &\Rightarrow \iota_\mu^*(X_{H_i} \intprod \omega) = \iota_\mu^* \dd H_i \nonumber \\
    &\stackrel{\eqref{eq:vector-field-restriction}}{\Rightarrow} \iota_\mu^*((\iota_\mu)_* X_{H_i}|_{J^{-1}(\mu)} \intprod \omega) = \iota_\mu^* \dd H_i \nonumber \\
    &\Rightarrow X_{H_i}|_{J^{-1}(\mu)} \intprod \iota_\mu^* \omega = \dd (\iota_\mu^* H_i) \nonumber \\
    &\stackrel{\eqref{eq:reduced-symplectic-form}}{\Rightarrow} X_{H_i}|_{J^{-1}(\mu)} \intprod \pi_\mu^* \omega_\mu = \dd (\pi_\mu^* h_i) \nonumber \\
    &\Rightarrow \pi_\mu^* ((\pi_\mu)_* X_{H_i}|_{J^{-1}(\mu)} \intprod \omega_\mu) = \pi_\mu^* (\dd h_i) \nonumber \\
    &\Rightarrow (\pi_\mu)_* X_{H_i}|_{J^{-1}(\mu)} \intprod \omega_\mu = \dd h_i \,, \label{eq:reduced-vector-field}
\end{align}
where we used the commutativity of the pull-back with the exterior derivative in the fourth line, and in the last line we used that $\pi_\mu^*$ is injective since $\pi_\mu$ is a submersion. In particular, \eqref{eq:reduced-vector-field} implies that $(\pi_\mu)_* X_{H_i}|_{J^{-1}(\mu)}$ is equivalent to the Hamiltonian vector field $X^\mu_{h_i}$, by definition.
\end{proof}

We now prove our main result.

\begin{proof}[Proof of Theorem \ref{prop:stoch-symplectic-reduction}]
Let $Z_0 \in P$ be such that $J(Z_0) = \mu$ and let $Z_0^\mu$ be the corresponding element in the submanifold $J^{-1}(\mu)$ such that $\iota_\mu(Z_0^\mu) = Z_0$. By Theorem \ref{prop:stoch-noether}, we have $J(\Phi_t(Z_0)) = \mu$ for all $t \in [0, T]$, hence there exists $Z_t^\mu \in J^{-1}(\mu)$ such that $\Phi_t(Z_0) = \iota_\mu(Z_t^\mu)$. Thus, defining $\psi_t(Z_0^\mu) := Z_t^\mu$ for all $t \in [0, T]$, we trivially have $\Phi_t \circ \iota_\mu = \iota_\mu \circ \psi_t$.

Next, we show that $\phi_t^\mu \circ \pi_\mu = \pi_\mu \circ \psi_t$. Restricted to $J^{-1}(\mu) \subseteq P$, the system \eqref{eq:general-symplectic-diffusion} can be written equivalently as
\begin{align*}
    \diff Z_t^\mu = \sum_{i=0}^N X_{H_i}|_{J^{-1}(\mu)}(Z_t^\mu) \circ \diff S_t^i \,.
\end{align*}
Now, by the Stratonovich chain rule \eqref{eq:strat-chain-rule}, we have
\begin{align*}
    \diff \pi_\mu(Z_t^\mu) &= \sum_{i=0}^N (\pi_\mu)_* X_{H_0}|_{J^{-1}(\mu)}(\pi_\mu(Z_t^\mu)) \circ \diff S_t^i = \sum_{i=0}^N X_{h_i}^\mu(\pi_\mu(Z_t^\mu)) \circ \diff S_t^i \,,
\end{align*}
where we used that $(\pi_\mu)_* X_{H_i}|_{J^{-1}(\mu)} = X^\mu_{h_i}$ (Lemma \ref{lemma:reduced-vector-field}). Hence, we see from \eqref{eq:reduced-symplectic-diffusion} that $\phi^\mu_t(\pi_\mu(Z_0^\mu)) = \pi_\mu(Z_t^\mu)$, and therefore $\phi_t^\mu \circ \pi_\mu = \pi_\mu \circ \psi_t$ since $Z_0^\mu$ was chosen arbitrarily. 
\end{proof}

\subsection{Proof of Proposition \ref{prop:max-entropy-principle}} \label{app:max-entropy-principle}

Recall that the Gibbs measure on a symplectic manifold $(P, \omega)$ is defined as the measure
\begin{align}\label{eq:gibbs-meas-app}
\mathbb P_\infty = Z^{-1} e^{-\beta H_0} |\,\omega^n|\,, \quad Z = \int_P e^{-\beta H_0} \mathrm{d} |\,\omega^n| \,.
\end{align}
This can be derived as the measure with the largest entropy among all probability measures that are absolutely continuous with respect to $|\omega^n|$ on $P$ with fixed average energy.

\begin{namedthm*}{Proposition \ref{prop:max-entropy-principle}}
    The Gibbs measure \eqref{eq:gibbs-meas-app} is a solution to the following constrained variational principle
    \begin{align}\label{eq:max-entropy-principle-app}
        \mathbb{P}_\infty =
        \argmax_{\mu >\!\!> \lambda} \left\{-H(\mu | \lambda)\right\}\,, \, \text{ such that } 
        \, \int_P H_0 \,\diff \mu = c \, \text{ and }
        \, \int_P \diff \mu = 1 \,.
    \end{align}
    Here, $c < \infty$ is some finite constant and $H(\mu | \lambda) := \int_P \frac{\diff \mu}{\diff \lambda} \log \frac{\diff \mu}{\diff \lambda} \diff \lambda$ is the relative entropy of the measure $\mu$ with respect to a reference measure $\lambda$. In particular, we take $\lambda = |\omega^n|$ as the reference measure on $P$. The first constraint fixes the average energy of the system and the second constraint ensures that $\mu$ is a probability measure.
\end{namedthm*}

\begin{proof}
    Since we are optimising over measures $\mu$ with $\mu >\!\!> \lambda$, we can set $\mu = \phi(x) \lambda$ for some positive function $\phi : P \rightarrow \mathbb{R}$ without loss of generality. Thus, introducing Lagrange multipliers $\beta, \gamma \in \mathbb{R}$, the constrained optimisation problem \eqref{eq:max-entropy-principle-app} can be re-formulated as finding the minima of the action functional
    \begin{align}
        S[\phi, \beta, \gamma] := \int_P \phi(x) \log \phi(x) \diff \lambda(x) + \beta \left(\int_P H_0(x) \phi(x) \,\diff \lambda(x) - c\right) + \gamma \left(\int_P \phi(x) \diff \lambda(x) - 1\right) \,.
    \end{align}
    Taking variations $\delta S$ with respect to $(\phi, \beta, \gamma)$ and setting $\delta S = 0$, we obtain the relations
    \begin{align}
        \delta \phi &: \quad \log \phi(x) + 1 + \beta H_0(x) + \gamma = 0 \label{eq:delta-phi}\\
        \delta \beta &: \quad \int_P H_0(x) \phi(x) \,\diff \lambda(x) = c \label{eq:delta-beta}\\
        \delta \gamma &: \quad \int_P \phi(x) \,\diff \lambda(x) = 1\,.\label{eq:delta-gamma}
    \end{align}
    The first equation \eqref{eq:delta-phi} yields
    \begin{align} \label{eq:phi-sol-app}
        \phi(x) = Z^{-1}\exp(- \beta H_0(x)), \quad \text{where} \quad Z := e^{1 + \gamma} \,.
    \end{align}
    From \eqref{eq:delta-gamma}, we can further deduce that
    \begin{align} \label{eq:norm-const-app}
        Z = \int_P e^{-\beta H_0(x)} \mathrm{d} \lambda(x) \,.
    \end{align}
    Finally, we show that the distribution \eqref{eq:phi-sol-app}--\eqref{eq:norm-const-app} is indeed the maximal entropy measure. Let $\mu_*$ be the probability measure corresponding to the density \eqref{eq:phi-sol-app}--\eqref{eq:norm-const-app}. Then for any $\mu >\!\!> \lambda$ satisfying $\int_P H_0 \,\diff \mu = c$ and $\int_P \diff \mu = 1$, we have
    \begin{align*}
        -H(\mu_* | \lambda) + H(\mu | \lambda) &= -\int_P \frac{\diff \mu_*}{\diff \lambda}(x) \log \frac{\diff \mu_*}{\diff \lambda}(x) \diff \lambda(x) + \int_P \frac{\diff \mu}{\diff \lambda}(x) \log \frac{\diff \mu}{\diff \lambda}(x) \diff \lambda(x) \\
        &\stackrel{\eqref{eq:phi-sol-app}}{=} \int_P (\log Z + \beta H_0(x)) \diff \mu_*(x) + \int_P \log \frac{\diff \mu}{\diff \lambda}(x) \diff \mu(x) \\
        &= \log Z + \beta c + \int_P \log \frac{\diff \mu}{\diff \lambda}(x) \diff \mu(x) \\
        &= \int_P (\log Z + \beta H_0(x)) \diff \mu(x) + \int_P \log \frac{\diff \mu}{\diff \lambda}(x) \diff \mu(x) \\
        &=  -\int_P \log \frac{\diff \mu_*}{\diff \lambda}(x) \diff \mu(x) + \int_P \log \frac{\diff \mu}{\diff \lambda}(x) \diff \mu(x) \\
        &= \int_P \log \frac{\diff \mu}{\diff \mu_*}(x) \diff \mu(x) \\
        &=: H(\mu | \mu_*) \,.
    \end{align*}
    Since the relative entropy is always positive, we have $-H(\mu_* | \lambda) + H(\mu | \lambda) = H(\mu | \mu_*) > 0$, as expected.
\end{proof}

\subsection{Proof of Proposition \ref{prop:symplectic-reduction-stoch-dissipative}} \label{app:proof-symplectic-reduction-stoch-dissipative}

The following gives an analogous statement to Proposition \ref{prop:stoch-symplectic-reduction} in the stochastic-dissipative case.

\begin{namedthm*}{Proposition \ref{prop:symplectic-reduction-stoch-dissipative}}
Consider the symplectic Langevin diffusion \eqref{eq:symplectic-langevin-eq} on the symplectic manifold $(P, \omega)$ such that all $\{H_i\}_{i=0}^N$ are $G$-invariant. Then for $Z_0 \in J^{-1}(\mu) \subseteq P$, if $Z_t$ solves the symplectic Langevin system \eqref{eq:symplectic-langevin-eq},  then $z_t^\mu := \pi_\mu(Z_t) \in P_\mu$ solves
\begin{align}\label{eq:reduced-symplectic-langevin-app}
\diff z_t^\mu = X_{h_0}^\mu(z_t^\mu) \,\diff t -\frac{\beta}{2}\sum_{i=1}^N\{h_0, h_i\}_{\mu}(z_t^\mu) \, X_{h_i}^\mu(z_t^\mu) \,\diff t  + \sum_{i=1}^N X_{h_i}^\mu(z_t^\mu)\circ \diff W_t^i \,.
\end{align}
\end{namedthm*}
\begin{proof}
    Following the proof in Proposition \ref{prop:stoch-symplectic-reduction}, we first show that $Z_t \in J^{-1}(\mu)$ for all $t$, which follows from Noether's theorem (Lemma \ref{lemma:noether-stoch-diss}), and then use the Stratonovich chain rule \eqref{eq:strat-chain-rule}
    \begin{align}
    \begin{split}
        \diff \pi_\mu(Z_t) &= (\pi_\mu)_*X_{H_0}|_{J^{-1}(\mu)}(\pi_\mu(Z_t)) \,\diff t + \sum_{i=1}^N (\pi_\mu)_*X_{H_i}|_{J^{-1}(\mu)}(\pi_\mu(Z_t)) \circ \diff W_t^i \\
        &\quad -\frac{\beta}{2}\sum_{i=1}^N (\pi_\mu)_* \big(\{H_0, H_i\}X_{H_i}\big)|_{J^{-1}(\mu)}(\pi_\mu(Z_t))\,\diff t \,,
    \end{split}\label{eq:reduced-langevin-computation}
    \end{align}
    to show that this is equivalent to \eqref{eq:reduced-symplectic-langevin}. To verify the last statement, we know from Lemma \ref{lemma:reduced-vector-field} that $(\pi_\mu)_*X_{H_i}|_{J^{-1}(\mu)}  = X^\mu_{h_i}$ for all $i=0, \ldots, N$. We also have
    \begin{align*}
        \{H_0, H_i\}|_{J^{-1}(\mu)}(Z_t) &= (\iota_\mu^* \omega)(X_{H_0}|_{J^{-1}(\mu)}, X_{H_i}|_{J^{-1}(\mu)})(Z_t) \\
        &= (\pi_\mu^* \omega_\mu) (X_{H_0}|_{J^{-1}(\mu)}, X_{H_i}|_{J^{-1}(\mu)})(Z_t) \\
        &= \omega_\mu \big((\pi_\mu)_* X_{H_0}|_{J^{-1}(\mu)}, (\pi_\mu)_* X_{H_i}|_{J^{-1}(\mu)}\big)(\pi_\mu(Z_t)) \\
        &= \omega_\mu \big(X_{h_0}^\mu, X_{h_i}^\mu\big)(\pi_\mu(Z_t)) \\
        &= \{h_0, h_i\}_{\mu}(\pi_\mu(Z_t))\,,
    \end{align*}
    giving us
    \begin{align*}
        (\pi_\mu)_* \big(\{H_0, H_i\}X_{H_i}\big)|_{J^{-1}(\mu)}(\pi_\mu(Z_t)) = \{h_0, h_i\}_{\mu}(\pi_\mu(Z_t)) X^\mu_{h_i}(\pi_\mu(Z_t)) \,.
    \end{align*}
    Putting this together, we see that \eqref{eq:reduced-langevin-computation} is equivalent to \eqref{eq:reduced-symplectic-langevin-app}.
\end{proof}

\subsection{Proof of Proposition \ref{prop:vector-field-orthonormality}}\label{app:proof-of-orthonormal-condition}
Denote by $\gamma : TM \times TM \rightarrow \mathbb{R}$ the Riemannian metric on $M$ and define the isomorphism $\sharp : T^*M \rightarrow TM$ by $\gamma(\alpha^\sharp, v) = \left<\alpha, v\right>_{T^*M \times TM}$ for any $\alpha \in \Omega^1(M)$ and $v \in \mathfrak{X}(M)$. We define the induced metric $\tilde{\gamma} : T^*M \times T^*M \rightarrow \mathbb{R}$ on the covectors by $\tilde{\gamma}(\alpha, \beta) = \gamma(\alpha^\sharp, \beta^\sharp)$.
We can also naturally lift the metric $\tilde{\gamma}$ to higher-order covectors, which we will denote by $\tilde{\gamma}^{(k)} : \bigwedge^kT^*M \times \bigwedge^kT^*M \rightarrow \mathbb{R}$. For example, on $k=2$, this has the coordinate expression
\begin{align}
    \tilde{\gamma}^{(2)}(\alpha, \beta) = \gamma^{ik} \gamma^{jl} \alpha_{ij} \beta_{kl}, \qquad \alpha, \beta \in T^*M \wedge T^*M \,,
\end{align}
where $\tilde{\gamma} = \gamma^{ij} \frac{\partial}{\partial x^i} \otimes \frac{\partial}{\partial x^j}$ is the coordinate expression for the cometric $\tilde{\gamma}$. When it is clear from context, we will omit the superscript from $\tilde{\gamma}^{(k)}$ and simply denote it by $\tilde{\gamma}$.

We now prove Proposition \ref{prop:vector-field-orthonormality}, which we restate below for convenience.

\begin{namedthm*}{Proposition \ref{prop:vector-field-orthonormality}}
    Let $\{\psi_i\}_{i \in \mathbb{Z}_+}$ be a set of exact two-forms that is orthonormal with respect to the inner-product $\tilde{\gamma}^n(\alpha, \beta) := \int_{M} \tilde{\gamma}(\alpha, \Delta^{n} \beta) \text{vol}$. The vector fields $\{\nabla^\perp \psi_i\}_{i \in \mathbb{Z}_+}$ in the Lie algebra $\mathfrak{X}_{\mathrm{vol}}(M)$ are then orthonormal with respect to the inner product $\gamma^{n-1} : \mathfrak{X}_{\mathrm{vol}}(M) \times \mathfrak{X}_{\mathrm{vol}}(M) \rightarrow \mathbb{R}$, defined by
    \begin{align}
        \gamma^{n-1}(u, v) := \int_M \gamma(u, (\Delta^\sharp)^{n-1} v) \mathrm{vol}\,,
    \end{align}
    where $\Delta^\sharp : \mathfrak{X}_{\mathrm{vol}}(M) \rightarrow \mathfrak{X}_{\mathrm{vol}}(M)$ is defined by $\left<\Delta \omega, u\right>_{T^*M \times TM} = \left<\omega, \Delta^\sharp u\right>_{T^*M \times TM}$.
\end{namedthm*}
\begin{proof}
    First we note that the Laplace-deRham operator is defined by
    \begin{align}
        \Delta = \dd \vec{\delta} + \vec{\delta}\dd\,,
    \end{align}
    where $\vec{\delta} : \Omega^2(M) \rightarrow \Omega^1(M)$ is the codifferential operator, defined by $\tilde{\gamma}(\dd \alpha, \beta) = \tilde{\gamma}(\alpha, \vec{\delta} \beta )$ for any $\alpha \in \Omega^1(M)$ and $\beta \in \Omega^2(M)$.
    For any $\phi, \psi \in \dd \Omega^1(M)$, since $\dd \phi = \dd \psi = 0$, we have
    \begin{align*}
        \Delta \phi &= (\dd \vec{\delta} + \vec{\delta}\dd) \phi = \dd \vec{\delta} \phi, \\
        \Delta^2 \phi &= (\dd \vec{\delta} + \vec{\delta}\dd) \Delta \phi = (\dd \vec{\delta} + \vec{\delta}\dd) \dd \vec{\delta} \phi = (\dd \vec{\delta})^2 \phi \\
        &\vdots \\
        \Delta^n \phi &= (\dd \vec{\delta} + \vec{\delta}\dd) \Delta^{n-1} \phi = \cdots = (\dd \vec{\delta})^n \phi \,,
    \end{align*}
    and likewise, $\Delta^n \psi = (\dd \vec{\delta})^n \psi$. Similarly, we can show that $\Delta^{n-1} \vec{\delta}\psi = (\vec{\delta}\dd)^{n-1} \vec{\delta}\psi$. Thus, we have
    \begin{align}
        \tilde{\gamma}^n(\phi, \psi) &= \int_{M} \tilde{\gamma}(\phi, \Delta^{n} \psi) \text{vol} = \int_{M} \tilde{\gamma}(\phi, (\dd \vec{\delta})^{n} \psi) \text{vol} \nonumber \\
        &= \int_{M} \tilde{\gamma}(\phi, \dd (\vec{\delta} \dd)^{n-1}\vec{\delta} \psi) \text{vol} = \int_{M} \tilde{\gamma}(\vec{\delta} \phi, (\vec{\delta} \dd)^{n-1}\vec{\delta} \psi) \text{vol} \nonumber \\
        &= \int_{M} \tilde{\gamma}(\vec{\delta} \phi, \Delta^{n-1}\vec{\delta} \psi) \text{vol} \,. \label{eq:gamma-phi-psi-intermediate}
    \end{align}
    Next, given an exact two-form $\psi$, we can define the corresponding vector field by $\nabla^\perp \psi := (\vec{\delta}\psi)^\sharp \in \mathfrak{X}_{\mathrm{vol}}(M)$, where the isomorphism $\sharp : T^*M \rightarrow TM$ is defined with respect to the Riemannian metric $\gamma$. This implies
    \begin{align*}
        \eqref{eq:gamma-phi-psi-intermediate} &= \int_{M} \tilde{\gamma}(\vec{\delta} \phi, \Delta^{n-1}\vec{\delta} \psi) \text{vol} = \int_{M} \left<\Delta^{n-1}\vec{\delta} \psi, (\vec{\delta} \phi)^\sharp\right>_{T^*M \times TM} \text{vol} \nonumber \\
        &= \int_{M} \left<\vec{\delta} \psi, (\Delta^\sharp)^{n-1}(\vec{\delta} \phi)^\sharp\right>_{T^*M \times TM} \text{vol} = \int_{M} \gamma((\vec{\delta} \psi)^\sharp, (\Delta^\sharp)^{n-1}(\vec{\delta} \phi)^\sharp) \text{vol} \nonumber \\
        &= \int_{M} \gamma(\nabla^\perp \psi, (\Delta^\sharp)^{n-1}\nabla^\perp \phi) \text{vol} = \gamma^{n-1}(\nabla^\perp \psi, \nabla^\perp \phi) \,.
    \end{align*}
    Hence, we have
    \begin{align*}
        \gamma^{n-1}(\nabla^\perp \psi_i, \nabla^\perp \psi_j) = \tilde{\gamma}^n(\psi_i, \psi_j) = \delta_{ij} \,,
    \end{align*}
    which proves our claim.
\end{proof}

\section{Symmetry reduction for semidirect product systems with noise and dissipation}\label{app:semidirect}
The Euler-Poincar\'e theorem (Theorem \ref{thm:EP}) can be extended to the case where the symmetry of the Lagrangian is \emph{broken} by its dependence on additional parameters. In addition to our group, $G$, suppose we also have a vector space, $V$, and a left\footnote{When instead considering right actions, the modifications to the theory are analogous to the deterministic case \cite{HMR1998}.} representation of $G$ on $V$. We will denote this representation by $\Phi_g:V\rightarrow V$ for each $g\in G$, and the corresponding dual representation by $\Phi_g^*:V^*\rightarrow V^*$. Note that $\Phi_g^*$ is in fact a \emph{right} action here, and the corresponding \emph{left} dual representation is $\Phi_{g^{-1}}^*$. In what follows, we have a semidirect product structure, $G \ltimes V$, and the equations are equivalent to Lie-Poisson equations on the semidirect product co-algebra, $\mathfrak{g}^* \ltimes V^*$. It should be noted that, the symmetry broken equations are \emph{not} the usual Euler-Poincar\'e equations applied to the group $G \ltimes V$, and are expressed instead on the space $\mathfrak{g}\ltimes V^*$. To go from the Hamiltonian picture to the Lagrangian picture in this case, we perform a Legendre transformation in $\mathfrak{g}^*$ only, and not in the representation space $V^*$. For further discussion of the relationship between symmetry breaking and semidirect product structures, see \cite{HMR1998} or \cite{GBR2009}.

\subsection{Lagrangian formulation}
We first consider a Lagrangian derivation of the system. In this case, we consider a Lagrangian\footnote{Note that the convention for the dependence of the Lagrangian to be on the dual space, $V^*$, is a consequence of the prior development of the deterministic theory in the Hamiltonian description before the Lagrangian.}, $L:TG\times V^*\rightarrow \mathbb{R}$, which is left $G$-invariant. In this section, we will consider left invariant Lagrangians and left representations. As in the deterministic case, it is possible to consider any combination of left/right invariant Lagrangians with a left/right representation of $G$ on $V$. We define a collection of Lagrangians, $L_{a_0}:TG\rightarrow \mathbb{R}$, smoothly parameterised by $a_0 \in V^*$, by $L_{a_0}(\cdot) = L(\cdot,a_0)$. Using the procedure introduced in \cite{HMR1998}, each $L_{a_0}$ is invariant under the lift to $TG$ of the left action of the isotropy group
\begin{equation}
    G_{a_0} = \{ g \in G : \Phi_g^*a_0 = a_0 \} \,,
\end{equation}
rather than the full group. Using the $G$-invariance of the Lagrangian, we may define the reduced Lagrangian by acting from the left with $g^{-1}$ as follows
\begin{equation}
    L(g_t,V_t,a_0) = L(e,T_gL_{g^{-1}}V_t, \Phi_{g}^*a_0) =:\ell(v_t,a_t) \,,
\end{equation}
where $v_t = T_gL_{g^{-1}}V_t$, and $a_t = \Phi_{g}^*a_0$. Note that the dual action of $g\in G$ on $V^*$ is a right action $\Phi^*_g:V^*\rightarrow V^*$, and therefore the required left action by the inverse is $\Phi^*_{(g^{-1})^{-1}} = \Phi^*_g V^*\rightarrow V^*$.

As in the setup of Theorem \ref{thm:stochastic_HP}, we can consider stochastic potentials, $\Gamma_i(g,a_0):G\times V^*\rightarrow \mathbb{R}$, which can also depend on the parameter $a_0 \in V^*$. In Theorem \ref{thm:EP}, these were not included since their lack of dependence on an element of the tangent space would not allow for meaningful reduction. In the semidirect product case, we take each stochastic potential $\Gamma$ to be $G$-invariant and ${\Gamma}_{a_0}(\cdot) = \Gamma(\cdot,a_0)$ to have a broken symmetry analogous to the Lagrangian. This allows us to define a collection of reduced stochastic potentials, $\gamma_i:V^*\rightarrow \mathbb{R}$, as
\begin{equation}
    \Gamma_i(g,a_0) = \Gamma_i(e,\Phi_{g}^*a_0) =: \gamma_i(a_t) \,.
\end{equation}
When taking variations with respect to $a_t$, we obtain variational derivatives defined through the natural pairing between $V$ and $V^*$. In order to vary the action with respect to both $v_t$ and $a_t$ and derive an equation of motion, we need to define a way to transform between the pairing between the vector space and its dual, $V\times V^*$, and that between the Lie algebra and co-algebra, $\mathfrak{g}\times \mathfrak{g}^*$. To do so, first note that our dual representation of $G$ on $V^*$ allows us to define the action of the Lie algebra on $V^*$ as the infinitesimal action of $G$ on $V^*$. That is, the representation $\Phi^*$ can be interpreted as a map $G\times V^* \rightarrow V^*$, and we may consider the tangent at the identity of this map in its Lie group valued argument. Thus the action of $\zeta\in\mathfrak{g}$, on elements of $V^*$, is given by $T_e\Phi^*_{\zeta}:V^* \rightarrow V^*$ \footnote{Note that many authors (see e.g. \cite{HMR1998}) denote this action by minus concatenation $ -\zeta a = T_e\Phi^*_{\zeta}a $ by convention.}.
\begin{definition}
    Given the representation $\Phi^*_g$ of $G$ on $V^*$ and the action of the Lie algebra on $V^*$ as defined above, we define the diamond operator, $\diamond: V \times V^* \rightarrow \mathfrak{g}^*$, by
    \begin{equation}
        \scp{b}{T_e\Phi_{\zeta}^*a}_{V\times V^*} = \scp{\zeta}{b \diamond a}_{\mathfrak{g}\times\mathfrak{g}^*}\,,
    \end{equation}
    where $b \in V$, $a \in V^*$, and $\zeta \in \mathfrak{g}$.
\end{definition}

\begin{lemma}[Constrained variation of $a_t$]\label{lemma:variation_a}
    For $a_t$ defined by $a_t = \Phi_{g}^*a_0$, we have
    \begin{equation}
        \delta a_t = T_e\Phi^*_{\eta}a_t \,,
    \end{equation}
    where $\eta = T_gL_{g^{-1}}\delta g$ is defined by its relationship to the arbitrary variation in $g$.
\end{lemma}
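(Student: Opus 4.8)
The plan is to reduce the computation of $\delta a_t$ to a single application of the right-action property of $\Phi^*$ together with the definition of the infinitesimal action $T_e\Phi^*_\zeta$. The crucial observation is that, unlike the constraint term involving $\circ\,\diff g_t$, the advected quantity $a_t = \Phi^*_{g_t} a_0$ is obtained by applying the smooth, $g$-dependent map $\Phi^*_{(\cdot)} a_0 : G \to V^*$ to the \emph{fixed} initial datum $a_0$. Hence its variation is purely pointwise in $t$ and requires no stochastic calculus whatsoever; I may simply differentiate in $\epsilon$.

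Concretely, take the variational family $g_{\epsilon,t} = g_t \cdot e_{\epsilon,t}$ of Lemma \ref{lemma:variation_g_evolution}, so that $e_{0,t} = e$ and, by \eqref{eq:variation-of-perturbation}, $\left.\tfrac{\partial}{\partial\epsilon}\right|_{\epsilon=0} e_{\epsilon,t} = \eta_t$. First I would record the right-action identity for the dual representation: since $\Phi$ is a left representation, $\Phi_{gh} = \Phi_g \Phi_h$, and dualising reverses the order to give $\Phi^*_{gh} = \Phi^*_h \Phi^*_g$, i.e. $\Phi^*$ is a right action as noted in the text. Applying this with the split $g_{\epsilon,t} = g_t \cdot e_{\epsilon,t}$ yields
\begin{equation*}
    a_{\epsilon,t} := \Phi^*_{g_{\epsilon,t}} a_0 = \Phi^*_{e_{\epsilon,t}} \Phi^*_{g_t} a_0 = \Phi^*_{e_{\epsilon,t}} a_t \,,
\end{equation*}
where in the last step I used the definition $a_t = \Phi^*_{g_t} a_0$.

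It then remains only to differentiate this expression at $\epsilon = 0$. Since $e_{0,t} = e$ and $\left.\tfrac{\partial}{\partial\epsilon}\right|_{\epsilon=0} e_{\epsilon,t} = \eta_t$, the chain rule together with the very definition of the infinitesimal action $T_e\Phi^*_\zeta := \left.\tfrac{d}{ds}\right|_{s=0} \Phi^*_{c(s)}(\cdot)$ (for any curve $c$ with $c(0)=e$ and $c'(0)=\zeta$) gives
\begin{equation*}
    \delta a_t = \left.\frac{\partial}{\partial\epsilon}\right|_{\epsilon=0} \Phi^*_{e_{\epsilon,t}} a_t = T_e\Phi^*_{\eta_t} a_t \,,
\end{equation*}
which is the claimed identity, using that $\eta_t = T_{g_t}L_{g_t^{-1}} \delta g_t$ by inverting \eqref{eq:delta-g}. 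For a general (not necessarily Euler--Poincar\'e) variation $g_\epsilon$ the argument is identical upon setting $h_\epsilon := g^{-1} g_\epsilon$, for which $h_0 = e$ and $\left.\partial_\epsilon h_\epsilon\right|_{\epsilon=0} = T_g L_{g^{-1}} \delta g = \eta$.

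The only point demanding care --- and the sole place the argument could go wrong --- is the left/right bookkeeping: one must verify that dualisation indeed reverses the order of composition, so that the perturbation factor $e_{\epsilon,t}$ lands on the \emph{outside} (left) of $\Phi^*_{g_t} a_0$ and can therefore be differentiated via the infinitesimal action at the identity. Had $\Phi^*$ been a left action, the factor would appear on the inside and one would instead pick up an $\mathrm{Ad}$-type conjugation, exactly as in Lemma \ref{lemma:variation_g_evolution}; here no such conjugation arises precisely because $a_0$ is a fixed element that the perturbation multiplies from the correct side.
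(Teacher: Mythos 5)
Your proposal is correct and is essentially the paper's own argument: both proofs rest entirely on the fact that $\Phi^*$ is a right action, $\Phi^*_{gh} = \Phi^*_h\Phi^*_g$, which places the perturbation on the side where differentiation produces the infinitesimal action at the identity. The only difference is the order of operations — the paper differentiates $a_t = \Phi^*_{g_t}a_0$ first and then invokes the differentiated composition identity $T_g\Phi^*_{V}\Phi^*_h = T_{hg}\Phi^*_{T_gL_hV}$ with $h = g^{-1}$, whereas you apply the composition law at the group level first, writing $a_{\epsilon,t} = \Phi^*_{e_{\epsilon,t}}a_t$, and then differentiate at the identity; the two computations are identical in content.
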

\begin{proof}
    By the definition of the curve $a_t$ in $V^*$, we have
    \begin{equation}
        \delta a_t = \delta (\Phi_g^*) a_0 = \delta \Phi_g^* (\Phi_g^*)^{-1}a_t = (T_g\Phi^*_{\delta g})(\Phi_{g^{-1}}^*) a_t \,,
    \end{equation}
    where we have used the identity $\Phi_{g^{-1}}^* = (\Phi_g^*)^{-1}$.
    Since the dual representation of $G$ on $V^*$ is a \emph{right} action, and the definition of right action is $\Phi^*_g\Phi^*_h = \Phi^*_{hg}$, we make take the following derivative of this identity, $T_g\Phi^*_{V}\Phi^*_h = T_{hg}\Phi^*_{T_gL_hV}$, for $V\in T_gG$. This identity can be applied with $h = g^{-1}$ to give
    \begin{equation}
        \delta a_t = T_e\Phi^*_{T_gL_{g^{-1}}\delta g}\,a_t =: T_e\Phi^*_{\eta}a_t \,,
    \end{equation}
    as required.
\end{proof}
This allows us to state an extension of Theorem \ref{thm:EP} to semidirect product spaces.
\begin{theorem}[The stochastic Euler-Poincar\'e theorem for semidirect product Lie algebras]\label{thm:EP_sdp}
	Let $S_t$ be a driving semimartingale and assume we have a left-invariant Lagrangian, $L:TG\times V^*\rightarrow \mathbb{R}$, a collection of Lie algebra-valued objects, $\sigma_i$, which do not depend on time, and stochastic potentials $\Gamma_i:G\times V^*\rightarrow \mathbb{R}$. Assume also that we have functions $L_{a_0}$, $({\Gamma_i})_{a_0}$, $\ell$, and $\gamma_i$ as in the above preamble. Furthermore, we relate each Lie algebra-valued stochastic perturbation term, $\xi_i$, to the corresponding term in Theorem \ref{thm:stochastic_HP} via group action, as $\Xi_i(g) = T_eL_g\xi_i$. Then the following are equivalent:
	\begin{enumerate}
	 \item The unreduced Hamilton-Pontryagin principle, as stated in Theorem \ref{thm:stochastic_HP}, holds for each $L_{a_0}$ and $({\Gamma_i})_{a_0}$.
	 \item The curve, $(g,V)\in TG$, satisfies the stochastic Euler-Lagrange equations \eqref{eq:stoch-euler-lagrange-mom}--\eqref{eq:stoch-euler-lagrange-pos}, for each $L_{a_0}$ and $({\Gamma_i})_{a_0}$.
	 \item The reduced Hamilton-Pontryagin principle
	 \begin{equation}\label{eqn:reduced_stochastic_HP_action_sdp}
	 	0 = \delta\int_{t_0}^{t_1}\ell(v,a) \,\diff t + \sum_{i\geq 1}\gamma_i(a)\circ \diff S_t^i + \scp{\mu}{T_{g}L_{{g}^{-1}}(\circ\diff g) - v\circ \diff S_t^0 - \sum_{i\geq 1}\xi_i\circ \diff S_t^i} \,,
	 \end{equation}
	 holds for $(g,v,\mu,a) \in G\times \mathfrak{g} \times \mathfrak{g}^* \times V^*$.
	 \item The following stochastic Euler-Poincar\'e equations hold
	 \begin{align}\label{eqn:stochastic_EP_sdp}
	 	\diff \frac{\delta\ell}{\delta v} &= 
        \left(\ad^*_{v}\frac{\delta\ell}{\delta v} + \frac{\delta\ell}{\delta a}\diamond a\right) \diff t + \sum_{i \geq 1} \left( \ad^*_{\xi_i}\frac{\delta\ell}{\delta v} + \frac{\delta\gamma_i}{\delta a}\diamond a\right)\circ \diff S_t^i
        \,,\\
        \diff a &= T_e\Phi^*_{v}\,a \,\diff t + \sum_{i\geq 1} T_e\Phi^*_{\xi_i}\,a\circ \diff S_t^i \label{eqn:stochastic_EP_sdp2}
        \,,
	 \end{align}
     where $\ad^*$ is the dual of $\ad$ with respect to the natural pairing between $\mathfrak{g}$ and its dual space $\mathfrak{g}^*$.
 	\end{enumerate}
\end{theorem}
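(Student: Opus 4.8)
The plan is to mirror the proof of Theorem \ref{thm:EP}, now carrying the advected parameter $a_t$ as an additional variable and handling its contribution through the diamond operator. The equivalences $\emph{1} \Leftrightarrow \emph{2}$ follow immediately from Theorem \ref{thm:stochastic_HP} applied to the fixed-parameter objects $L_{a_0}$ and $(\Gamma_i)_{a_0}$, which for fixed $a_0$ are ordinary Lagrangians and stochastic potentials on $TG$. The equivalence $\emph{1} \Leftrightarrow \emph{3}$ follows from the left $G$-invariance of $L$ and the $\Gamma_i$: writing $\mu = T^*_eL_g p$, $v = T_gL_{g^{-1}}V$, and $a_t = \Phi_g^* a_0$, the unreduced action evaluated along $(g_t, V_t, p_t)$ coincides termwise with the reduced action \eqref{eqn:reduced_stochastic_HP_action_sdp}, where $T_gL_{g^{-1}}(\circ\,\diff g)$ is the well-defined object from \eqref{eq:define-g-inv-dg} furnished by the compatibility assumption. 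Thus the crux is to establish $\emph{3} \Rightarrow \emph{4}$.

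To carry this out, I would take variations of \eqref{eqn:reduced_stochastic_HP_action_sdp}, treating $v_t, \mu_t$ as free and constraining the variations of $g_t$ and $a_t$. Varying $\mu_t$ recovers the constraint $T_gL_{g^{-1}}(\circ\,\diff g) = v\circ\diff S_t^0 + \sum_{i\geq 1}\xi_i\circ\diff S_t^i$, i.e. $w^0_t = v_t$ and $w^i_t = \xi_i$ in the notation of \eqref{eqn:compatibility_g_t}; varying $v_t$ yields $\mu_t = \delta\ell/\delta v_t$. The variation of $g_t$ is performed along the family $g_{\epsilon,t} = g_t\cdot e_{\epsilon,t}$ of Corollary \ref{cor:stoch-lin-constraints}, producing the stochastic Lin constraints \eqref{eq:stoch-lin-0}--\eqref{eq:stoch-lin-i} and, after integration by parts in the $\dot\eta_t$ term exactly as in the proof of Theorem \ref{thm:EP}, the contribution
\begin{equation*}
-\int_{t_0}^{t_1}\scp{\circ\,\diff\mu_t - \ad^*_{w_t^0}\mu_t\,\diff t - \sum_{i\geq 1}\ad^*_{w_t^i}\mu_t \circ\diff S_t^i}{\eta_t}\,.
\end{equation*}

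The genuinely new ingredient is the variation of the advected quantity, which by Lemma \ref{lemma:variation_a} is $\delta a_t = T_e\Phi^*_{\eta}a_t$ with the \emph{same} $\eta_t = T_gL_{g^{-1}}\delta g$ that governs the Lin constraints. Applying the definition of the diamond operator converts the $V\times V^*$ pairing into a $\mathfrak g\times\mathfrak g^*$ pairing,
\begin{equation*}
\scp{\frac{\delta\ell}{\delta a}}{T_e\Phi^*_{\eta}a_t}_{V\times V^*} = \scp{\eta_t}{\frac{\delta\ell}{\delta a}\diamond a_t}_{\mathfrak g\times\mathfrak g^*}\,,
\end{equation*}
and likewise for each $\delta\gamma_i/\delta a$. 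Collecting all terms proportional to $\eta_t$, substituting $w^0_t = v_t$ and $w^i_t = \xi_i$, and invoking Lemma \ref{lemma:fundamental} together with Corollary \ref{cor:fundamental_lemma} then yields \eqref{eqn:stochastic_EP_sdp}. The advection equation \eqref{eqn:stochastic_EP_sdp2} is obtained by differentiating $a_t = \Phi_{g_t}^* a_0$ in time: the Stratonovich chain rule combined with the Lie-group identity $T_g\Phi^*_{V}\Phi^*_{g^{-1}} = T_e\Phi^*_{T_gL_{g^{-1}}V}$ used in Lemma \ref{lemma:variation_a} gives $\diff a_t = \sum_{i\geq 0}T_e\Phi^*_{w_t^i}a_t\circ\diff S_t^i$, which reduces to \eqref{eqn:stochastic_EP_sdp2} on solutions.

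The main obstacle I anticipate is organisational rather than conceptual: ensuring that the \emph{same} arbitrary curve $\eta_t$ simultaneously controls the constrained variations of both $g_t$ (via the Lin constraints) and $a_t$ (via Lemma \ref{lemma:variation_a}), so that all $\eta_t$-terms can be gathered before the fundamental lemma is applied. One must also track that $\Phi^*$ is a \emph{right} action, so that the infinitesimal generator $T_e\Phi^*_\zeta$ and the diamond operator carry the correct signs; this is the only point where the semidirect-product structure differs substantively from a direct transcription of the proof of Theorem \ref{thm:EP}, and is where I would be most careful.
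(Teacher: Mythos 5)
Your proposal is correct and takes essentially the same route as the paper: the paper's proof of Theorem \ref{thm:EP_sdp} consists precisely of the remark that the argument of Theorem \ref{thm:EP} carries over, with the only new technicality being the constrained variation of the advected parameter supplied by Lemma \ref{lemma:variation_a}. Your expanded version --- equivalences \emph{1}$\Leftrightarrow$\emph{2} and \emph{1}$\Leftrightarrow$\emph{3} as in Theorem \ref{thm:EP}, then variations of \eqref{eqn:reduced_stochastic_HP_action_sdp} with the stochastic Lin constraints for $g_t$, the variation $\delta a_t = T_e\Phi^*_{\eta}a_t$ converted via the diamond operator, the fundamental lemma to extract \eqref{eqn:stochastic_EP_sdp}, and the Stratonovich chain rule applied to $a_t = \Phi^*_{g_t}a_0$ for \eqref{eqn:stochastic_EP_sdp2} --- is exactly the intended argument, including the sign caution for the right action $\Phi^*$.
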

The proof of this theorem closely follows that of Theorem \ref{thm:EP}. The only additional technicality lies in the form of the equation for the parameter $a$, which follows from Lemma \ref{lemma:variation_a}.

\begin{corollary}
    The stochastic Euler-Poincar\'e equations corresponding to Theorem \ref{thm:EP_sdp} can be expressed as the Lie-Poisson equation
    \begin{equation}
        \diff f(\mu,a) = \sum_i \{ f,h_i \}_{-} \circ \diff S_t^i \,,
    \end{equation}
    where $f$ is a function on the Lie co-algebra and $\{ \cdot , \cdot \}_{-}$ is the $(-)$ Lie-Poisson bracket in the case where there exists a \emph{left} representation of $G$ on $V$.
\end{corollary}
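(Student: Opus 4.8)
The plan is to show that the semidirect-product Euler--Poincar\'e equations \eqref{eqn:stochastic_EP_sdp}--\eqref{eqn:stochastic_EP_sdp2} are precisely the stochastic Lie--Poisson equations \eqref{eq:stochastic-lie-poisson} written on the dual of the semidirect-product Lie algebra $\mathfrak{s} := \mathfrak{g}\ltimes V$, so that the corollary follows by exactly the computation of Section \ref{sec:lie-poisson} with $\mathfrak{g}$ replaced by $\mathfrak{s}$. First I would fix the Lie bracket on $\mathfrak{s}$ induced by the left representation,
\begin{equation*}
    [(\xi, u), (\eta, w)]_{\mathfrak{s}} = \left([\xi, \eta]_{\mathfrak{g}}, \; T_e\Phi_\xi w - T_e\Phi_\eta u\right),
\end{equation*}
and compute the corresponding coadjoint action on $\mathfrak{s}^* = \mathfrak{g}^*\times V^*$. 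A direct pairing computation, using $\langle a, T_e\Phi_\xi w\rangle_{V^*\times V} = \langle T_e\Phi^*_\xi a, w\rangle$ for the $\Phi_\xi$-term and the diamond operator $\langle b, T_e\Phi^*_\zeta a\rangle_{V\times V^*} = \langle \zeta, b\diamond a\rangle_{\mathfrak{g}\times\mathfrak{g}^*}$ to move the $V$-action onto $\mathfrak{g}^*$, yields
\begin{equation*}
    \ad^*_{(\xi, u)}(\mu, a) = \left(\ad^*_\xi \mu - u\diamond a, \; T_e\Phi^*_\xi a\right) \in \mathfrak{g}^*\times V^*.
\end{equation*}

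Next I would identify the Hamiltonians on $\mathfrak{s}^*$ produced by the \emph{partial} Legendre transform in the $\mathfrak{g}^*$-slot only, as dictated by Theorem \ref{thm:EP_sdp}. Setting $h_0(\mu, a) = \langle \mu, v\rangle - \ell(v, a)$ with $v = v(\mu, a)$ gives $\delta h_0/\delta\mu = v$ and $\delta h_0/\delta a = -\delta\ell/\delta a$, while the noise Hamiltonians $h_i(\mu, a) = \langle \mu, \xi_i\rangle - \gamma_i(a)$ give $\delta h_i/\delta\mu = \xi_i$ and $\delta h_i/\delta a = -\delta\gamma_i/\delta a$. Substituting these variational derivatives into $\ad^*_{\delta h_i/\delta(\mu,a)}(\mu, a)$ and reading off the two components, the $\mathfrak{g}^*$-component reproduces $\ad^*_v\tfrac{\delta\ell}{\delta v} + \tfrac{\delta\ell}{\delta a}\diamond a$ in the drift and $\ad^*_{\xi_i}\tfrac{\delta\ell}{\delta v} + \tfrac{\delta\gamma_i}{\delta a}\diamond a$ in the noise, after using $\mu = \delta\ell/\delta v$ and the sign flip $-(-\delta\ell/\delta a)\diamond a = +(\delta\ell/\delta a)\diamond a$, whereas the $V^*$-component reproduces $T_e\Phi^*_v a$ and $T_e\Phi^*_{\xi_i} a$. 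This shows that \eqref{eqn:stochastic_EP_sdp}--\eqref{eqn:stochastic_EP_sdp2} coincide with $\diff(\mu_t, a_t) = \sum_i \ad^*_{\delta h_i/\delta(\mu,a)}(\mu_t, a_t)\circ \diff S_t^i$.

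Finally, for any $f\in C^\infty(\mathfrak{s}^*;\mathbb{R})$ I would apply the Stratonovich chain rule \eqref{eq:strat-chain-rule} together with the elementary identity $\{f, g\}_- = \langle \ad^*_{\delta g/\delta(\mu,a)}(\mu,a), \, \delta f/\delta(\mu,a)\rangle$, valid for the $(-)$ Lie--Poisson bracket (cf.\ \eqref{eq:lie-poisson}) on $\mathfrak{s}^*$, to conclude $\diff f(\mu_t, a_t) = \sum_i \{f, h_i\}_-\circ \diff S_t^i$, as claimed. I expect the only genuine difficulty to be the bookkeeping of sign and side conventions: the dual representation $\Phi^*_g$ is a \emph{right} action, so its infinitesimal generator carries a sign relative to a left action; the diamond operator must be oriented consistently with that choice; and one must keep the Legendre transform partial, so that the $a$-derivatives of $h_i$ acquire the minus sign that converts $-\delta\ell/\delta a$ back into $+(\delta\ell/\delta a)\diamond a$ in the final equation. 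Once these conventions are pinned down, the remaining verification is purely algebraic and parallels the non-semidirect case treated in Section \ref{sec:lie-poisson}.
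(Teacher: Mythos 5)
Your proposal is correct, and it reaches the same conclusion as the paper, but it is organised differently. The paper's proof never explicitly forms the semidirect-product Lie algebra $\mathfrak{s}=\mathfrak{g}\ltimes V$: it performs the same partial Legendre transform $\ell(v,a)=\langle\mu,v\rangle-h(\mu,a)$, $\gamma_i(a)=\langle\mu,\xi_i\rangle-h_i(\mu,a)$, writes the equations in a Hamiltonian "matrix" form $\diff(\mu,a)^\top = \mathbb{J}(\mu,a)\,(\delta h_i/\delta\mu,\delta h_i/\delta a)^\top\circ\diff S_t^i$, and then obtains the bracket $\{\cdot,\cdot\}_-$ \emph{a posteriori}, by pairing $\diff f(\mu,a)$ against the right-hand side, shuffling the pairings through $\ad$, $\diamond$ and $T_e\Phi^*$, and \emph{defining} $\{f,h_i\}_-$ to be the resulting expression. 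You instead compute the bracket \emph{a priori}: you write down $[(\xi,u),(\eta,w)]_{\mathfrak{s}}=\bigl([\xi,\eta],\,T_e\Phi_\xi w - T_e\Phi_\eta u\bigr)$, derive $\ad^*_{(\xi,u)}(\mu,a)=\bigl(\ad^*_\xi\mu - u\diamond a,\,T_e\Phi^*_\xi a\bigr)$ (your sign bookkeeping here is consistent with the paper's conventions, in which $T_e\Phi^*_\zeta$ is the infinitesimal generator of the \emph{right} dual action), check that equations \eqref{eqn:stochastic_EP_sdp}--\eqref{eqn:stochastic_EP_sdp2} are exactly $\diff(\mu_t,a_t)=\sum_i\ad^*_{\delta h_i/\delta(\mu,a)}(\mu_t,a_t)\circ\diff S_t^i$, and then invoke the generic identity $\{f,g\}_-=\langle\ad^*_{\delta g/\delta(\mu,a)}(\mu,a),\delta f/\delta(\mu,a)\rangle$ together with the Stratonovich chain rule. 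What your route buys is conceptual clarity: the corollary becomes a literal instance of the Lie--Poisson theory of Section \ref{sec:lie-poisson} applied to the Lie algebra $\mathfrak{s}$, so the bracket is identified as the canonical $(-)$ Lie--Poisson bracket of the semidirect product rather than an expression read off from a calculation (this is also the viewpoint of the paper's Hamiltonian subsection, where semidirect reduction is treated as Lie--Poisson reduction on $T^*S$). What the paper's route buys is self-containment: it verifies the bracket form directly from the equations of motion without pre-computing the coadjoint action on $\mathfrak{s}^*$. Both proofs hinge on the same two ingredients -- the \emph{partial} Legendre transform (in the $\mathfrak{g}^*$ slot only, yielding $\delta h_i/\delta a=-\delta\gamma_i/\delta a$) and the chain-rule pairing step -- so there is no gap in your argument.
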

\begin{proof}
We begin by performing a Legendre transform
\begin{align}
	\ell(v,a) &= \scp{\mu}{v} - h(\mu,a)
	\,,\label{semidirect-legendre-transform}\\
	\gamma_i(a) &= \scp{\mu}{\xi_i} - h_i(\mu,a)
	\,,
\end{align}
and deriving the following Lie-Poisson equations
\begin{equation}
    \diff \begin{bmatrix} \mu \\ a \end{bmatrix} = \begin{bmatrix} \ad^*_{\Box}\mu & -\Box \diamond a \\ T_e\Phi^*_{\Box}\,a & 0 \end{bmatrix}\begin{bmatrix} {\delta h}/{\delta \mu} \\ {\delta  h}/{\delta a} \end{bmatrix} \diff t + \sum_{i\geq 1}\begin{bmatrix} \ad^*_{\Box}\mu & -\Box \diamond a \\ T_e\Phi^*_{\Box}\,a & 0 \end{bmatrix}\begin{bmatrix} {\delta h_i}/{\delta \mu} \\ {\delta  h_i}/{\delta a} \end{bmatrix} \circ \diff S_t^i \,,
\end{equation}
which are equivalent to equations \eqref{eqn:stochastic_EP_sdp} and \eqref{eqn:stochastic_EP_sdp2}. This equivalence follows directly from the relationships between the variables induced by the Legendre transformation, that is
\begin{equation}
    v = \frac{\delta h}{\delta \mu} \,, \quad \xi_i = \frac{\delta h_i}{\delta \mu} \,, \quad \mu = \frac{\delta\ell}{\delta v} \,,\quad \frac{\delta\ell}{\delta a} = -\frac{\delta h}{\delta a} \,,\quad\hbox{and}\quad \frac{\delta\gamma_i}{\delta a} = -\frac{\delta h_i}{\delta a} \,.
\end{equation}
Denoting $h=h_0$ for convenience, a function defined on the semidirect product co-algebra evolves according to
\begin{align}
    \diff f(\mu,a) &= -\scp{\left( \frac{\delta f}{\delta \mu} , \frac{\delta f}{\delta a} \right)}{\left( -\sum_i\ad^*_{\delta h_i / \delta \mu}\mu\circ \diff S_t^i + \sum_i \frac{\delta h_i}{\delta a}\diamond a\circ \diff S_t^i \,,\, -\sum_i T_e\Phi^*_{\delta h_i / \delta \mu} a \circ \diff S_t^i \right)}
    \nonumber \\
    &= \scp{\frac{\delta f}{\delta \mu}}{\sum_i\ad^*_{\delta h_i / \delta \mu}\mu\circ \diff S_t^i} - \scp{\frac{\delta f}{\delta \mu}}{\sum_i \frac{\delta h_i}{\delta a}\diamond a\circ \diff S_t^i} + \scp{\frac{\delta f}{\delta a}}{\sum_i T_e\Phi^*_{\delta h_i / \delta \mu} a \circ \diff S_t^i}
    \nonumber \\
    &= \sum_i \left[ \scp{ \ad_{\delta h_i / \delta \mu} \frac{\delta f}{\delta \mu} }{\mu} - \scp{ \frac{\delta h_i}{\delta a} }{T_e\Phi^*_{{\delta f}/{\delta \mu}}a} + \scp{\frac{\delta f}{\delta a}}{ T_e\Phi^*_{\delta h_i / \delta \mu} a} \right] \circ \diff S_t^i
    \nonumber \\
    &= -\sum_i \scp{(\mu,a)}{\left( \left[ \frac{\delta f}{\delta \mu} , \frac{\delta h_i}{\delta \mu} \right] \,,\, T_e\Phi_{{\delta f}/{\delta \mu}}\frac{\delta h_i}{\delta a} - T_e\Phi_{\delta h_i / \delta \mu} \frac{\delta f}{\delta a} \right)} \circ \diff S_t^i \nonumber \\
    &=: \sum_i \{ f , h_i \}_{-} \circ \diff S_t^i \,, \label{eq:semidirect-equation}
\end{align}
where $\{ \cdot, \cdot\}_{-}$ is defined by the final line of the above calculation. 
\end{proof}

We have thus demonstrated that the stochastic Euler-Poincar\'e theorem, formulated through the Hamilton-Pontryagin approach, yields equations which are a natural extension of the Lie-Poisson system.

\subsection{Hamiltonian formulation}

From a Hamiltonian perspective, the derivation of the system simply follows from semi-direct product reduction theory \cite{marsden1984semidirect}. This states that if a Hamiltonian $H_{a_0} : T^*G \rightarrow \mathbb{R}$ for some $a_0 \in V^*$ is invariant under the action of an isotropy subgroup $G_{a_0} = \{g \in G : \Phi_g^* a_0 = a_0\}$ and the {\em extended Hamiltonian} $H : T^*G \times V^* \rightarrow \mathbb{R}$ defined by $H(T_e^*L_g \alpha_g, \Phi^*_g a_0) = H_{a_0}(\alpha_g)$ is invariant under the induced left $G$-action $\widetilde{L}_h : (\alpha_g, a) \mapsto (T_g^*L_{h^{-1}} \alpha_g, \Phi_{h^{-1}}^* a)$ for any $h \in G$, then there is a reduced Hamiltonian system on the dual Lie algebra $\mathfrak{s}^* = \mathfrak{g}^* \ltimes V^*$ of the semi-direct product group $S := G \ltimes V$. Denoting by $J_R : T^*S \rightarrow \mathfrak{s}^*$ the right momentum map $J_R(\alpha_g, a_0) = (T_e^*L_g \alpha_g, \Phi^*_g a_0)$ corresponding to the right action of $S$ on  $T^*S$, the reduced Hamiltonian $h : \mathfrak{s}^*_- \rightarrow \mathbb{R}$ reads $h \circ J_R = H$ and the canonical bracket on $T^*G$ induces a {\em semi-direct product bracket} on $\mathfrak{s}^*_-$, given by
\begin{align} \label{eq:semidirect-bracket}
    \{f, h\}_{\mathfrak{s}^*_-}(\mu, a) = -\left<\mu, \left[\frac{\delta f}{\delta \mu}, \frac{\delta h}{\delta \mu}\right]\right>_{\mathfrak{g}^* \times \mathfrak{g}} - \left<a, T_e \Phi_{{\delta f}/{\delta \mu}} \frac{\delta h}{\delta a}  - T_e \Phi_{{\delta h}/{\delta \mu}} \frac{\delta f}{\delta a} \right>_{V^* \times V},
\end{align}
where $T_e \Phi : \mathfrak{g} \rightarrow \mathfrak{gl}(V)$ is the tangent map of the group representation $\Phi : G \rightarrow GL(V)$ at the identity (i.e., the Lie algebra representation).

We can easily construct a stochastic extension of the system preserving the structure of the deterministic system by considering noise Hamiltonians of the form
\begin{align}
    H_i(\alpha_g, a_0) = \sigma \left<J(\alpha_g, a_0), \left(\xi_i, -\frac{\delta \gamma_i}{\delta a_0}\right)\right>_{\mathfrak{s}^* \times \mathfrak{s}},
\end{align}
for some $\xi_i \in \mathfrak{g}$ and $\gamma_i : V^* \rightarrow \mathbb{R}$, which therefore yields the reduced Hamiltonian
\begin{align}\label{eq:semidirect-noise-hamiltonians}
    h_i(\mu, a) = \sigma \left<\left(\mu, a\right), \left(\xi_i, -\frac{\delta \gamma_i}{\delta a_0}\right)\right>_{\mathfrak{s}^* \times \mathfrak{s}}.
\end{align}
Plugging these expressions into \eqref{eq:stochastic-poisson-system} yields \eqref{eq:semidirect-equation} and in particular, the dynamics on the variables $(\mu, a)$ read
\begin{align}
    \diff \mu &= \left(\ad^*_{\delta h / \delta \mu} \mu - \frac{\delta h}{\delta a} \diamond a \right) \diff t + \sigma \sum_{i \geq 1} \left(\ad^*_{\xi_i} \mu + \frac{\delta \gamma_i}{\delta a} \diamond a \right) \circ \diff S_t^i, \\
    \diff a &= T_e\Phi^*_{\delta h / \delta \mu} a \,\diff t + \sigma \sum_{i 
    \geq 1} T_e \Phi_{\xi_i}^* a \circ \diff S_t^i,
\end{align}
which we see is equivalent to \eqref{eqn:stochastic_EP_sdp}--\eqref{eqn:stochastic_EP_sdp2} by identifying
\begin{align*}
    \frac{\delta h}{\delta \mu} = v, \quad \frac{\delta \ell}{\delta v} = \mu, \quad \frac{\delta h}{\delta a} = - \frac{\delta \ell}{\delta a},
\end{align*}
which follows from the Legendre transform \eqref{semidirect-legendre-transform}.

\subsection{Semi-direct product systems with noise and dissipation}\label{app:semidirect-noise-dissipation}
Likewise, we can derive a system with structure-preserving dissipation via the framework in Section \ref{sec:double-bracket-dissipation}, since semi-direct product reduction can simply viewed as Lie-Poisson reduction on the extended space $T^*S$ with symmetry group $S$. Plugging our expressions \eqref{eq:semidirect-bracket}--\eqref{eq:semidirect-noise-hamiltonians} into \eqref{eq:double-symmetric-bracket}--\eqref{eq:lie-poisson-langevin-bracket-form} yields the stochastic-dissipative system
\begin{align}
    \begin{split}
    \diff \mu &= \left(\ad^*_{\delta h / \delta \mu} \mu - \frac{\delta h}{\delta a} \diamond a \right) \diff t + \sigma \sum_{i \geq 1} \left(\ad^*_{\xi_i} \mu + \frac{\delta \gamma_i}{\delta a} \diamond a \right) \circ \diff W_t^i \\
    &\quad + \theta \left(\ad^*_{\left(\ad^*_{\delta h/\delta \mu} \mu - (\delta h/\delta a) \diamond a\right)^\sharp} \mu - \left(T_e\Phi^*_{\delta h / \delta \mu} a\right)^\sharp \diamond a\right) \diff t \,,
    \end{split} \label{eq:semidirect-mu-eq}\\
    \diff a &= T_e\Phi^*_{\delta h / \delta \mu} a \,\diff t + \sigma \sum_{i 
    \geq 1} T_e \Phi_{\xi_i}^* a \circ \diff W_t^i + \theta T_e\Phi^*_{\left(\ad^*_{\delta h/\delta \mu} \mu - (\delta h/\delta a) \diamond a\right)^\sharp} \cdot a \, \diff t \,, \label{eq:semidirect-a-eq}
\end{align}
where $\sharp$ is the musical isomorphism defined by assigning inner products on the spaces $\mathfrak{g}$ and $V$, which is a modelling choice (note that the vectors $\{(\xi_i, \gamma'_i(a))\}_i$ must be orthonormal under this choice of inner product).
As in the standard Lie-Poisson case, this system preserves the Gibbs-measure $\mathbb{P}_\infty^{(\mu_0, a_0)}$ on the coadjoint orbit $\mathcal{O}_{(\mu_0, a_0)} \subset \mathfrak{s}^*$ (Corollary \ref{cor:lie-poisson-gibbs-measure}).

\subsection{Example: The stochastic-dissipative heavy top}

We consider a prototypical example of a semi-direct product system, namely the heavy top dynamics \cite{marsden1984semidirect}. The heavy top system has a Lie-Poisson structure on the semi-direct product group $S = SO(3) \ltimes (\mathbb{R}^3)^*$ and the its dynamics is generated by the Hamiltonian
\begin{align}
    H_{\boldsymbol{a}_0}(\alpha_{\boldsymbol{A}}) = \frac12 \|T_e^* L_{\boldsymbol{A}}\cdot\alpha_{\boldsymbol{A}}\|_{\mathbb{I}^{-1}}^2 + Mgl \boldsymbol{A}^{-1} \boldsymbol{a}_0 \cdot \boldsymbol{\chi} \,, \label{eq:heavy-top-hamiltonian}
\end{align}
for $\boldsymbol{a_0} \in (\mathbb{R}^3)^*$, $\alpha_{\boldsymbol{A}} \in T_{\boldsymbol{A}} SO(3)$ and $\|\cdot\|_{\mathbb{I}^{-1}}$ is the norm on $\mathfrak{so}(3) \cong \mathbb{R}^3$ induced by the inner product $\left<\xi, \eta \right>_{\mathbb{I}^{-1}} = \boldsymbol{\xi}^\top \mathbb{I}^{-1} \boldsymbol{\eta}$. Here, the constants $M, g, l$ are the mass of the body, gravitational acceleration and distance from the fixed point to the centre of mass, respectively. The constant vector $\boldsymbol{\chi} \in \mathbb{R}^3$ is the direction of the line connecting the fixed point to the centre of mass of the body. Note that the Hamiltonian \eqref{eq:heavy-top-hamiltonian} is invariant under the isotropy subgroup of $SO(3)$ that fixes the vector $\boldsymbol{a}_0$ and moreover can be expressed easily in terms of the extended Hamiltonian
\begin{align}
H(T_e^* L_{\boldsymbol{A}}\cdot\alpha_{\boldsymbol{A}}, \boldsymbol{A}^{-1} \boldsymbol{a}_0) = H_{\boldsymbol{a}_0}(\alpha_{\boldsymbol{A}}) \,.
\end{align}
Thus, semi-direct product reduction can take effect and this yields a Lie-Poisson system on $\mathfrak{s}^*$ equipped with the bracket
\begin{align}
    \{f, g\}(\boldsymbol{\Pi}, \boldsymbol{\Gamma}) = - \boldsymbol{\Pi} \cdot (\nabla_{\boldsymbol{\Pi}} f \times \nabla_{\boldsymbol{\Pi}} \, g) - \boldsymbol{\Gamma} \cdot (\nabla_{\boldsymbol{\Pi}} f \times \nabla_{\boldsymbol{\Gamma}} \, g + \nabla_{\boldsymbol{\Gamma}} f \times \nabla_{\boldsymbol{\Pi}} \,g) \,,
\end{align}
and with the reduced Hamiltonian
\begin{align}
    h(\boldsymbol{\Pi}, \boldsymbol{\Gamma}) = \frac12 \|\boldsymbol{\Pi}\|_{\mathbb{I}^{-1}}^2 + Mgl \boldsymbol{\Gamma} \cdot \boldsymbol{\chi} \,,
\end{align}
for $(\boldsymbol{\Pi}, \boldsymbol{\Gamma}) \in \mathfrak{so}^*(3) \times (\mathbb{R}^3)^* \cong \mathbb{R}^3 \times \mathbb{R}^3$.

Choosing our noise Hamiltonians to be of the form
\begin{align*}
    H_i(\alpha_{\boldsymbol{A}}, \boldsymbol{a}_0) &= \sigma \left(\left<T_e^* L_{\boldsymbol{A}}\cdot\alpha_{\boldsymbol{A}}, \xi_i\right>_{\mathfrak{so}^*(3) \times \mathfrak{so}(3)} + \left<\boldsymbol{A}^{-1} \boldsymbol{a}_0, \nabla_{\boldsymbol{a}_0} \gamma_i(\boldsymbol{a}_0) \right>_{\mathbb{R}^3 \times \mathbb{R}^3}\right) \\
    &= \sigma \left(\left<\boldsymbol{\Pi}, \xi_i\right>_{\mathfrak{so}^*(3) \times \mathfrak{so}(3)} + \left<\boldsymbol{\Gamma}, \nabla_{\boldsymbol{a}_0} \gamma_i(\boldsymbol{a}_0) \right>_{\mathbb{R}^3 \times \mathbb{R}^3}\right) \\
    &= h_i(\boldsymbol{\Pi}, \boldsymbol{\Gamma}) \,,
\end{align*}
for $i = 1, 2, 3$ for some $\sigma > 0$, $\xi_i \in \mathfrak{so}(3)$ and $\gamma_i : \mathbb{R}^3 \rightarrow \mathbb{R}$, we obtain from \eqref{eq:semidirect-mu-eq}--\eqref{eq:semidirect-a-eq} the corresponding stochastic-dissipative extension to the heavy top system:
\begin{align}
    \begin{split}
    \diff \boldsymbol{\Pi} &= \left(\boldsymbol{\Pi} \times \mathbb{I}^{-1} \boldsymbol{\Pi} + Mgl \boldsymbol{\Gamma} \times \boldsymbol{\chi}\right) \diff t + \sigma \sum_{i=1}^3 \left(\boldsymbol{\Pi} \times \boldsymbol{\xi}_i + \boldsymbol{\Gamma} \times \nabla_{\boldsymbol{a_0}} \gamma_i(\boldsymbol{a_0})\right) \circ \diff W_t^i \\
    &\quad - \theta \left(\boldsymbol{\Pi} \times \left(\boldsymbol{\Pi} \times \mathbb{I}^{-1} \boldsymbol{\Pi} + Mgl \boldsymbol{\Gamma} \times \boldsymbol{\chi}\right) + \boldsymbol{\Gamma} \times \left(\boldsymbol{\Gamma} \times \mathbb{I}^{-1} \boldsymbol{\Pi}\right)\right) \diff t \,,
    \end{split} \\
    \diff \boldsymbol{\Gamma} &= \boldsymbol{\Gamma} \times \mathbb{I}^{-1} \boldsymbol{\Pi} \,\diff t - \theta\boldsymbol{\Gamma} \times \left(\boldsymbol{\Pi} \times \mathbb{I}^{-1} \boldsymbol{\Pi} + Mgl \boldsymbol{\Gamma} \times \boldsymbol{\chi}\right)\diff t + \sigma \sum_{i=1}^3 \boldsymbol{\Gamma} \times \boldsymbol{\xi}_i \circ \diff W_t^i \,.
\end{align}
The coadjoint orbit on $\mathfrak{so}^*(3) \ltimes (\mathbb{R}^3)^*$ with $\mathbf{\Gamma}_0 \neq 0$ is given by the four dimensional submanifold 
\begin{align}
    \mathcal{O}_{(\boldsymbol{\Pi}_0, \boldsymbol{\Gamma}_0)} = \{(\boldsymbol{\Pi}, \boldsymbol{\Gamma}) \in \mathbb{R}^3 \times \mathbb{R}^3 : \|\boldsymbol{\Gamma}\|^2 = \|\boldsymbol{\Gamma}_0\|^2, \text{ and } \boldsymbol{\Pi} \cdot \boldsymbol{\Gamma} = \boldsymbol{\Pi}_0 \cdot \boldsymbol{\Gamma}_0\} \cong TS^2_{\|\mathbf{\Gamma}_0\|} \,,
\end{align}
and the corresponding KKS symplectic form by
\begin{align}
    \omega^{\text{KKS}}_{(\boldsymbol{\Pi}_0, \boldsymbol{\Gamma}_0)}\Big(\ad^*_{(\boldsymbol{\xi}_1, \boldsymbol{v}_1)} (\boldsymbol{\Pi}_0, \boldsymbol{\Gamma}_0), \ad^*_{(\boldsymbol{\xi}_2, \boldsymbol{v}_2)} (\boldsymbol{\Pi}_0, \boldsymbol{\Gamma}_0)\Big) = -\boldsymbol{\Pi}_0 \cdot \boldsymbol{\xi}_1 \times \boldsymbol{\xi}_1 + \boldsymbol{\Gamma}_0 \cdot (\boldsymbol{\xi}_1 \times \boldsymbol{v}_2 - \boldsymbol{\xi}_2 \times \boldsymbol{v}_1) \,,
\end{align}
where $\ad^*_{(\boldsymbol{\xi}, \boldsymbol{v})} (\boldsymbol{\Pi}, \boldsymbol{\Gamma}) = (\boldsymbol{\Pi} \times \boldsymbol{\xi} + \boldsymbol{\Gamma} \times \boldsymbol{v}, \, \boldsymbol{\Gamma} \times \boldsymbol{v})$. Thus, the Gibbs measure on $\mathcal{O}_{(\boldsymbol{\Pi}_0, \boldsymbol{\Gamma}_0)}$ reads
\begin{align}
    \mathbb{P}_\infty^{(\boldsymbol{\Pi}_0, \boldsymbol{\Gamma}_0)} = \frac{1}{Z} e^{-\frac{\beta}{2} h(\boldsymbol{\Pi}, \boldsymbol{\Gamma})} |(\omega^{\text{KKS}}_{(\boldsymbol{\Pi}_0, \boldsymbol{\Gamma}_0)})^2| \,, \quad Z = \int_{\mathcal{O}_{(\boldsymbol{\Pi}_0, \boldsymbol{\Gamma}_0)}} e^{-\frac{\beta}{2} h(\boldsymbol{\Pi}, \boldsymbol{\Gamma})} |(\omega^{\text{KKS}}_{(\boldsymbol{\Pi}_0, \boldsymbol{\Gamma}_0)})^2|\,.
\end{align}

\section{Derivation of the stochastic-dissipative point vortex system on the $2$-sphere}\label{app:point-vortex-derivation}
Recall that the KKS symplectic form on the point vortex coadjoint orbit is given by $\Omega = \sum_{n=1}^N \Gamma_n \mathrm{vol}^n_{S^2}$, where $\mathrm{vol}^n_{S^2}$ for $n=1, \ldots, N$ are identical copies of the area form on $S^2$ corresponding to vortex $n$.
The corresponding Poisson bracket reads
\begin{align}\label{eq:point-vortex-poisson}
    \{\cdot, \cdot\} = \sum_{n=1}^{N} \frac{1}{\Gamma_n} \{\cdot, \cdot\}_n \,,
\end{align}
where $\{\cdot, \cdot\}_n$ is Poisson bracket corresponding to the symplectic form $\mathrm{vol}_{S^2}$. By our discussion in Section \ref{sec:stoch-diss-rigid-body}, we know that under the embedding $S^2 \xhookrightarrow{} \mathbb{R}^3$, this bracket can further be identified with the Lie-Poisson bracket on $SO(3)$
\begin{align}\label{eq:lie-poisson-so3-vortex}
    \{f, g\}_n = -\vec{x}_n \cdot \frac{\partial f}{\partial \vec{x}_n} \times \frac{\partial g}{\partial \vec{x}_n} \,,
\end{align}
where $\vec{x} \in \mathbb{R}^3$ in boldfont denotes the extrinsic representation of $x \in S^2$ under the embedding $S^2 \xhookrightarrow{} \mathbb{R}^3$.
Now, given a point vortex ansatz $\omega = \sum_{n=1}^N \Gamma_n \delta(x;x_n)$, the corresponding streamfunction reads
\begin{align} \label{eq:point-vortex-streamfunction}
\psi(x) &= \Delta^{-1} \omega(x) = \sum_{n=1}^N \Gamma_n \Delta^{-1} \delta(x; x_n) = \sum_{n=1}^N \Gamma_n G_0(x, x_n) \,,
\end{align}
where $G_0$ is the Green's function for the Laplacian operator on the $2$-sphere. This has the explicit expression \cite{newton2002n} 
\begin{align}
    G_0(x, x') = -\frac{1}{4\pi R}\log(R^2 - \vec{x} \cdot \vec{x}'), \qquad R = \|\vec{x}\| = \|\vec{x}'\| \,,
\end{align}
where $\vec{x} \in \mathbb{R}^3$ in boldfont denotes the extrinsic representation of $x \in S^2$ under the embedding $S^2 \xhookrightarrow{} \mathbb{R}^3$. The expression $R^2 - \vec{x} \cdot \vec{x}'$ inside the logarithm represents the squared chordal distance $\|x - x'\|^2$ between points $x, x' \in S^2$ on the sphere.
The reduced Hamiltonian for the point vortex system can be derived by substituting the point vortex ansatz into the reduced Hamiltonian for ideal fluid dynamics \eqref{eq:euler-hamiltonians-2d}
\begin{align}
    h_0(\omega) &= \frac12 \int_{S^2} \omega(x) \psi(x) \mathrm{vol}_{S^2}(x) = \sum_{i=1}^N \Gamma_i \int \delta(x; x_i) \psi(x) \mathrm{vol}_{S^2}(x) \stackrel{\eqref{eq:point-vortex-streamfunction}}{=} \sum_{i,j=1}^N \frac{\Gamma_i \Gamma_j}{2} G_0(x_i, x_j) \,.
\end{align}
Similarly, the noise Hamiltonians for the point vortex system become
\begin{align}
h_i(x) = \sigma \int \omega(x) \psi_i(x) \mathrm{vol}_{S^2}(x) = \sigma \sum_{j=1}^N \Gamma_j \psi_i(x_j), \quad i = 1, 2, \ldots \,.
\end{align}
By the orthonormality of noise potentials $\psi_i$ with respect to the energy inner product, i.e.
\begin{align}
    \tilde{\gamma}(\psi_i, \psi_j) := \int_{S^2} \psi_i(x) \Delta \psi_j(x) \mathrm{vol}_{S^2} = \delta_{ij} \,,
\end{align}
we can express the streamfunction \eqref{eq:point-vortex-streamfunction} purely in terms of the noise potentials, as follows
\begin{align}
\psi(x) &= \sum_{i=1}^\infty \tilde{\gamma}(\psi, \psi_i) \psi_i(x) \nonumber \\
&= \sum_{i=1}^\infty \left(\int_{S^2} \psi_i(x) \Delta \psi(x) \mathrm{vol}_{S^2}(x)\right) \psi_i(x) \nonumber  \\
&= \sum_{i=1}^\infty \left(\int_{S^2} \psi_i(x) \omega(x) \mathrm{vol}_{S^2}(x)\right) \psi_i(x) \nonumber \\
&= \sum_{i=1}^\infty \sum_{j=1}^N \Gamma_j\psi_i(x_j) \psi_i(x) \nonumber \\
\stackrel{\eqref{eq:point-vortex-streamfunction}}{\Leftrightarrow} \quad &G_0(x, x_j) = \sum_{i=1}^\infty \psi_i(x_j) \psi_i(x) \,. \label{eq:G0-via-gamma}
\end{align}
Recall that the reduced dynamics on the coadjoint orbit is given by
\begin{align}\label{eq:reduced-stoch-diss-poisson}
    \diff f = \{f, h_0\} \,\diff t - \frac{\beta}{2} &\sum_{i=1}^\infty \{h_0, h_i\} \,\{f, h_i\} \,\diff t + \sum_{i=1}^\infty \{f, h_i\} \circ \diff W_t^i \,,
\end{align}
for the Poisson bracket $\{\cdot, \cdot\}$ given in \eqref{eq:point-vortex-poisson}.
We have
\begin{align}
\{f, h_0\} &= \sum_{n=1}^N \sum_{i,j=1}^N \frac{\Gamma_i\Gamma_j}{2\Gamma_n}\{f(x), G_0(x_i, x_j)\}_n = \sum_{i,j=1}^N \Gamma_j \{f(x), G_0(x_i, x_j)\}_i \,, \label{eq:f-h0}\\
\{h_0, h_k\} &= \sum_{i,j=1}^N \Gamma_j\{G_0(x_i, x_j), h_k(x)\}_i = \sigma \sum_{i,j=1}^N \sum_{l=1}^N \Gamma_j \Gamma_l\{G_0(x_i, x_j), \psi_k(x_l)\}_i \nonumber \\
&= \sigma \sum_{i,j=1}^N \Gamma_j \Gamma_i\{G_0(x_i, x_j), \psi_k(x_i)\}_i \,, \\
\{f, h_k\} &= \sigma \sum_{n=1}^N\sum_{l=1}^N \frac{\Gamma_l}{\Gamma_n} \{f(x), \psi_k(x_l)\}_n = \sigma \sum_{n=1}^N \{f(x), \psi_k(x_n)\}_n \,, \label{eq:f-hk} \\
\begin{split}\label{eq:h0-hk,f-hk}
\sum_{k=1}^\infty \{h_0, h_k\}\{f, h_k\} &= \sigma^2\sum_{k=1}^\infty\sum_{i,j=1}^N \Gamma_j \Gamma_i\{G_0(x_i, x_j), \psi_k(x_i)\}_i \sum_{n=1}^N \{f(x), \psi_k(x_n)\}_n \\&=  \sigma^2 \sum_{n=1}^N\sum_{i,j=1}^N \Gamma_i\Gamma_j \left\{f(x), \left\{G_0(x_i, x_j), \sum_{k=1}^\infty \psi_k(x_i) \psi_k(x_n)\right\}_i \right\}_n \\&\stackrel{\eqref{eq:G0-via-gamma}}{=} \sigma^2 \sum_{n=1}^N\sum_{i,j=1}^N \Gamma_i\Gamma_j \Big\{f(x), \big\{G_0(x_i, x_j), G_0(x_i, x_n)\big\}_i \Big\}_n \,.
\end{split}
\end{align}
Thus, denoting $\Psi_{ij}(x) := G_0(x_i, x_j)$ and plugging \eqref{eq:f-h0} -- \eqref{eq:h0-hk,f-hk}
into \eqref{eq:reduced-stoch-diss-poisson}, we arrive at the system
\begin{align}\label{eq:dxdt-point-vortex}
    \diff x_i = \sum_{j=1}^N \Gamma_j X_{\Psi_{ij}}(x_i) \,\diff t - \theta \sum_{j,k=1}^N \Gamma_j\Gamma_k X_{\{\Psi_{jk}, \Psi_{ik}\}_k}(x_i)\,\diff t + \sigma \sum_{k=1}^\infty X_{\psi_k}(x_i) \circ \diff W_t^k \,,
\end{align}
where $\theta = \beta \sigma^2/2$ and we used the relation $\{f, h\} = X_h f$.
More explicitly, using \eqref{eq:lie-poisson-so3-vortex}, we have
\begin{align}
    X_{\Psi_{ij}}(\vec{x}_i) &= \frac{\partial \psi_{ij}}{\partial \vec{x}_i} \times \vec{x}_i = \frac{\vec{x}_j \times \vec{x}_i}{4\pi R(R^2 - \vec{x}_j \cdot \vec{x}_i)} \,,\\
    \{\Psi_{jk}, \Psi_{ik}\}_k &= -\vec{x}_k \cdot \frac{\partial \Psi_{jk}}{\partial \vec{x}_k} \times \frac{\partial \Psi_{ik}}{\partial \vec{x}_k} \nonumber \\
    &= -\vec{x}_k \cdot \frac{\vec{x}_j}{4\pi R (R^2 - \vec{x}_j \cdot \vec{x}_k)} \times \frac{\vec{x}_i}{4\pi R (R^2 - \vec{x}_i \cdot \vec{x}_k)} \nonumber \\ &= -\frac{\vec{x}_k \cdot \vec{x}_j \times \vec{x}_i}{(4\pi R)^2 (R^2 - \vec{x}_j \cdot \vec{x}_k)(R^2 - \vec{x}_i \cdot \vec{x}_k)} \,,\\
    X_{\{\Psi_{jk}, \Psi_{ik}\}_k}(\vec{x}_i) &= \frac{\partial}{\partial \vec{x}_i}\{\Psi_{jk}, \Psi_{ik}\}_k \times \vec{x}_i \nonumber \\
    &= -\frac{(\vec{x}_k \times \vec{x}_j) \times \vec{x}_i}{(4\pi R)^2 (R^2 - \vec{x}_j \cdot \vec{x}_k)(R^2 - \vec{x}_i \cdot \vec{x}_k)} - \frac{[\vec{x}_k \cdot \vec{x}_j \times \vec{x}_i] \vec{x}_k \times \vec{x}_i}{(4\pi R)^2 (R^2 - \vec{x}_j \cdot \vec{x}_k)(R^2 - \vec{x}_i \cdot \vec{x}_k)^2} \,.
\end{align}
Now plugging these expressions into \eqref{eq:dxdt-point-vortex}, we arrive at the full system
\begin{align}
\begin{split}
\diff \boldsymbol{x}_i &= \frac{1}{4\pi R} \sum_{\substack{j = 1 \\ j \neq i}}^N \frac{\Gamma_j \,\boldsymbol{x}_j  \times \boldsymbol{x}_i}{R^2 - \boldsymbol{x}_i  \cdot \boldsymbol{x}_j}\diff t + \sigma \sum_{k=1}^\infty X_{\psi_k}(x_i) \circ \diff W_t^k \\
&\quad - \theta \sum_{\substack{k=1 \\ k \neq i}}^N \Bigg[\Bigg(\sum_{\substack{j = 1 \\ j \neq k}}^N \frac{\Gamma_j \,\boldsymbol{x}_j  \times \boldsymbol{x}_k}{4\pi R(R^2 - \boldsymbol{x}_j  \cdot \boldsymbol{x}_k)}\Bigg) \times \Bigg(\frac{\Gamma_k \vec{x}_i}{4\pi R(R^2 - \boldsymbol{x}_i  \cdot \boldsymbol{x}_k)} \Bigg) \\
&\qquad \qquad + \Bigg(\sum_{\substack{j = 1 \\ j \neq k}}^N \frac{\Gamma_j \,\vec{x}_i \cdot \boldsymbol{x}_j  \times \boldsymbol{x}_k}{4\pi R(R^2 - \boldsymbol{x}_j  \cdot \boldsymbol{x}_k)}\Bigg) \Bigg(\frac{\Gamma_k \vec{x}_k \times \vec{x}_i}{4\pi R(R^2 - \boldsymbol{x}_i  \cdot \boldsymbol{x}_k)^2}\Bigg)\Bigg] \,\diff t \,.
\end{split}
\end{align}

\end{appendices}

\end{document}